\newtheorem{definition}{\textbf{Definition}}
\newtheorem{remark}{\textbf{Remark}}
\newtheorem{lemma}{\textbf{Lemma}}
\newtheorem{corollary}{\textbf{Corollary}}
\newtheorem{proposition}{\textbf{Proposition}}
\newcommand{\R}{\mathbb{R}}
\renewcommand{\l}{\ell}
\newcommand{\Identity}{1\!\!1}
\newcommand{\G}{{\mathcal{G}}}
\newcommand{\E}{{\mathcal{E}}}
\newcommand{\V}{{\mathcal{V}}}
\newcolumntype{L}[1]{>{\raggedright\let\newline\\\arraybackslash\hspace{0pt}}m{#1}}
\newcolumntype{C}[1]{>{\centering\let\newline\\\arraybackslash\hspace{0pt}}m{#1}}
\newcolumntype{R}[1]{>{\raggedleft\let\newline\\\arraybackslash\hspace{0pt}}m{#1}}
\renewcommand{\L}{\boldsymbol{\mathcal{L}}}
 \DeclareMathOperator*{\argmin}{argmin}
\begin{document}

\title{
Scalable $M$-Channel Critically Sampled Filter Banks for Graph Signals 
}
\author{\IEEEauthorblockN{{Shuni Li, Yan Jin, and David I Shuman}}
\thanks{Shuni Li is with the Department of Statistics, University of California, Berkeley, CA 94720, USA (email: shuni\_li@berkeley.edu). Yan Jin is with the Institute for Data, Systems, and Society, Massachusetts Institute of Technology, Cambridge, MA 02139, USA (email: yjin1@mit.edu). David I Shuman is with the Department of Mathematics, Statistics, and Computer Science, Macalester College, St. Paul, MN 55105, USA (email: dshuman1@macalester.edu).}
\thanks{This research has been funded in part by a grant to Macalester College from the Howard Hughes Medical Institute through the Precollege and Undergraduate Science Education Program.}
\thanks{The authors would like to thank Federico Poloni for providing a proof to Proposition \ref{Pr:mat_part}, and Andrew Bernoff for helpful discussions about the matrix partitioning problem discussed in Section \ref{Se:partition}.}
\thanks{MATLAB code for all numerical experiments in this paper is available at \url{http://www.macalester.edu/\textasciitilde dshuman1/publications.html}. It leverages the open access GSPBox \cite{gspbox}, into which it will soon be integrated.}}

\maketitle

\begin{abstract}
We investigate a scalable $M$-channel critically sampled filter bank for graph signals, where each of the $M$ filters is supported on a different subband of the graph Laplacian spectrum. For analysis, the graph signal is filtered on each subband and downsampled on a corresponding set of vertices. However, the classical synthesis filters are replaced with interpolation operators. For small graphs, we use a full eigendecomposition of the graph Laplacian to partition the graph vertices such that the $m^{th}$ set comprises a uniqueness set for signals supported on the $m^{th}$ subband. The resulting transform is critically sampled, the dictionary atoms are orthogonal to those supported on different bands, and graph signals are perfectly reconstructable from their analysis coefficients. We also investigate fast versions of the proposed transform that scale efficiently for large, sparse graphs. Issues that arise in this context include designing the filter bank to be more amenable to polynomial approximation, estimating the number of samples required for each band, performing non-uniform random sampling for the filtered signals on each band, and using efficient reconstruction methods. We empirically explore the joint vertex-frequency localization of the dictionary atoms, the sparsity of the analysis coefficients for different classes of signals, the reconstruction error resulting from the numerical approximations, and the ability of the proposed transform to compress piecewise-smooth graph signals. The proposed filter bank also yields a fast, approximate graph Fourier transform with a coarse resolution in the spectral domain.
\end{abstract}

\begin{IEEEkeywords}
Graph signal processing, filter bank, non-uniform random sampling, interpolation, wavelet, compression
\end{IEEEkeywords}

\section{Introduction}

In graph signal processing \cite{shuman2013emerging}, transforms and filter banks can help exploit structure in the data, in order, for example, 
to compress a graph signal, remove noise, or fill in missing information. 
Broad classes of recently proposed transforms include graph Fourier transforms, vertex domain designs such as \cite{Crovella2003,wang}, top-down approaches such as \cite{szlam,gavish,irion}, diffusion-based designs such as \cite{coifman2006diffusion,Maggioni_biorthogonal}, spectral domain designs such as \cite{hammond2011wavelets}-\nocite{shuman2013spectrum}\cite{behjat2016signal}, windowed graph Fourier transforms  \cite{shuman2015vertex}, and generalized filter banks, 
the focus of this paper.

\begin{figure}[t]
\centerline{\includegraphics[width=3.2in]{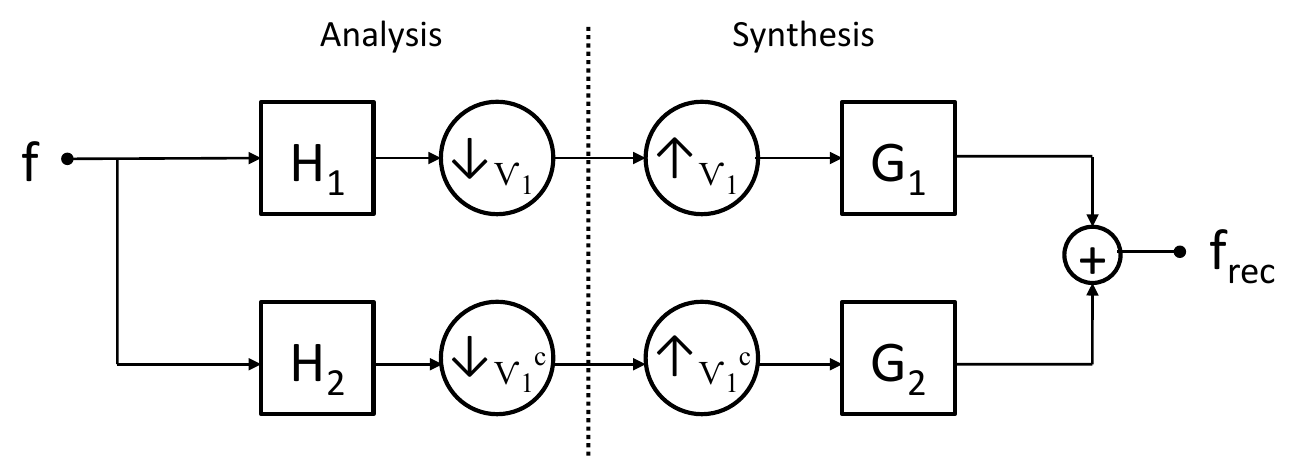}}
\caption{Two channel critically sampled graph filter bank. 
Here, $\mathbf{H}_1$ is a lowpass graph spectral filter, and $\mathbf{H}_2$ is a highpass graph spectral filter.}\label{Fig:two_channel}
\vspace{-.45cm}
\end{figure}

The extension of the classical two channel critically sampled filter bank to the graph setting is first proposed in \cite{narang_icip}. Fig.\ \ref{Fig:two_channel} shows the analysis and synthesis banks, where ${\bf H}_i$ and ${\bf G}_i$ are graph spectral filters \cite{shuman2013emerging}, and the lowpass and highpass bands are downsampled on complementary sets of vertices. For a general weighted, undirected graph, it is not straightforward 
how to design the downsampling sets and the four graph spectral filters to ensure perfect reconstruction. One approach is to separate the graph into a union of subgraphs, each of which has some regular structure. For example, \cite{narang2012perfect,narang_bior_filters} show that the normalized graph Laplacian eigenvectors of \emph{bipartite} graphs have a spectral folding property that make it possible to design analysis and synthesis filters to guarantee perfect reconstruction. They take advantage of this property by decomposing the graph into bipartite graphs and constructing a multichannel, separable filter bank, while \cite{sakiyama} adds vertices and edges to the original graph to form an approximating bipartite graph. References \cite{teke2016,teke2017ii} generalize this spectral folding property to $M$-block cyclic graphs, and leverage it to construct $M$-channel graph filter banks. Another class of regular structured graphs is \emph{shift invariant} graphs \cite[Chapter 5.1]{grady}. These graphs have a circulant graph Laplacian and their eigenvectors are the columns of the discrete Fourier transform matrix. Any graph can be written as the sum of circulant graphs, and \cite{ekambaram_icip,ekambaram2013globalsip,kotzagiannidis2016icassp} take advantage of this fact in designing critically sampled graph filter banks with perfect reconstruction. Another approach is to use architectures other than the critically sampled filter bank, such as lifting transforms \cite{jansen,narang_lifting_graphs} or pyramid transforms \cite{shuman_TSP_multiscale}.

Our approach in this paper  
is to replace the synthesis filters with interpolation operators on each subband of the graph spectrum. While this idea  
is suggested independently in \cite{chen2015discrete} for small graphs (say 5,000 or fewer vertices), we investigate it in more detail here, and extend it to large, sparse graphs.

We develop three variants of $M$-channel critically sampled filter banks ($M$-CSFB) for graph signals. The exact $M$-CSFB (initially presented in \cite{jin_conf}) features dictionary atoms that are jointly localized in the vertex and graph spectral domains, and therefore can compactly represent graph signals with localized singularities \cite{hammond2011wavelets}. The fast $M$-CSFB and fast, signal-adapted $M$-CSFB transforms we propose here 
share the same general structure as the exact $M$-CSFB, but scale efficiently for large, sparse graphs; i.e., they do not require a full eigendecomposition of the graph Laplacian.
Specific contributions of our work include:
 \begin{enumerate}
 \item A constructive method to partition the graph into uniqueness sets for any given partition of the spectrum. This is a key step in the exact $M$-CSFB (Section \ref{Se:fb_design}).
 \item A new filter bank design that is 
 both adapted to an efficient estimate of the distribution of the graph Laplacian spectrum and more amenable to polynomial approximation (Section \ref{Se:fast_mcsfb}).
 \item A \emph{scalable} method to sample and interpolate 
 bandpass and highpass graph signals. While we leverage the recent flurry of work in sampling and reconstruction of graph signals \cite{chen2015discrete}-\nocite{pesenson_paley,narang2013interpolation,anis2014towards,gadde2015probabilistic,shomorony,PuyTGV15,chen2015signal,tsitsvero2016uncertainty,chen2016signal,anis2016efficient}\cite{di2017sampling}, the prior literature focuses on methods that require a full eigendecomposition or assume the graph signals are smooth (lowpass). We use efficient convex optimization methods with a novel penalty term to perform the interpolation (Section \ref{Se:fast_mcsfb}). 
 \item The idea to adapt both the non-uniform sampling weights and the number of samples allocated to each band to the actual 
 signal being analyzed, in addition to the 
 graph structure, 
 because there is less benefit from taking 
 samples in areas of the graph where the filtered signal does not have much energy.  
 The fast, signal-adapted $M$-CSFB in Section \ref{Se:signal_adapted} is based on this concept. 
 \item Empirical explorations of the 
 exact
$M$-CSFB, fast $M$-CSFB, and signal-adapted fast $M$-CSFB transforms. In Section \ref{Se:ill2}, we investigate computation times, the reconstruction error resulting from the numerical approximations, tradeoffs involved in choosing the parameters, and applications such as compression and fast, approximate graph Fourier transforms.
 \end{enumerate}

\section{$M$-Channel Critically Sampled Filter Bank} \label{Se:fb_design}

\subsection{Notation}
We consider graph signals ${\bf f} \in \R^N$ residing on a weighted, undirected graph $\G=\{\V,\E,\mathbf{W}\}$, where $\V$ is the set of $N$ vertices, $\E$ is the set of edges, and $\mathbf{W}$ is the weighted adjacency matrix. 
Throughout, we take $\L$ to be the unnormalized graph Laplacian $\mathbf{D}-\mathbf{W}$, where $\mathbf{D}$ is the diagonal matrix of vertex degrees. However, our theory and proposed transform also apply to the normalized graph Laplacian $\mathbf{I}-\mathbf{D}^{-\frac{1}{2}}\mathbf{W}\mathbf{D}^{-\frac{1}{2}}$, or any other Hermitian operator. We can diagonalize the graph Laplacian as $\L=\mathbf{U}{\boldsymbol \Lambda}\mathbf{U}^{*}$, where ${\boldsymbol \Lambda}$ is the diagonal matrix of eigenvalues $\lambda_0,\lambda_1,\ldots,\lambda_{N-1}$ of $\L$, and the columns ${\bf u}_0,{\bf u}_1,\ldots,{\bf u}_{N-1}$ of $\mathbf{U}$ are the associated eigenvectors of $\L$. The graph Fourier transform of a signal is $\hat{\bf{f}}=\mathbf{U}^{*}{\bf f}$, and ${h}(\L){\bf f}=\mathbf{U}{h}({\boldsymbol \Lambda})\mathbf{U}^{*}{\bf f}$ applies the filter ${h}: [0,\lambda_{\max}] \rightarrow \R$ to the graph signal ${\bf f}$. We  
let $\mathbf{U}_{{\cal R}}$  
denote the submatrix formed by taking the columns of $\mathbf{U}$ associated with the Laplacian eigenvalues indexed by ${\cal R} \subseteq \{0,1,\ldots,N-1\}$, 
and  
$\mathbf{U}_{{\cal S},{\cal R}}$  
denote the submatrix  formed by taking the rows of $\mathbf{U}_{{\cal R}}$ associated with the vertices indexed by the set ${\cal S} \subseteq \{1,2,\ldots,N\}$.

\subsection{Architecture}
We start by constructing an ideal filter bank of $M$ graph spectral filters, where for 
band endpoints 
$0=\tau_0 < \tau_1 < \ldots < \tau_{M-1} \leq \tau_M$ (with $\tau_M > \lambda_{\max}$), the $m^{th}$ filter is defined as
\begin{align} \label{Eq:bandpass}
{h}_m(\lambda)=
\begin{cases}
1,& \tau_{m-1} \leq \lambda < \tau_m  \\
0,&\hbox{ otherwise}
\end{cases},~m=1,2,\ldots,M.
\end{align}
 Fig. \ref{Fig:fb} shows an example of such an ideal filter bank. Note that for each $\l \in \{0,1,\ldots,N-1\}$, ${h}_m(\lambda_\l)=1$ for exactly one $m$. 
Equivalently, we are forming a partition $\{{\cal R}_1,{\cal R}_2,\ldots,{\cal R}_M\}$ of $\{0,1,\ldots,N-1\}$ and setting
\begin{align*}
{h}_m(\lambda_{\l})=
\begin{cases}
1,&\hbox{ if } {\l} \in {\cal R}_m \\
0,&\hbox{ otherwise}
\end{cases},~m=1,2,\ldots,M.
\end{align*}

\begin{figure}[t]
\centerline{\includegraphics[width=2.6in]{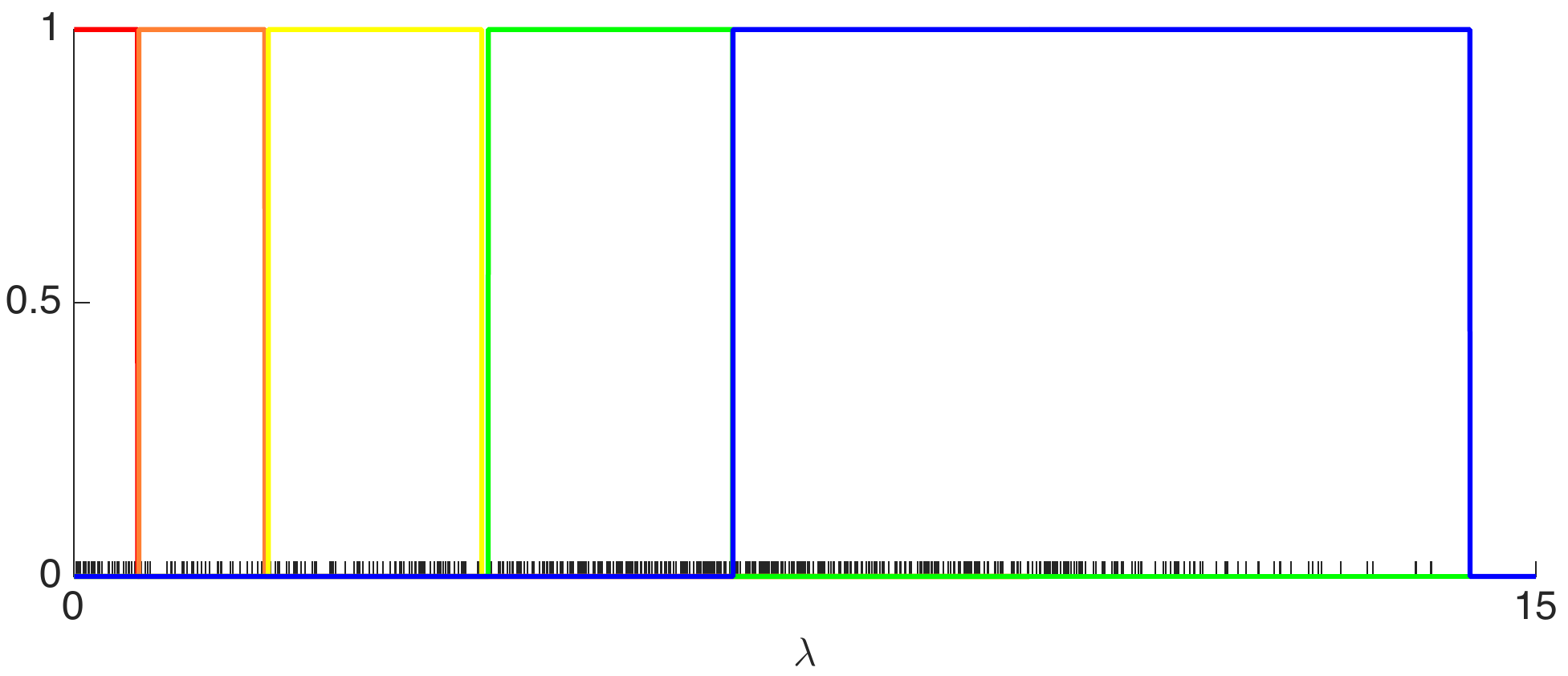}}
\caption{Example ideal filter bank. The red, orange, yellow, green, and blue filters span 31, 31, 63, 125, and 250 graph Laplacian eigenvalues, respectively, on a 500 node sensor network with 
a maximum graph Laplacian eigenvalue of 14.3. The tick marks on the x-axis represent the locations of the graph Laplacian eigenvalues.}\label{Fig:fb}
\vspace{-.3cm}
\end{figure}

\begin{figure}[t]
\centerline{\includegraphics[width=2.4in]{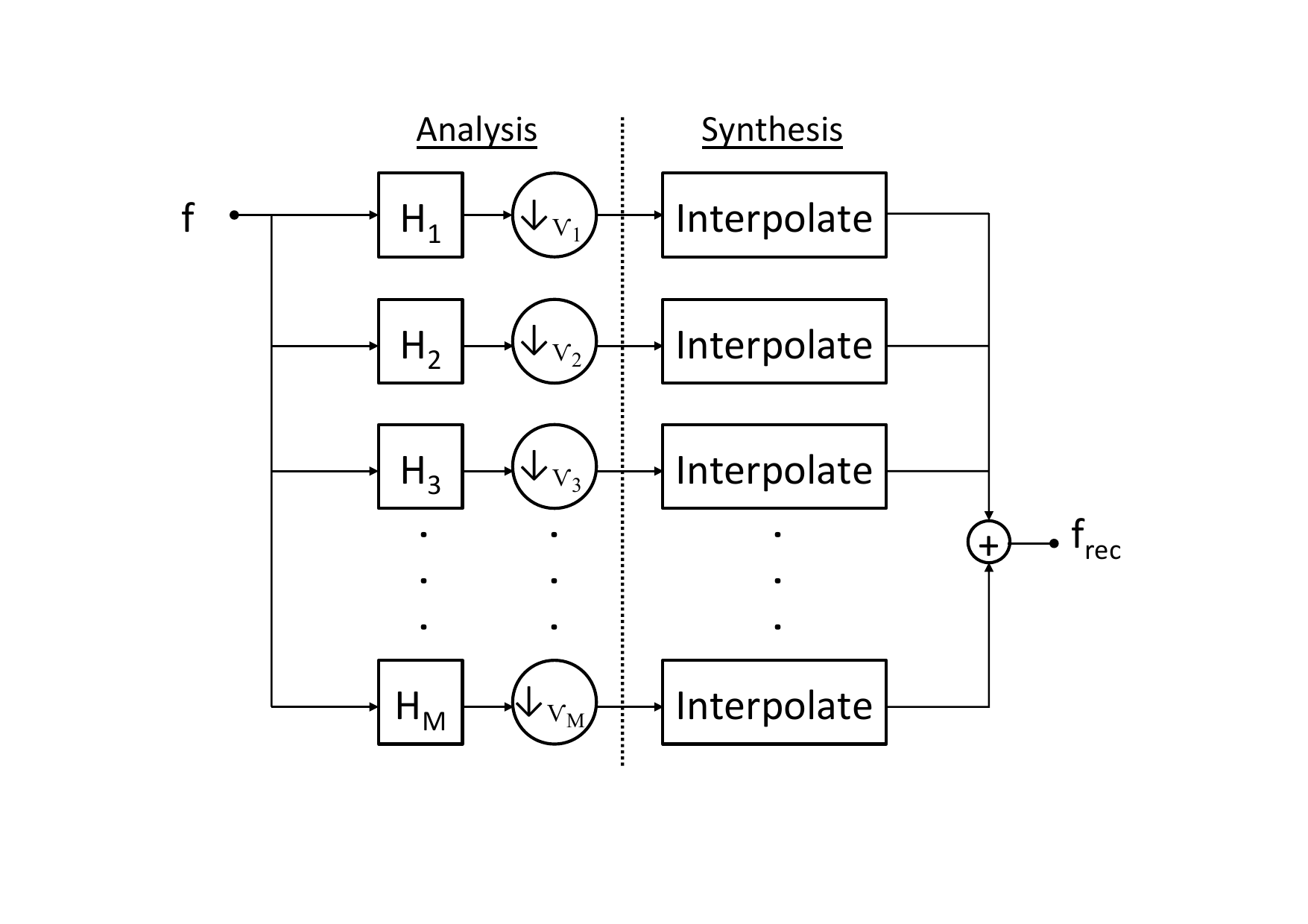}}
\caption{The $M$-channel critically sampled filter bank architecture. The sets $\V_1, \V_2, \ldots,\V_M$ form a partition of the set $\V$ of vertices, where each set $\V_m$ is a uniqueness set for graph signals supported on a different subband in the graph spectral domain. \vspace{-.1in}}\label{Fig:arch}
\vspace{-.2cm}
\end{figure}

The next step, which we discuss in detail in Section \ref{Se:partition}, is to partition the vertex set $\V$ into subsets $\V_1,\V_2,\ldots,\V_M$ such that $\V_m$ forms a uniqueness set for $\mbox{col}\left(\mathbf{U}_{{\cal R}_m}\right)$.
\begin{definition}[Uniqueness set \cite{pesenson_paley}] 
Let ${\cal P}$ be a subspace of $\R^n$. 
Then a subset $\V_s$ of the vertices $\V$ is a uniqueness set for 
${\cal P}$ if  and only if for all 
${\bf f},{\bf g} \in {\cal P}$,
${\bf f}_{\V_s}={\bf g}_{\V_s}$ implies ${\bf f}={\bf g}$. That is, if two signals in ${\cal P}$ have the same values on the vertices in the uniqueness set $\V_s$, then they must be the same signal.
\end{definition}
The following equivalent characterization of a uniqueness set is often useful.
\begin{lemma}[\cite{anis2014towards}, \cite{shomorony}]\label{Le:eq_uniq}
The set $\mathcal{S}$ of $k$ vertices is a uniqueness set for $\mbox{col}({\mathbf{U}}_{\mathcal T})$ if and only if the matrix whose columns are ${\bf u}_{{\mathcal T}_1},{\bf u}_{{\mathcal T}_2},\ldots,{\bf u}_{{\mathcal T}_k}, {\boldsymbol \delta}_{{\mathcal{S}}^c_1}, {\boldsymbol \delta}_{{\mathcal{S}}^c_2}, \ldots, {\boldsymbol \delta}_{{\mathcal{S}}^c_{n-k}}$ is nonsingular,
where ${\bf u}_{{\mathcal T}_i}$ is the $i$th column of ${\mathbf{U}}_{\mathcal T}$, and each ${\boldsymbol \delta}_{{\mathcal{S}}^c_i}$ is a Kronecker delta centered on a vertex not included in $\mathcal{S}$.
\end{lemma}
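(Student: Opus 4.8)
The plan is to route through the more transparent condition that the $k \times k$ submatrix $\mathbf{U}_{\mathcal{S},\mathcal{T}}$ is nonsingular, and then connect that to the $n \times n$ matrix appearing in the statement. First I would exploit linearity: since $\mbox{col}(\mathbf{U}_{\mathcal{T}})$ is a subspace, applying the uniqueness definition to the difference $\mathbf{h}=\mathbf{f}-\mathbf{g}$ shows that $\mathcal{S}$ is a uniqueness set if and only if the only $\mathbf{h}\in\mbox{col}(\mathbf{U}_{\mathcal{T}})$ with $\mathbf{h}_{\mathcal{S}}=\mathbf{0}$ is $\mathbf{h}=\mathbf{0}$; equivalently, the restriction (sampling) map $\mathbf{h}\mapsto\mathbf{h}_{\mathcal{S}}$ is injective on $\mbox{col}(\mathbf{U}_{\mathcal{T}})$.

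Next I would coordinatize this map. Writing each $\mathbf{h}\in\mbox{col}(\mathbf{U}_{\mathcal{T}})$ as $\mathbf{h}=\mathbf{U}_{\mathcal{T}}\mathbf{c}$ and using that the eigenvector columns of $\mathbf{U}_{\mathcal{T}}$ are orthonormal, hence linearly independent, we have $\mathbf{h}=\mathbf{0}\iff\mathbf{c}=\mathbf{0}$, while $\mathbf{h}_{\mathcal{S}}=\mathbf{U}_{\mathcal{S},\mathcal{T}}\mathbf{c}$. Thus the injectivity above is exactly the statement that $\mathbf{U}_{\mathcal{S},\mathcal{T}}\mathbf{c}=\mathbf{0}$ forces $\mathbf{c}=\mathbf{0}$. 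Because $|\mathcal{S}|=|\mathcal{T}|=k$, the matrix $\mathbf{U}_{\mathcal{S},\mathcal{T}}$ is square, so triviality of its kernel is equivalent to its nonsingularity.

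Finally I would bridge to the $n\times n$ matrix $\mathbf{B}$ whose columns are $\mathbf{u}_{\mathcal{T}_1},\ldots,\mathbf{u}_{\mathcal{T}_k},{\boldsymbol\delta}_{\mathcal{S}^c_1},\ldots,{\boldsymbol\delta}_{\mathcal{S}^c_{n-k}}$. Reordering the rows so that those indexed by $\mathcal{S}$ come first, the $\mathbf{U}_{\mathcal{T}}$ block splits into $\mathbf{U}_{\mathcal{S},\mathcal{T}}$ over $\mathbf{U}_{\mathcal{S}^c,\mathcal{T}}$, while the delta columns vanish on the $\mathcal{S}$-rows and form $\mathbf{I}_{n-k}$ on the $\mathcal{S}^c$-rows, giving the block lower-triangular form $\left(\begin{smallmatrix}\mathbf{U}_{\mathcal{S},\mathcal{T}} & \mathbf{0}\\ \mathbf{U}_{\mathcal{S}^c,\mathcal{T}} & \mathbf{I}_{n-k}\end{smallmatrix}\right)$. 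Hence $|\det\mathbf{B}|=|\det\mathbf{U}_{\mathcal{S},\mathcal{T}}|$, so $\mathbf{B}$ is nonsingular if and only if $\mathbf{U}_{\mathcal{S},\mathcal{T}}$ is, closing the chain of equivalences. A determinant-free alternative is to note that any null vector $(\mathbf{c},\mathbf{d})$ of $\mathbf{B}$ satisfies $\mathbf{U}_{\mathcal{S},\mathcal{T}}\mathbf{c}=\mathbf{0}$ on the $\mathcal{S}$-rows and $\mathbf{d}=-\mathbf{U}_{\mathcal{S}^c,\mathcal{T}}\mathbf{c}$ on the $\mathcal{S}^c$-rows, so nonsingularity of $\mathbf{U}_{\mathcal{S},\mathcal{T}}$ is precisely what controls triviality of $\ker\mathbf{B}$.

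The argument is routine linear algebra, so there is no genuinely hard step; the only real work is bookkeeping. The one point worth stating explicitly is that the equivalence hinges on $|\mathcal{S}|=|\mathcal{T}|=k$, which makes both $\mathbf{U}_{\mathcal{S},\mathcal{T}}$ and $\mathbf{B}$ square and thereby lets ``injective'' be upgraded to ``nonsingular''; and the row permutation should be spelled out cleanly so that the block-triangular structure, and hence the determinant identity, is manifest.
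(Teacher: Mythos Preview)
Your argument is correct and cleanly executed. However, note that the paper does not actually supply a proof of this lemma: it is stated with citations to \cite{anis2014towards} and \cite{shomorony} and used as a known result, so there is no in-paper proof to compare against. That said, the determinant identity you derive via the block lower-triangular reduction, namely $|\det\mathbf{B}|=|\det\mathbf{U}_{\mathcal{S},\mathcal{T}}|$, is precisely what the paper invokes (without further justification) in its proof of Corollary~\ref{Co:part_uniq}, where it writes $|\det(\mathbf{E}_m)|=|\det(\mathbf{U}_{\V_m,{\cal R}_m})|$; so your proof is fully consistent with how the authors are implicitly thinking about the lemma.
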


The $m^{th}$ channel of the analysis 
bank 
filters the graph signal by an ideal 
filter on subband ${\cal R}_m$, and downsamples the result 
onto the vertices in $\V_m$. For synthesis, we  
interpolate 
from the samples on $\V_m$ to $\mbox{col}\left(\mathbf{U}_{{\cal R}_m}\right)$. Denoting the analysis coefficients 
(i.e., the filtered and downsampled signal) of the 
$m^{th}$ branch by ${\bf y}_{\V_m}$, we have
\begin{align} \label{Eq:synth}
{\bf f}_{rec} = \sum_{m=1}^M  \mathbf{U}_{{\cal R}_m} \mathbf{U}_{\V_m,{\cal R}_m}^{-1} {\bf y}_{\V_m}.
\end{align}

\noindent If there is no error in the coefficients, then the reconstruction is perfect, because $\V_m$ is a uniqueness set for $\mbox{col}\left(\mathbf{U}_{{\cal R}_m}\right)$, ensuring $\mathbf{U}_{\V_m,{\cal R}_m}$ is full rank. Fig.\ \ref{Fig:arch} shows the architecture of the proposed $M$-channel critically sampled filter bank with interpolation on the synthesis side. \emph{Critically sampled} refers to the fact that the number of analysis coefficients is equal to the length of the original signal; that is, $\sum_{m=1}^M |\V_m| = N$.

\subsection{Partitioning the graph into uniqueness sets for different frequency bands}\label{Se:partition}

In this section, we show how to partition the set of vertices into uniqueness sets for different subbands of the graph Laplacian eigenvectors. We start with the easier case of $M=2$ and then examine the general case.

\subsubsection{$M=2$ channels}
First we show that if a set of vertices is a uniqueness set for a set of signals contained in a band of spectral frequencies, then the complement set of vertices is a uniqueness set for the set of signals with no energy in that band of spectral frequencies.
\begin{proposition}\label{Le:highpass_uniqueness}
On a graph ${\mathcal G}$ with $N$ vertices, let 
${\mathcal T} \subseteq \{0,1,\ldots,N-1\}$ denote a subset of the graph Laplacian eigenvalue indices, and let ${\mathcal T}^c=\{0,1,\ldots,N-1\} \setminus {\mathcal T}$.
Then 
$\mathcal{S}^c$ is a uniqueness set for $\mbox{col}({\mathbf{U}}_{{\mathcal T}^c})$
 if and only if
$\mathcal{S}$ is a uniqueness set for 
$\mbox{col}({\mathbf{U}}_{{\mathcal T}})$.
\end{proposition}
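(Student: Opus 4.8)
The plan is to translate the two uniqueness conditions into statements about trivial intersections of subspaces, and then exploit the orthogonality of $\mathbf{U}$ to pass between a subspace and its orthogonal complement. Throughout I take $|\mathcal{S}| = |\mathcal{T}| =: k$, which is the relevant case: a uniqueness set for a $d$-dimensional subspace must contain at least $d$ vertices, so for $\mathcal{S}$ to be a uniqueness set for the $k$-dimensional space $\mbox{col}(\mathbf{U}_{\mathcal{T}})$ while $\mathcal{S}^c$ is simultaneously a uniqueness set for the $(N-k)$-dimensional space $\mbox{col}(\mathbf{U}_{\mathcal{T}^c})$, we are forced to have $|\mathcal{S}| = k$.

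First I would restate uniqueness as an intersection condition. Writing $W_{\mathcal{S}^c} := \mbox{span}\{\boldsymbol{\delta}_i : i \in \mathcal{S}^c\}$ for the subspace of signals supported off $\mathcal{S}$, the definition says that $\mathcal{S}$ is a uniqueness set for $\mbox{col}(\mathbf{U}_{\mathcal{T}})$ exactly when no nonzero element of $\mbox{col}(\mathbf{U}_{\mathcal{T}})$ vanishes on $\mathcal{S}$, i.e., when $\mbox{col}(\mathbf{U}_{\mathcal{T}}) \cap W_{\mathcal{S}^c} = \{0\}$. Likewise, $\mathcal{S}^c$ is a uniqueness set for $\mbox{col}(\mathbf{U}_{\mathcal{T}^c})$ iff $\mbox{col}(\mathbf{U}_{\mathcal{T}^c}) \cap W_{\mathcal{S}} = \{0\}$, where $W_{\mathcal{S}} := \mbox{span}\{\boldsymbol{\delta}_i : i \in \mathcal{S}\}$.

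Next I would record the two orthogonality facts that make the complements line up. Because the columns of $\mathbf{U}$ form an orthonormal basis of $\R^N$, the bands split orthogonally, so $\mbox{col}(\mathbf{U}_{\mathcal{T}})^{\perp} = \mbox{col}(\mathbf{U}_{\mathcal{T}^c})$; trivially the coordinate subspaces satisfy $W_{\mathcal{S}^c}^{\perp} = W_{\mathcal{S}}$. Thus, setting $P = \mbox{col}(\mathbf{U}_{\mathcal{T}})$ and $Q = W_{\mathcal{S}^c}$, the two conditions above read $P \cap Q = \{0\}$ and $P^{\perp} \cap Q^{\perp} = \{0\}$, respectively, and the proposition reduces to showing these are equivalent.

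Finally I would close the argument with a short dimension count. With $\dim P + \dim Q = k + (N-k) = N$, the identities $\dim(P \cap Q) = \dim P + \dim Q - \dim(P+Q)$ and $P^{\perp} \cap Q^{\perp} = (P+Q)^{\perp}$ give $\dim(P \cap Q) = N - \dim(P+Q) = \dim(P^{\perp} \cap Q^{\perp})$, so one intersection is trivial precisely when the other is. The argument is essentially frictionless; the only point demanding care is the bookkeeping that forces $|\mathcal{S}| = |\mathcal{T}|$ and the resulting balance $\dim P + \dim Q = N$, without which the final equality of intersection dimensions breaks down. As a remark I would note an alternative route that bypasses subspaces entirely: by the characterization in Lemma \ref{Le:eq_uniq} each uniqueness condition amounts to the nonsingularity of a square submatrix of $\mathbf{U}$, and Jacobi's theorem on complementary minors of an inverse, applied to the unitary $\mathbf{U}$ (so $\mathbf{U}^{-1} = \mathbf{U}^{*}$ and $|\det \mathbf{U}| = 1$), yields $|\det \mathbf{U}_{\mathcal{S},\mathcal{T}}| = |\det \mathbf{U}_{\mathcal{S}^c,\mathcal{T}^c}|$, so the two minors vanish together.
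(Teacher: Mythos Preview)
Your argument is correct and rests on the same two ingredients as the paper's appendix proof: the orthogonality $\mbox{col}(\mathbf{U}_{\mathcal{T}})^{\perp}=\mbox{col}(\mathbf{U}_{\mathcal{T}^c})$ and the fact that the relevant dimensions sum to $N$. The paper packages these via Lemma~\ref{Le:eq_uniq} and a contradiction (if the matrix $\mathbf{B}$ built from $\mathbf{U}_{\mathcal{T}^c}$ and the $\boldsymbol{\delta}_{\mathcal{S}_i}$ is singular, a nonzero vector in $P^{\perp}\cap Q^{\perp}$ lands in the null space of $\mathbf{A}^{\top}$), whereas you go directly through the identity $\dim(P\cap Q)=N-\dim(P+Q)=\dim(P^{\perp}\cap Q^{\perp})$; this is the same argument unwound, and arguably cleaner. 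Your Jacobi-minor remark is exactly the nullity-theorem route the paper cites just before stating the proposition.
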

This fact follows from either the CS decomposition \cite[Equation (32)]{paige})
or the nullity theorem \cite[Theorem 2.1]{strangInterplay}.  
We also provide a standalone proof  
in the Appendix that only requires that the space spanned by the first $k$ columns of ${\mathbf{U}}$ is orthogonal to the space spanned by the last $N-k$ columns, not that ${\mathbf{U}}$ is an orthogonal matrix.
The Steinitz exchange lemma \cite{steinitz} guarantees that we can find the uniqueness set $\mathcal{S}$ (and thus $\mathcal{S}^c$), and the graph signal processing literature contains methods such as Algorithm 1 of \cite{shomorony} to do so.

\subsubsection{$M>2$ channels}
The issue with using the methods of Proposition \ref{Le:highpass_uniqueness} for the case of $M>2$ is that while the submatrix ${\mathbf{U}}_{{\mathcal S^c},{\mathcal T^c}}$ is nonsingular, it is not necessarily orthogonal, and so we cannot proceed with an inductive argument. The following proposition and corollary circumvent this issue by only using the nonsingularity of the original matrix. The proof of the following proposition is due to Federico Poloni \cite{poloni}, 
and we later discovered the same method in \cite{greeneMultiple}, \cite[Theorem 3.3]{greene_magnanti}.
\begin{proposition}\label{Pr:mat_part}
Let $\mathbf{A}$ be an $N \times N$ nonsingular matrix, and $\beta=\{\beta_1,\beta_2,\ldots,\beta_M\}$ be a partition of $\{1,2,\ldots,N\}$. Then there exists another partition $\alpha=\{\alpha_1,\alpha_2,\ldots,\alpha_M\}$ of $\{1,2,\ldots,N\}$ with $|\alpha_i|=|\beta_i|$ for all $i$ such that the $M$ square submatrices $\mathbf{A}_{\alpha_i,\beta_i}$ are all nonsingular.
\end{proposition}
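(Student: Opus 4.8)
The plan is to read off the desired partition from a generalized Laplace expansion of $\det(\mathbf{A})$ taken along the prescribed blocks of columns. The only hypothesis available is that $\mathbf{A}$ is nonsingular, i.e.\ $\det(\mathbf{A}) \neq 0$; the idea is to expand this single nonzero scalar as a signed sum over exactly the candidate partitions $\alpha$ appearing in the statement, so that a nonvanishing sum must contain a nonvanishing summand, and a nonvanishing summand forces every one of its block-determinant factors to be nonzero.

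Concretely, I would build the expansion by iterating the classical two-block Laplace rule. Fixing the columns indexed by $\beta_1$, that rule gives
\[
\det(\mathbf{A}) = \sum_{\substack{R \subseteq \{1,\ldots,N\}\\ |R| = |\beta_1|}} \pm\, \det\!\left(\mathbf{A}_{R,\beta_1}\right)\det\!\left(\mathbf{A}_{R^c,\beta_1^c}\right),
\]
where $R$ ranges over all row subsets of cardinality $|\beta_1|$ and the signs depend only on $R$ and $\beta_1$. Expanding each cofactor $\det(\mathbf{A}_{R^c,\beta_1^c})$ again along $\beta_2$, then $\beta_3$, and so on, after $M-1$ steps produces the fully expanded form
\[
\det(\mathbf{A}) = \sum_{\alpha} \operatorname{sgn}(\alpha,\beta)\prod_{i=1}^{M} \det\!\left(\mathbf{A}_{\alpha_i,\beta_i}\right),
\]
in which the outer sum runs over all tuples $\alpha=(\alpha_1,\ldots,\alpha_M)$ partitioning $\{1,\ldots,N\}$ with $|\alpha_i|=|\beta_i|$ for every $i$, and $\operatorname{sgn}(\alpha,\beta)\in\{+1,-1\}$ is a sign I will never need to evaluate.

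The conclusion is then a one-line pigeonhole step: the left-hand side is nonzero, hence at least one summand on the right is nonzero, and since each summand is $\pm$ a product of the $M$ block determinants $\det(\mathbf{A}_{\alpha_i,\beta_i})$, all $M$ of these factors must be simultaneously nonzero for the corresponding partition $\alpha$. Each submatrix $\mathbf{A}_{\alpha_i,\beta_i}$ is therefore nonsingular, which is exactly the claim.

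I expect the real work to lie in justifying the multi-block expansion, not in the final deduction. The two-block Laplace rule is standard, but iterating it cleanly---tracking the index relabelings so that the inner products are genuinely indexed by partitions and the signs compose into a single well-defined $\operatorname{sgn}(\alpha,\beta)$---calls for a careful induction on $M$. Should that bookkeeping become unwieldy, I would fall back on a sign-free induction that avoids the full expansion: the two-block case of the proposition itself follows from the displayed two-block identity (some term is nonzero, so there is an $\alpha_1$ with both $\mathbf{A}_{\alpha_1,\beta_1}$ and $\mathbf{A}_{\alpha_1^c,\beta_1^c}$ nonsingular); one then recurses on the nonsingular $(N-|\beta_1|)\times(N-|\beta_1|)$ matrix $\mathbf{A}_{\alpha_1^c,\beta_1^c}$ with the induced column partition $\{\beta_2,\ldots,\beta_M\}$ to peel off the remaining blocks one at a time.
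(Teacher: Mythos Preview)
Your proposal is correct and matches the paper's proof almost exactly: the paper likewise applies the generalized Laplace expansion first for $M=2$ and then inductively for general $M$, obtaining $\det(\mathbf{A})=\sum_{\alpha}\sigma_{\alpha}\prod_{i=1}^{M}\det(\mathbf{A}_{\alpha_i,\beta_i})$, and concludes by the same nonvanishing-summand argument. Your ``fallback'' induction peeling off one block at a time is precisely the structure the paper uses to justify the multi-block expansion.
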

\begin{proof}
First consider the case $M=2$, and let $k=|\beta_1|$. Then by the generalized Laplace expansion \cite{gle}, 
\begin{align}\label{Eq:det2}
\mbox{det}(\mathbf{A})~=~\sum_{\mathclap{\{\alpha_1 \subset \{1,2,\ldots,N\}: |\alpha_1|=k \}}}~~\sigma_{\alpha_1,\beta_1} \mbox{det}(\mathbf{A}_{\alpha_1,\beta_1})\mbox{det}(\mathbf{A}_{\alpha_1^c,\beta_2}),
\end{align}
where the sign $\sigma_{\alpha_1,\beta_1}$ of the permutation determined by $\alpha_1$ and $\beta_1$ is equal to 1 or -1. Since $\mbox{det}(\mathbf{A})\neq 0$, one of the terms in the summation of \eqref{Eq:det2} must be nonzero, ensuring a choice of $\alpha_1$ such that the submatrices $\mathbf{A}_{\alpha_1,\beta_1}$ and $\mathbf{A}_{\alpha_1^c,\beta_2}$ are nonsingular. We can choose $\{\alpha_1,\alpha_1^c\}$ as the desired partition. For $M>2$, by induction, we have
\begin{align}\label{Eq:detM}
\mbox{det}(\mathbf{A})~=~\sum_{\mathclap{\{\hbox{Partitions } \alpha \hbox{ of }\{1,2,\ldots,N\}: |\alpha_i|=|\beta_i|~\forall i \}}}~~\sigma_{\alpha}~{\textstyle \prod_{i=1}^M}~\mbox{det}(\mathbf{A}_{\alpha_i,\beta_i}),
\end{align}
where again $|\sigma_\alpha|=1$,  
and one of the terms in the summation in \eqref{Eq:detM} must be nonzero, yielding the desired partition. 
\end{proof}

\begin{corollary}\label{Co:part_uniq}
For any 
partition $\{{\cal R}_1,{\cal R}_2,\ldots {\cal R}_M \}$ of the graph Laplacian eigenvalue indices $\{0,1,\ldots,N-1\}$ into $M$ subsets, there exists a partition $\{\V_1,\V_2,\ldots,\V_M\}$ of the graph vertices into $M$ subsets 
such that for every $m \in \{1,2,\ldots,M\}$, $|\V_m|=|{\cal R}_m|$ and 
$\V_m$ is a uniqueness set for $\mbox{col}\left(\mathbf{U}_{{\cal R}_m}\right)$. 
\end{corollary}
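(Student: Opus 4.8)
The plan is to obtain this as an essentially immediate consequence of Proposition \ref{Pr:mat_part} and Lemma \ref{Le:eq_uniq}. First I would take $\mathbf{A}=\mathbf{U}$, the $N\times N$ matrix of graph Laplacian eigenvectors, which is nonsingular since $\mathbf{U}$ is orthogonal (indeed unitary). After the trivial reindexing that identifies the eigenvalue indices $\{0,1,\ldots,N-1\}$ with $\{1,2,\ldots,N\}$, the given spectral partition $\{{\cal R}_1,{\cal R}_2,\ldots,{\cal R}_M\}$ plays the role of the column partition $\beta$ in the proposition. Applying Proposition \ref{Pr:mat_part} then produces a partition $\{\V_1,\V_2,\ldots,\V_M\}$ of the vertex set (the row partition $\alpha$) with $|\V_m|=|{\cal R}_m|$ for every $m$, such that each of the $M$ square submatrices $\mathbf{U}_{\V_m,{\cal R}_m}$ is nonsingular.

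The remaining step is to translate the nonsingularity of $\mathbf{U}_{\V_m,{\cal R}_m}$ into the uniqueness-set property. Because $|\V_m|=|{\cal R}_m|$, each $\mathbf{U}_{\V_m,{\cal R}_m}$ is square, and for a square matrix nonsingularity is the same as full column rank. I would connect this to Lemma \ref{Le:eq_uniq} by reordering rows so that the vertices of $\V_m$ come first: the augmented matrix of Lemma \ref{Le:eq_uniq}, whose columns are the eigenvectors indexed by ${\cal R}_m$ together with the Kronecker deltas centered on the vertices of $\V_m^c$, then takes the block lower-triangular form $\left(\begin{smallmatrix}\mathbf{U}_{\V_m,{\cal R}_m} & \mathbf{0}\\ \mathbf{U}_{\V_m^c,{\cal R}_m} & \mathbf{I}\end{smallmatrix}\right)$, whose determinant equals $\mbox{det}(\mathbf{U}_{\V_m,{\cal R}_m})$ up to sign. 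Hence this augmented matrix is nonsingular precisely when $\mathbf{U}_{\V_m,{\cal R}_m}$ is, and Lemma \ref{Le:eq_uniq} then certifies that $\V_m$ is a uniqueness set for $\mbox{col}(\mathbf{U}_{{\cal R}_m})$. Carrying this out for each $m$ yields the claim.

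There is no genuine obstacle here, since the corollary is a clean composition of two results already in hand: Proposition \ref{Pr:mat_part} supplies the partition with matching cardinalities and nonsingular diagonal blocks, while Lemma \ref{Le:eq_uniq} supplies the dictionary between nonsingularity and uniqueness sets. The only point requiring a little care is the equal-cardinality reduction described above, namely verifying that once $|\V_m|=|{\cal R}_m|$ the full $N\times N$ nonsingularity criterion of Lemma \ref{Le:eq_uniq} collapses to plain nonsingularity of the square block $\mathbf{U}_{\V_m,{\cal R}_m}$. I would also double-check the bookkeeping that the row partition returned by Proposition \ref{Pr:mat_part} indexes vertices while the column partition indexes spectral bands, so that $\mathbf{U}_{\V_m,{\cal R}_m}$ carries exactly the row/column roles needed to invoke the uniqueness-set definition.
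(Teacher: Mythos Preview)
Your proposal is correct and follows essentially the same route as the paper: apply Proposition~\ref{Pr:mat_part} with $\mathbf{A}=\mathbf{U}$ to obtain the vertex partition with nonsingular blocks $\mathbf{U}_{\V_m,{\cal R}_m}$, then invoke Lemma~\ref{Le:eq_uniq} after observing that the augmented matrix has determinant equal (up to sign) to $\det(\mathbf{U}_{\V_m,{\cal R}_m})$. The paper states this determinant identity in one line, whereas you spell out the block lower-triangular structure that justifies it, but the argument is the same.
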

\begin{proof}
By Proposition \ref{Pr:mat_part}, we can find a partition such that $\mathbf{U}_{\V_m,{\cal R}_m}$ is nonsingular for all $m$. Let $\mathbf{E}_m$ be the matrix formed by joining the $k_m$ columns of  $\mathbf{U}$ indexed by ${\cal R}_m$ with $N-k_m$ Kronecker deltas centered on all vertices not included in $\V_m$. By Lemma \ref{Le:eq_uniq}, it suffices to show that the matrices $\mathbf{E}_m$ are all nonsingular. Yet, for all $m$, we have
$|\mbox{det}(\mathbf{E}_m)|=|\mbox{det}(\mathbf{U}_{\V_m,{\cal R}_m})| \neq 0.$
\end{proof}

\setlength{\textfloatsep}{12pt}
\begin{algorithm}[t] 
\caption{Partition the vertices into uniqueness sets for each frequency band}
\begin{algorithmic}
\State \textbf{Input} $\mathbf{U}$,~a partition $\{{{\cal R}_1,{\cal R}_2,\ldots,{\cal R}_M}\}$ 
\State $\mathcal{S} \gets \emptyset$
\For {$m=1,2,\ldots,M$} 
\State Find sets $\gamma_1, \gamma_2 \subset {\mathcal{S}}^c$ s.t. $\mathbf{U}_{\gamma_1,{\cal R}_m}$ and $\mathbf{U}_{\gamma_2,{\cal R}_{m+1:M}}$ are nonsingular
\While {$\gamma_1 \cap \gamma_2 \neq \emptyset$}
\State Find a chain of pivots from an element $y \in {\mathcal S}^c \setminus (\gamma_1 \cup \gamma_2)$
to an element $z \in \gamma_1 \cap \gamma_2$ (c.f. \cite{greene_magnanti} for details)
\State Update $\gamma_1$ and $\gamma_2$ by carrying out a series of exchanges resulting with $y$ and $z$ each appearing in exactly one of $\gamma_1$ or $\gamma_2$
\EndWhile
\State $\V_m \gets \gamma_1$
\State $\mathcal{S} \gets \mathcal{S} \cup \gamma_1$ 
\EndFor
\State \textbf{Output} the partition $\{\V_1,\V_2,\ldots,\V_M$\}
\end{algorithmic}
\label{Al:uniqueness}
\end{algorithm}

Corollary \ref{Co:part_uniq} ensures the existence of the desired partition, and the proof of Proposition \ref{Pr:mat_part} suggests that we can find it inductively. However, given a partition of the columns of ${\mathbf{A}}$ into two sets $\mathcal{T}$ and $\mathcal{T}^c$, Proposition \ref{Pr:mat_part} does not provide a constructive method to partition the rows of ${\mathbf{A}}$ into two sets $\mathcal{S}$ and $\mathcal{S}^c$ such that the submatrices ${\mathbf{A}}_{{\mathcal{S}},{\mathcal{T}}}$ and ${\mathbf{A}}_{{\mathcal{S}^c},{\mathcal{T}}^c}$ are nonsingular. This problem is studied in the more general framework of matroid theory in \cite{greene_magnanti}, which gives an algorithm to find the desired row partition. 
We summarize this method in Algorithm \ref{Al:uniqueness}, which takes in a partition $\{{\cal R}_1,{\cal R}_2,\ldots {\cal R}_M \}$ of the spectral indices and constructs the partition $\{\V_1,\V_2,\ldots,\V_M\}$ of the vertices.  In Fig. \ref{Fig:part_examples}, we show two examples of the resulting partitions.

\begin{figure}[t]
\begin{minipage}[m]{0.43\linewidth}
\centerline{\includegraphics[width=.88\linewidth]{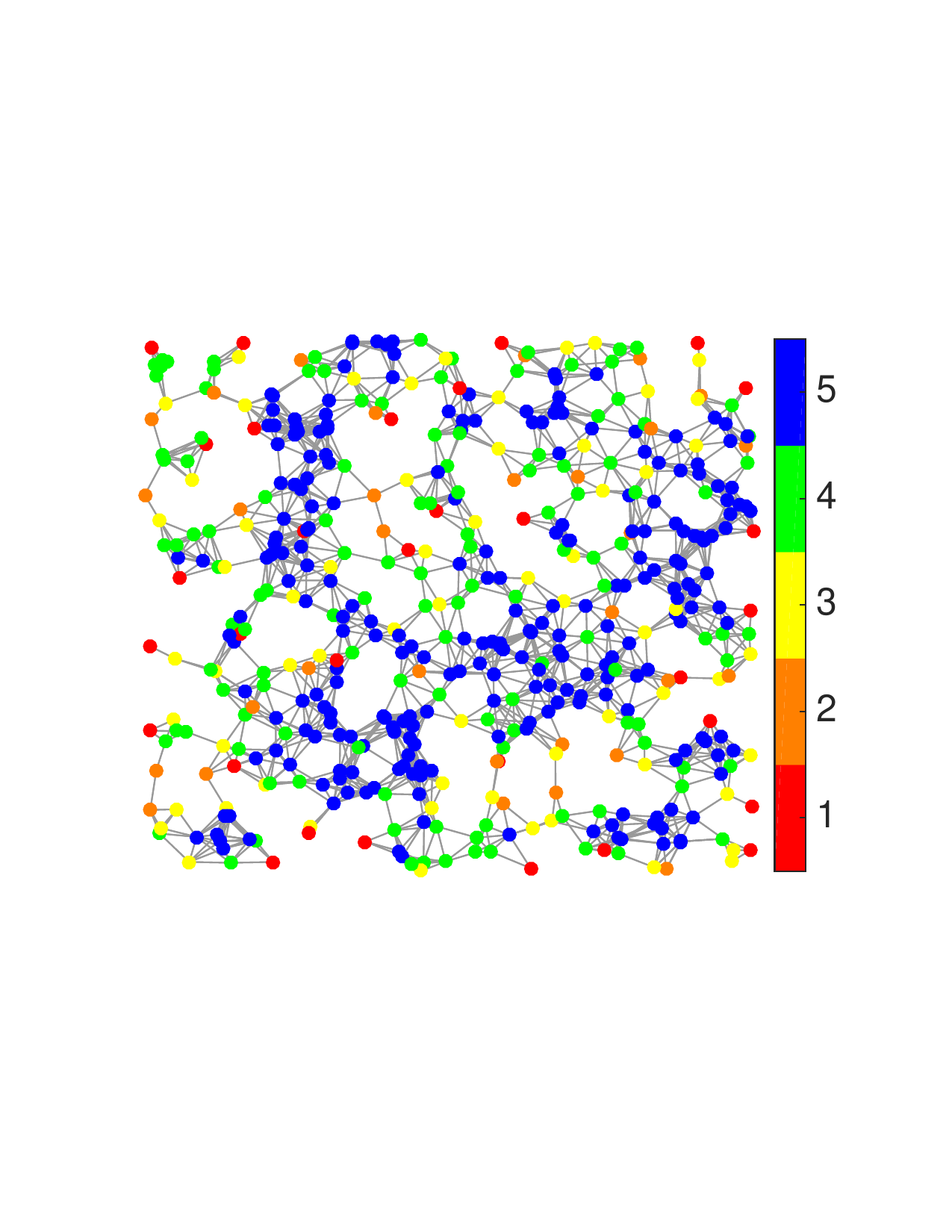}}
\end{minipage}
\hspace{.003\linewidth}
\begin{minipage}[m]{0.05\linewidth}
\centerline{\includegraphics[width=.85\linewidth]{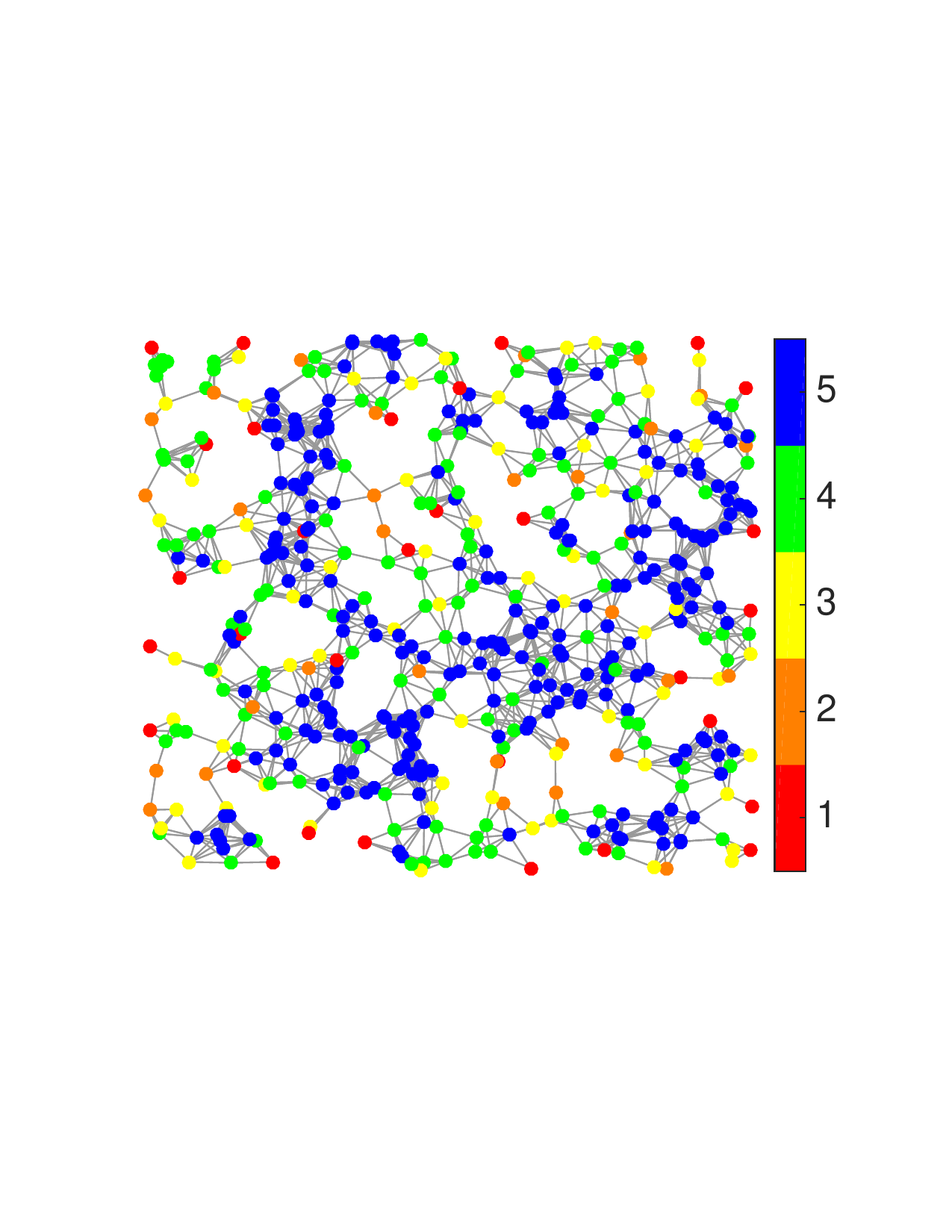}}
\end{minipage}
\hspace{.003\linewidth}
\begin{minipage}[m]{0.46\linewidth}
\centerline{\includegraphics[width=.9\linewidth]{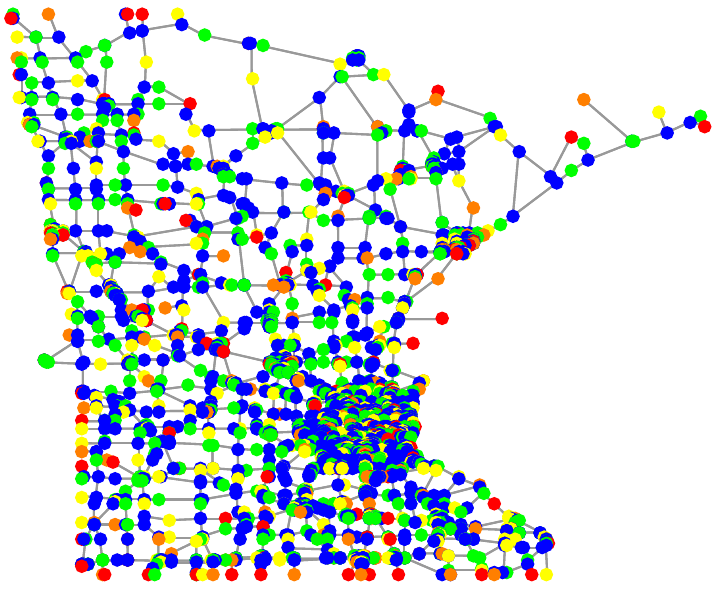}}
\end{minipage}
\caption{Partitions of a 500 node random sensor network and the Minnesota road network \cite{gleich} into uniqueness sets for five different sectral bands, with the indices increasing from lowpass bands (1) to highpass bands (5).} \label{Fig:part_examples}
\vspace{-.4cm}
\end{figure} 

\begin{remark}
While Algorithm \ref{Al:uniqueness} always finds a partition into uniqueness sets, such a partition is usually not unique. The initial choices of $\gamma_i$ in each loop play a significant role in the final 
partition. In the numerical experiments, 
we use the greedy algorithm in \cite[Algorithm 1]{shomorony}  to find an initial choice for $\gamma_1$, permute the complement of $\gamma_1$ to the top,
and then perform row reduction 
to find an initial choice for $\gamma_2$.  
\end{remark}

\subsection{Transform properties}
\subsubsection{Dictionary atoms}
Let ${\bf M}_m \in \R^{|{\cal V}_m| \times N}$ be the downsampling matrix for the $m^{th}$ channel. That is, ${\bf M}_m(i,j)=1$ if vertex $j$ is the $i^{th}$ element of ${\cal V}_m$, and 0 otherwise. The proposed transform is a linear mapping ${\cal F}: \R^N \rightarrow \R^N$ by 
${\cal F}{\bf f} = \boldsymbol{\Phi}^{\top} {\bf f}$, where the 
resulting dictionary is of the form 
$\boldsymbol{\Phi}:=\Bigl[h_1(\L){\bf M}_1^{\top} \mid 
h_2(\L){\bf M}_2^{\top} \mid 
\cdots  
\mid 
h_M(\L){\bf M}_M^{\top}
\Bigr]$.
While the transform is not orthogonal, each atom (column of $\boldsymbol{\Phi}$) is orthogonal to all atoms concentrated on other spectral bands. This is because the atoms are projections of Kronecker deltas onto the orthogonal subspaces spanned by the Laplacian eigenvectors of each band. 
That is, each atom is of the form $h_m(\L){\boldsymbol \delta}_i$, where vertex $i$ is in ${\cal V}_m$. If $m \neq m^{\prime}$, then the inner product of two atoms from different bands is given by
\begin{align} \label{Eq:ortho_bands}
&\langle h_m(\L){\boldsymbol \delta}_i , h_{m^{\prime}}(\L){\boldsymbol \delta}_{i^{\prime}} \rangle \nonumber \\
&~~~~~~
= {\boldsymbol \delta}_i^{\top} {\bf U} h_m({\boldsymbol \Lambda}) {\bf U}^* {\bf U} h_{m^{\prime}}({\boldsymbol \Lambda}) {\bf U}^* {\boldsymbol \delta}_{i^{\prime}}= 0,
\end{align}
since ${\bf U}^* {\bf U}={\bf I}$ and $h_m(\lambda)h_{m^{\prime}}(\lambda)=0$ for all $\lambda$ by design.
Note also that the wavelet atoms at all scales ($m>1$) have mean zero, as they have no energy at eigenvalue zero.

\begin{figure}[tbh]
\begin{minipage}[m]{0.48\linewidth}
\centerline{\includegraphics[width=.9\linewidth]{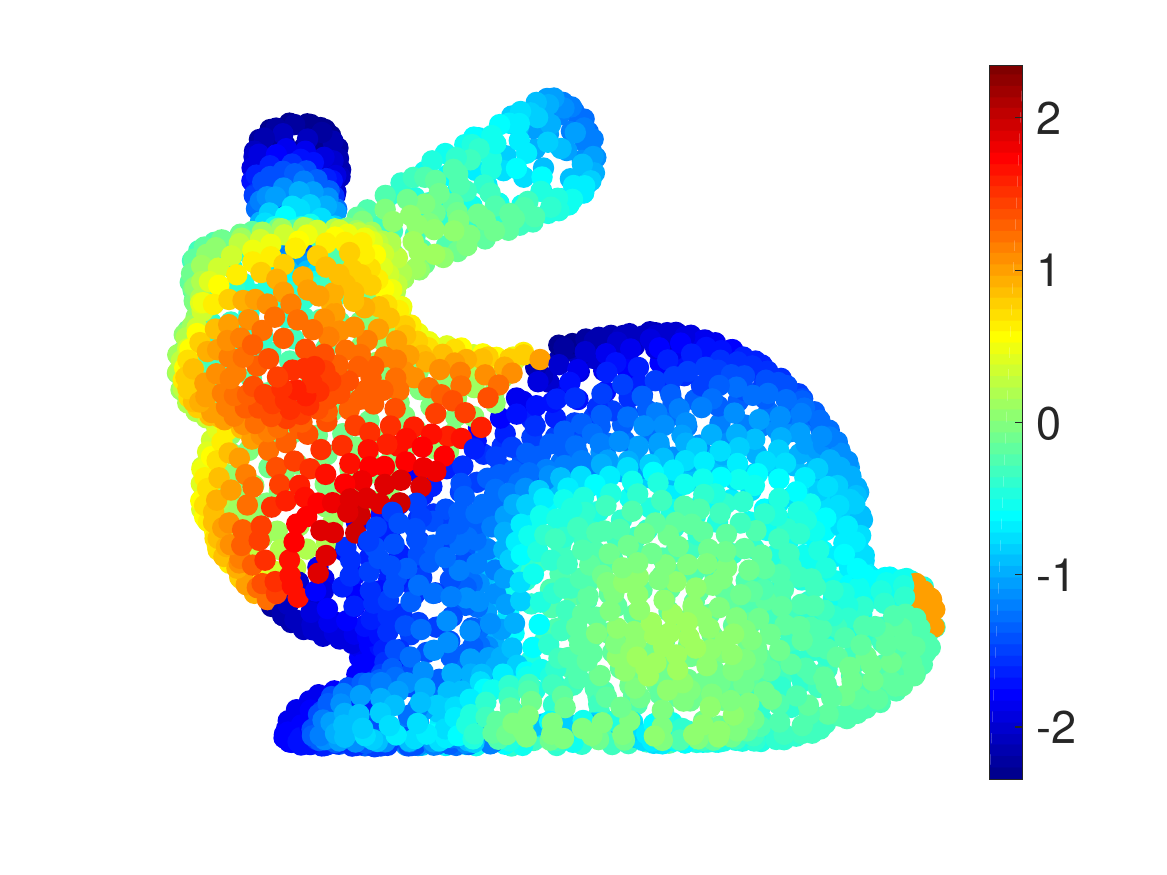}}
\centerline{\small{(a)}}
\end{minipage}
\begin{minipage}[m]{0.48\linewidth}
\centerline{\includegraphics[width=.9\linewidth]{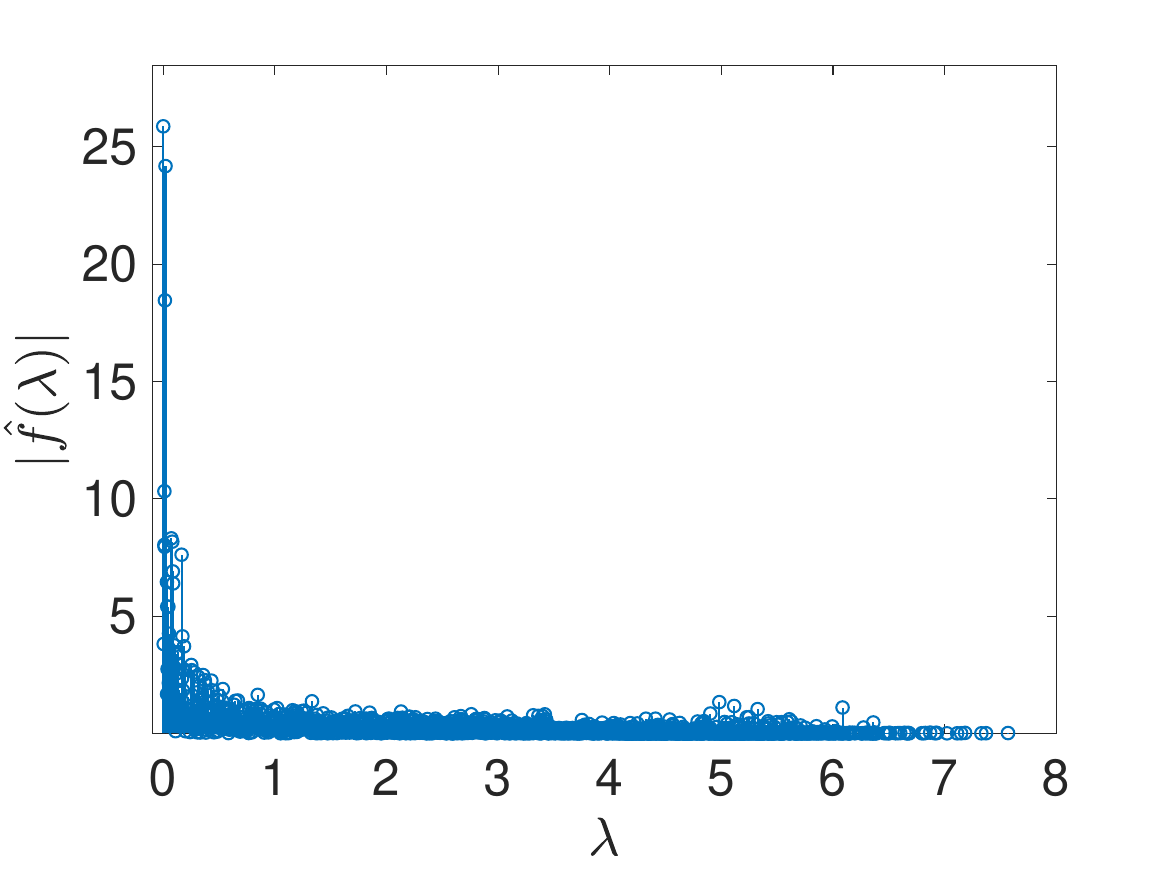}~}
\centerline{\small{(b)}}
\end{minipage} \\
\begin{minipage}[m]{0.48\linewidth}
\centerline{\includegraphics[width=.9\linewidth]{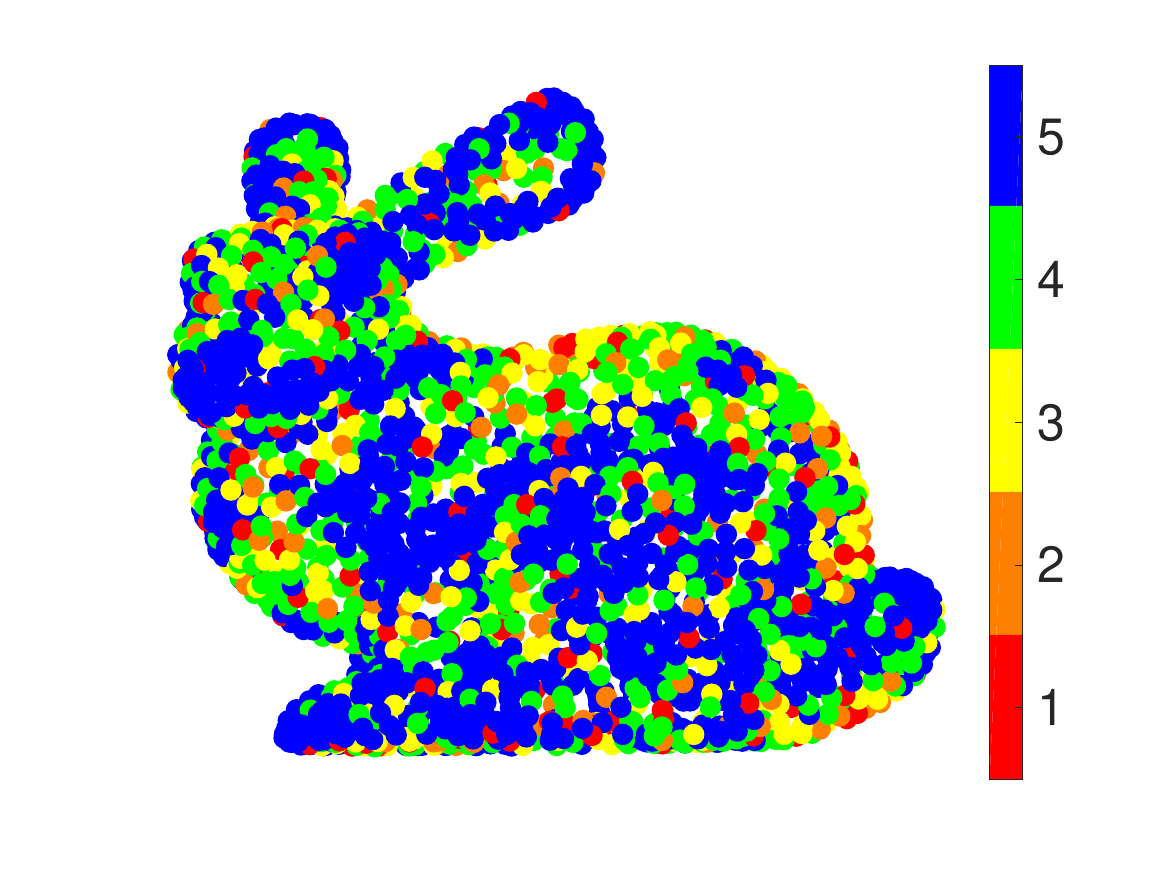}}
\centerline{\small{(c)}}
\end{minipage}
\begin{minipage}[m]{0.48\linewidth}
\centerline{\includegraphics[width=.9\linewidth]{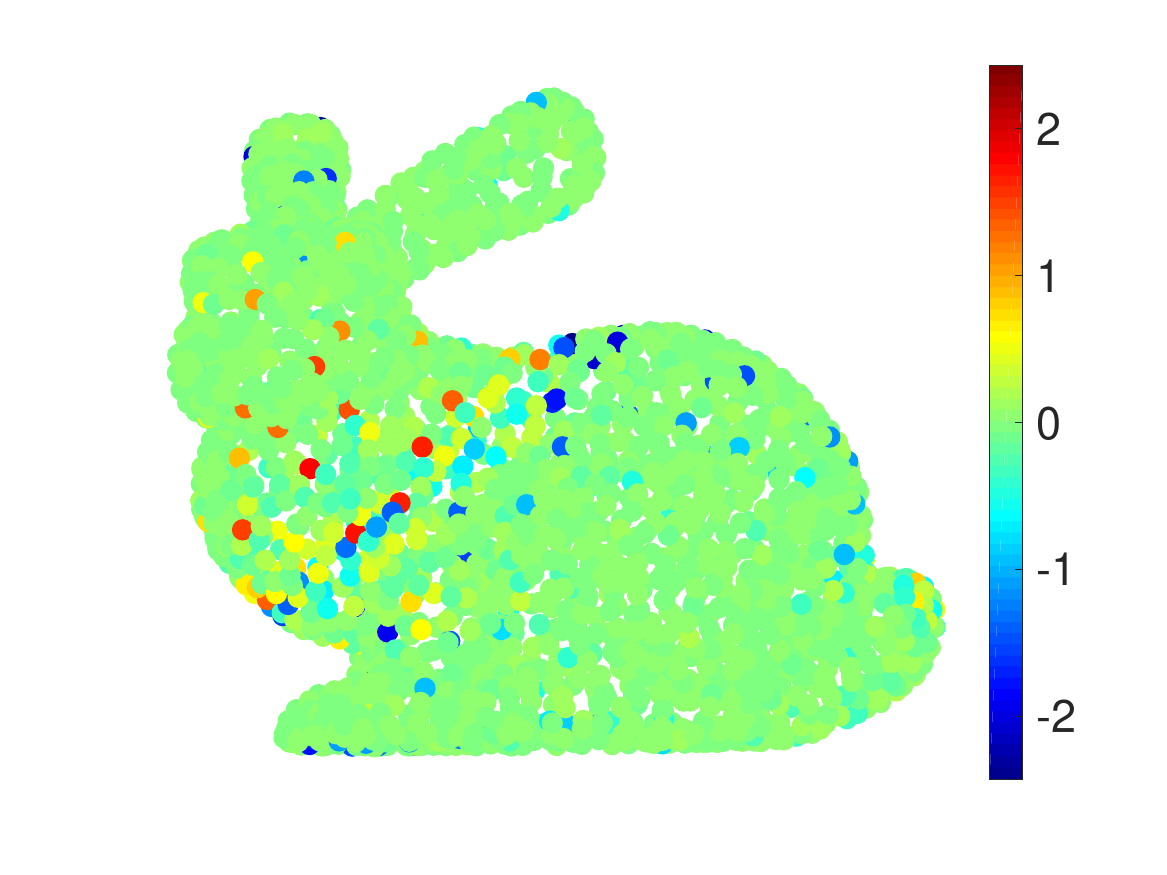}}
\centerline{\small{(d)}}
\end{minipage}
\caption{(a)-(b) Piecewise smooth signal on the Stanford bunny graph \cite{bunny} in the vertex and graph spectral domains, respectively. (c) Partition of the graph into uniqueness sets for five different spectral bands. (d) $M$-channel filter bank analysis coefficients of the signal shown in (a) and (b). \vspace{-.1in}}\label{Fig:bunny_signal}
\end{figure}

\begin{figure*}[t] 
\begin{minipage}[m]{0.16\linewidth}
~
\end{minipage}
\begin{minipage}[m]{0.16\linewidth}
\centerline{\small{~~~Scaling Functions}}
\end{minipage}
\hspace{.01\linewidth}
\begin{minipage}[m]{0.16\linewidth}
\centerline{\small{Wavelet Scale 1~}}
\end{minipage}
\begin{minipage}[m]{0.16\linewidth}
\centerline{\small{Wavelet Scale 2~}}\end{minipage}
\begin{minipage}[m]{0.16\linewidth}
\centerline{\small{Wavelet Scale 3~}}\end{minipage}
\begin{minipage}[m]{0.16\linewidth}
\centerline{\small{Wavelet Scale 4~}}\end{minipage} \\
\begin{minipage}[m]{0.16\linewidth}
\centerline{\small{Example Atom}}
\end{minipage}
\begin{minipage}[m]{0.16\linewidth}
\centerline{~~\includegraphics[width=1\linewidth]{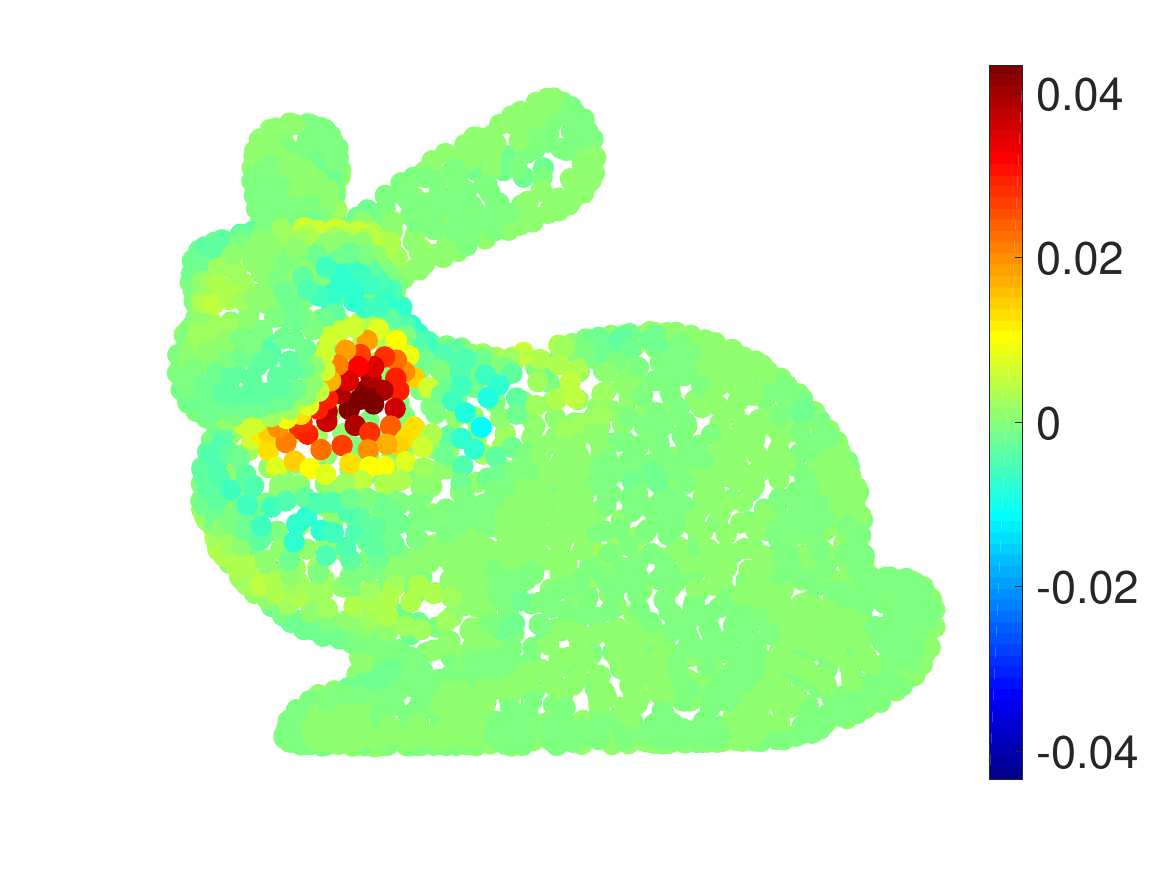}}
\end{minipage}
\begin{minipage}[m]{0.16\linewidth}
\centerline{~~\includegraphics[width=1\linewidth]{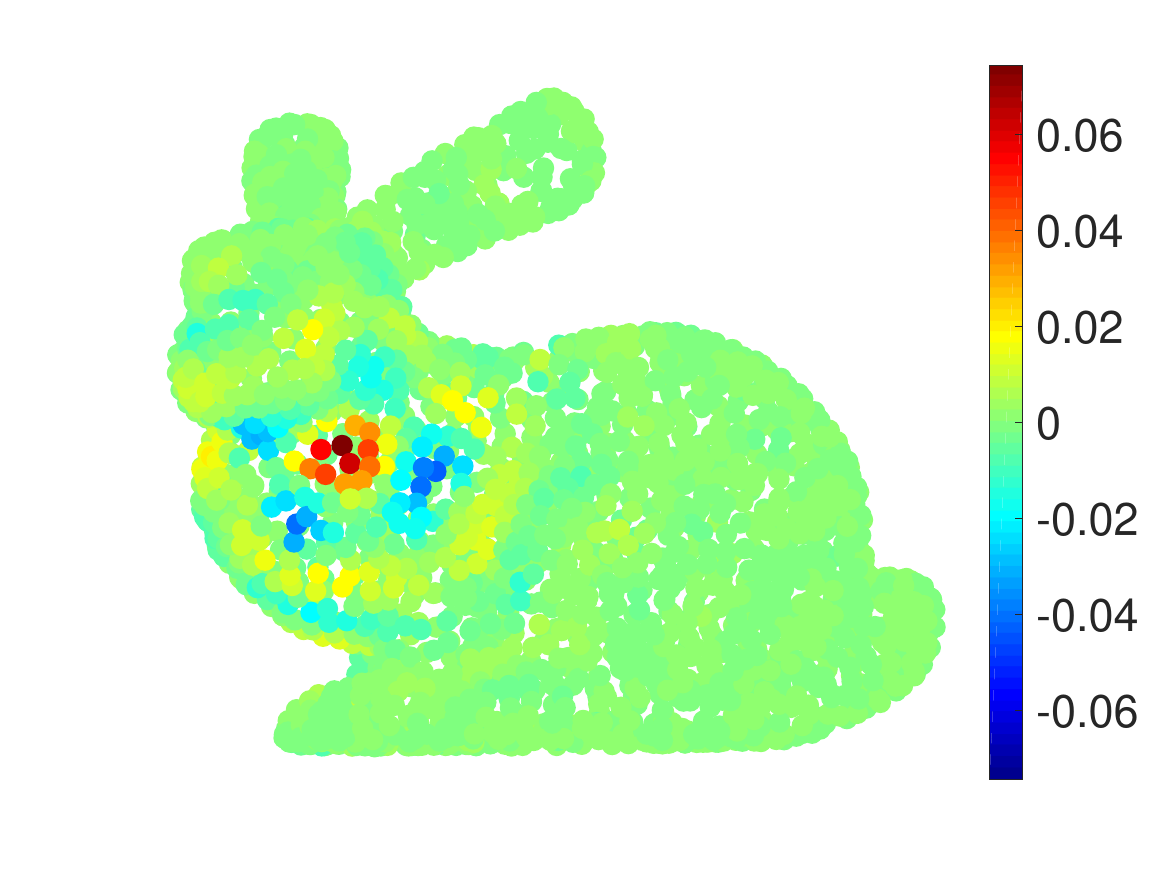}}
\end{minipage}
\begin{minipage}[m]{0.16\linewidth}
\centerline{~~\includegraphics[width=1\linewidth]{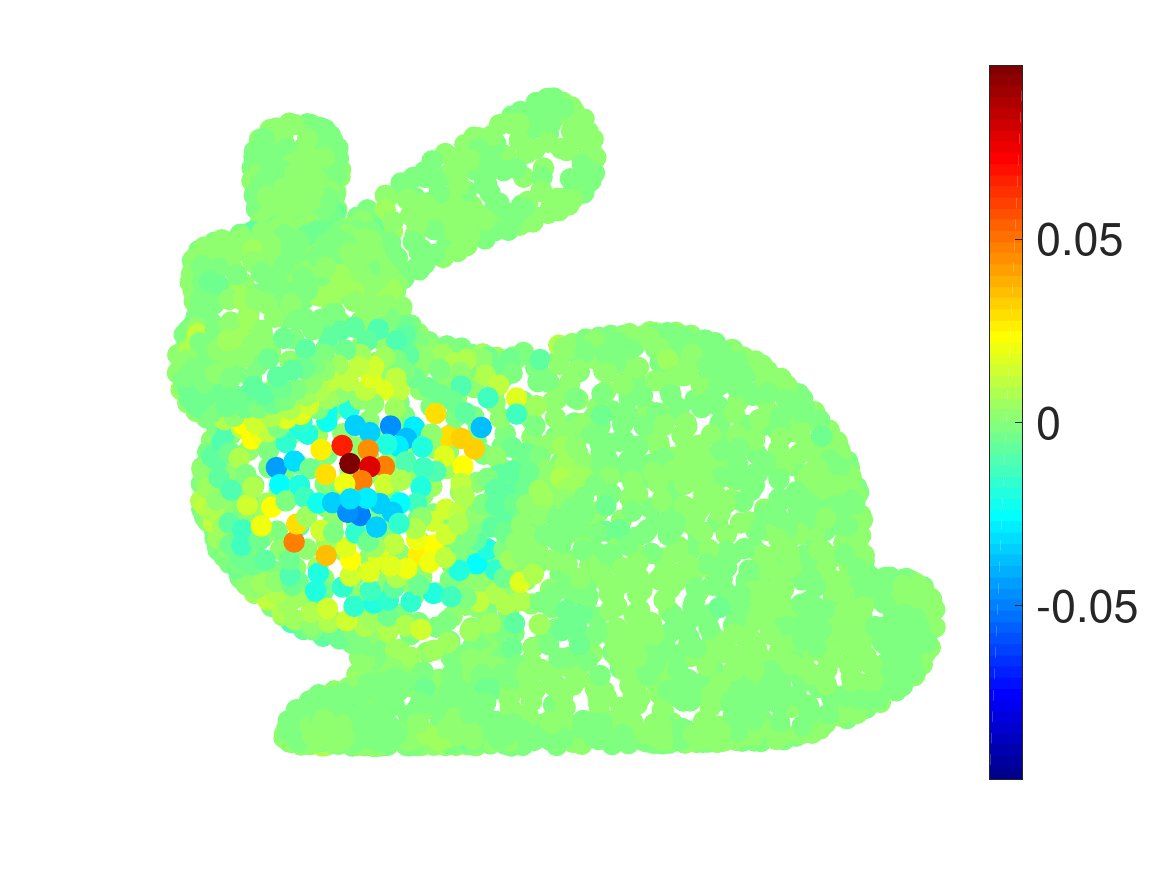}}
\end{minipage}
\begin{minipage}[m]{0.16\linewidth}
\centerline{~~\includegraphics[width=1\linewidth]{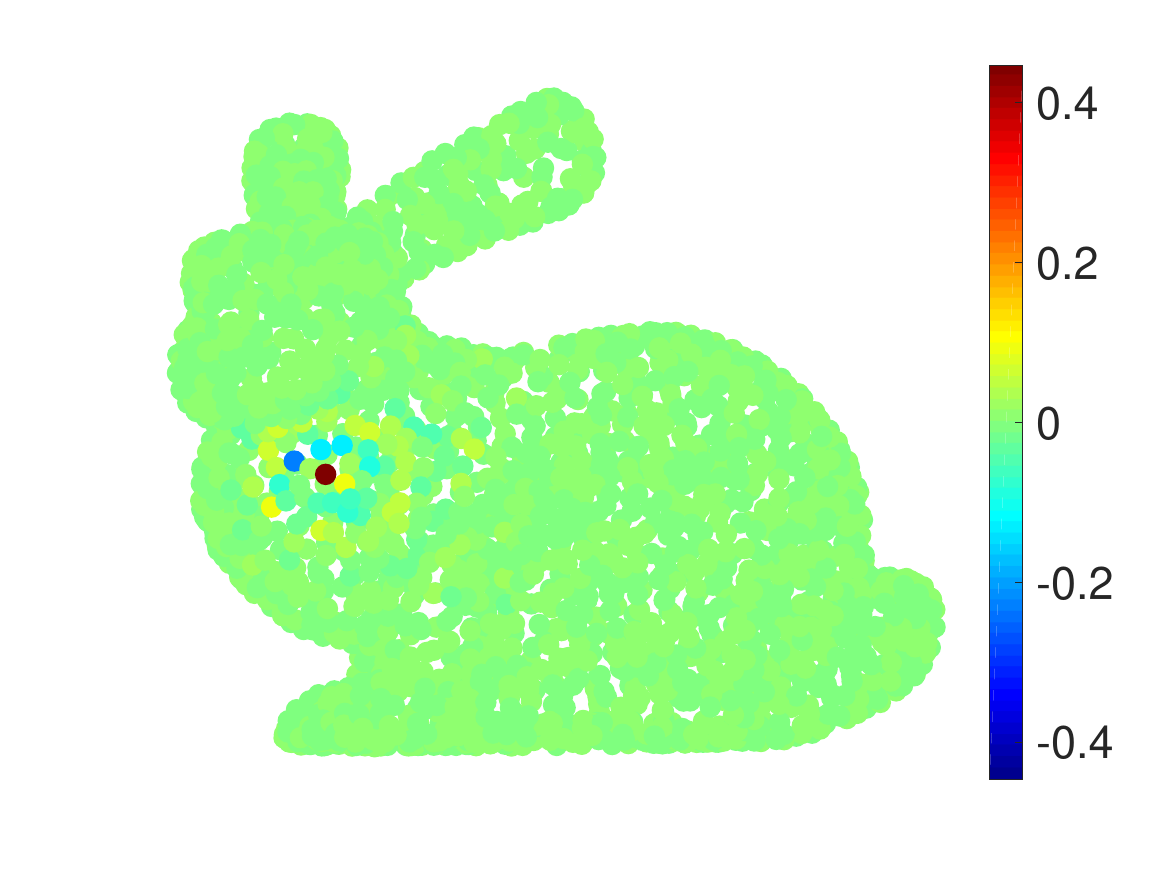}}
\end{minipage}
\begin{minipage}[m]{0.16\linewidth}
\centerline{~~\includegraphics[width=1\linewidth]{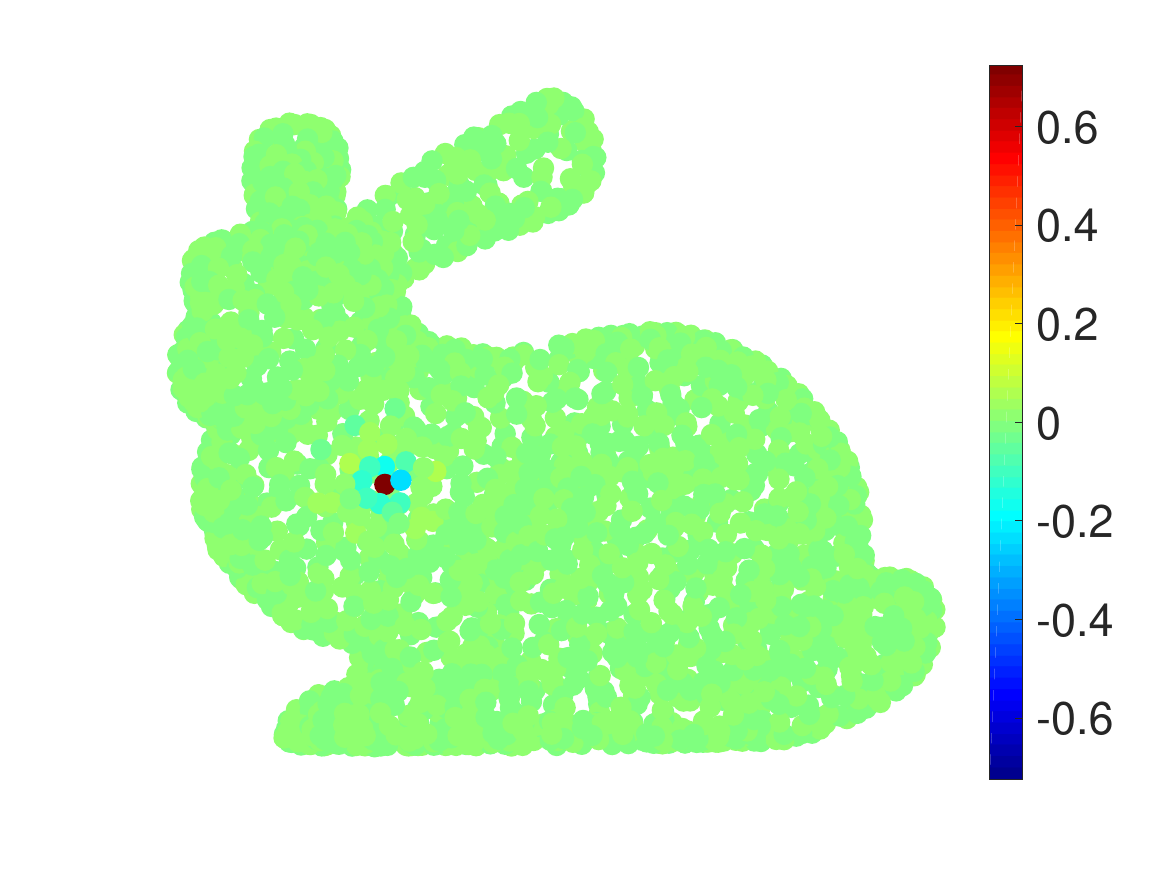}}
\end{minipage}\\
\begin{minipage}[m]{0.16\linewidth}
\centerline{\small{Spectral Content}}
\centerline{\small{of All Atoms}}
\end{minipage}
\begin{minipage}[m]{0.16\linewidth}
\centerline{\includegraphics[width=1\linewidth]{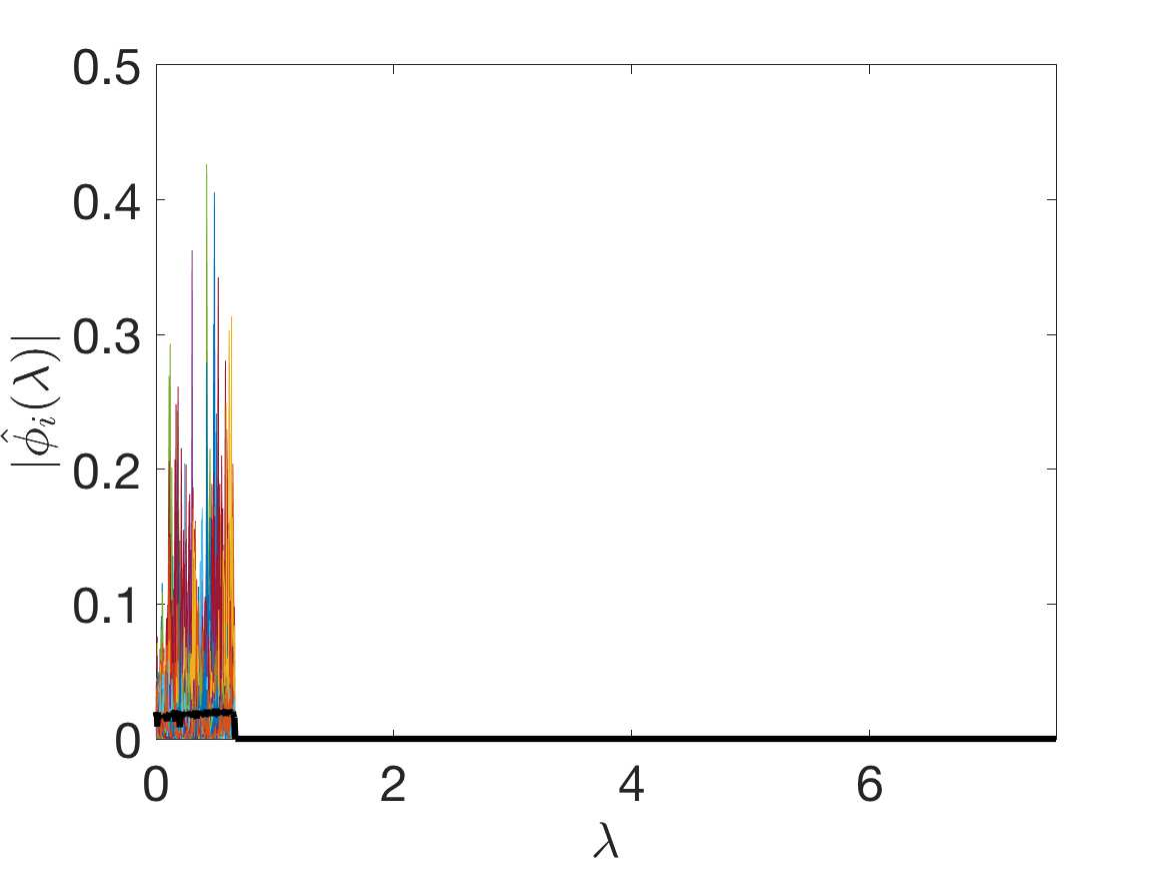}}
\end{minipage}
\begin{minipage}[m]{0.16\linewidth}
\centerline{\includegraphics[width=1\linewidth]{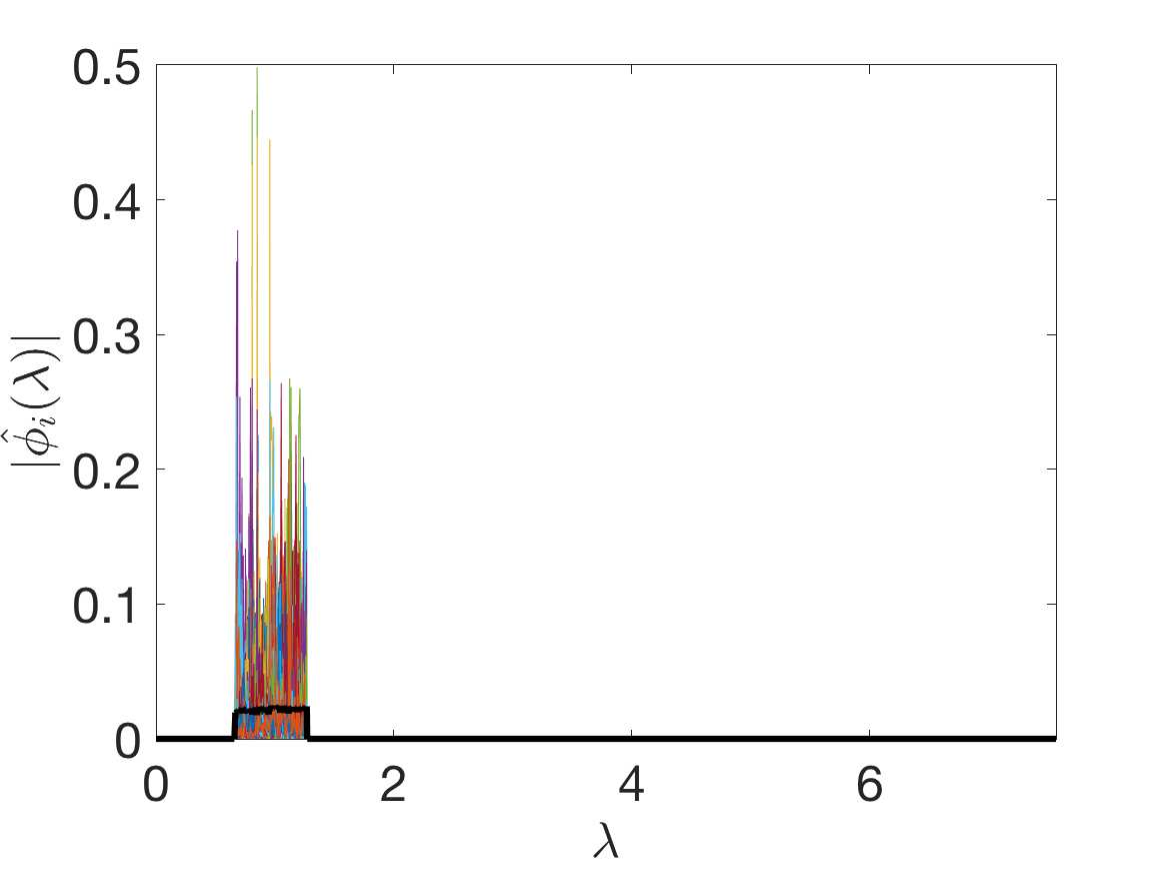}}
\end{minipage}
\begin{minipage}[m]{0.16\linewidth}
\centerline{\includegraphics[width=1\linewidth]{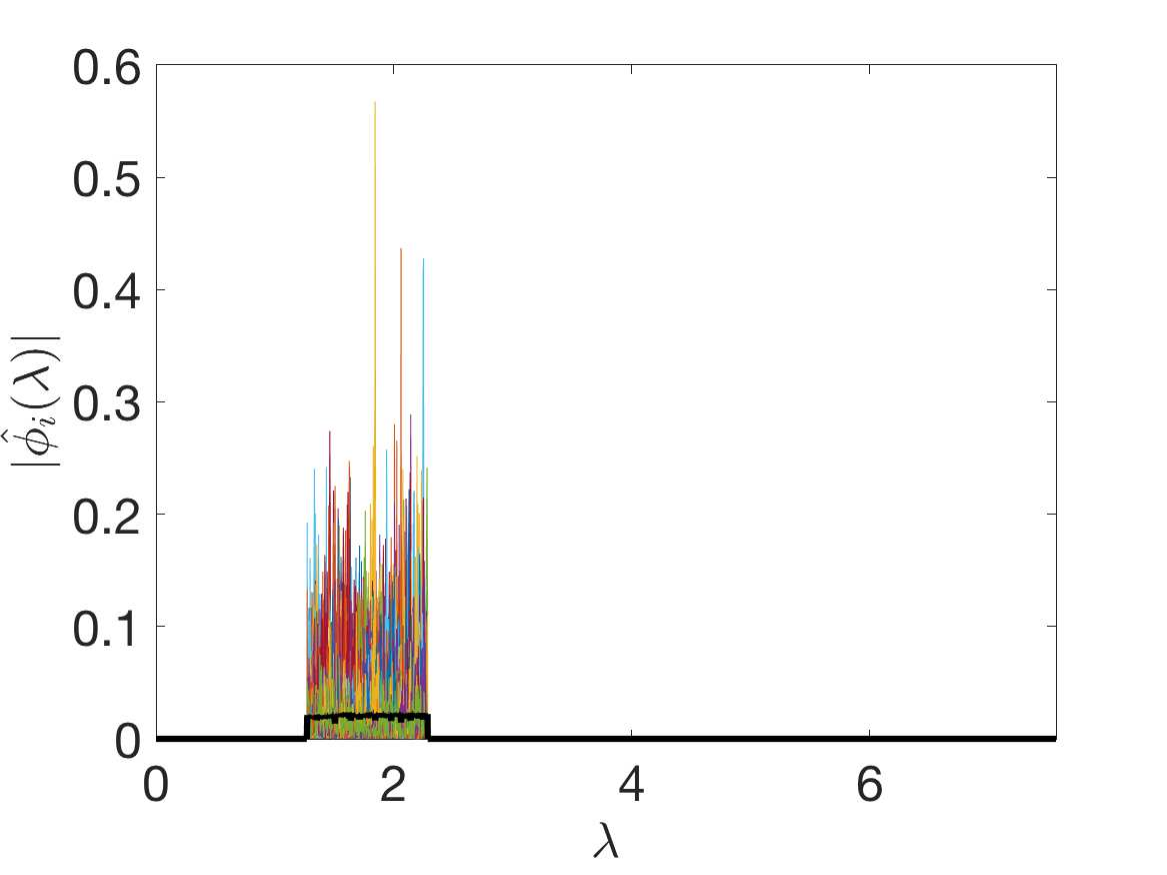}}
\end{minipage}
\begin{minipage}[m]{0.16\linewidth}
\centerline{\includegraphics[width=1\linewidth]{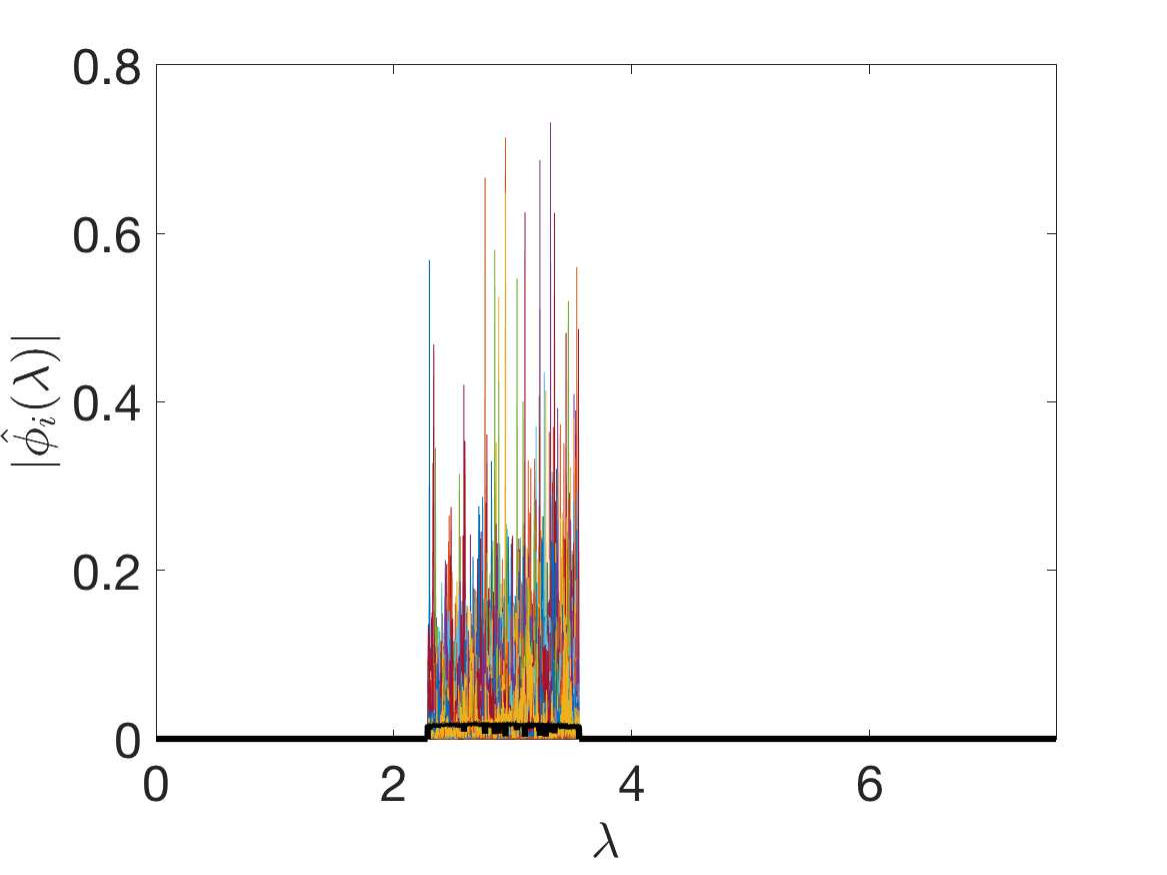}}
\end{minipage}
\begin{minipage}[m]{0.16\linewidth}
\centerline{\includegraphics[width=1\linewidth]{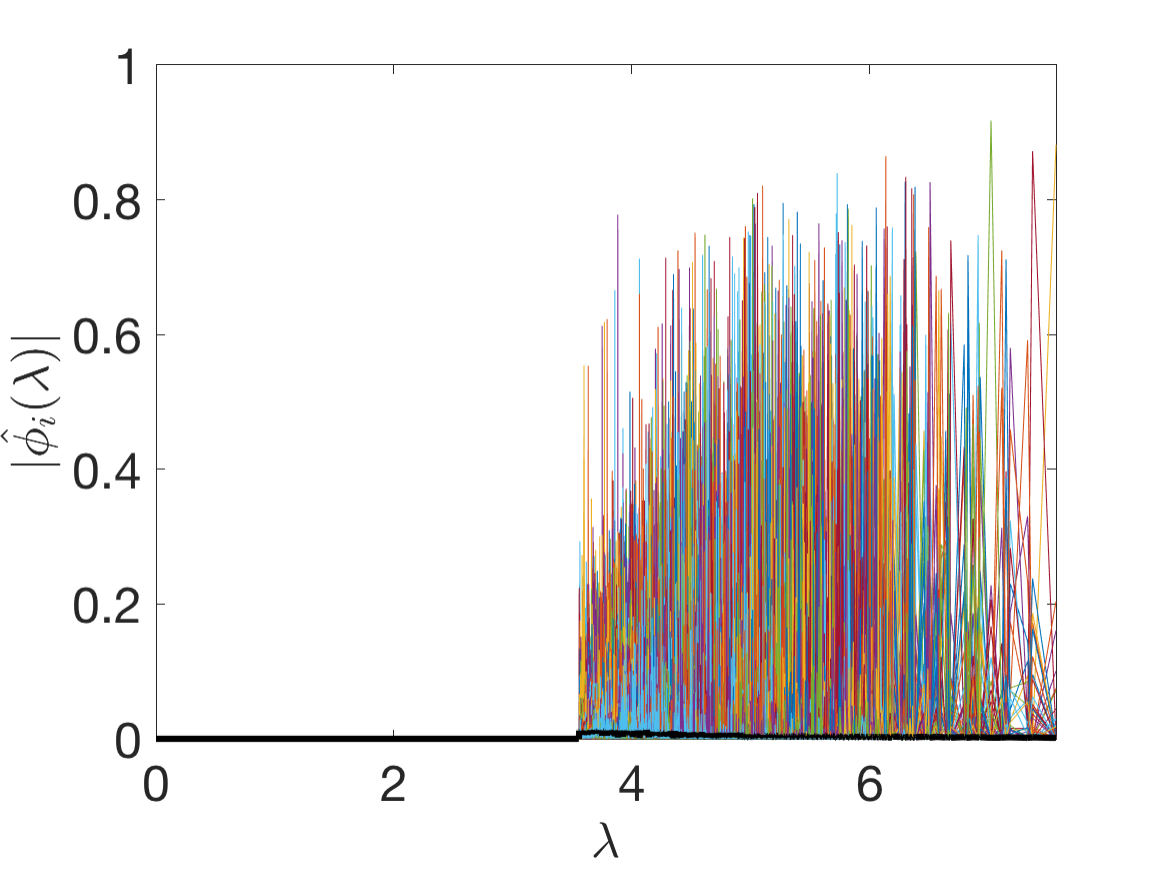}}
\end{minipage}\\
\begin{minipage}[m]{0.16\linewidth}
\centerline{\small{Analysis}}
\centerline{\small{Coefficients}}
\end{minipage}
\begin{minipage}[m]{0.16\linewidth}
\centerline{\includegraphics[width=1\linewidth]{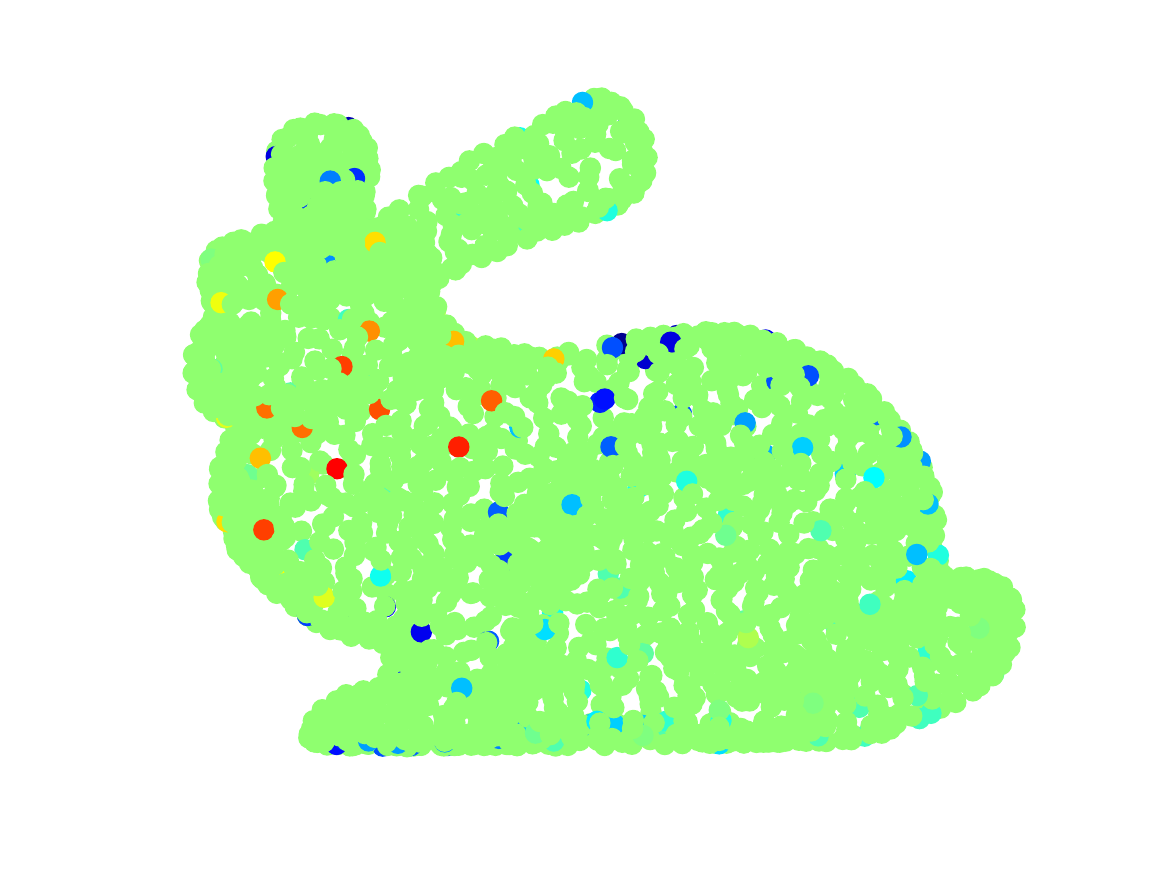}}
\end{minipage}
\begin{minipage}[m]{0.16\linewidth}
\centerline{\includegraphics[width=1\linewidth]{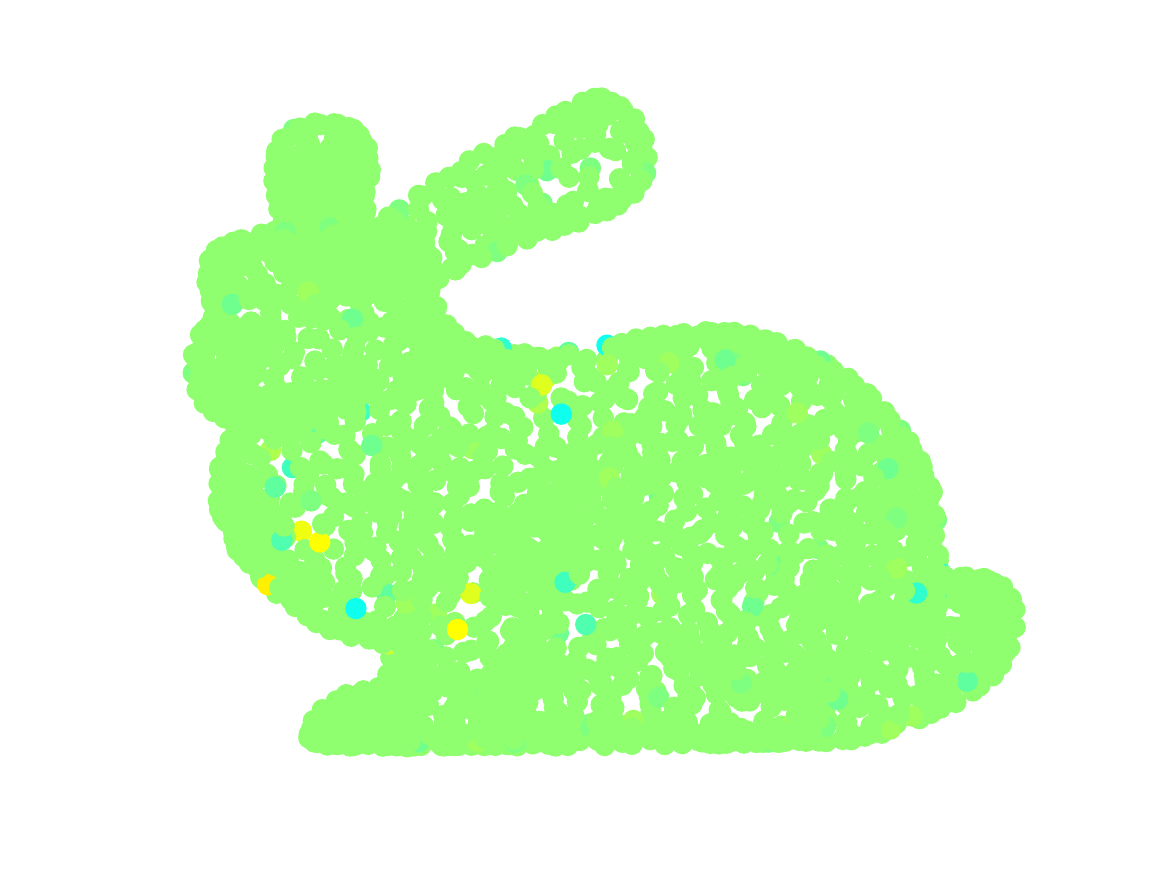}}
\end{minipage}
\begin{minipage}[m]{0.16\linewidth}
\centerline{\includegraphics[width=1\linewidth]{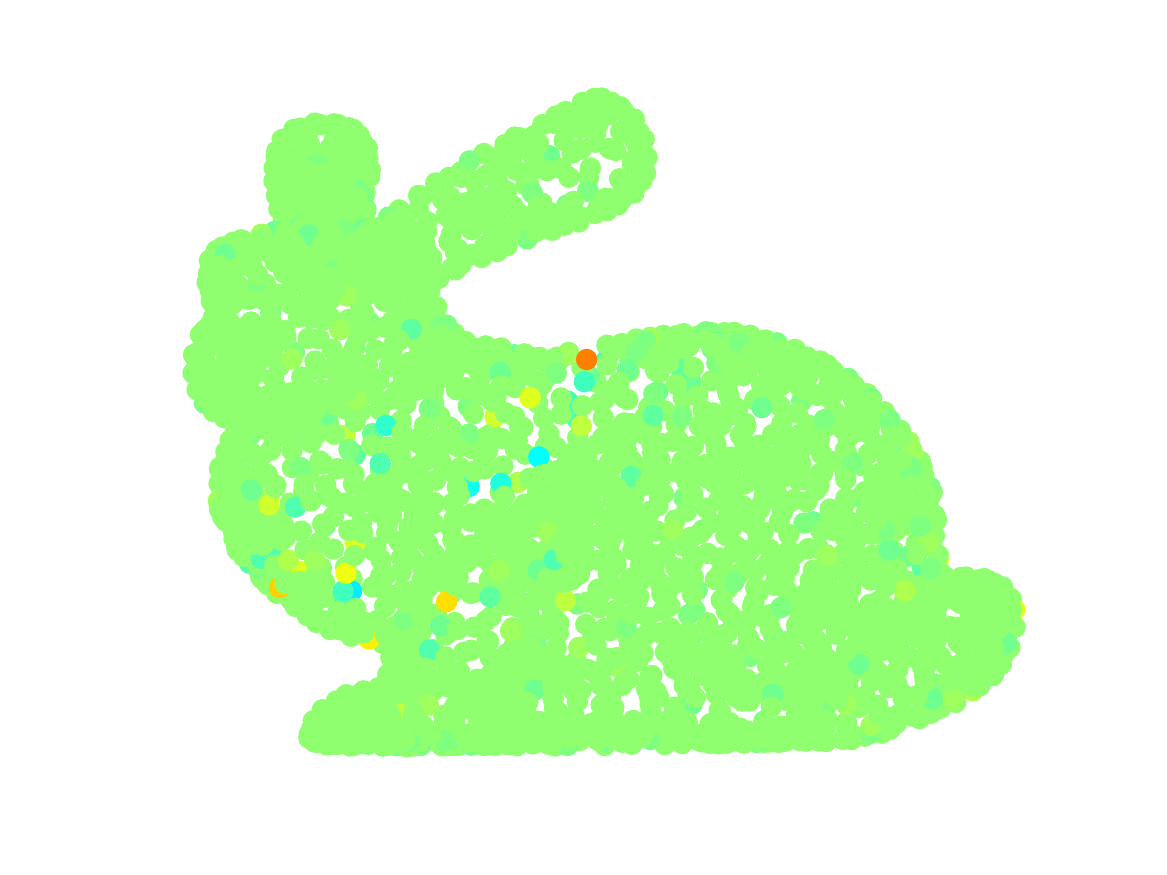}}
\end{minipage}
\begin{minipage}[m]{0.16\linewidth}
\centerline{\includegraphics[width=1\linewidth]{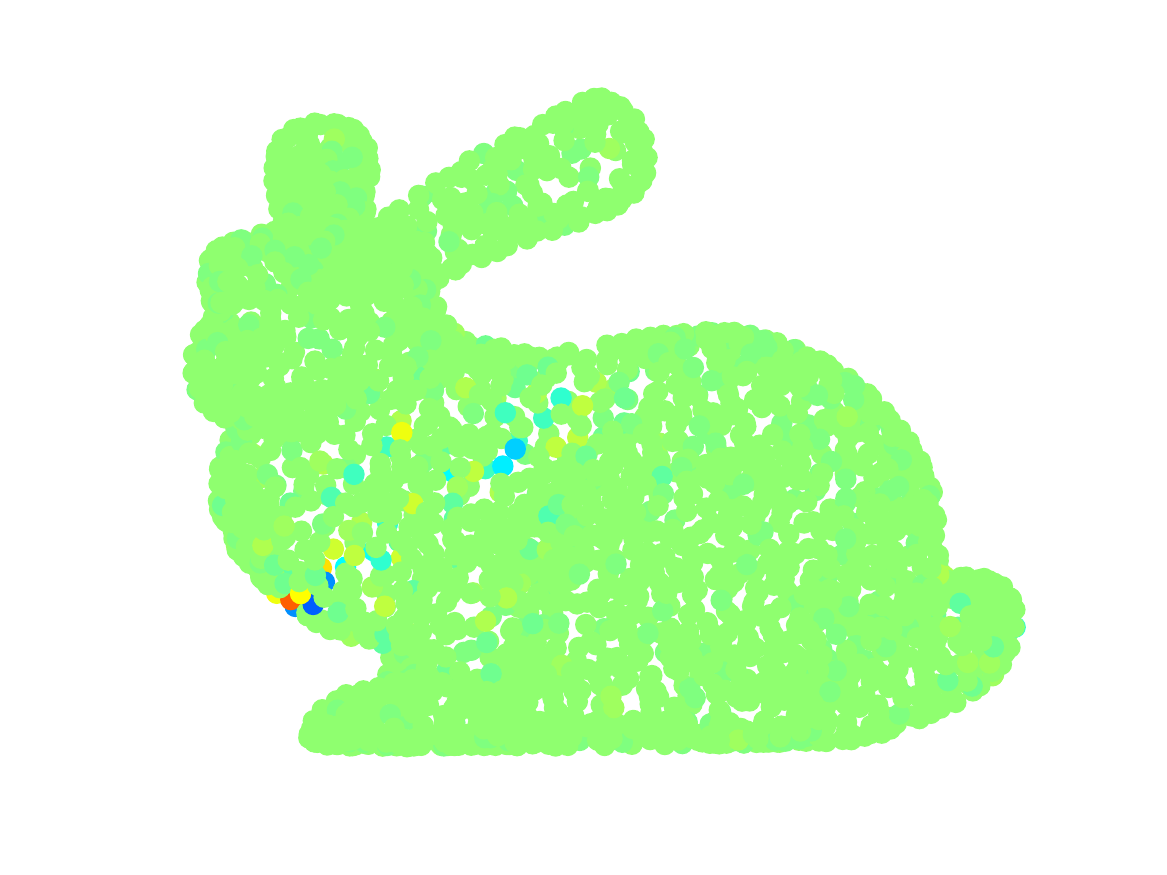}}
\end{minipage}
\begin{minipage}[m]{0.16\linewidth}
\centerline{\includegraphics[width=1\linewidth]{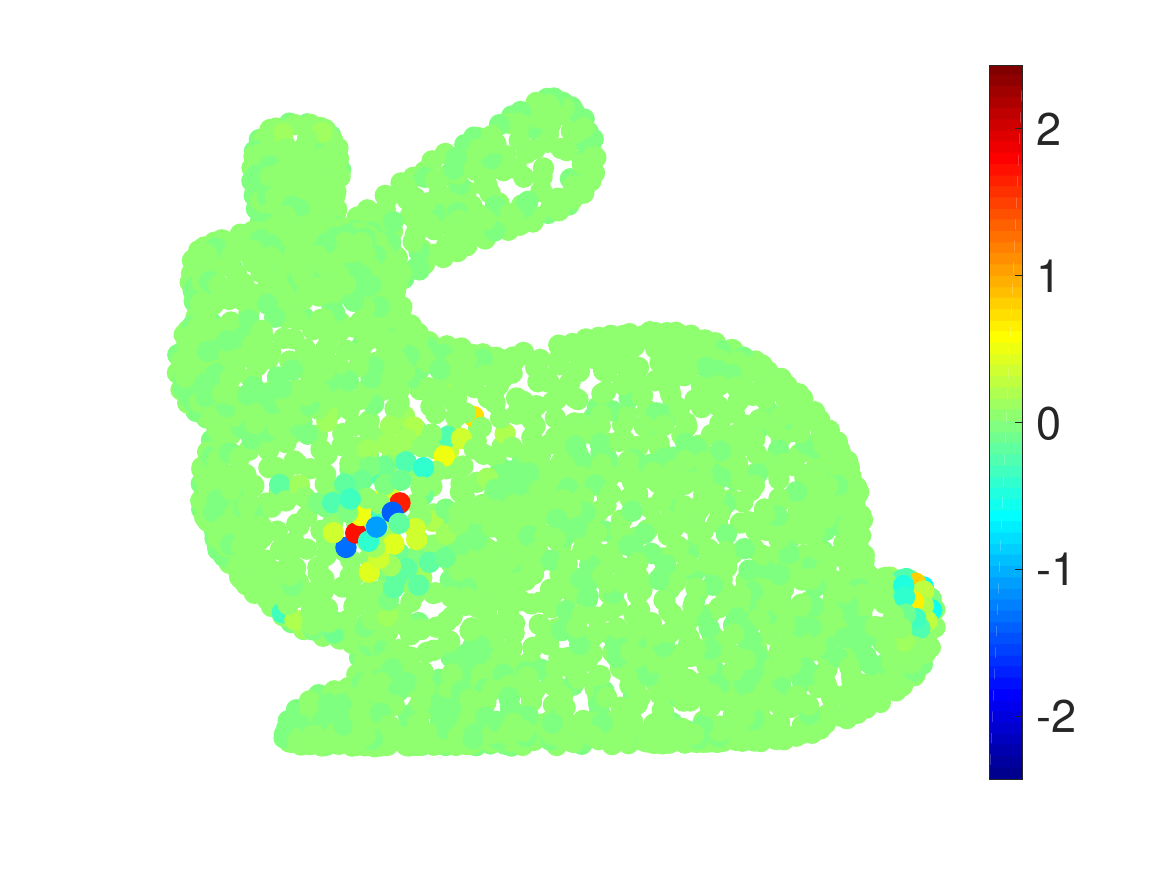}}
\end{minipage}\\
\begin{minipage}[m]{0.16\linewidth}
\centerline{\small{Reconstruction}}
\centerline{\small{by Band}}
\end{minipage}
\begin{minipage}[m]{0.16\linewidth}
\centerline{\includegraphics[width=1\linewidth]{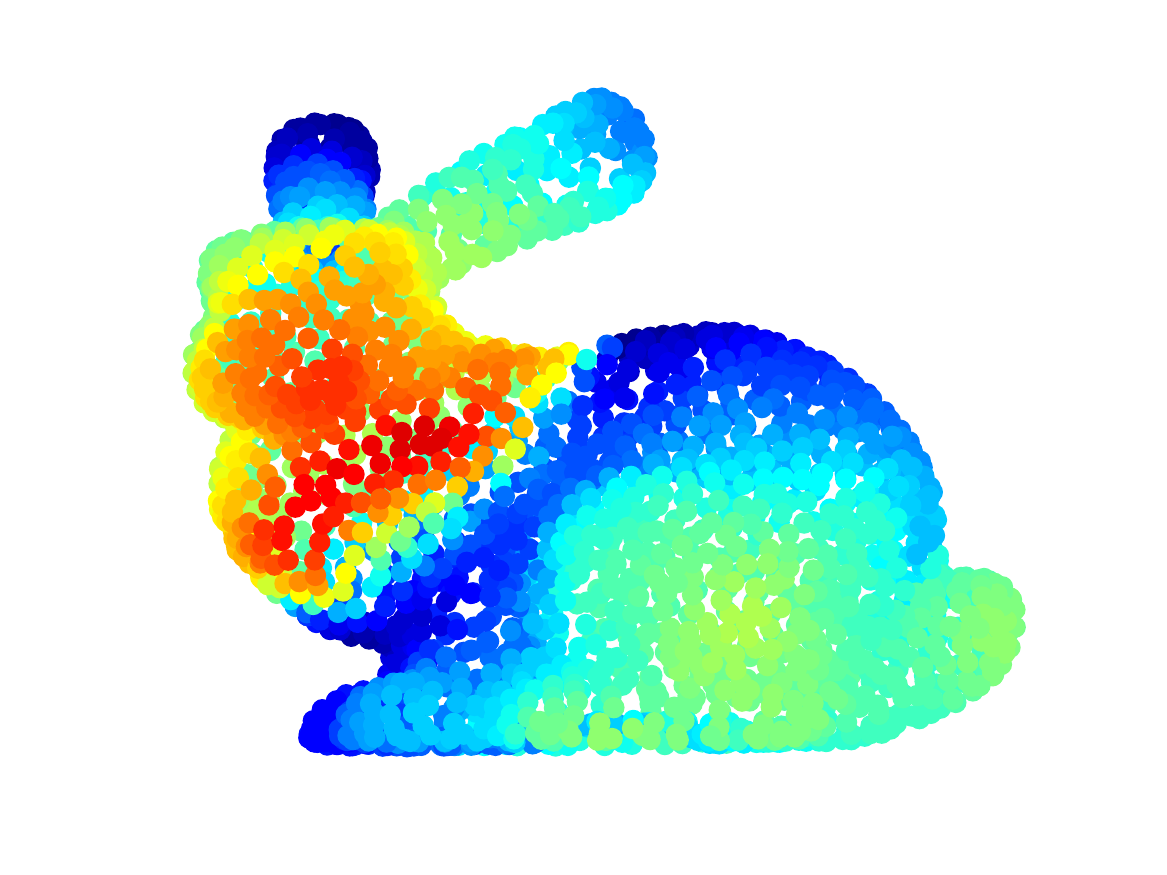}}
\end{minipage}
\begin{minipage}[m]{0.16\linewidth}
\centerline{\includegraphics[width=1\linewidth]{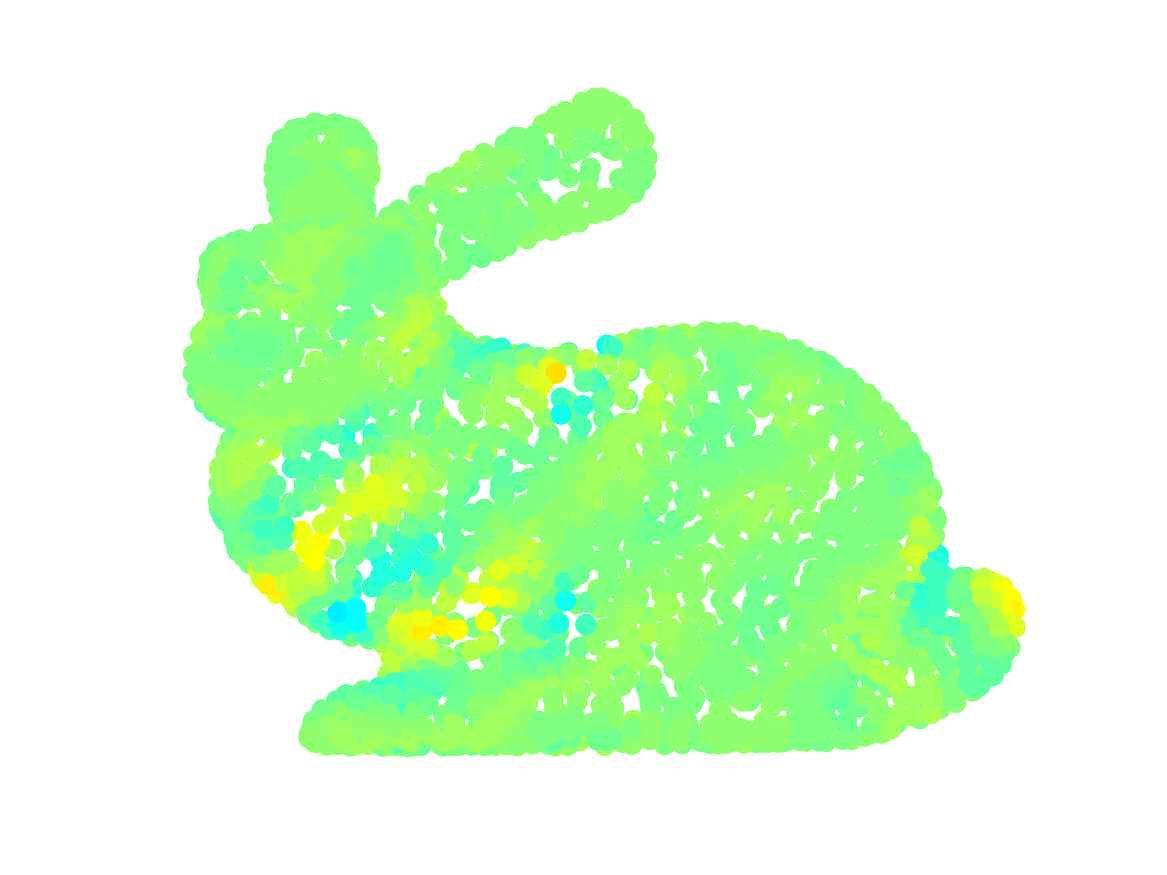}}
\end{minipage}
\begin{minipage}[m]{0.16\linewidth}
\centerline{\includegraphics[width=1\linewidth]{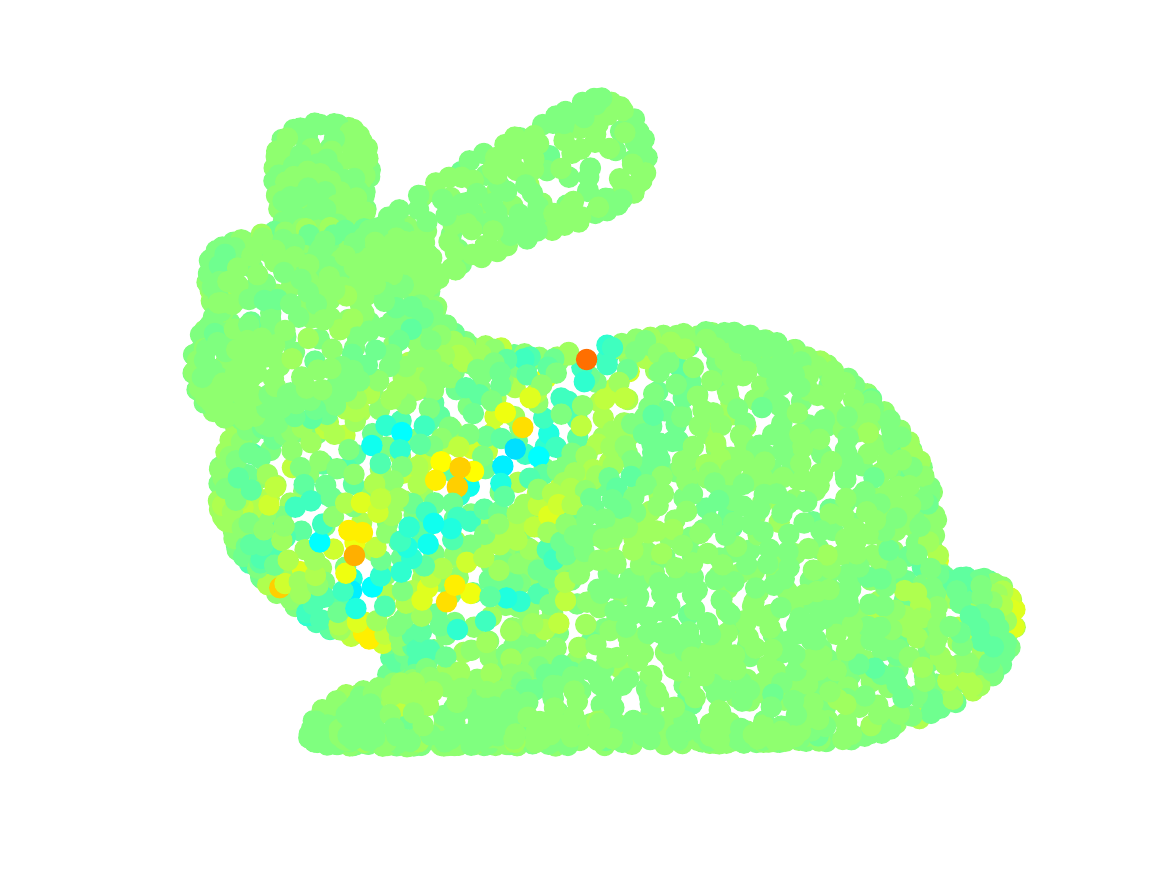}}
\end{minipage}
\begin{minipage}[m]{0.16\linewidth}
\centerline{\includegraphics[width=1\linewidth]{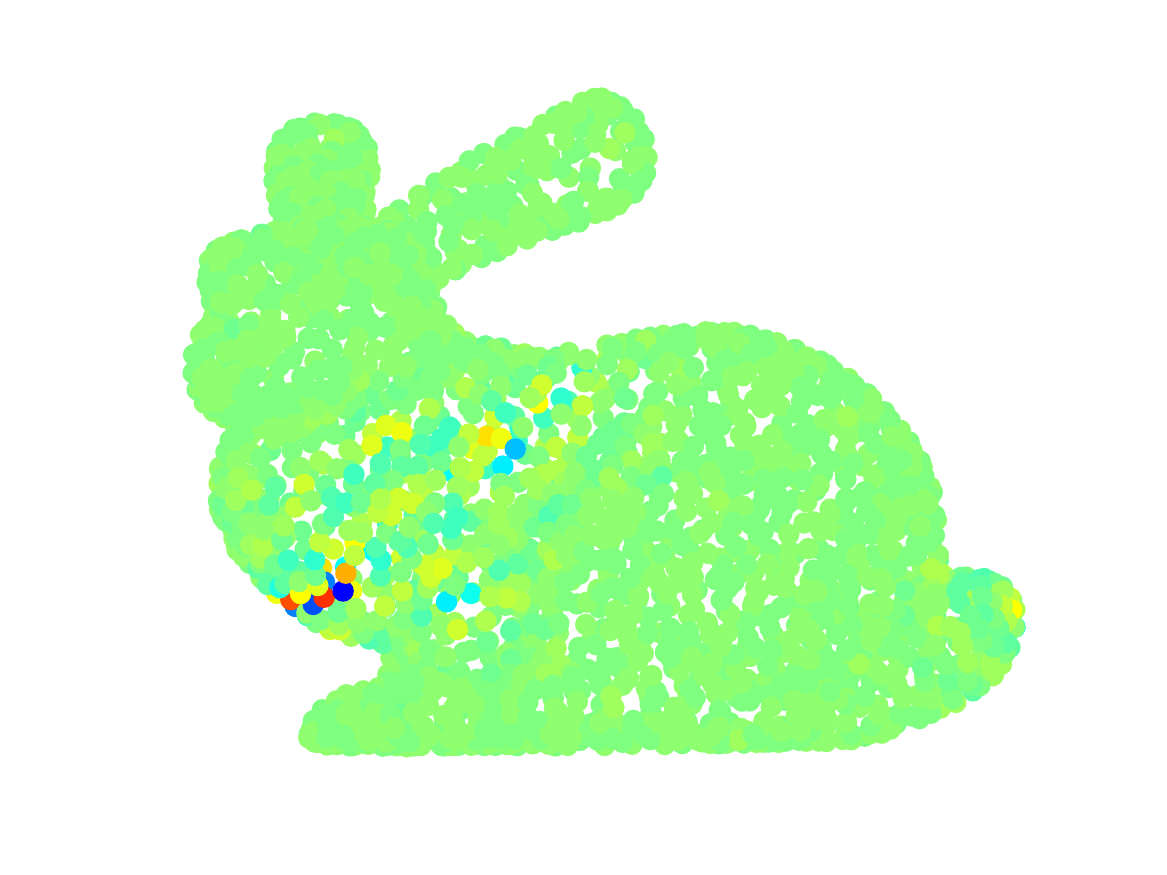}}
\end{minipage}
\begin{minipage}[m]{0.16\linewidth}
\centerline{\includegraphics[width=1\linewidth]{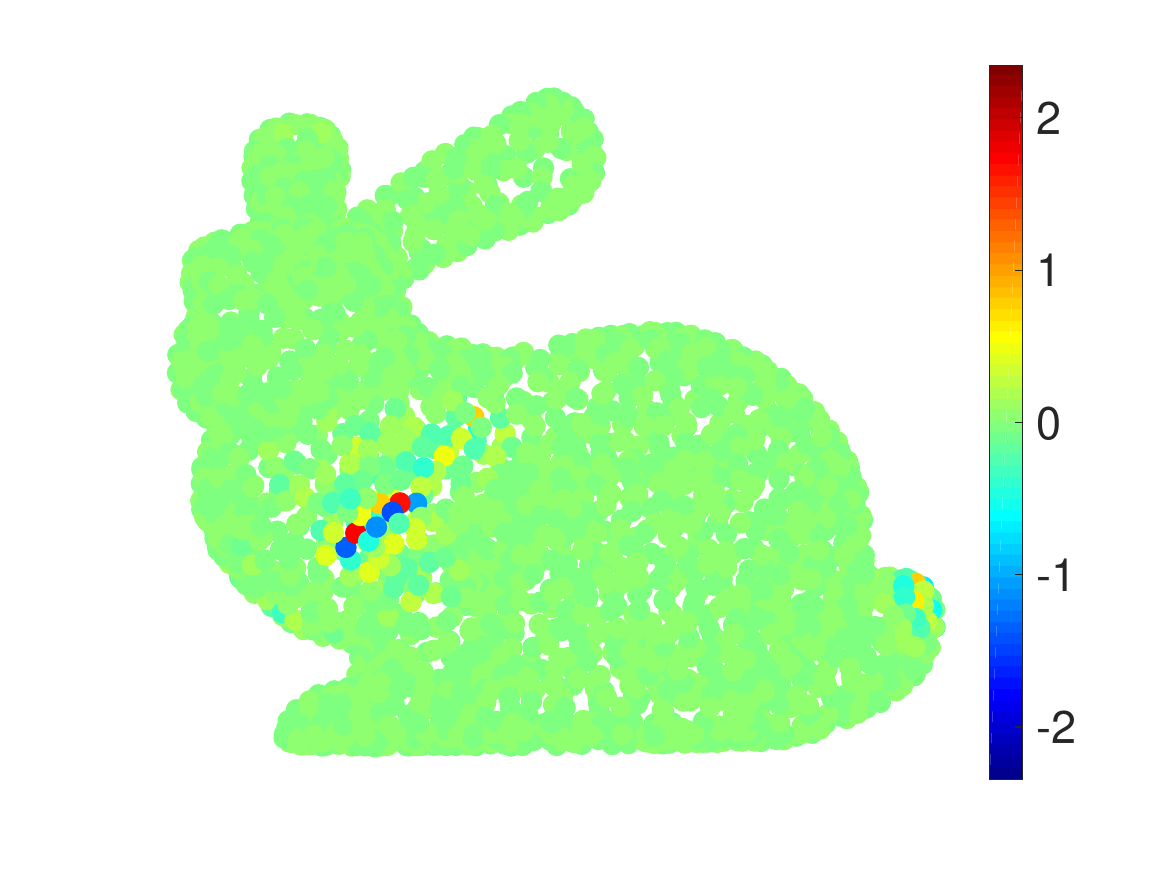}}
\end{minipage}
\caption{$M$-channel filter bank example. The first row shows example atoms in the vertex domain. The second row shows all atoms in the spectral domain, with an average of the atoms in each band shown by the thick black lines. The third row shows the analysis coefficients of Fig.\ \ref{Fig:bunny_signal}(d) by band, and the last row is the interpolation by band from those coefficients.}\label{Fig:bunny_coef}
\vspace{-.5cm}
\end{figure*}

\subsubsection{Signals that are sparsely represented by the M-CSFB transform}
Globally smooth signals trivially lead to sparse analysis coefficients because the coefficients are only nonzero for the first set(s) of vertices in the partition. More generally, signals that are concentrated in the graph Fourier domain 
have sparse M-CSFB analysis coefficients, because the coefficients for any channel whose filter does not overlap with the support of the signal in the graph Fourier domain are all equal to zero.

\subsection{Joint vertex-frequency localization of atoms and example analysis coefficients} \label{Se:ill1}
Next, we empirically examine the joint localization of the dictionary atoms 
in the vertex and graph spectral domains, which is key for their ability to compactly represent localized phenomena (e.g., discontinuities, edges). On the Stanford bunny graph \cite{bunny} with 2503 vertices, we partition the spectrum into five bands, and show the resulting partition into uniqueness sets 
in Fig.\ \ref{Fig:bunny_signal}(c). The first row of Fig.\ \ref{Fig:bunny_coef} shows five example atoms whose energies are concentrated on different spectral bands.  These atoms are also generally localized in the vertex domain,
with the wavelets becoming more localized at higher scales, as expected.
The second row of Fig.\ \ref{Fig:bunny_coef} shows the localization of the spectral content of all atoms in each band, with the averages  
represented by 
thick black lines. 

We then apply the proposed transform to a piecewise-smooth graph signal ${\bf f}$ that is shown in the vertex domain in Fig. \ref{Fig:bunny_signal}(a), and in the graph spectral domain in Fig. \ref{Fig:bunny_signal}(b). The full set of analysis coefficients is shown in Fig.\ \ref{Fig:bunny_signal}(d), and these are separated by band in the third row of Fig.\ \ref{Fig:bunny_coef}. We see that with the exception of the lowpass channel, the coefficients are clustered around the two main discontinuities (around the midsection and tail of the bunny). The bottom row of Fig.\ \ref{Fig:bunny_coef} shows the interpolation of these coefficients onto the corresponding spectral bands. If we sum these reconstructions together, we recover  
 the original signal in Fig.\ \ref{Fig:bunny_signal}(a).

\section{Fast M-CSFB Transform} \label{Se:fast_mcsfb}
In the numerical examples in the previous sections, we have computed a full eigendecomposition of the graph Laplacian and 
used it for all three of the filtering, sampling, and interpolation operations; however, such an eigendecomposition does not scale well with the size of the graph, as it requires ${\cal O}(N^3)$ operations with naive methods. 
In 
this section, we develop a fast approximate version of the proposed transform that scales more efficiently for large, sparse graphs.

\subsection{Approximation by polynomial filters} \label{Se:poly_approx}

Fast, approximate methods for computing $h_m(\L){\bf f}$, a function of sparse matrix times a vector, include approximating the function $h_m(\cdot)$ by a polynomial (e.g., via a truncated Chebyshev or Legendre expansion), approximating $h_m(\cdot)$ by a rational function, Krylov space methods (Lanczos in our case of a symmetric matrix $\L$), and the matrix version of the Cauchy integral theorem (see, e.g., \cite{higham}\nocite{davies2005computing, frommer}-\cite{dubious} and references therein). The first three of these methods have been examined in graph signal processing settings \cite{hammond2011wavelets,PuyTGV15},  \cite{shuman_distributed_sipn}\nocite{susnjara, shi2015infinite}-\cite{loukas2015distributed}. Here, to approximate the analysis side filters, 
we focus on order $K$ Chebyshev polynomial approximations of the form
\begin{align}\label{Eq:cheb}
\tilde{h}(\L){\bf f} := \sum_{k=0}^K \alpha_k \bar{T}_k(\L){\bf f}.
\end{align}
In \eqref{Eq:cheb}, $\bar{T}_k(\cdot)$ are Chebyshev polynomials shifted to the interval $[0,\lambda_{\max}]$. Thus,  $\bar{T}_0(\L){\bf f} = {\bf f}$, $\bar{T}_1(\L){\bf f} = \frac{2}{\lambda_{\max}}\L{\bf f}-{\bf f}$, and for $k\geq 2$, by the three term recurrence relation of Chebyshev polynomials, we have
\begin{align}\label{Eq:Tkbar_rec}
\hspace{-.05in}\bar{T}_k(\L){\bf f} = \frac{4}{\lambda_{\max}}\left(\L-\frac{\lambda_{\max}}{2}{\bf I}\right)\bar{T}_{k-1}(\L){\bf f}-\bar{T}_{k-2}(\L){\bf f}.
\end{align}

The coefficients in \eqref{Eq:cheb} are often taken to be $\alpha_0=\frac{1}{2}c_0$ and $\alpha_k=c_k$ for $k=1,2,\ldots,K$, where  $\{c_k\}_{k=0,\ldots,K}$ are the truncated Chebyshev expansion coefficients 
\begin{align}\label{Eq:cheb_coeff}
\hspace{-.1in}c_{k}:=\langle h , \bar{T}_k \rangle = \frac{2}{\pi}\int_{0}^{\pi}\cos(k\phi)\hspace{.02in}h\Bigl(\frac{\lambda_{\max}}{2} \bigl(\cos(\phi) +1\bigr)\Bigr)d\phi.\hspace{-.08in}
\end{align}
However, the oscillations that arise in Chebyshev polynomial approximations of bandpass filters may result in larger values of $\tilde{h}_m(\lambda)\tilde{h}_{m^{\prime}}(\lambda)$, even when the ideal filters $h_m(\cdot)$ and $h_{m^{\prime}}(\cdot)$ have supports that do not come close to overlapping. This negates the orthogonality of the atoms across bands shown in \eqref{Eq:ortho_bands}. In an attempt to at least preserve \emph{near} orthogonality across bands, we 
therefore use the Jackson-Chebyshev polynomial approximations from \cite{di2016efficient,puy_structured_sampling} that damp the Gibbs oscillations appearing in Chebyshev expansions. With the damping, 
\begin{align}\label{Eq:damping1}
\alpha_0=\frac{1}{2}c_0 \hbox{ and }\alpha_k=\gamma_{k,K}c_k \hbox{ for }k=1,2,\ldots,K, 
\end{align}
where, as presented in \cite{di2016efficient}, 
\begin{align}
&\gamma_{k,K} =\label{Eq:damping2}\\
& \frac{(1-\frac{k}{K+2})\sin(\frac{\pi}{K+2})\cos(\frac{k\pi}{K+2})+\frac{1}{K+2}\cos(\frac{\pi}{K+2})\sin(\frac{k\pi}{K+2})}{\sin(\frac{\pi}{K+2})}.  \nonumber
\end{align}
Fig. \ref{Fig:approx_filtering_error} shows the Jackson-Chebyshev polynomial approximations for ideal bandpass filters of two different graphs.\footnote{For the net25 graph, we have removed the self loops and added a single edge to connect the two connected components.}

\begin{figure}[tb]
\begin{minipage}[m]{0.49\linewidth}
\centerline{\includegraphics[width=.9\linewidth]{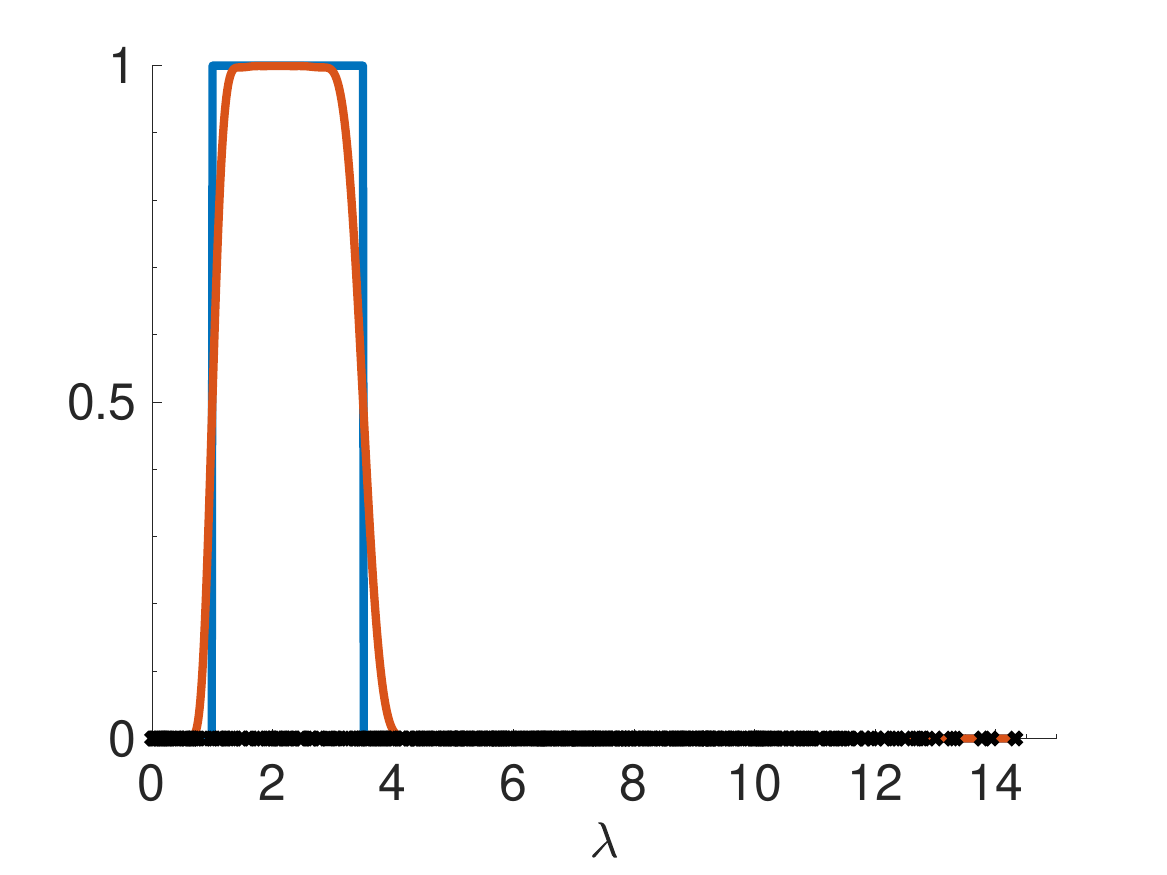}}
\centerline{~~\small{(a)}}
\end{minipage}
\begin{minipage}[m]{0.49\linewidth}
\centerline{\includegraphics[width=.9\linewidth]{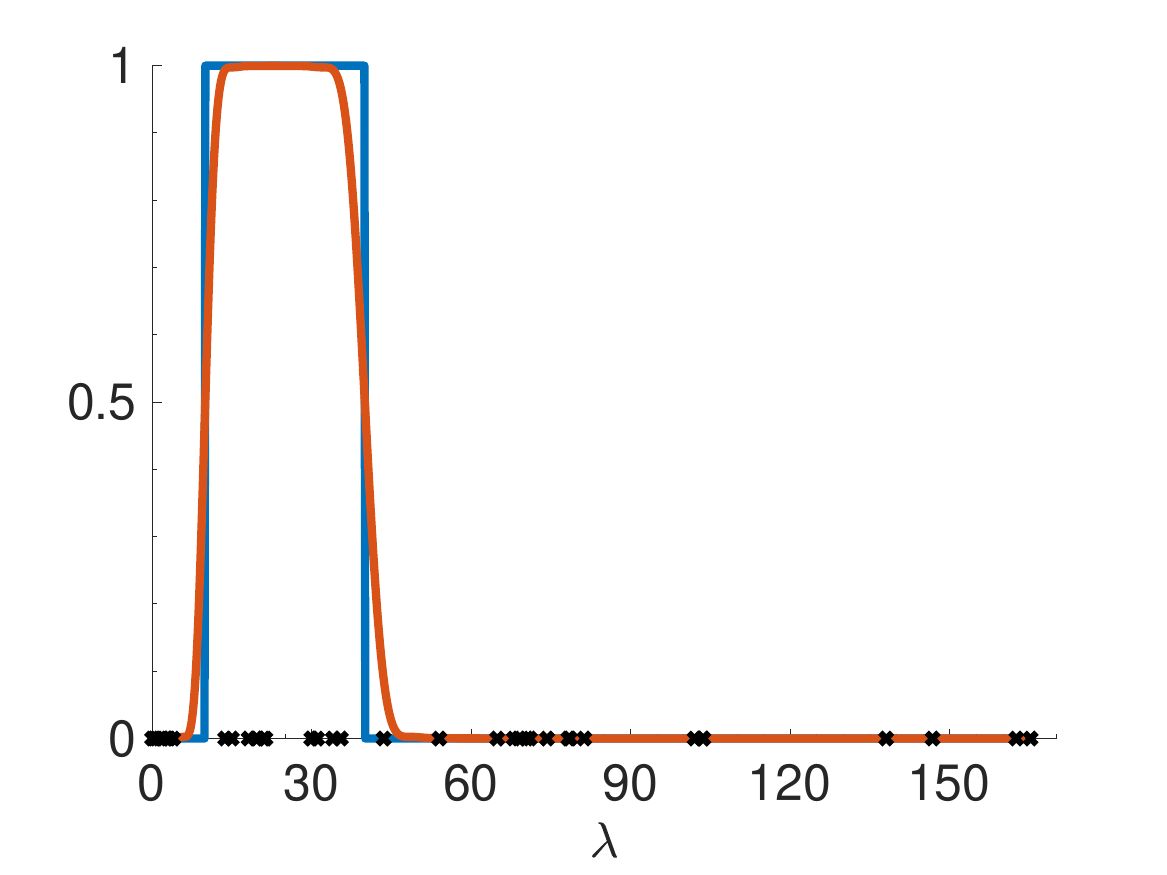}}
\centerline{~~\small{(b)}}
\end{minipage} \\
\begin{minipage}[m]{0.49\linewidth}
\centerline{\includegraphics[width=.9\linewidth]{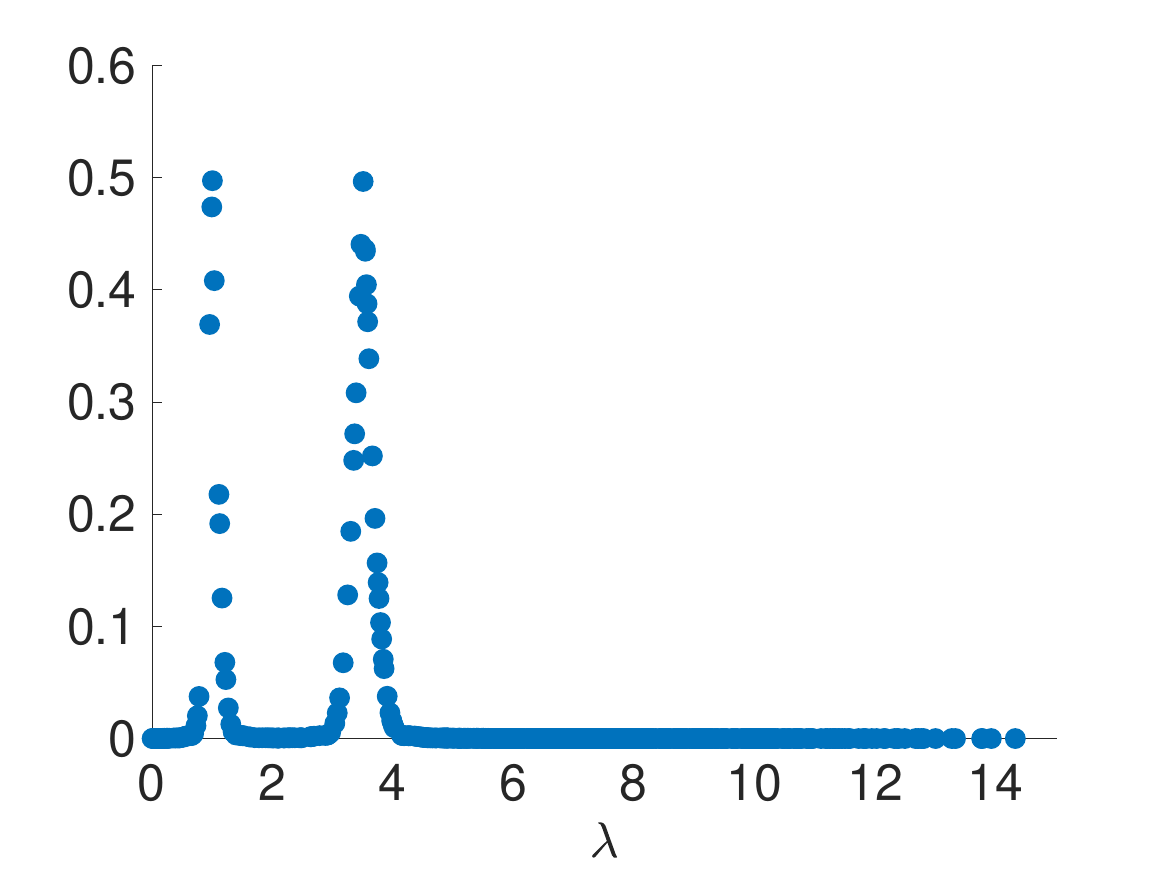}}
\centerline{~~\small{(c)}}
\end{minipage}
\begin{minipage}[m]{0.49\linewidth}
\centerline{\includegraphics[width=.9\linewidth]{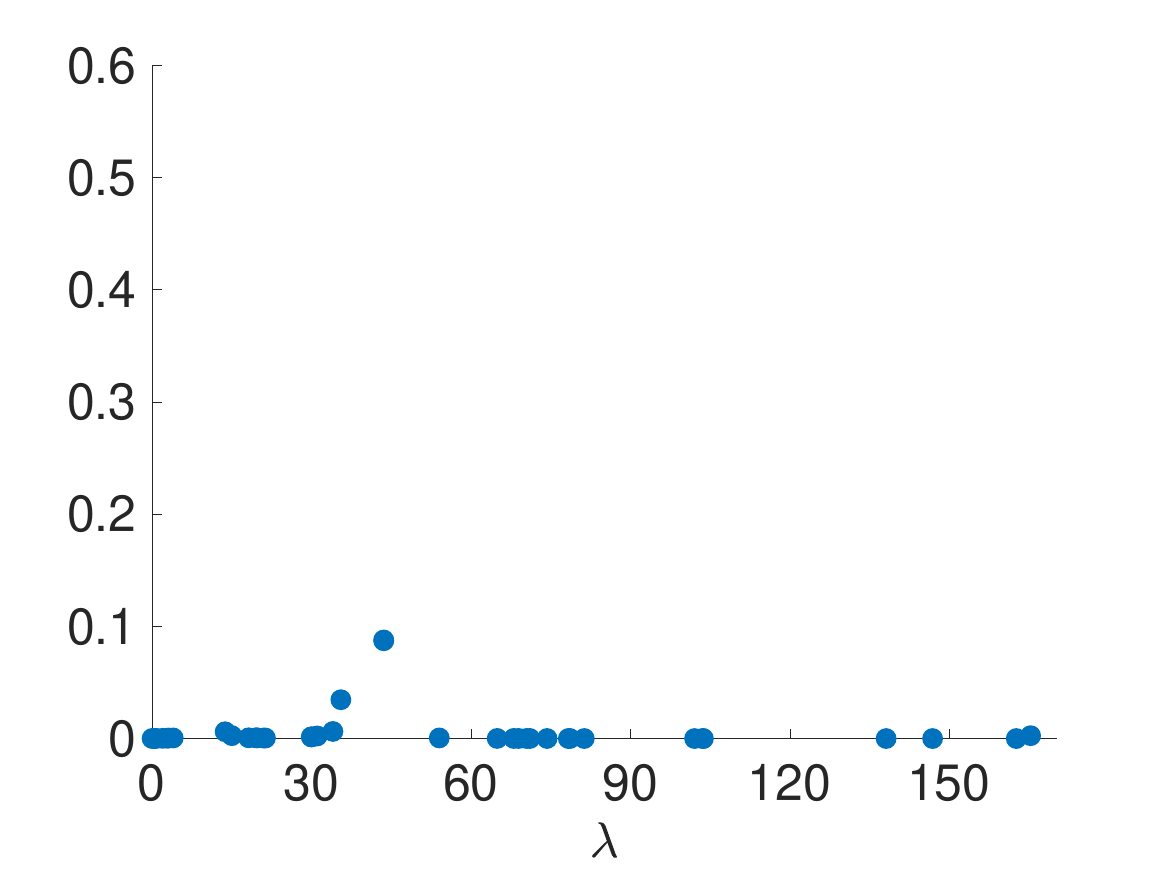}}
\centerline{~~\small{(d)}}
\end{minipage}
\caption{Degree 80 Jackson-Chebyshev polynomial approximations for ideal bandpass filters on the (a) 500 vertex random sensor network of Fig. \ref{Fig:part_examples} and (b) Andrianov net25 graph from \cite{davis2011university} with 9,520 vertices. In (c) and (d), we show the errors $|\tilde{h}(\lambda_{\l})-h(\lambda_{\l})|$ at each of the Laplacian eigenvalues of the corresponding graphs in (a) and (b). \vspace{-.2in}}
\label{Fig:approx_filtering_error}
\vspace{.05cm}
\end{figure}

\subsection{Filter bank design}

We can quantify the worst case error introduced when approximating  $h_m(\cdot)$ by an approximant $\tilde{h}_m(\cdot)$ as follows:
\begin{align} \label{Eq:poly_approx_error}
||\tilde{h}_m(\L)-h_m(\L) ||_2   &= \max_{\l \in \{0,1,\ldots,N-1\}}|\tilde{h}_m(\lambda_{\l})-h_m(\lambda_{\l}) |  \nonumber \\
 &\leq \max_{\lambda \in [0,\lambda_{\max}]}|\tilde{h}_m(\lambda)-h_m(\lambda) |.
\end{align}
While approximation theory often aims to minimize the upper bound in \eqref{Eq:poly_approx_error}, only the errors exactly at the graph Laplacian eigenvalues affect the overall approximation error $||\tilde{h}_m(\L)-h_m(\L) ||_2$. Since, as seen in Fig. \ref{Fig:approx_filtering_error}, the errors of the Jackson-Chebyshev polynomial approximation are concentrated around the discontinuities of $h_m(\cdot)$, a guiding principle when designing the filter bank to be more amenable to fast approximation is to choose the endpoints $\{\tau_m\}_{m=1,\ldots,M-1}$ of the bandpass filters to be in gaps in the graph Laplacian spectrum. Unfortunately, we do not have access to the exact graph Laplacian eigenvalues (the reason for introducing this approximation in the first place is that they are too expensive to compute for large graphs); however, we can efficiently estimate the density of the spectrum in order to design the filters 
have the endpoints close to fewer eigenvalues of $\L$.

\subsubsection{Estimating the spectral density} \label{Se:spectral_density}

Lin et al. \cite{lin_spectral_density} provide an excellent overview of methods to approximate the \emph{spectral density function} \cite[Chapter 6]{van_mieghem}) (also called the \emph{Density of States} or \emph{empirical spectral distribution} \cite[Chapter 2.4]{tao_random_matrix}) of a matrix, which in our context for the graph Laplacian $\L$ is the probability measure 
$p_{\lambda}(s):=\frac{1}{N}\sum_{\l=0}^{N-1} \Identity_{\left\{\lambda_{\l}=s\right\}}.$
Here, we use a variant of the Kernel Polynomial Method \cite{silver1994densities}\nocite{silver1996kernel}-\cite{wang1994calculating} described in \cite{lin_spectral_density} to estimate the \emph{cumulative spectral density function} or \emph{empirical spectral cumulative distribution}
\begin{align}
P_{\lambda}(z):=\frac{1}{N}\sum_{\l=0}^{N-1} \Identity_{\left\{\lambda_{\l}\leq z\right\}}.
\end{align}
The procedure 
starts by estimating $\lambda_{\max}$, for example via the power iteration. Then for each of $T$ linearly spaced points $\xi_i$ between 0 and $\lambda_{\max}$, we use Hutchinson's stochastic trace estimator \cite{hutchinson} to estimate $\eta_i$, the number of eigenvalues less than or equal to $\xi_i$. Defining the Heaviside function $\Theta_{\xi_i}(\lambda):=\Identity_{\left\{\lambda \leq \xi_i\right\}},$ we have
\begin{align}
\eta_i =\mbox{tr}\Bigl(\Theta_{\xi_i}(\L)\Bigr) 
&=\mathbb{E}[{\bf x}^{\top}\Theta_{\xi_i}(\L){\bf x}] \label{Eq:hutch1}\\
&\approx \frac{1}{J} \sum_{j=1}^J {{\bf x}^{(j)}}^{\top}\Theta_{\xi_i}(\L){\bf x}^{(j)} \label{Eq:hutch2} \\
&\approx \frac{1}{J} \sum_{j=1}^J {{\bf x}^{(j)}}^{\top}\tilde{\Theta}_{\xi_i}(\L){\bf x}^{(j)}. \label{Eq:hutch3}
\end{align}
In \eqref{Eq:hutch1}, ${\bf x}$ is a random vector with each component having an independent and identical standard normal distribution. Each vector  ${\bf x}^{(j)}$ in \eqref{Eq:hutch2} is chosen according to this same distribution, and in our experiments, we take the default number of vectors to be $J=30$. In \eqref{Eq:hutch3}, $\tilde{\Theta}_{\xi_i}$ is the Jackson-Chebyshev approximation to ${\Theta}_{\xi_i}$ discussed in Section \ref{Se:poly_approx}.
If we place the $J$ random vectors into the columns of an $N \times J$ matrix ${\bf X}$, the computational cost of estimating the spectral distribution is dominated by computing 
\begin{align} \label{Eq:hutch4}
\tilde{\Theta}_{\xi_i}(\L){\bf X}=\sum_{k=0}^K \alpha_k \bar{T}_k(\L){\bf X}
\end{align}
 for each $\xi_i$. Yet, 
we only need to compute $\{ \bar{T}_k(\L){\bf X}\}_{k=0,1,\ldots,K}$ recursively once, as this sequence can be reused for each $\xi_i$, with different choices of the $\alpha_k$'s. Therefore, the overall computational cost is ${\cal O}(KJ|\E|)$.

As in \cite{shuman2013spectrum}, once we compute the eigenvalue count estimates $\{\eta_i\}$, we approximate the empirical spectral cumulative distribution $P_{\lambda}(\cdot)$ by performing monotonic piecewise cubic interpolation \cite{fritsch} on the series of points $\left\{\left(\xi_i,\frac{\eta_i}{N}\right)\right\}_{i=1,2,\ldots,T}$. We denote the result as $\tilde{P}_{\lambda}(\cdot)$. Algorithm \ref{Al:spectral_density} summarizes these computations.

\setlength{\textfloatsep}{12pt}
\begin{algorithm}[tb]
\caption{Spectral density approximation}
\begin{algorithmic}
\State \textbf{Input} graph $\G$, estimate for $\lambda_{\max}$, degree $K$, number of random vectors $J$
\State Generate an $N \times J$ matrix $\bf{X}$  whose columns are i.i.d. standard normal random vectors
\State Compute $\{\bar{T}_k(\L){\bf X}\}_{k=0,1,\ldots,K}$  via \eqref{Eq:Tkbar_rec}
\State Choose $T$ linearly spaced points $\{\xi_i\}_{i=1,\ldots,T}$ between 0 and $\lambda_{\max}$
\For {$i=1,2,\ldots,T$}
\State Approximate $\eta_{i}$ via \eqref{Eq:hutch3} and \eqref{Eq:hutch4} 
\EndFor
\State Estimate the spectral density function $\tilde{P}_{\lambda}$ by performing monotonic cubic interpolation on the set of points $\{\xi_i, \frac{\eta_i}{N} \}$
\State \textbf{Output} $\bf{X}$, $\{\bar{T}_k(\L){\bf X}\}_{k=0,1,\ldots,K}$, and $\tilde{P}_{\lambda}$
\end{algorithmic}
\label{Al:spectral_density}
\end{algorithm}

\subsubsection{Choosing initial band ends}

When selecting the band ends $\{\tau_m\}$ for each of the $M$ ideal filters, we consider two factors: spectrum-adaptation and spacing. In our implementation, the filter bank can either be adapted to the spectral distribution or just to the support of the spectrum $[0,\lambda_{\max}]$, and it can be either evenly or logarithmically spaced (four options in all). For example, if the filter bank is only adapted to the support of the spectrum and is evenly spaced, then $\tau_m=\frac{m}{M}\lambda_{\max}$. Fig. \ref{Fig:fb_design}(a)-(b) show a spectrum-adapted, logarithmically spaced choice with $\tau_m=\tilde{P}^{-1}_{\lambda}\Bigl(\frac{1}{2}^{M-m}\Bigr)$ for $m=1,2,\ldots,M$, such that approximately half of the eigenvalues are in the highest band, a quarter in the next highest band, and so forth. 

\begin{figure}[tb]
\begin{minipage}[m]{0.49\linewidth}
\centerline{\includegraphics[width=1.1\linewidth]{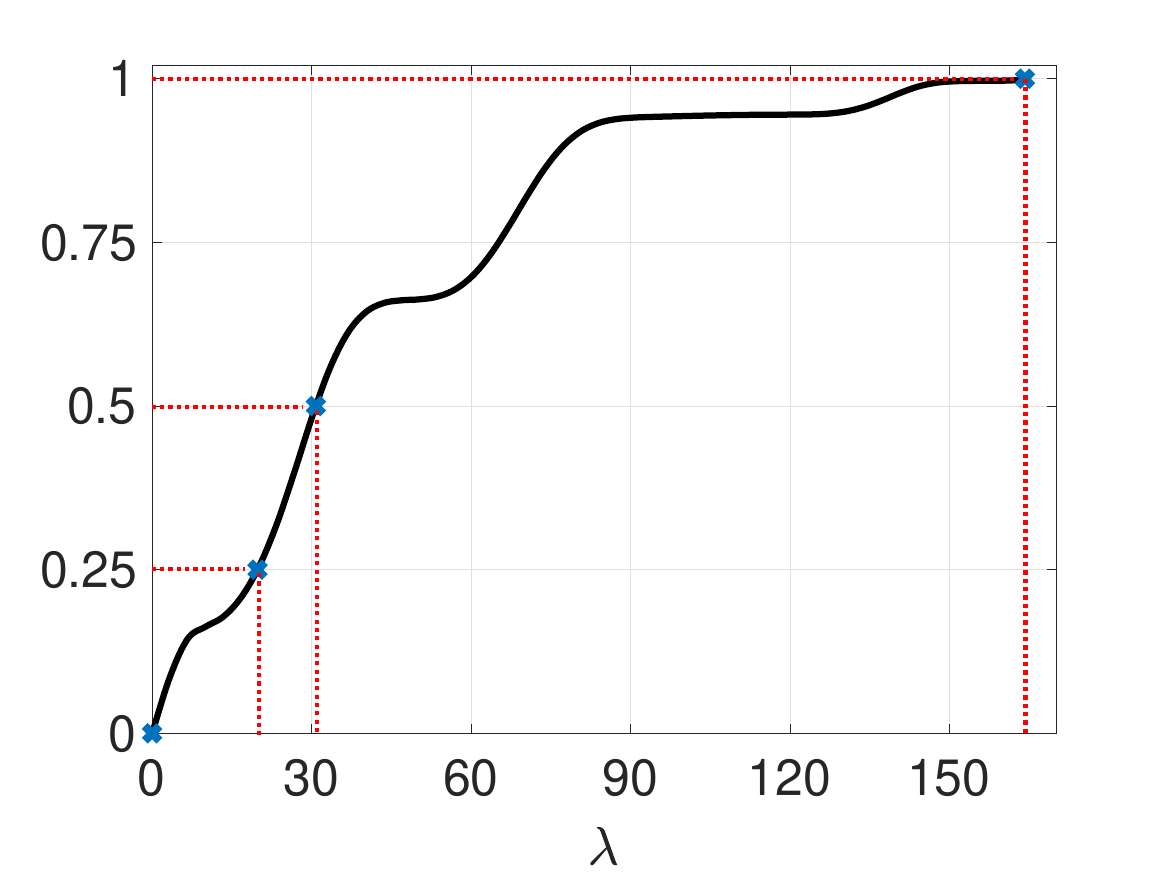}}
\centerline{~~\small{(a)}}
\end{minipage}
\begin{minipage}[m]{0.49\linewidth}
\centerline{\includegraphics[width=1.1\linewidth]{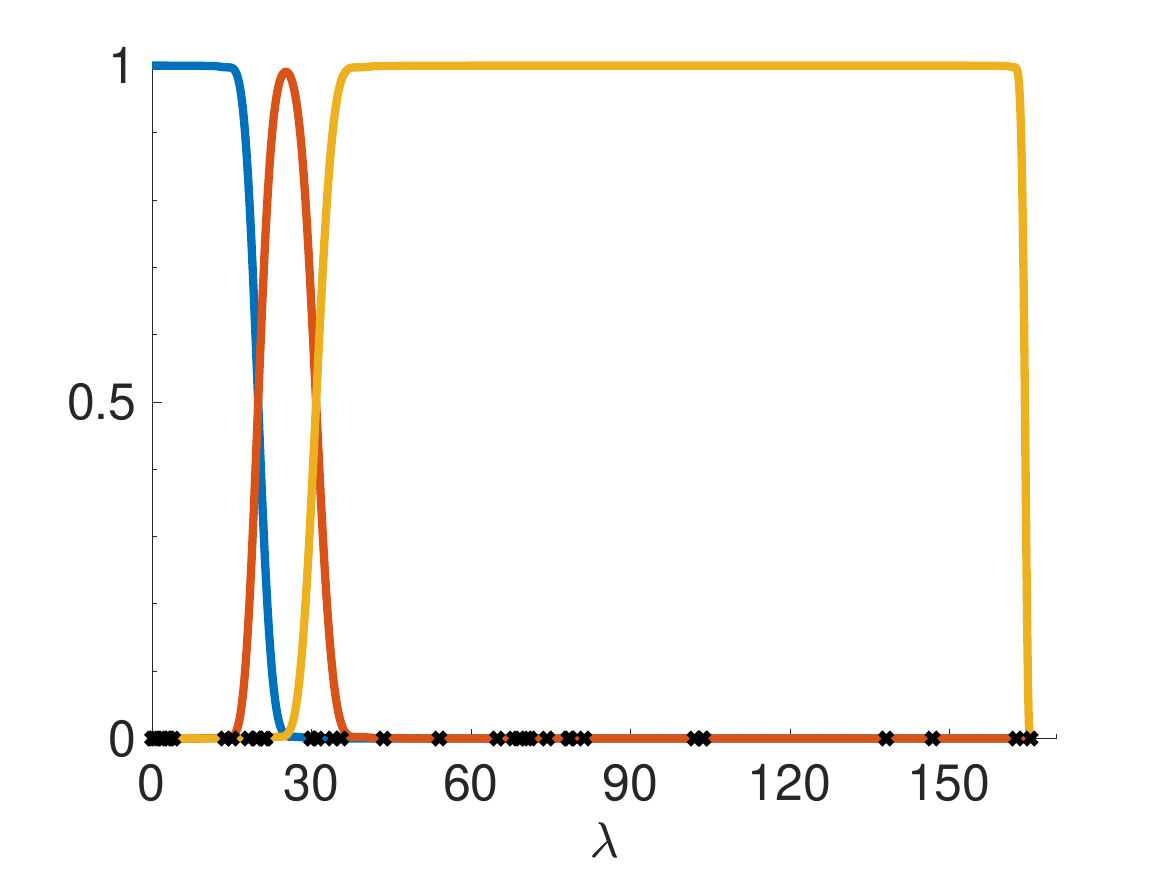}}
\centerline{~~\small{(b)}}
\end{minipage} \\
\begin{minipage}[m]{0.49\linewidth}
\centerline{\includegraphics[width=1.1\linewidth]{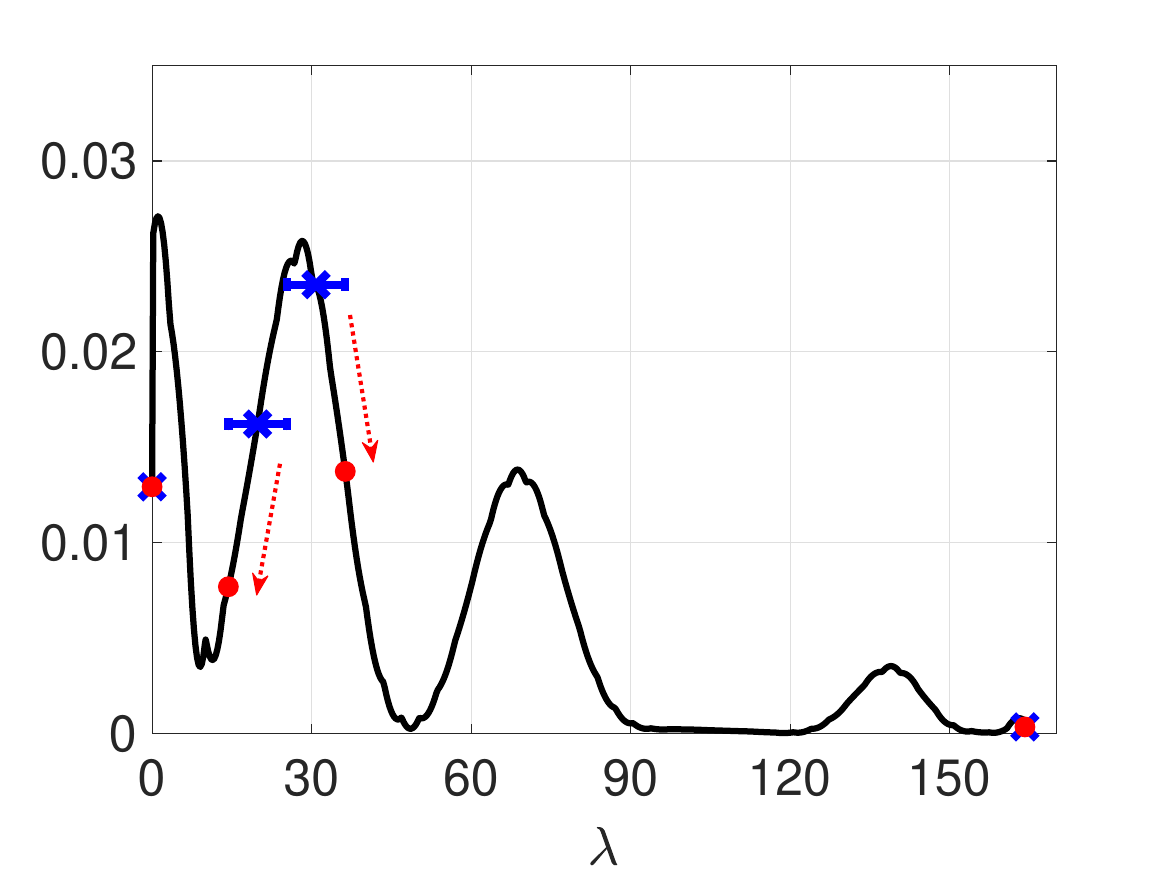}}
\centerline{~~\small{(c)}}
\end{minipage}
\begin{minipage}[m]{0.49\linewidth}
\centerline{\includegraphics[width=1.1\linewidth]{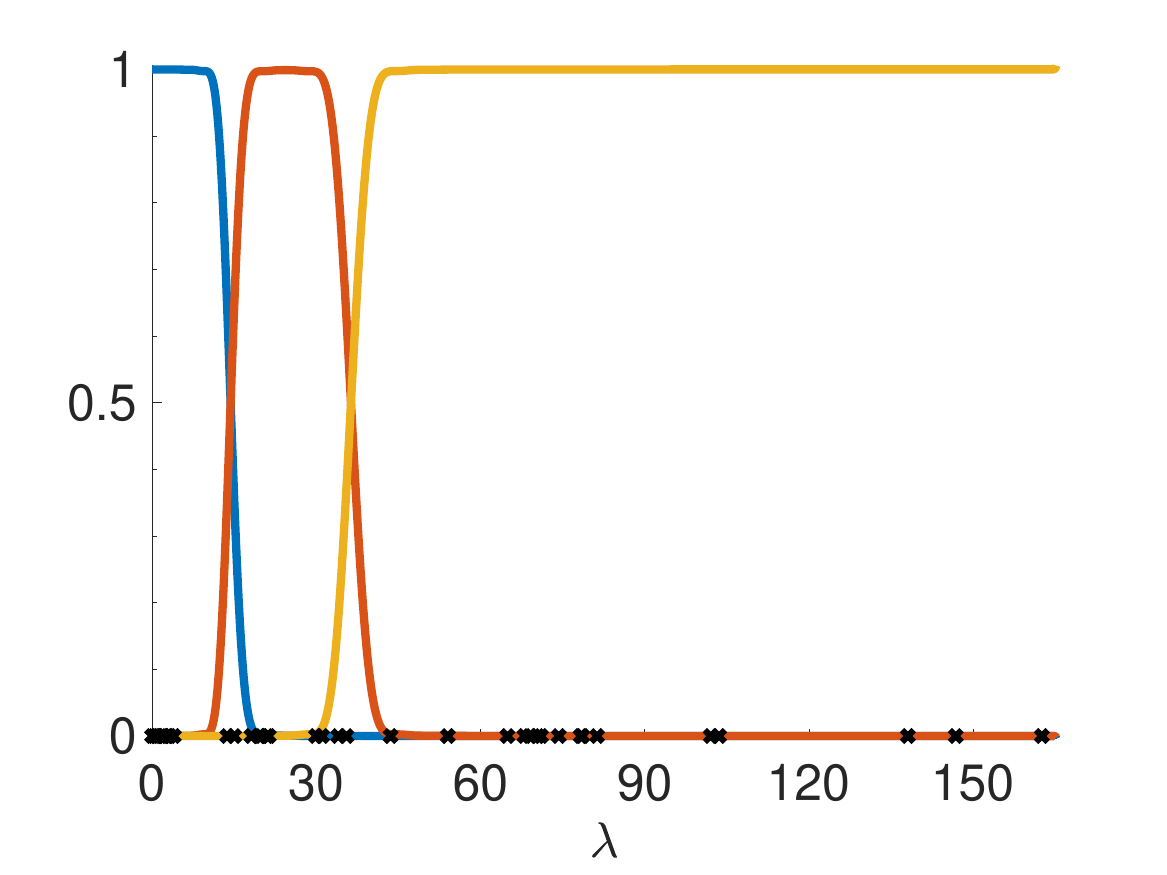}}
\centerline{~~\small{(d)}}
\end{minipage}
\caption{{(a) The approximate cumulative spectral density function, $\tilde{P}_{\lambda}(\cdot),$ for the net25 graph described in Fig. \ref{Fig:approx_filtering_error}. The blue X marks correspond to the initial choice of band endpoints computed by taking the inverse of logarithmically spaced points on the vertical axis. (b) The degree 80 Jackson-Chebyshev approximations to the ideal filters defined by the initial choice of band ends from (a). (c) The objective function of \eqref{Eq:adjustment} (a discrete approximation of the spectral density function $p_{\lambda}(\cdot)$) with $\Delta=.1$. The blue horizontal lines correspond to the search intervals ${\cal I}_1$ and ${\cal I}_2$, and the red circles represent the adjusted band ends $\{\tau_m\}_{m=0,1,2,3}$. (d) The degree 80 Jackson-Chebyshev approximations to the ideal filters defined by the adjusted choice of band ends. Note that the errors between the approximate filters and ideal bandpass filters are concentrated in regions with fewer eigenvalues.}\vspace{-.1in}}\label{Fig:fb_design}
\vspace{.2cm}
\end{figure}

\subsubsection{Adjusting the band ends} \label{Se:adjust}

In order to make the filters more amenable to approximation, we then adjust the initial choice of band endpoints so that they lie in lower density regions of the spectrum. Specifically, for each $m=1,2,\ldots,M-1$ and some 
 $\Delta>0$, we let the final endpoint be
\begin{align} \label{Eq:adjustment}
\tau_m^* = \argmin_{\tau \in {\cal I}_m} \left\{\frac{\tilde{P}_{\lambda}(\tau+\Delta)-\tilde{P}_{\lambda}(\tau-\Delta)}{2\Delta}\right\},
\end{align}
where ${\cal I}_m$ is an interval around the initial choice of $\tau_m$. Fig. \ref{Fig:fb_design}(c) shows the objective function in \eqref{Eq:adjustment}, along with the initial band ends, search intervals, and adjusted band ends. Comparing Fig. \ref{Fig:fb_design}(b) and Fig. \ref{Fig:fb_design}(d), 
the band end adjustments
lead to fewer eigenvalues falling close to the filter borders, 
reducing the error incurred by the polynomial approximation 
process. Algorithm \ref{Al:filter_bank} summarizes the filter bank design in the 
case of spectrum-adapted and logarithmically spaced filters.

\begin{algorithm}[tb]
\caption{Spectrum-adapted and logarithmically spaced filter bank design}
\begin{algorithmic}
\State \textbf{Input} Estimate for $\lambda_{\max}$, approximate spectral density $\tilde{P}_{\lambda}$, number of bands $M$, $\Delta>0$, degree $K$

\For {$m=1,2,\ldots,M$}
\State Compute the initial band end: $\tau_m = \tilde{P}_{\lambda}^{-1}(\frac{1}{2}^{M-m})$
\EndFor
\State Set $\tau_0^* = \tau_0 = 0$, $\tau_M^* = \tau_M=\lambda_{\max}$, 
\For {$m = 1,\cdots, M-1$}
\State Set the search radius: $$r = \text{min} \Bigl\{\frac{\tau_m-\tau_{m-1}}{2}, \frac{\tau_{m+1}-\tau_m}{2}\Bigr\}$$
\State Set the search interval: ${\cal I}_m = [\tau_m-r, \tau_m+r]$
\State Update the band ends: $$\tau_m^* = \underset{\tau \in {\cal I}_m}{\text{argmin}} \left\{\frac{\tilde{P}_{\lambda}(\tau+\Delta)-\tilde{P}_{\lambda}(\tau-\Delta)}{2\Delta}\right\}$$
\EndFor
\For {$m=1,2,\ldots,M$}
\State Construct the ideal filter $h_m(\lambda)$ according to \eqref{Eq:bandpass} using $\tau_{m-1}^*$ and $\tau_m^*$ 
\State Construct the polynomial filter approximation $\tilde{h}_m(\lambda)$ and the corresponding Jackson-Chebyshev polynomial coefficients $\alpha_{m,k}$ 
via \eqref{Eq:cheb_coeff}-\eqref{Eq:damping2}
\EndFor
\State \textbf{Output} 
Degree $K$ Jackson-Chebyshev polynomial filters $\{ \tilde{h}_1(\lambda), \tilde{h}_2(\lambda), \cdots, \tilde{h}_M(\lambda)\}$, and associated coefficients $\{\alpha_{m,k}\}_{m=1,2,\ldots,M; k=0,1,\ldots,K}$
\end{algorithmic}
\label{Al:filter_bank}
\end{algorithm}

\subsection{Non-uniform random sampling distribution}
The partitioning of the vertices into uniqueness sets described in Algorithm 1 requires a full eigendecomposition of the graph Laplacian to compute the matrix ${\bf U}$. Two broad approaches to more efficient sampling have recently been investigated: greedy methods \cite{chen2015discrete,anis2014towards,tsitsvero2016uncertainty,anis2016efficient} and random sampling methods \cite{shomorony,PuyTGV15,chen2016signal}, which have close connections to \emph{leverage score} sampling in the statistics and numerical linear algebra literature 
(see, e.g., \cite{drineas2012fast}\nocite{mahoney2009cur}-\cite{mahoney2011randomized}). 
Reference \cite{anis2016efficient} has a nice review of the computational complexities of the various greedy routines for identifying uniqueness sets. Most of these are designed specifically for lowpass signals. 
 
We adapt the non-uniform random sampling method of \cite{PuyTGV15}, which scales more efficiently than greedy methods. 
For the $m^{th}$ band, we identify the downsampling set $\V_m$ by sampling the vertices $\V$ without replacement according to a discrete probability distribution ${\boldsymbol \omega}_m$. To minimize the graph weighted coherence, it is ideal to take ${\boldsymbol \omega}_m(i) \propto ||{\bf U}_{{\cal R}_m}^{\top} {\boldsymbol \delta}_i ||_2^2$ \cite{PuyTGV15}; however, we do not have access to ${\bf U}_{{\cal R}_m}$. Instead, we take 
\begin{align}\label{Eq:samp_dist}
{\boldsymbol \omega}_m(i) \propto ||(\tilde{h}_m(\L){\bf X})^{\top} {\boldsymbol \delta}_i ||_2^2,
\end{align}
which \cite{PuyTGV15} shows is an unbiased estimator of $||{\bf U}_{{\cal R}_m}^{\top} {\boldsymbol \delta}_i ||_2^2$ when ${\bf X}$ is the random matrix from \eqref{Eq:hutch4}. Since we already compute and store the series of matrices $\{\bar{T}_k(\L){\bf X}\}$ for the spectral density estimation of Section \ref{Se:spectral_density}, we just need to compute the polynomial approximation coefficients $\{\alpha_{m,k}\}$ in \eqref{Eq:hutch4} for $h_m(\lambda)$ in order to compute $\tilde{h}_m(\L){\bf X}$.

Intuitively, the sampling distribution approximates the energy of the selected eigenvectors concentrated on each vertex. In the extreme case that the selected eigenvectors are completely concentrated on a single vertex or small neighborhood of vertices, sampling signal values outside of this set provides no additional information, justifying the zero weight in the sampling distribution. For eigenvectors whose energy is equally spread across the graph, this results in uniform sampling. In particular, for any walk-regular graph, a class that includes vertex-transitive graphs, which in turn include shift-invariant graphs such as the cycle graph, $||{\bf U}_{{\cal R}}^{\top} {\boldsymbol \delta}_i ||_2^2$ is constant across vertices $i$ for any choice of eigenvectors ${\cal R}$ \cite[Corollary 3.2]{chan1997symmetry}, resulting in uniform random sampling for all bands. As discussed in \cite[Section 5.1.2]{PuyTGV15}, non-uniform sampling is particularly beneficial for bands with localized eigenvectors, 
 which most commonly occur at the middle and upper ends of the spectrum. For the low end of the spectrum with smooth eigenvectors, the intuition is that it is easier to interpolate missing values in highly connected regions of the graph, and therefore there are slightly higher weights on the less connected vertices (e.g., near the boundaries in Fig. \ref{Fig:temperature}(g)).

\begin{algorithm}[tb]
\caption{Construct the downsampling sets}
\begin{algorithmic}
\State \textbf{Input} graph $\G$,  ${\bf X}$, $\{\bar{T}_k(\L){\bf X}\}_{k=0,1,\ldots,K}$, Jackson-Chebyshev coefficients $\{\alpha_{m,k}\}_{m=1,2,\ldots,M; k=0,1,\ldots,K}$ for the polynomial filters $\{ \tilde{h}_1(\lambda), \tilde{h}_2(\lambda), \cdots, \tilde{h}_M(\lambda)\}$, signal ${\bf f}$ (optional)
\For {$m=1,2,\ldots,M$}
\State Compute $\tilde{h}_m(\L){\bf X}=\sum_{k=0}^K \alpha_{m,k}\bar{T}_k(\L){\bf X}$
\State Set the weight for each vertex $i \in \V$: 
$${\boldsymbol \omega}_m(i) = ||(\tilde{h}_m(\L){\bf X})^{\top} {\boldsymbol \delta}_i||^2_2$$
\If{signal-adapted weights}
\State Compute $\tilde{h}_m(\L){\bf f}$ via \eqref{Eq:cheb} with the same $\{\alpha_{m,k}\}$
\State Adapt the weights: $${\boldsymbol \omega}_m(i) = {\boldsymbol \omega}_m(i) \cdot \log(1+|(\tilde{h}_m(\L){\bf f})(i)|)$$
\EndIf
\State Normalize the weights: ${\boldsymbol \omega}_m(i)=\frac{{\boldsymbol \omega}_m(i)}{\sum_{i=1}^N {\boldsymbol \omega}_m(i)}$

\State Set the initial number of samples based on \eqref{Eq:hutch3}: 
$$n_m = \frac{1}{J} \mathrm{Trace}({\bf X}^\top \tilde{h}_m(\L)\bf{X})$$
\EndFor
\If{signal-adapted number of samples}
\For {$m=1,2,\ldots,M$}
\State Set $n_m=n_m \cdot \log(1+||\tilde{h}_m(\L){\bf f}||)$
\EndFor
\EndIf
\State Compute total initial number of samples: $N_0 = \sum_m n_m$
\For {$m=1,2,\ldots,M$}
\State Normalize the number of total samples: $$n_m = \hbox{round}\Bigl(\frac{n_m}{N_0}N_T\Bigr),$$
where $N_T$ is the target number of samples (e.g., $N_T=N$ for critical sampling)
\EndFor
\State Adjust to meet target number of samples:  
\If{$\sum_m n_m > N_T$} 
\State Set $n_M = n_M - (\sum_m n_m -N_T)$
\ElsIf{$\sum_m n_m < N_T$}
\State Set $n_1 = n_1 + (N_T-\sum_m n_m)$
\EndIf
\For {$m=1,2,\ldots,M$}
\State Choose the downsampling set $\V_m$ by randomly sampling $n_m$ vertices according to the distribution ${\boldsymbol \omega}_m$
\EndFor
\State \textbf{Output} downsampling sets $\{\V_1,\V_2, \ldots, \V_M\}$, sampling distributions $\{{\boldsymbol \omega}_1,{\boldsymbol \omega}_2,\ldots,{\boldsymbol \omega}_M\}$
\end{algorithmic}
\label{Al:downsampling}
\end{algorithm}

\subsection{Number of samples}

One option to ensure critical sampling is to choose the number of samples for each band 
according to the initial filter bank design. For example, if the filter bank is designed to be adapted to the spectrum with logarithmic spacing, we can choose $\frac{N}{2}$ samples for the highest band, $\frac{N}{4}$ for the next highest, and so forth. However, the adjustments we make in Section \ref{Se:adjust} affect the number of eigenvalues contained in each band. Since we have an estimate of the cumulative spectral distribution, one approximation for the number of samples in the adjusted $m^{th}$ band is to round $N \cdot (\tilde{P}_{\lambda}(\tau_{m})-\tilde{P}_{\lambda}(\tau_{m-1}))$. As a band end $\tau_m$ may fall at a point where $\tilde{P}_\lambda$ has been interpolated via cubic functions, another option is to estimate the number of eigenvalues between $\tau_{m-1}$ and $\tau_m$, once again with the stochastic trace estimator in \eqref{Eq:hutch3}, except using the bandpass filter $h_m(\lambda)$ from \eqref{Eq:bandpass}. We already compute $\tilde{h}_m(\L){\bf X}$ to calculate the sampling distribution in \eqref{Eq:samp_dist}. We can substitute the columns $\tilde{h}_m(\L){\bf x}^{(j)}$ of this matrix into 
\eqref{Eq:hutch3} for an estimate of the number of eigenvalues in the $m^{th}$ band. An added benefit of this extra step is that the thresholds $\{\tau_m\}$ are chosen to be in areas of low spectral density, which improves the accuracy of the eigenvalue count estimate \cite{di2016efficient}. 

We make small adjustments to ensure the total number of samples is equal to some target $N_T$. 
In our experiments, we take
$N_T=N$  to ensure critical sampling. Our default is to add samples to the lowest band if the normalized total is 
below $N_T$, and remove samples from the highest band if the normalized total is above 
$N_T$.  Algorithm \ref{Al:downsampling} summarizes
the proposed method to choose the downsampling sets.

Note that $\hbox{dim(col}(\tilde{h}_m(\L))) \geq \hbox{dim(col}({h}_m(\L)))$, 
with the difference depending on the number of Laplacian eigenvalues just outside the end points of $h_m(\cdot)$ and the degree of approximation used for $\tilde{h}_m(\cdot)$. Therefore, we expect that to perfectly reconstruct signals in $\hbox{col}(\tilde{h}_m(\L))$, we need more samples than the number of eigenvalues in the support of $h_m(\cdot)$. In Section \ref{Se:ill2}, we explore how the reconstruction error is reduced as we increase the number of samples in each band.

\subsection{Interpolation}
The exact interpolation \eqref{Eq:synth} requires the eigenvector matrix ${\bf U}$, and in case ${\bf U}_{\V_m,{\cal R}_m}$ is not full rank, the standard least squares reconstruction for the $m^{th}$ channel
\begin{align}
{\bf f}_{m,{rec}}=\mathbf{U}_{{\cal R}_m}(\mathbf{U}_{\V_m,{\cal R}_m}^{\top}\mathbf{U}_{\V_m,{\cal R}_m})^{-1}\mathbf{U}_{\V_m,{\cal R}_m}^{\top}{\bf y}_{\V_m}
\end{align}
also requires ${\bf U}_{{\cal R}_m}$. One option explored in \cite{halko,paratte} is to leverage $\{\bar{T}_k(\L){\bf X}\}$ again to approximate the column space of ${\bf U}_{{\cal R}_m}$ by filtering at least $|{\cal R}_m|$ standard normal random vectors with the filter $\tilde{h}_m(\cdot)$, possibly followed by orthonormalization via QR factorization.

A second approach suggested in \cite{PuyTGV15} to efficiently reconstruct lowpass signals is to relax the optimization problem
\begin{align*}
\min_{{\bf z} \in \hbox{col}({\bf U}_{{\cal R}_m})} ||{\boldsymbol \Omega}_{m,\V_m}^{-\frac{1}{2}}\left({\bf M_m z}-{{\bf y}_{\V_m}} \right) ||_2^2
\end{align*} 
to 
\begin{align}\label{Eq:approx_rec_opt}
\min_{{\bf z} \in \R^N} \left\{{\bf z}^{\top}\varphi_m(\L){\bf z} + \kappa ||{\boldsymbol \Omega}_{m,\V_m}^{-\frac{1}{2}}\left({\bf M_m z}-{{\bf y}_{\V_m}} \right) ||_2^2\right\},
\end{align} 
where ${\boldsymbol \Omega}_{m,\V_m}$ is a $|\V_m| \times |\V_m|$  diagonal matrix with the $m^{th}$ channel sampling weights of $\V_m$ along the diagonal, and $\kappa>0$ is a parameter to trade off the two optimization objectives. The regularization term ${\bf z}^{\top}\varphi_m(\L){\bf z}$ in \eqref{Eq:approx_rec_opt} penalizes reconstructions with support outside of the desired spectral band. For lowpass signals, Puy et al. \cite{PuyTGV15} take the penalty function $\varphi_m(\lambda)$ to be a nonnegative, nondecreasing polynomial, such as $\lambda^l$, with $l$ a positive integer. For more general classes of signals (i.e., the midpass and highpass signals output from the higher bands of the proposed filter bank), it is important to keep the nonnegativity property, in order to ensure that $\varphi_m(\L)$ is positive semi-definite and the optimization problem \eqref{Eq:approx_rec_opt} is convex. However, we can drop the nondecreasing requirement, and instead choose penalty functions concentrated outside the $m^{th}$ spectral band. Options we explore include (i) the polynomial filter $\varphi_m(\lambda)=1-\tilde{h}(\lambda)$; (ii) the rational filter $\varphi_m(\lambda)=\frac{1}{\tilde{h}(\lambda)+\epsilon}-\frac{1}{1+\epsilon}$; 
and (iii) a polynomial approximation of a penalty function constructed as a piecewise cubic spline, an approach explored in \cite{chen_saad}. See Fig. \ref{Fig:penalty} for example graphs of these penalty functions.

\begin{figure}[tb]
\begin{minipage}[m]{0.49\linewidth}
\centerline{\small{$K=20$}}
\centerline{\includegraphics[width=1.1\linewidth]{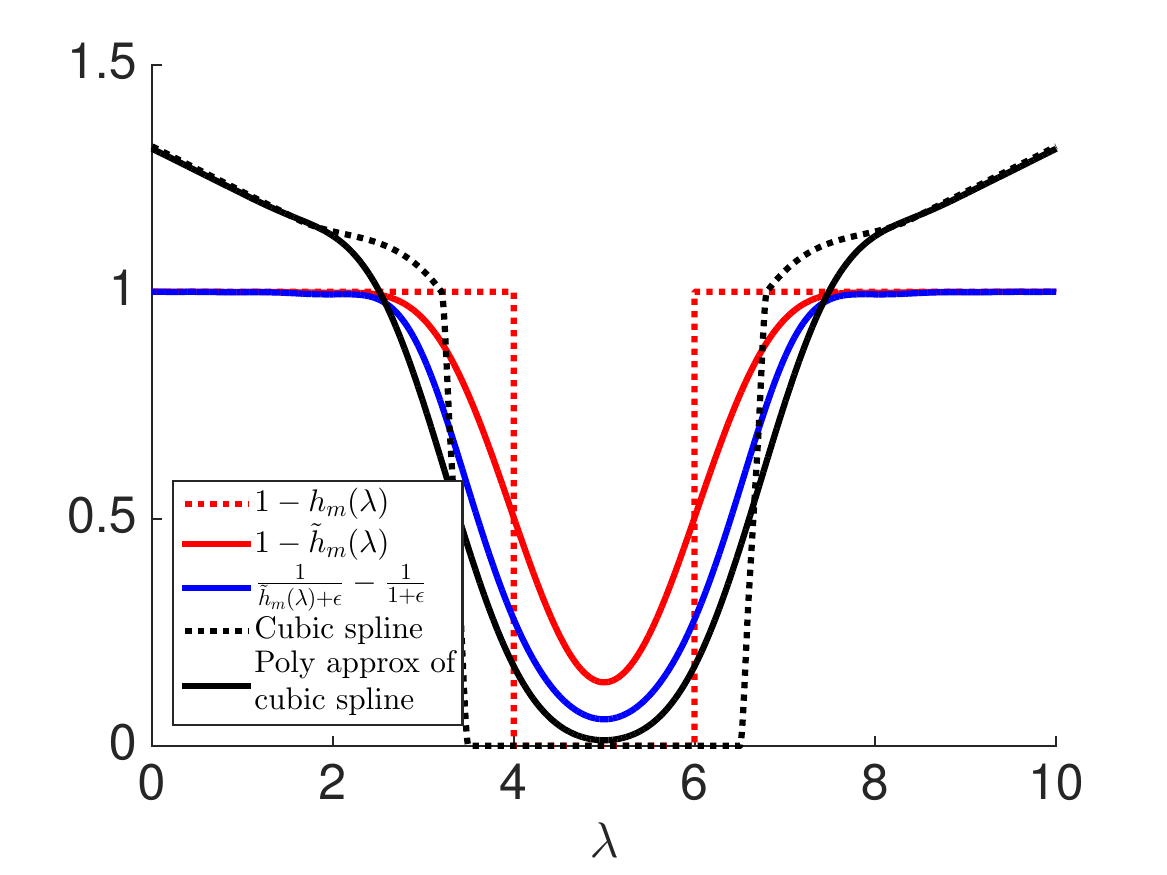}}
\centerline{~~\small{(a)}}
\end{minipage}
\begin{minipage}[m]{0.49\linewidth}
\centerline{\small{$K=50$}}
\centerline{\includegraphics[width=1.1\linewidth]{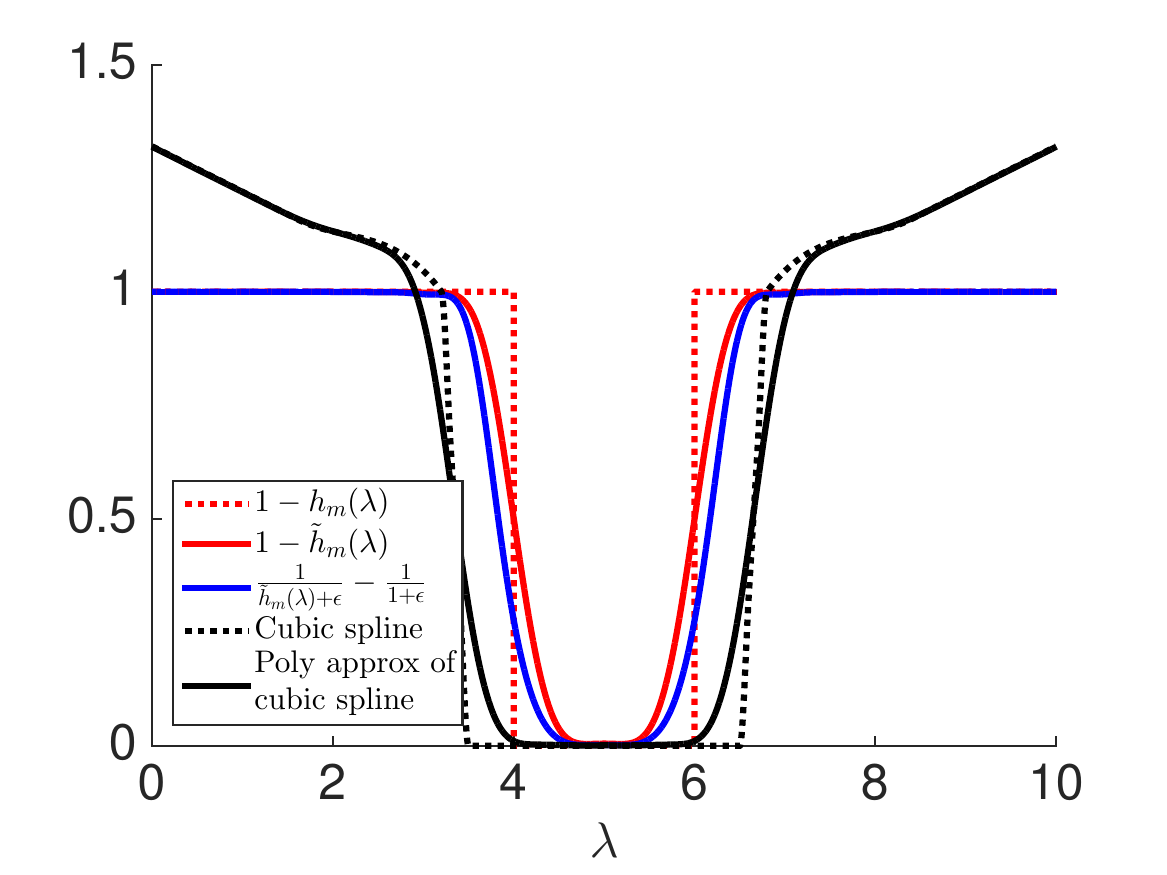}}
\centerline{~~\small{(b)}}
\end{minipage} 
\caption{Example penalty filters $\varphi_m$ for the regularization term in \eqref{Eq:approx_rec_opt}, with $\epsilon=\frac{\sqrt{5}-1}{2}$. Here, $\tilde{h}_m$ is a Jackson-Chebyshev polynomial approximation of $h_m$ of degree 20 and 50 in (a) and (b), respectively. \vspace{-.15in} }\label{Fig:penalty}
\vspace{.2cm}
\end{figure}

From the first-order optimality conditions, the solution to \eqref{Eq:approx_rec_opt} is the solution to the linear system of equations
\begin{align}\label{Eq:approx_rec_sol}
\Bigl(\kappa M_m^{\top}{\boldsymbol \Omega}_{m,\V_m}^{-1}M_m+\varphi_m(\L)\Bigr){\bf  z}=\kappa M_m^{\top}{\boldsymbol \Omega}_{m,\V_m}^{-1}{\bf y}_{\V_m},
\end{align}
which can be solved, for example, with the preconditioned conjugate gradient method. For the preconditioner, we use a diagonal matrix whose $i^{th}$ element is equal to 1 if $i \notin {\V_m}$ and $1+\frac{\kappa}{{\boldsymbol \omega}_m(i)}$ if $i \in {\V_m}$, which serves as an approximation to the matrix $\kappa M_m^{\top}{\boldsymbol \Omega}_{m,\V_m}^{-1}M_m+\varphi_m(\L)$ in \eqref{Eq:approx_rec_sol}.

\subsection{Summary and properties of the fast M-CSFB transform} \label{Se:fast_prop}
In summary, as shown in the flow chart in Fig. \ref{Fig:flow_chart}, the set up for the fast M-CSFB consists of approximating the spectral density of the graph Laplacian, designing the filter bank via Algorithm \ref{Al:filter_bank} and choosing the downsampling sets via Algorithm \ref{Al:downsampling}. To analyze a signal, we apply each of the $M$ Jackson-Chebyshev polynomial filters output from Algorithm \ref{Al:filter_bank} to the signal, and then downsample on the corresponding set of vertices output from Algorithm \ref{Al:downsampling}. To synthesize a signal from its transform coefficients, we solve \eqref{Eq:approx_rec_sol} for each band and sum the results. The complexity of the set up is dominated by the computation of $\{\bar{T}_k(\L){\bf X}\}_{k=0,1,\ldots,K}$  via \eqref{Eq:Tkbar_rec}, which has computational complexity ${\cal O}(JK|\E|)$. The computational complexity of the analysis is ${\cal O}(K|\E|)$. So, if $N$ is large, the number of random vectors $J$ and degree of polynomial approximation $K$ are small compared to $N$, and the graph is sparse ($|\E|$ is roughly a small constant times $N$), the setup and analysis scale linearly with the number of vertices. If each $\varphi_m$ is taken to be an order $K$ polynomial and the conjugate gradient is run for at most $I$ iterations, the bottleneck computation of the synthesis has complexity ${\cal O}(IMK|\E|)$. In practice, the required number of iterations and corresponding computation time depend on the conditioning of the matrix on the left-hand side of \eqref{Eq:approx_rec_sol} and the choice of preconditioner. 

When the degree of approximation $K$ is small, the energies 
of the fast, approximate transform atoms, which are of the form $\tilde{h}_m(\L){\boldsymbol \delta}_i$, may be slightly more spread in the spectral domain than those of the atoms of the form ${h}_m(\L){\boldsymbol \delta}_i$, due to the polynomial approximation of the ideal filter. However, the energy is also guaranteed to be supported completely within a radius of $K$ hops from the center vertex $i$ \cite{hammond2011wavelets,shuman2015vertex}. Thus, we have better control over the spread in the vertex domain, which, as we saw in the scale 3 wavelet atom in Fig. \ref{Fig:bunny_coef}, may be larger with the ideal filters.

\begin{figure}[t]
\centerline{\includegraphics[width=1\linewidth]{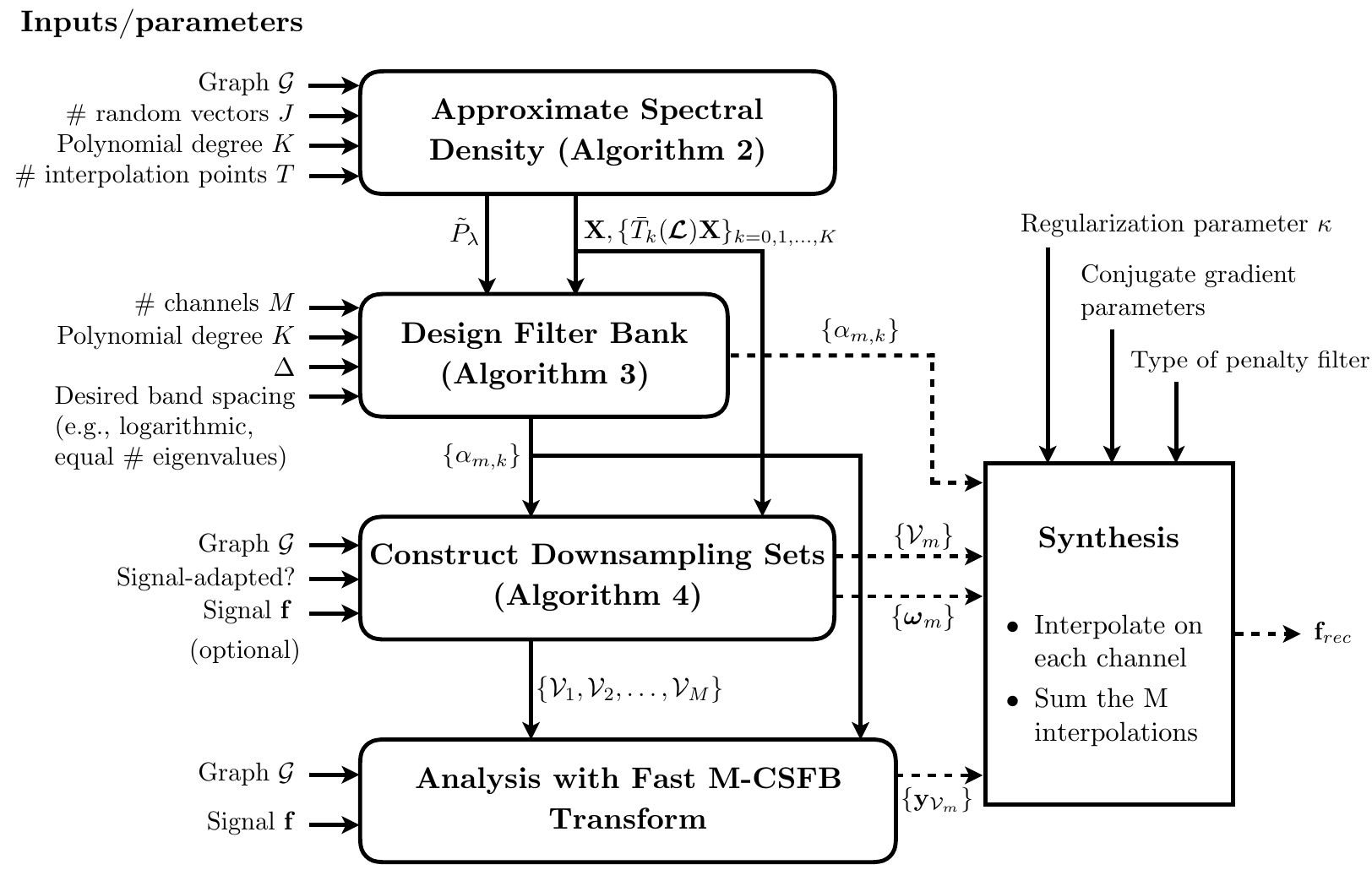}}
\caption{Flow chart of the set up, analysis, and synthesis for the fast $M$-CSFB transform. The only differences for the signal-adapted transform are in the construction of the downsampling sets.} \label{Fig:flow_chart}
\vspace{-.08in}
\end{figure}

\section{Signal-Adapted Fast M-CSFB Transform} \label{Se:signal_adapted}
Just as it is helpful for interpolation to sample more signal values at vertices where the energy of the selected eigenvectors is concentrated, it is also helpful to sample more values where the energy of the filtered signals is concentrated. 
This motivates three adaptations to the fast M-CSFB transform. 

First, we subtract the mean of the signal (i.e., let ${\bf f}={\bf f}-\frac{{\bf 1}^{\top}{\bf f}}{N} {\bf 1}$) before sending it into the filter bank, and then add this constant back to every vertex when summing the interpolations from the $M$ channels. To ensure critical sampling, we only allow $N-1$ total samples in addition to this mean.

Second, we adapt the sampling weights by setting ${\boldsymbol \omega}_m(i) = {\boldsymbol \omega}_m(i) \cdot  \log\left(1+|(\tilde{h}_m(\L){\bf f})(i)|\right)$. Thus, if a filtered signal on a given band is concentrated on a certain region of the graph, the sampling weights 
are concentrated on the intersection of that region and the set of vertices where the energies of the selected eigenvectors are concentrated. 

Third, beyond the distribution of samples within each band, we need to decide how many samples to allocate to each band. In the exact computation (small graph) case, allocating the samples according to the number of eigenvalues contained in the disjoint bands ensures perfect reconstruction. 
However, with approximate computations, it is beneficial to the overall 
reconstruction error to do a better job of interpolation on the bands whose filtered signals have the most energy. 
We set the initial number of samples by multiplying the estimate 
of the number of eigenvalues in the band with $\log(1+||\tilde{h}_m(\L){\bf f}||)$, 
 as shown in Algorithm \ref{Al:downsampling}. In the extreme case of a filtered signal with no energy, this choice leads to zero measurements and a reconstruction of the all zero vector.

\begin{table*}[tb]
{\footnotesize
\tabcolsep=0.11cm
\begin{center}
\begin{tabular}{l|C{.8cm}C{.8cm}C{.9cm}|C{.8cm}C{.8cm}C{.9cm}|C{.8cm}C{.8cm}C{.9cm}|C{.8cm}C{.8cm}C{.9cm}|C{.8cm}C{.8cm}C{.9cm}|}
\cline{2-16}
 & \multicolumn{3}{ c| }{Sensor Network} & \multicolumn{3}{ c| }{Bunny} & \multicolumn{3}{ c| }{Andrianov net25 Graph} & \multicolumn{3}{ c| }{Community Graph} & \multicolumn{3}{ c| }{Temperatures} \\ 
  &  \multicolumn{3}{ C{3cm}| }{$N=500$ \newline $|\E|=2,050$} & \multicolumn{3}{ C{3cm}| }{$N=2,503$ \newline $|\E|=13,726$}&  \multicolumn{3}{ C{3cm}| }{$N=9,520$ \newline $|\E|=195,841$}&  \multicolumn{3}{ C{3cm}| }{$N=25,000$ \newline $|\E|=480,459$}  & \multicolumn{3}{ C{3cm}| }{$N=469,404$ \newline $|\E|=1,865,415$} \\ 
\cline{2-16}
& Anal. \newline Time &  Synth. \newline Time & Rec. \newline NMSE  & Anal. \newline Time &  Synth. \newline Time & Rec. \newline NMSE & Anal. \newline Time &  Synth. \newline Time & Rec. \newline NMSE & Anal. \newline Time &  Synth. \newline Time & Rec. \newline NMSE & Anal. \newline Time & Synth. \newline Time & Rec. \newline NMSE   \\ 
\cline{1-16}
\multicolumn{1}{|L{2.6cm}|}{Graph Fourier Transform}  & 0.1 & 0.01 & 5.4e-30 & 9.8 & 0.02 & 2.5e-29 & 295.7 & 0.08 & 1.4e-28 & 8544.8 & 0.6 & 4.5e-28 & NA & NA & NA \\
\cline{1-16}
\multicolumn{1}{ |L{2.6cm}| }{Exact $M$-CSFB}& 2.2 & 0.06& 7.8e-30 & 380.4  & 0.1  & 7.8e-23 & NA & NA & NA & NA & NA & NA & NA& NA & NA \\
\cline{1-16}
\multicolumn{1}{|L{2.6cm}|}{Diffusion Wavelets \cite{coifman2006diffusion}}  & 8.5 & 0.03 & 1.2e-30 & 313.9 & 0.02 & 1.2e-29 & 14354 & 0.3 & 1.0e-26 & NA & NA & NA & NA & NA & NA \\
\cline{1-16}

\multicolumn{1}{|L{2.6cm}| }{Graph-QMF \cite{narang2012perfect}} & 0.6 & 0.1& 5.4e-8 & 4.9 & 3.4 & 3.2e-8 & 38.4 & 21.0 & 3.3e-9 & 1062.7 & 978.0 & 6.0e-8 & NA & NA & NA \\
\cline{1-16}

\multicolumn{1}{|L{2.6cm}| }{Fast $M$-CSFB \newline (Scenario A: faster)}& 0.6 & 0.5& 6.8e-2 & 0.8 & 0.9 & 8.2e-2 &2.3 & 3.1 &1.6e-1 & 2.8 & 12.4 & 2.2e-1 & 55.1 & 94.5 & 1.4e-2 \\
\cline{1-16}
\multicolumn{1}{|L{2.6cm}|}{Fast $M$-CSFB \newline  (Scenario B: more accurate)} & 0.7& 1.0& 9.2e-2 & 0.9 & 3.7 & 3.3e-2 & 1.4 & 12.1 & 1.4e-1 & 4.4 & 71.7 & 1.5e-1 & 91.6 & 874.3 & 7.0e-3 \\
\cline{1-16} 
\multicolumn{1}{|L{2.6cm}|}{Signal-Adapted \newline Fast $M$-CSFB \newline (Scenario A: faster)} & 0.7 & 0.5 & 3.8e-2 & 0.8 & 0.9 & 3.4e-2 &0.8 & 2.2 & 6.7e-2 & 2.8 & 9.9 & 1.2e-1 & 47.6 & 98.4 & 1.7e-3 \\
\cline{1-16} 
\multicolumn{1}{|L{2.6cm}| }{Signal-Adapted \newline Fast $M$-CSFB \newline  (Scenario B: more accurate)} & 0.7 & 1.1& 2.4e-2 & 0.9 & 3.6 & 1.2e-2 & 1.3 &9.7& 7.7e-2 & 4.4 & 71.1 & 7.9e-2 & 81.2 & 976.0 & 6.6e-4 \\
\cline{1-16} 
\end{tabular}
\end{center}
}
\caption{Comparison of computation times (seconds) and reconstruction errors} 
\label{Ta:comp_times}
\vspace{-.65cm}
\end{table*}

Note that when analyzing a single signal, these adaptations do not add significantly to the computational complexity
of the transform. However, if we are repeating the transform on 
many different signals residing on the same graph, we do need 
to rerun the random selection of vertices for each signal.

\section{Numerical Experiments} \label{Se:ill2}

\subsection{Scalability}

The one indisputable advantage of the proposed fast $M$-CSFB transforms over other critically sampled transforms for graph signals is their scalability to sparse graphs with a large number of vertices. In Table \ref{Ta:comp_times}, 
we compare the computation times of the proposed transform to those of the exact graph Fourier transform (i.e., a full diagonalization of $\L$); diffusion wavelets \cite{coifman2006diffusion} with five scales (one scaling and four wavelets) and a precision of $\epsilon=1\hbox{e-4}$; and a graph quadrature mirror filter (QMF) bank \cite{narang2012perfect} with polynomial approximation order of $K=50$. For the fast $M$-CSFB (original and signal adapted versions), we consider two scenarios. Scenario A is faster, but less accurate, with $K=25$, a conjugate gradient (CG) tolerance of 1e-8, and a maximum of 100 CG iterations. Scenario B is slower, but more accurate, with $K=50$, a CG tolerance of 1e-10, and a maximum of 250 iterations. For all fast $M$-CSFB cases, we let $M=5$, $J=30$ and $\kappa=1$, and subtract out the mean of the signal before applying the filter bank. For larger graphs, calculations such as a full diagonalization are either not possible due to memory limits or would take in excess of a day to compute. We denote these by NA. Note that the graph-QMF transform is slower due to two additional bottlenecks. First, it requires a graph coloring via algorithms with complexities of ${\cal O}(N^3)$ or ${\cal O}(N^4)$ \cite{klotz2002graph}.  Second, whereas for the $M$-CSFB we can reuse the single sequence of vectors $\{\bar{T}_k(\L){\bf f}\}$ in computing the filtered signal $\tilde{h}_m(\L){\bf f}$ on each channel, the iterative nature of the graph-QMF filter bank leads to different input signals at each level, resulting in a complexity increase from ${\cal O}(K|\E|)$ (assuming $M$ is on the order of the average degree of the graph) to  ${\cal O}(K|\E|(2^L-1))$, where $L$ is the number of bipartite subgraphs in the graph-QMF transform. The second bottleneck is more significant in the computation times for smaller graphs, while the first becomes prohibitive for extremely large graphs. 

We apply all transforms to the signals shown above for the sensor network and bunny graph, 
to Gaussian random vectors with independent entries for the 
net25 and community (100 communities) graphs, and to a temperature signal discussed in detail in the next subsection. Note that in all of these examples, we are simply performing analysis followed by synthesis, without doing any compression or other adjustments to the analysis coefficients (compression examples are included in Section \ref{Se:compression}). Of particular note in Table \ref{Ta:comp_times} is that the fast $M$-CSFB analysis times are under a minute for a graph with almost a half of a million vertices and 2 million edges. 

\begin{figure*}[tb] 
\begin{minipage}[m]{0.24\linewidth}
\centerline{\includegraphics[width=.9\linewidth]{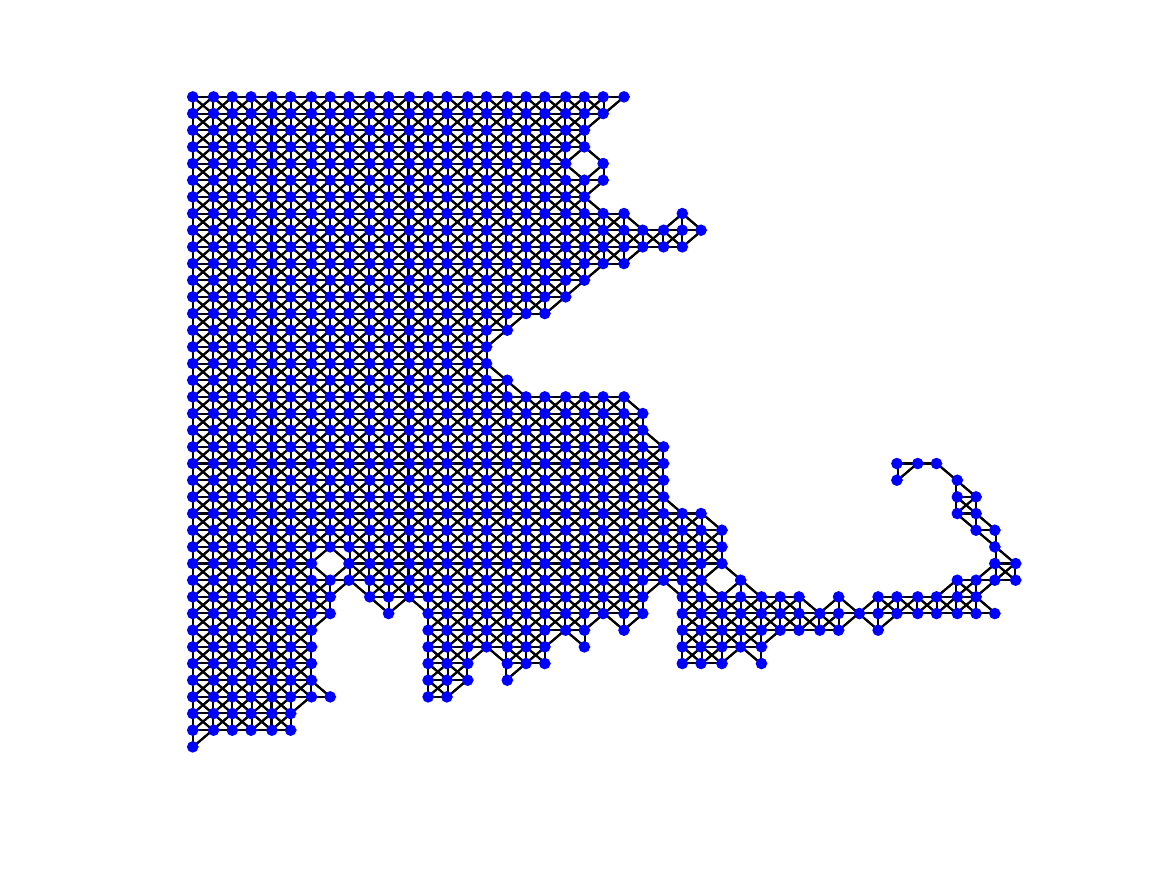}}
\vspace{-.2in}
\centerline{\small{(a)}}
\end{minipage}
\begin{minipage}[m]{0.24\linewidth}
\centerline{\includegraphics[width=.9\linewidth]{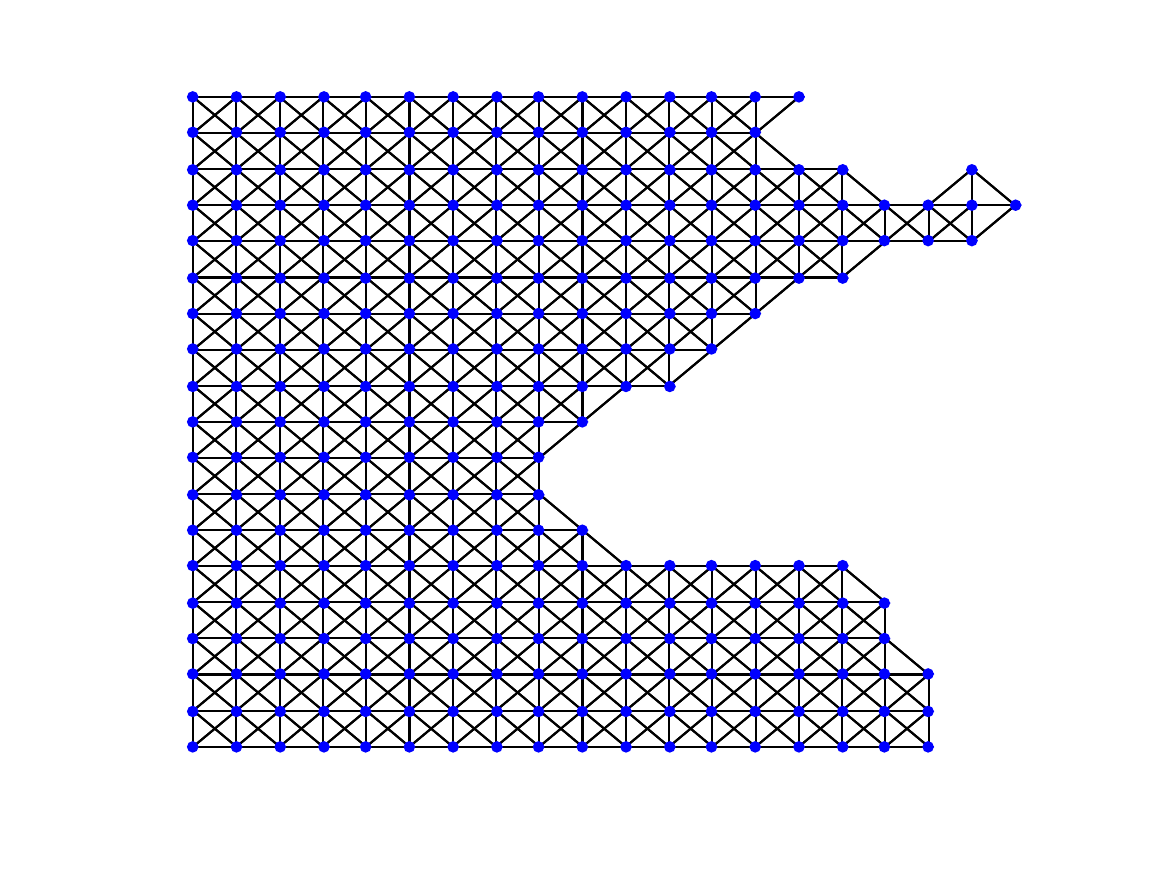}}
\vspace{-.2in}
\centerline{\small{(b)}}
\end{minipage} 
\begin{minipage}[m]{0.24\linewidth}
\centerline{\includegraphics[width=.9\linewidth]{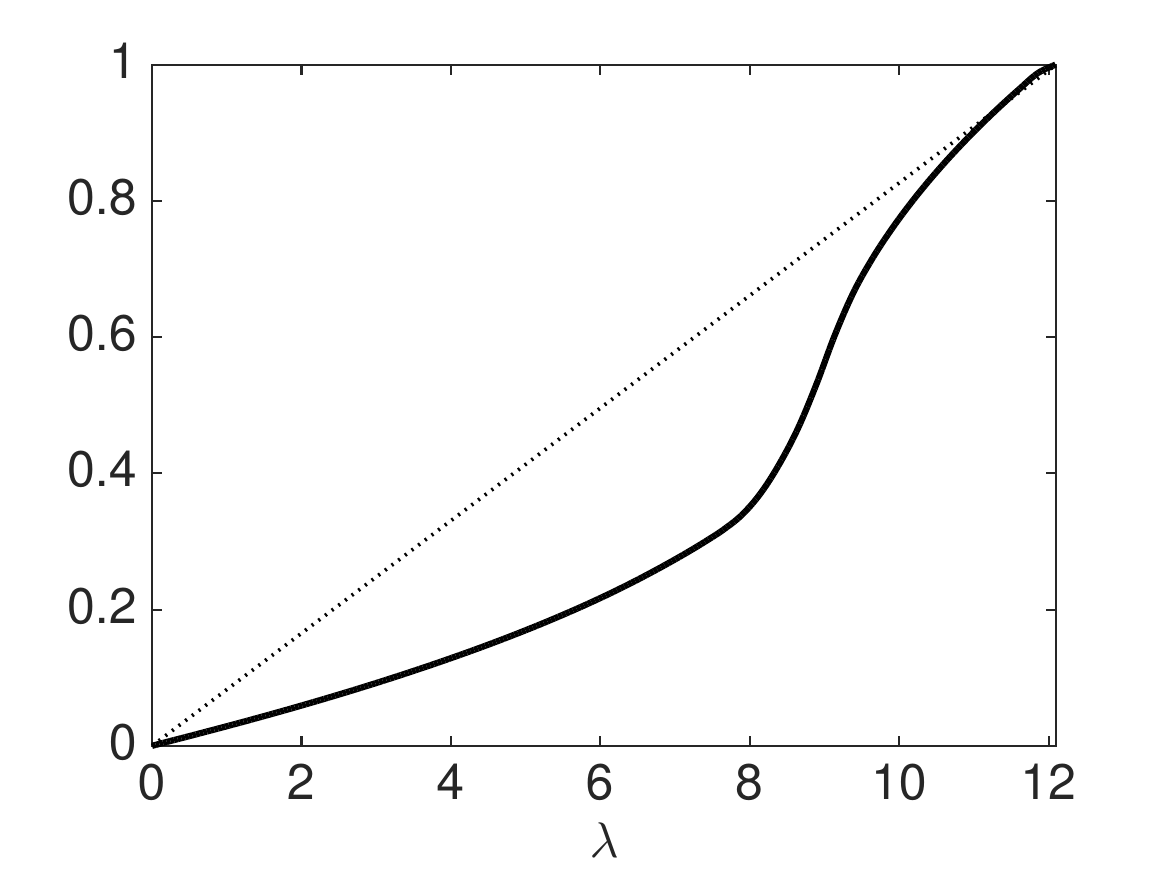}}
\centerline{~\small{(c)}}
\end{minipage} 
\begin{minipage}[m]{0.24\linewidth}
\centerline{\includegraphics[width=.9\linewidth]{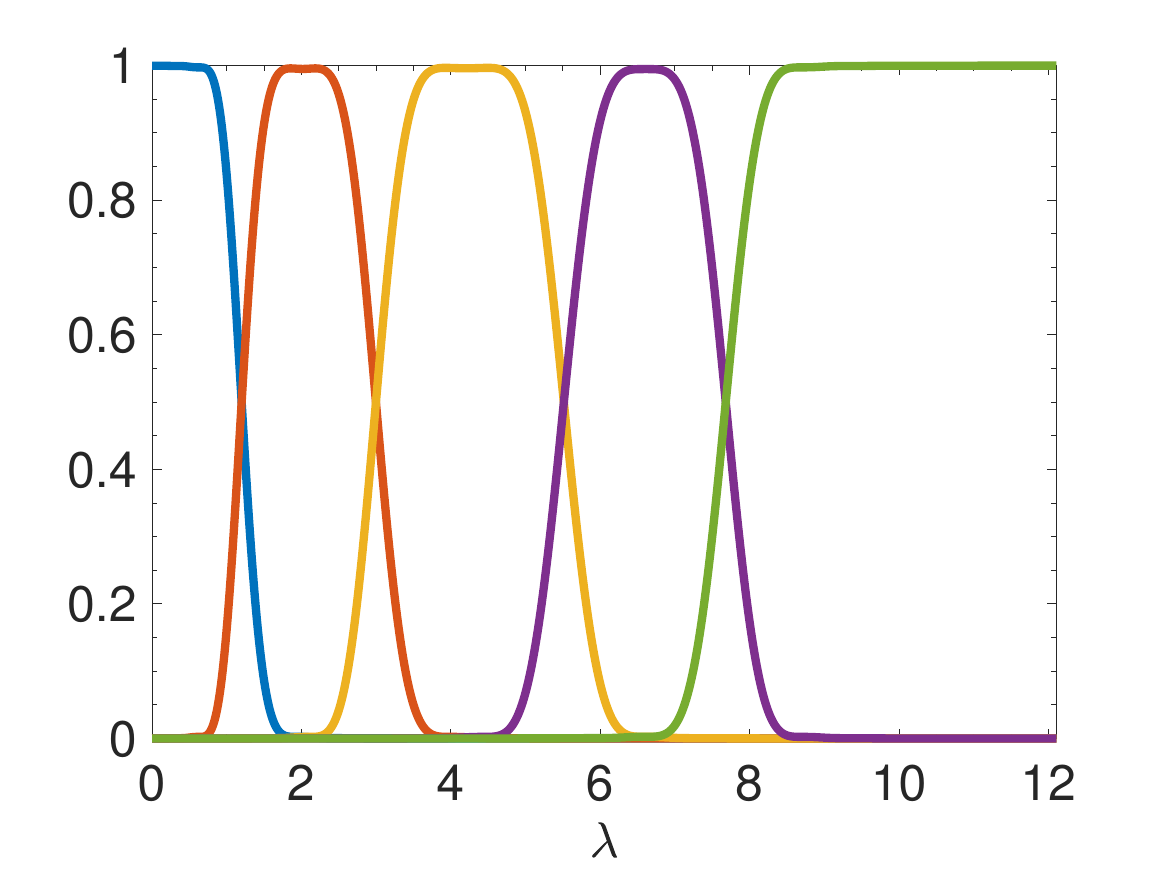}}
\centerline{~\small{(d)}}
\end{minipage} \medskip \\
\begin{minipage}[m]{0.24\linewidth}
\centerline{\includegraphics[width=.9\linewidth]{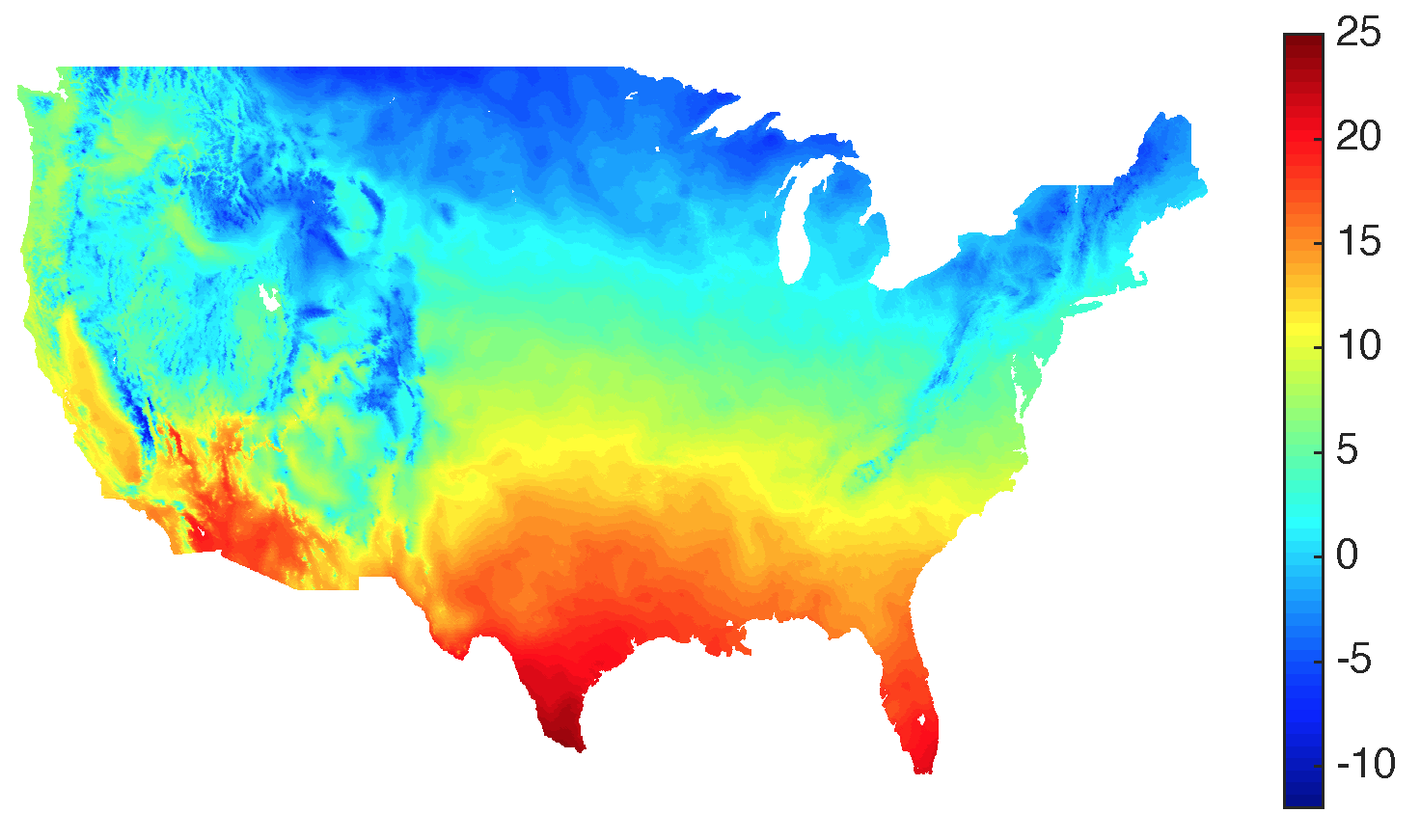}}
\centerline{\small{(e)}}
\end{minipage}
\begin{minipage}[m]{0.24\linewidth}
\centerline{\includegraphics[width=.9\linewidth]{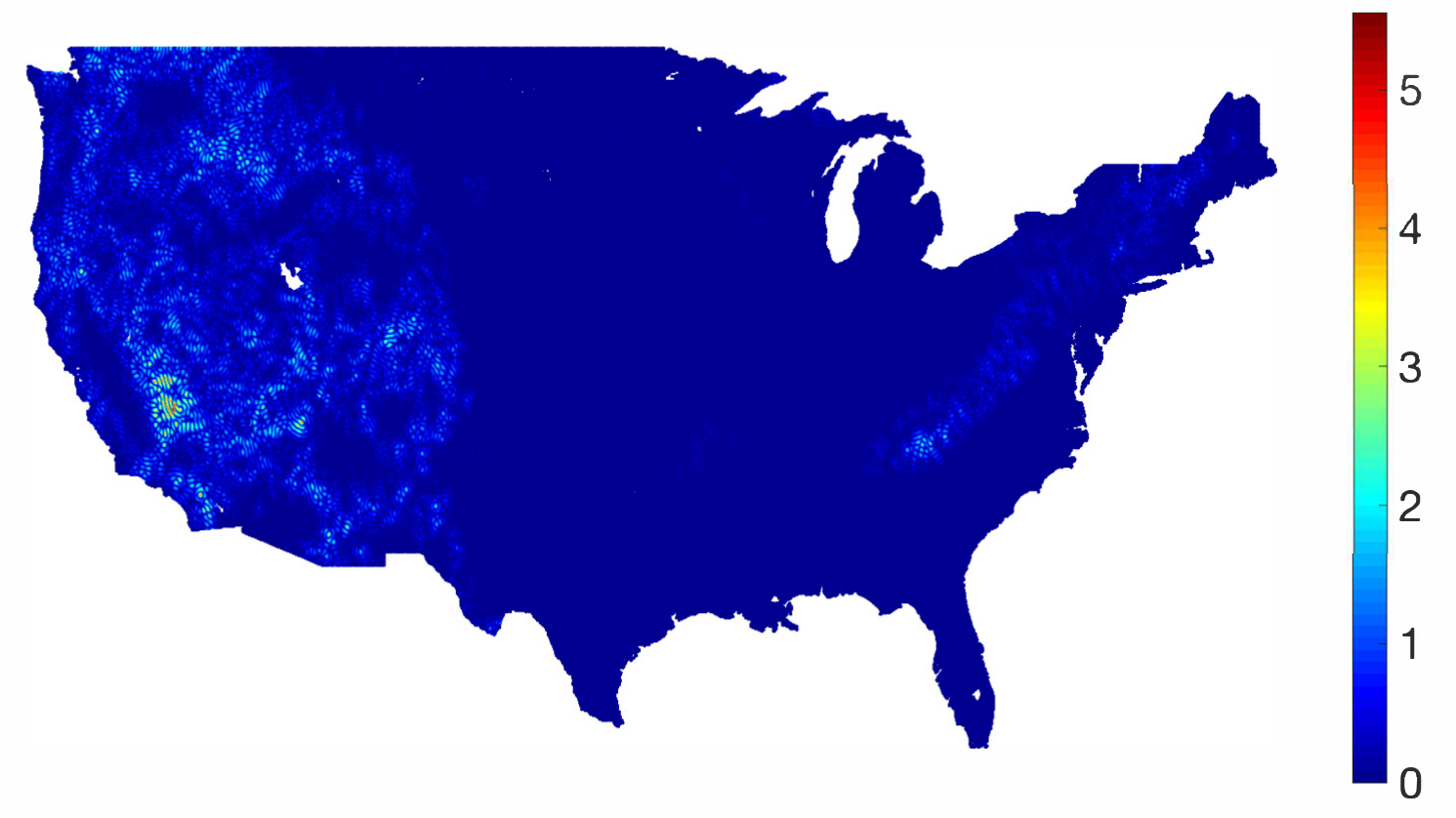}}
\centerline{\small{(f)}}
\end{minipage} 
\begin{minipage}[m]{0.24\linewidth}
\centerline{\includegraphics[width=.85\linewidth]{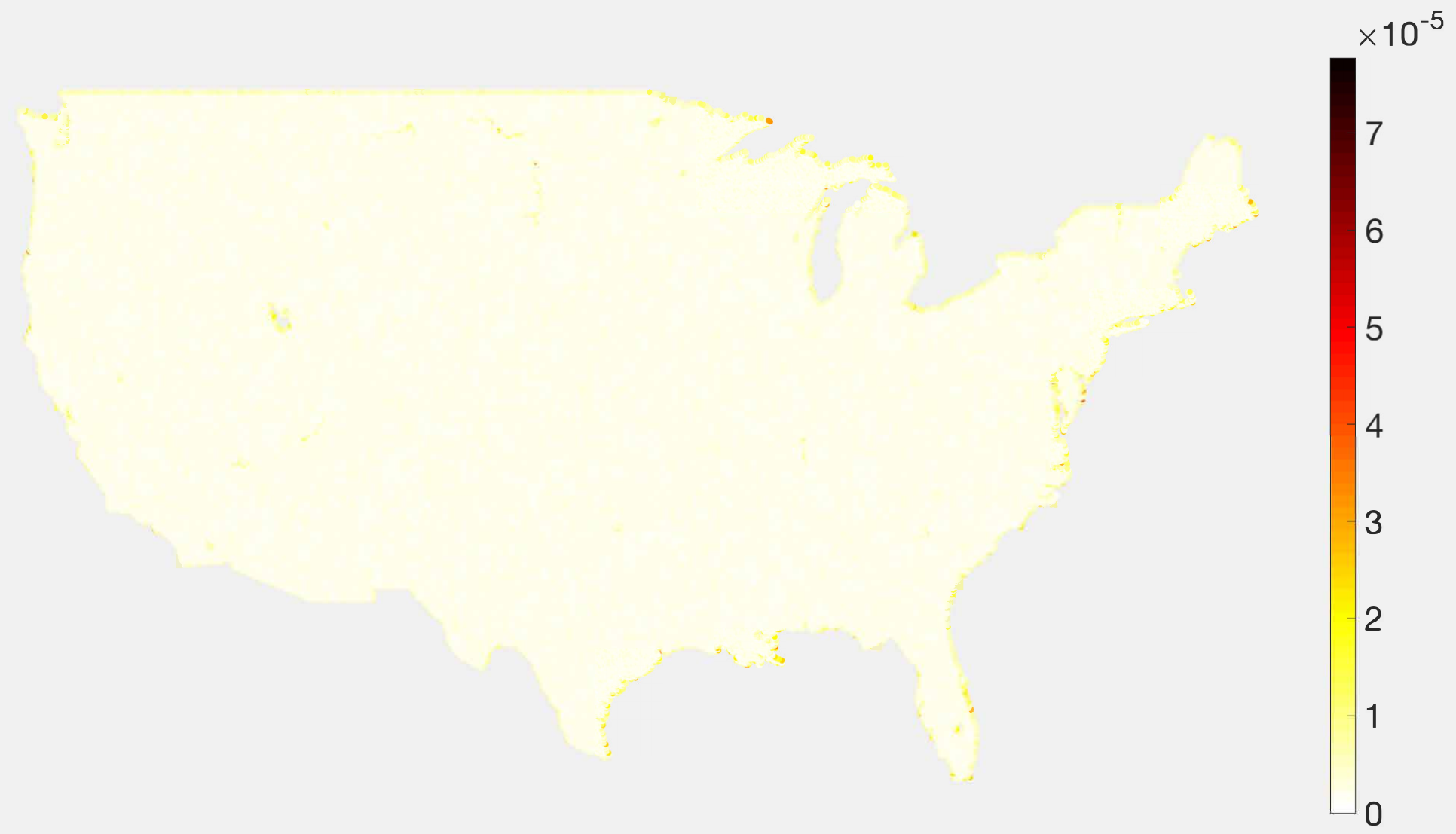}}
\centerline{\small{(g)}}
\end{minipage}
\begin{minipage}[m]{0.24\linewidth}
\centerline{\includegraphics[width=.84\linewidth]{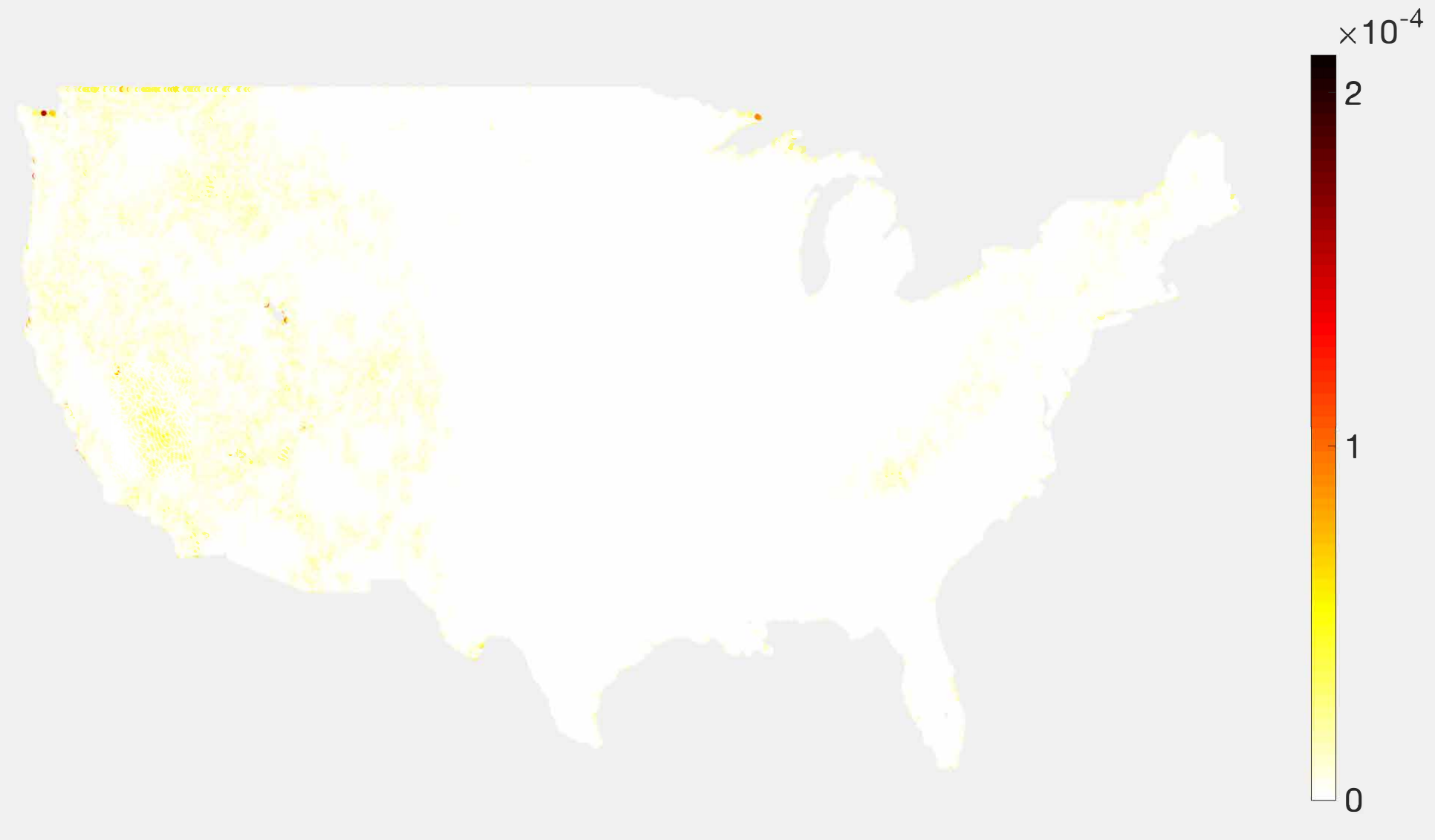}}
\centerline{\small{(h)}}
\end{minipage} \medskip  \\
\begin{minipage}[m]{0.24\linewidth}
\centerline{\includegraphics[width=.84\linewidth]{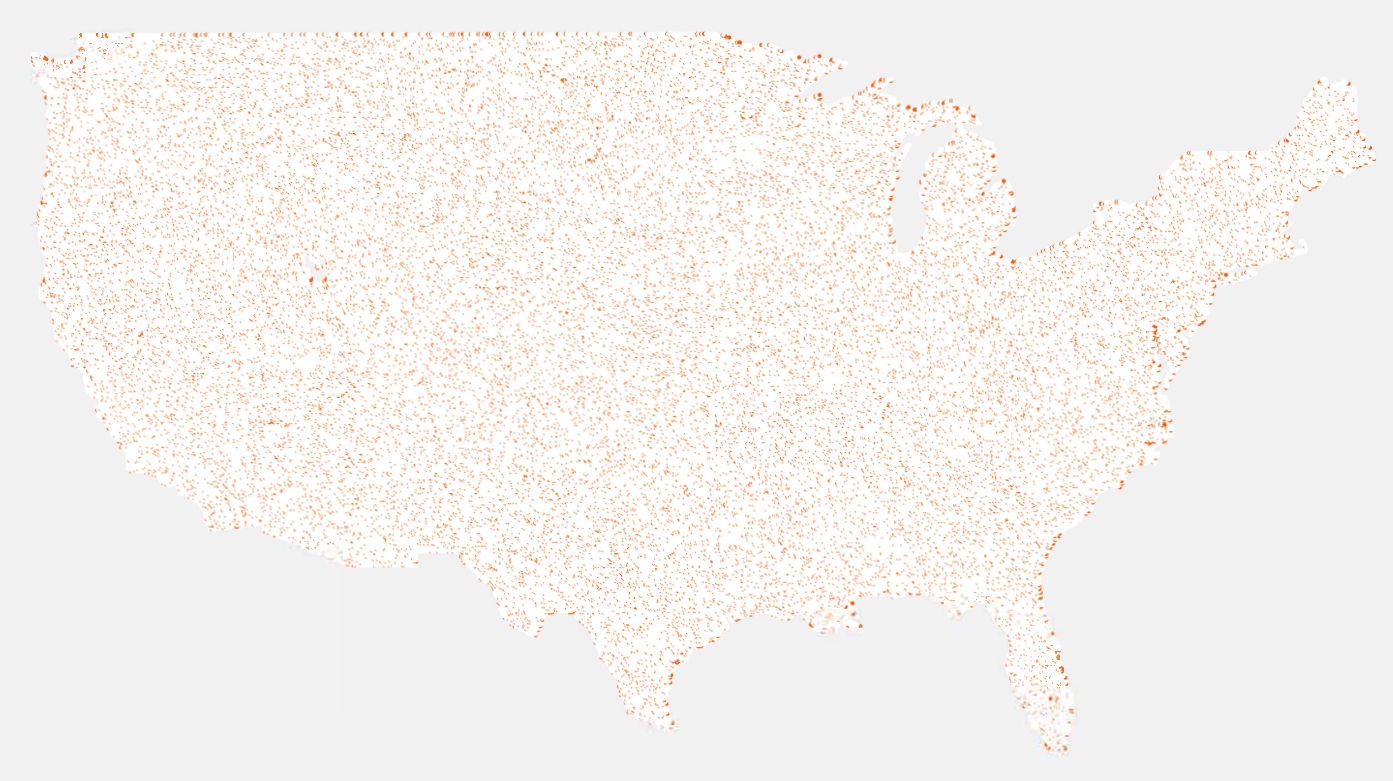}}
\centerline{\small{(i)}}
\end{minipage}
\begin{minipage}[m]{0.24\linewidth}
\centerline{\includegraphics[width=.84\linewidth]{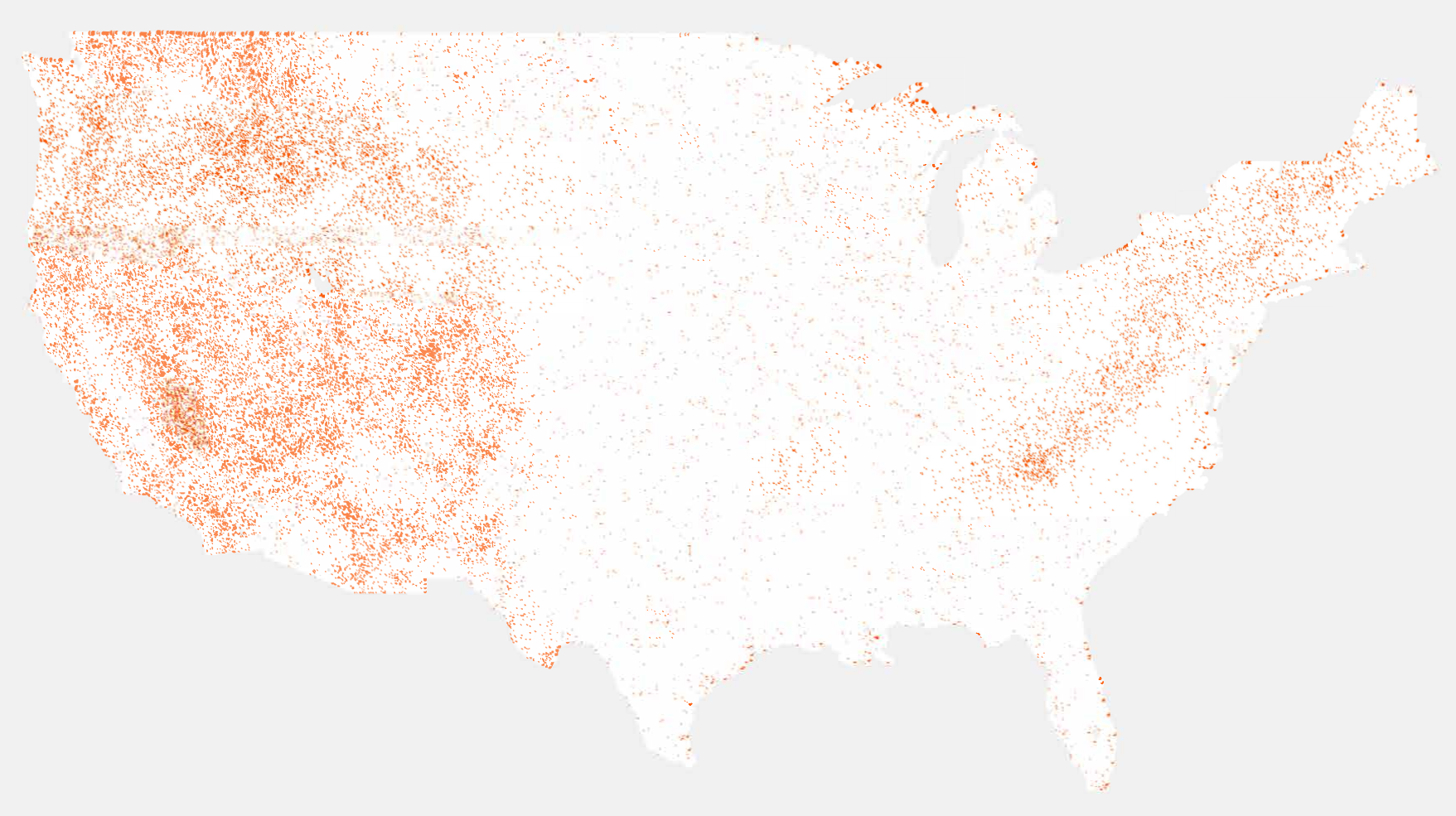}}
\centerline{\small{(j)}}
\end{minipage}
\begin{minipage}[m]{0.24\linewidth}
\centerline{\includegraphics[width=.9\linewidth]{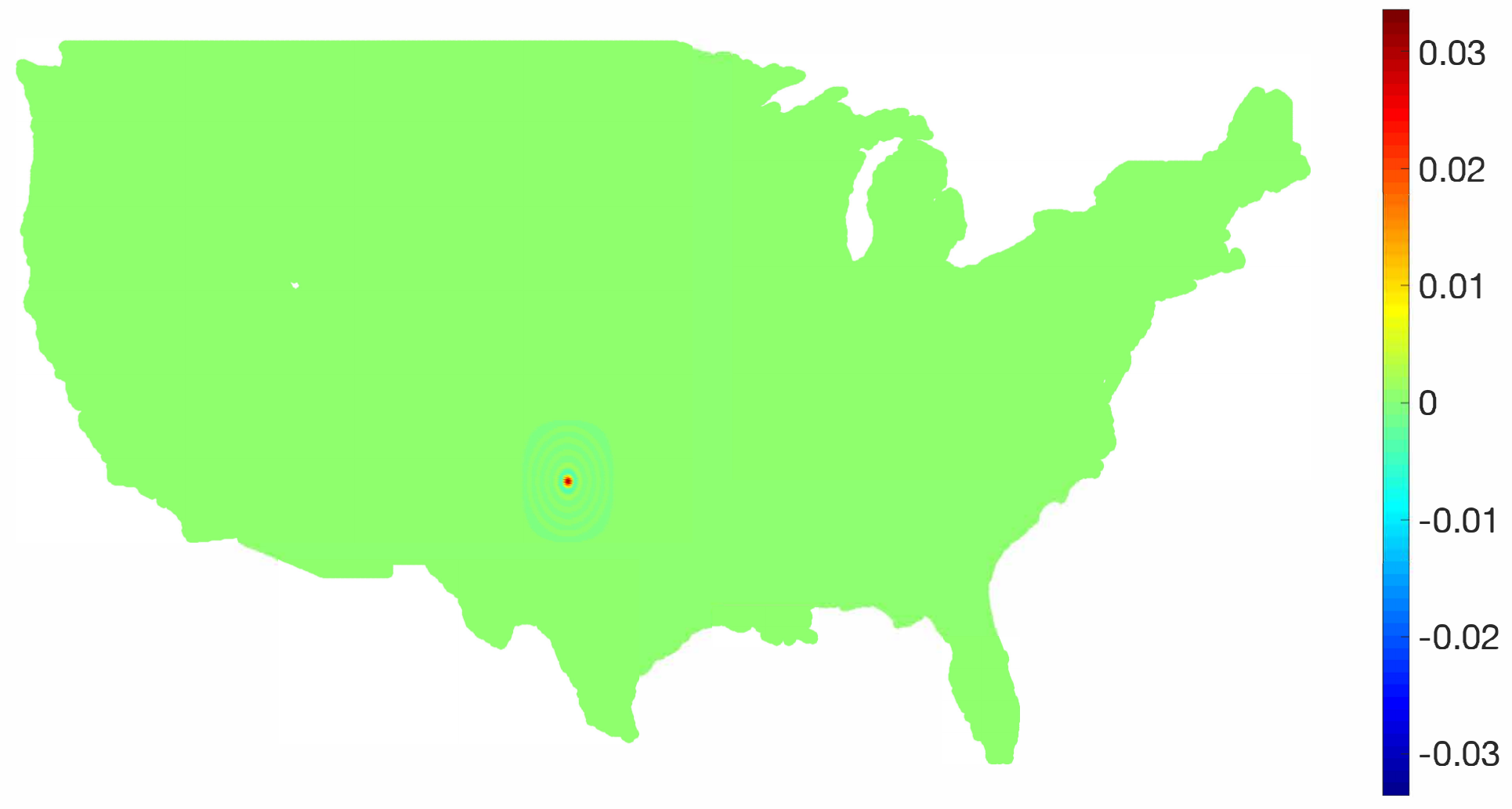}}
\centerline{\small{(k)}} 
\end{minipage} 
\begin{minipage}[m]{0.24\linewidth}
\centerline{\includegraphics[width=.9\linewidth]{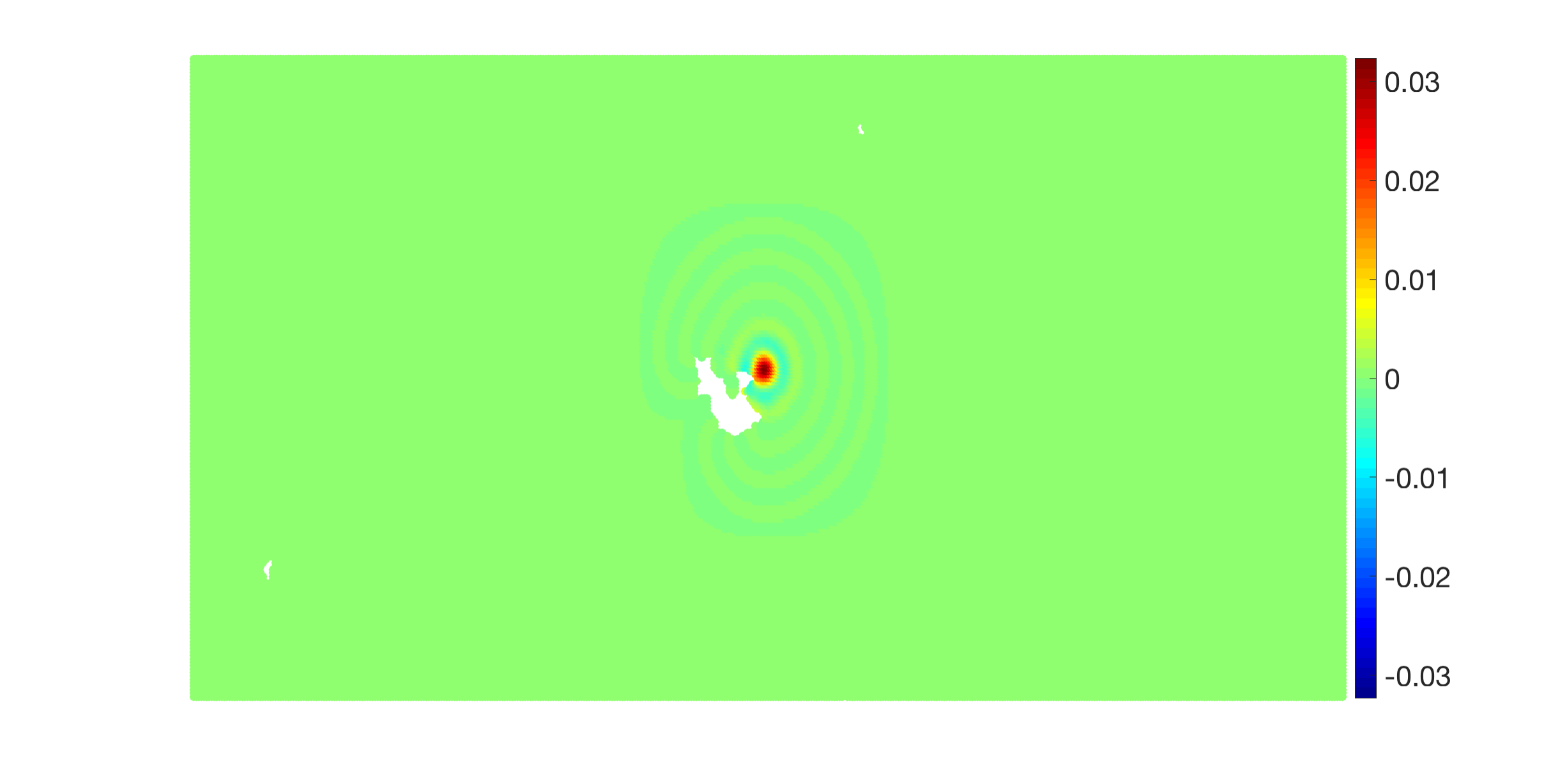}}
\centerline{\small{(l)}}
\end{minipage} \medskip \\
\begin{minipage}[m]{0.24\linewidth}
\centerline{\includegraphics[width=.9\linewidth]{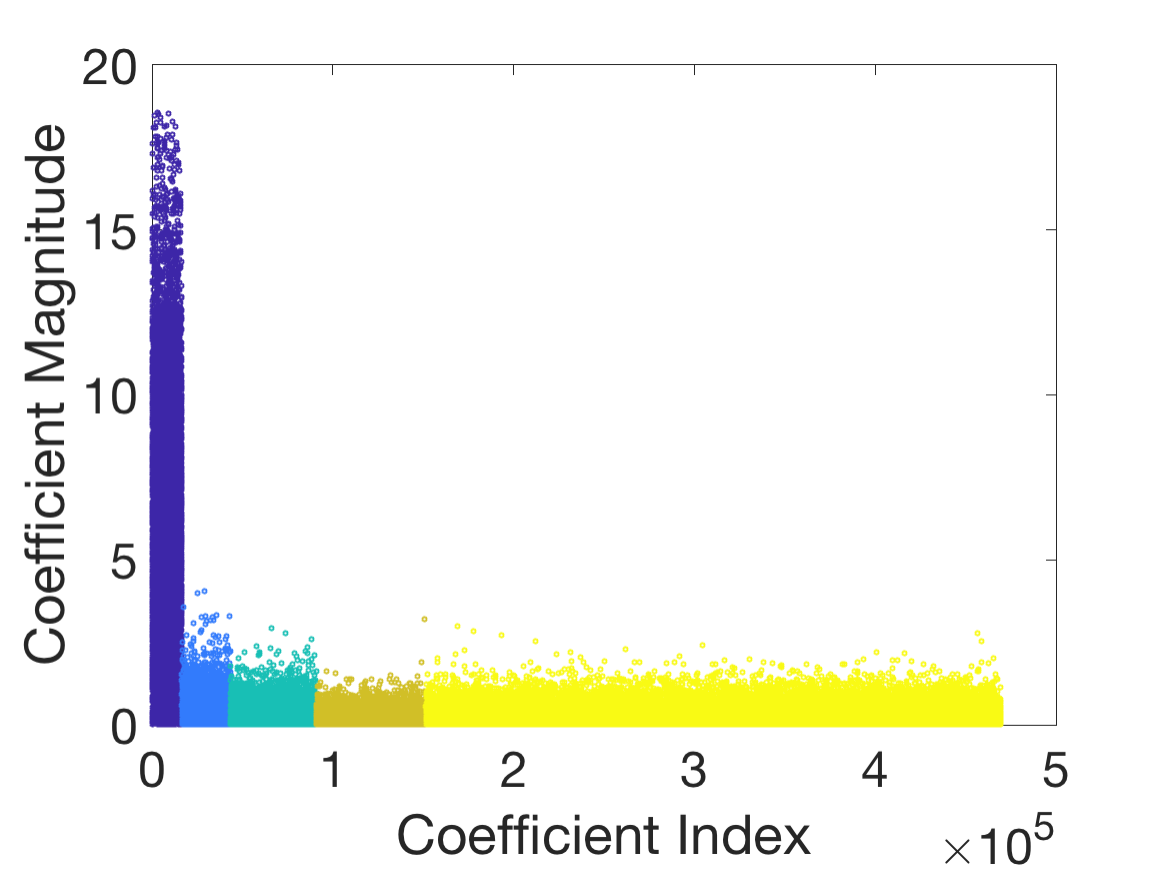}}
\centerline{\small{(m)}}
\end{minipage}
\begin{minipage}[m]{0.24\linewidth}
\centerline{\includegraphics[width=.9\linewidth]{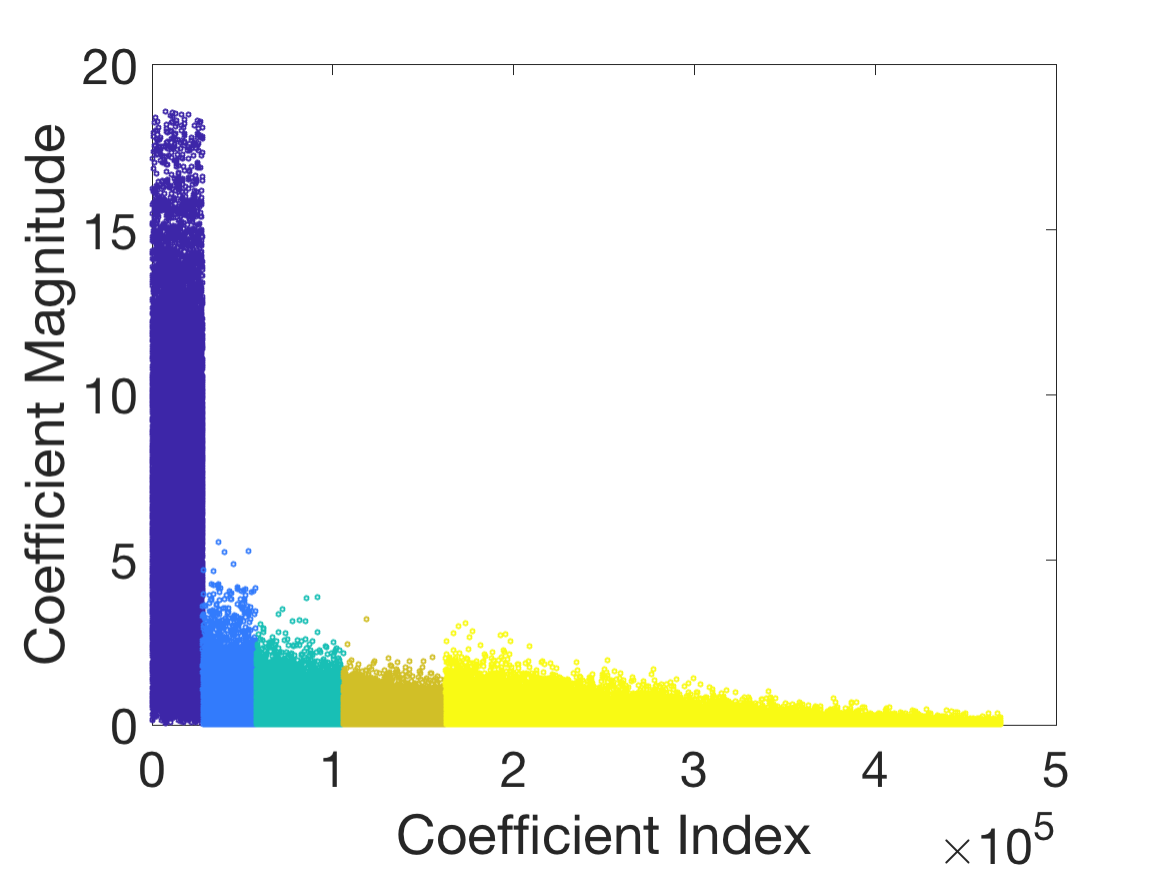}}
\centerline{\small{(n)}}
\end{minipage}
\begin{minipage}[m]{0.24\linewidth}
\centerline{\includegraphics[width=.85\linewidth]{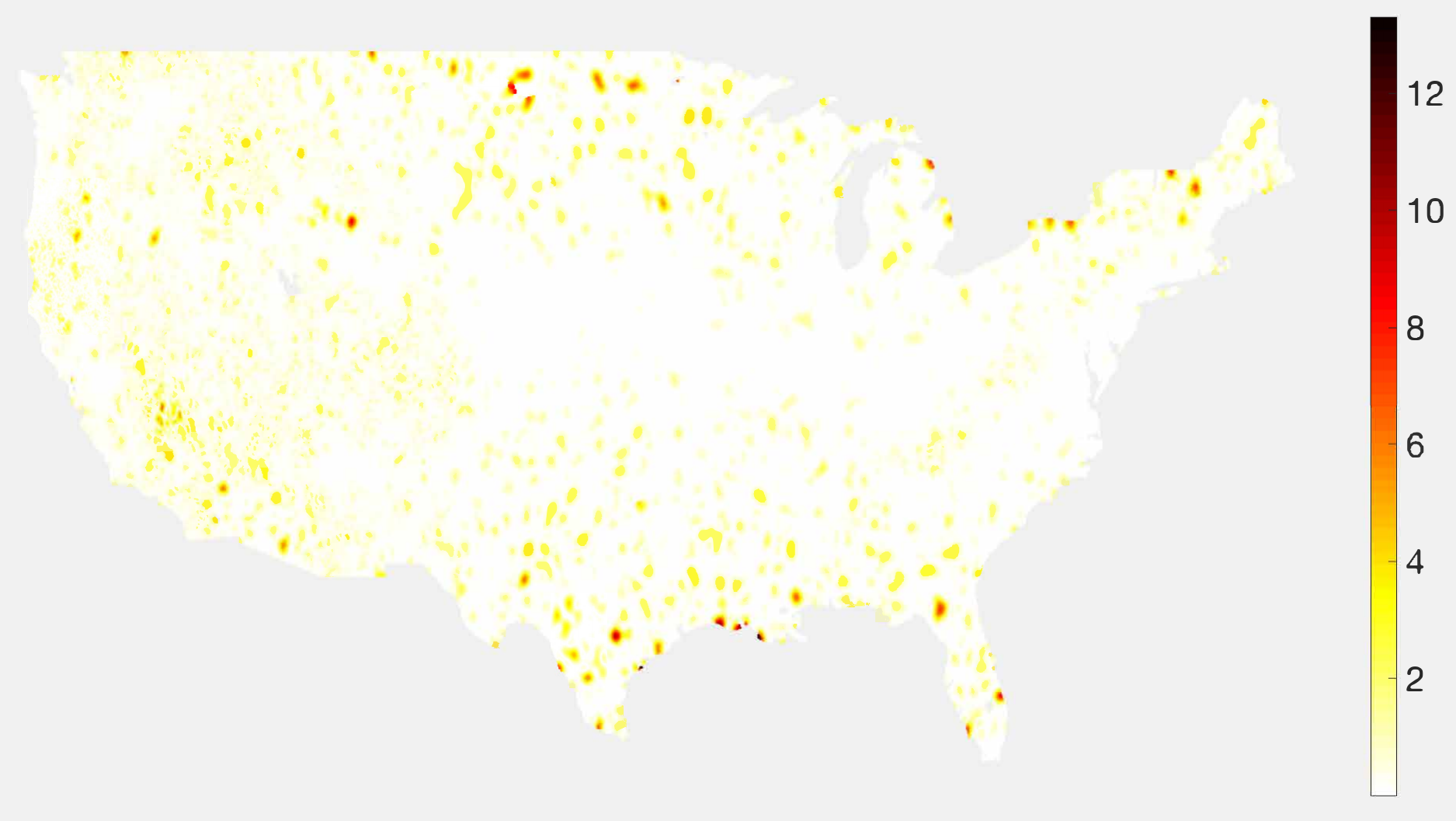}}
\centerline{\small{(o)}}
\end{minipage}
\begin{minipage}[m]{0.24\linewidth}
\centerline{\includegraphics[width=.85\linewidth]{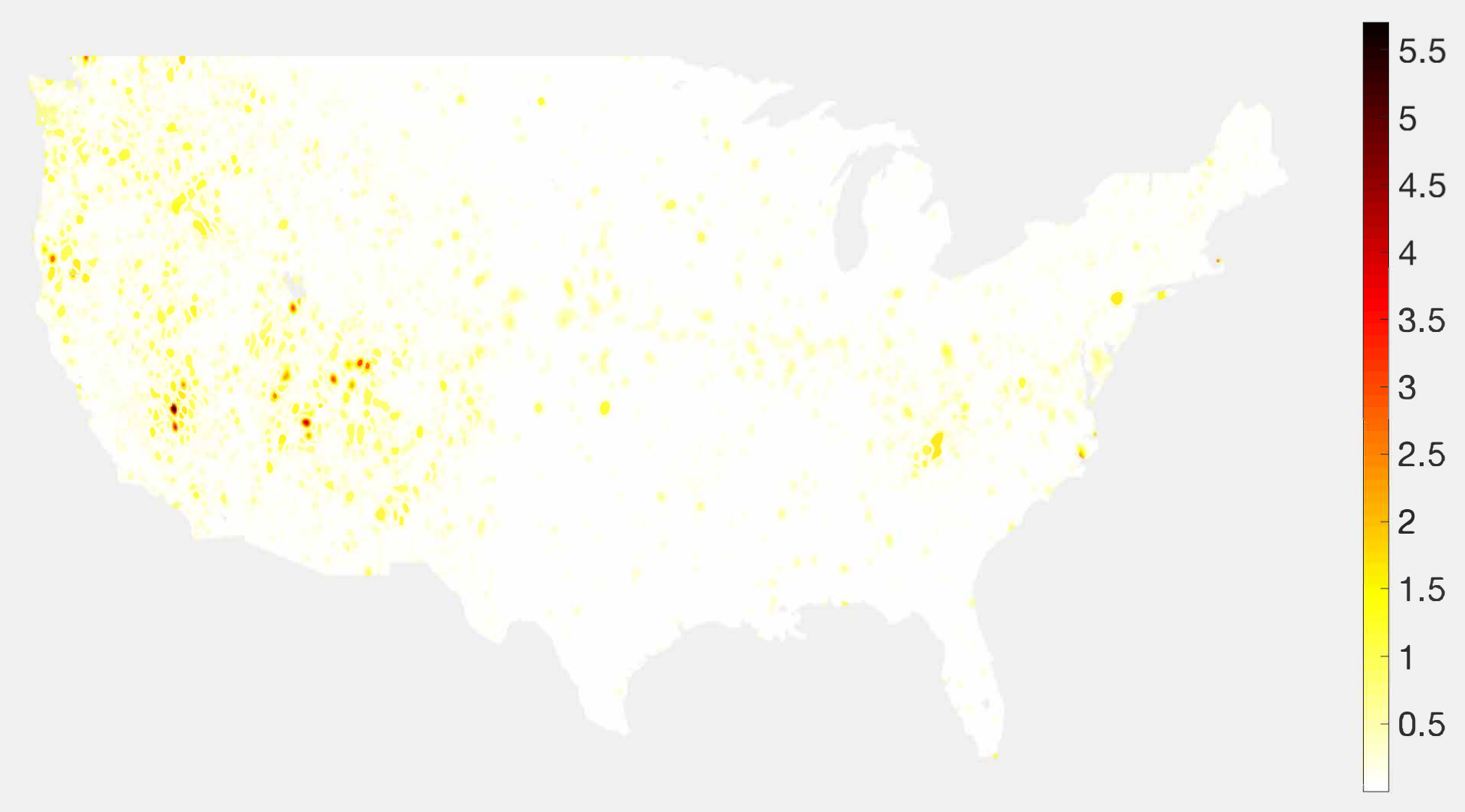}}
\centerline{\small{(p)}}
\end{minipage}
\caption{Fast $M$-channel filter bank example. (a)-(b) The eight-neighbor local graph structure for the Eastern Massachusetts and Boston regions. 
(c) Estimate of the cumulative spectral distribution function of the graph. (d) Approximate filter bank with five bands and degree 50 polynomial approximations. (e) Average temperatures for March 2018, measured at 469,404 locations. (f) Magnitudes of the filtered signal on the second band, $|\tilde{h}_2(\L){\bf f}|$. (g)-(h) Weights of the non-uniform sampling distribution for the second band, without and with adaptation to the filtered signal in (f). (i) The 27,021 vertices randomly selected according to the distribution in (g) to be included in $\V_2$. (j) The 29,516 vertices randomly selected according to the signal-adapted distribution in (h) to be included in $\V_2$. (k)-(l) Example scaling functions, with the latter zoomed in to see the effect of the missing measurements over Great Salt Lake. (m)-(n) 
Magnitudes of all 469,404 fast M-CSFB analysis coefficients for the 
 signal in (e), colored by band, for the non-adapted and signal-adapted transforms, respectively. (o)-(p) Absolute values of the differences between the reconstructions and the original 
signal, 
for the non-adapted and adapted transforms, respectively. \vspace{-.1in}
} \label{Fig:temperature}
\end{figure*}

\subsection{Illustrative Example: Temperatures} \label{Se:temp}

The last column of Table  \ref{Ta:comp_times} is based on  
the average temperatures for March 2018, taken from the Gridded 5km GHCN-Daily Temperature and Precipitation Dataset (nClimGrid) \cite{nClimGrid1,nClimGrid2} of the United States National Oceanic and Atmospheric Administration (NOAA). The measurements are  
on a grid with spacing of $\frac{1}{24}$ of a degree for both latitude and longitude. We form an unweighted graph by connecting each measurement location to its eight neighbors on the grid, if they contain measurements, as shown in Fig. \ref{Fig:temperature}(a)-(b). We 
 eliminate isolated vertices and small components (e.g., islands), yielding a connected graph with 469,404 vertices (measurement locations). Fig. \ref{Fig:temperature}(c)-(e) show the estimated 
  spectral distribution, 
the 
approximate filter bank with five bands and degree 50 polynomial approximations, and the temperatures. 

An intuitive explanation of the distribution for the second band, 
 shown in Fig. \ref{Fig:temperature}(g), is that we want to sample with higher probability near the edges, as there are fewer neighbors from whom to interpolate the local average value. The 27,021 vertices selected for $\V_2$, shown in Fig. \ref{Fig:temperature}(i), are spread across the graph, with a higher density near the boundaries. The corresponding non-uniform sampling distribution and realization for the second band in the signal-adapted case are shown in Fig. \ref{Fig:temperature}(h) and Fig. \ref{Fig:temperature}(j). The non-zero analysis coefficients of the bandpass filters tend to coincide with topographical changes, as shown Fig. \ref{Fig:temperature}(f). Accordingly, the signal-adapted transform selects more samples from these regions.

Two scaling functions for the fast M-CSFB are shown in Fig. \ref{Fig:temperature}(k)-(l), with the image for the latter one zoomed in near the Great Salt Lake. Because $K=50$, the atoms are localized within 50 hops of the center vertex. When the center is in a region where each vertex has eight neighbors, the atom is symmetric, resembling the scaling function of a 2D-wavelet transform. When the center is near an edge, like the one shown in Fig. \ref{Fig:temperature}(l), the atom adapts to the shape of the underlying graph. While 
the signal adaptation changes the distribution of center locations, it does not change the shape of the atoms; that is, if a vertex $i$ is chosen as a center location for a given scale in both the non-adapted and signal-adapted versions of the transform, the corresponding atoms are identical. 

The magnitudes of the analysis coefficients, shown in Fig. \ref{Fig:temperature}(m)-(n), decay quickly, 
as the signal is generally smooth with some discontinuities that tend to coincide with topographical changes. 
The signal-adapted transform allocates more samples to the bands on the lower end of the spectrum (57,539 total for the first two bands, as opposed to 43,384 in the non-adapted case).
Fig. \ref{Fig:temperature}(o)-(p) show the reconstruction errors for the two versions of the fast $M$-CSFB transform, with the same parameters used in Scenario B above. The MSE of the signal-adapted transform is 0.0479, as opposed to 0.4994 without the signal adaptation. The first driver of this reduction is the allocation of additional samples to the first band, where 0.4586 of the 0.4994 MSE is incurred when not adapted to the signal. 
Additionally, more of this band's samples are taken in the upper and lower thirds of the country, where the energy of the filtered signal is concentrated. 

\subsection{Parameter Choices and Tradeoffs}

For the next set of numerical experiments, we apply a 4-band filter bank to the piecewise smooth signal 
from Fig. \ref{Fig:bunny_signal}(a), and consider the outputs of the first (lowpass) and third (bandpass) filters, shown in the the top two rows of Fig. \ref{Fig:samp_tradeoff}. We start by examining two key design choices for random sampling and interpolation of these two 
filtered signals: the sampling distribution and the number of samples. Unless otherwise noted, we use the settings of Scenario B above. 

\begin{figure}[H] 
\begin{minipage}[m]{0.16\linewidth}
~
\end{minipage}
\begin{minipage}[m]{0.4\linewidth}
\centerline{\small{Lowpass}}
\end{minipage}
\begin{minipage}[m]{0.4\linewidth}
\centerline{\small{Bandpass}}
\end{minipage} \\
\begin{minipage}[m]{0.16\linewidth}
\centerline{\small{Filtered}}
\centerline{\small{Signal}}
\centerline{\small{(Vertex)}}
\end{minipage}
\begin{minipage}[m]{0.4\linewidth}
\centerline{\includegraphics[width=.85\linewidth]{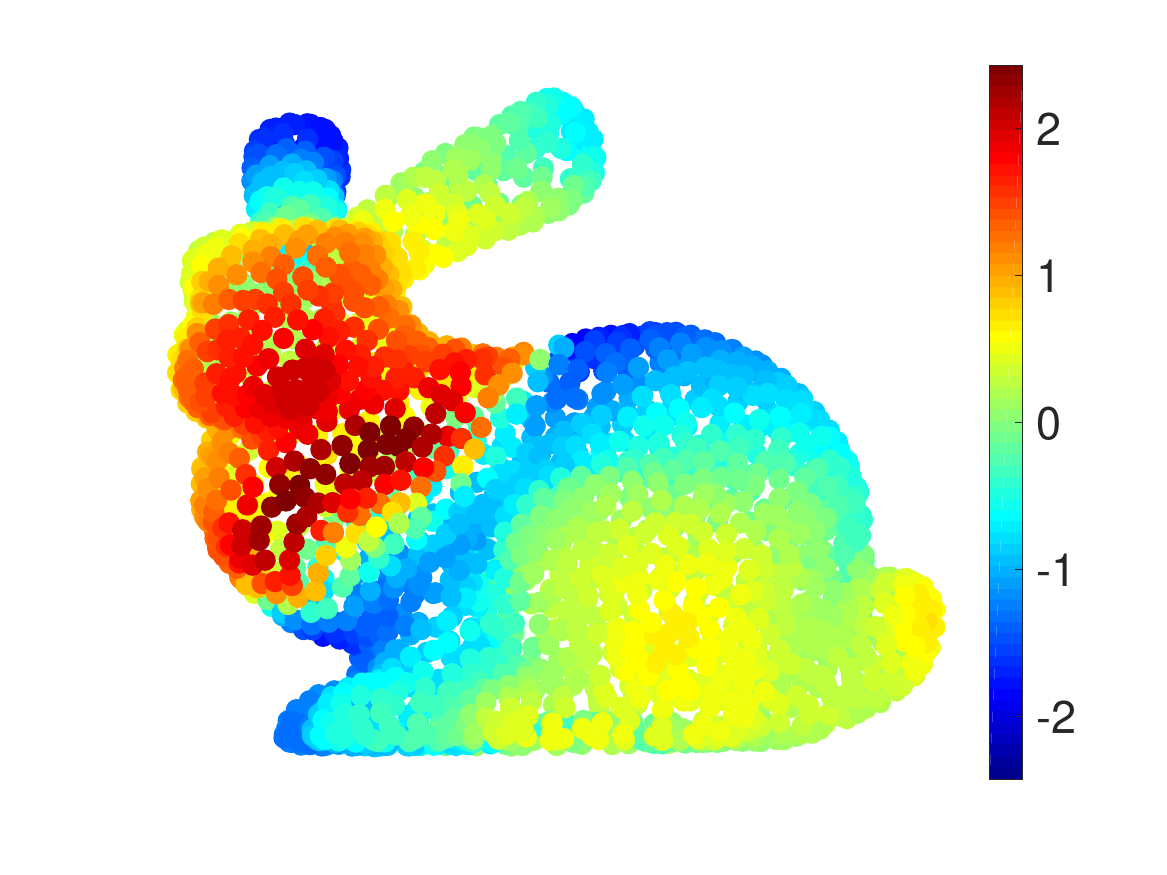}}
\end{minipage}
\begin{minipage}[m]{0.4\linewidth}
\centerline{\includegraphics[width=.85\linewidth]{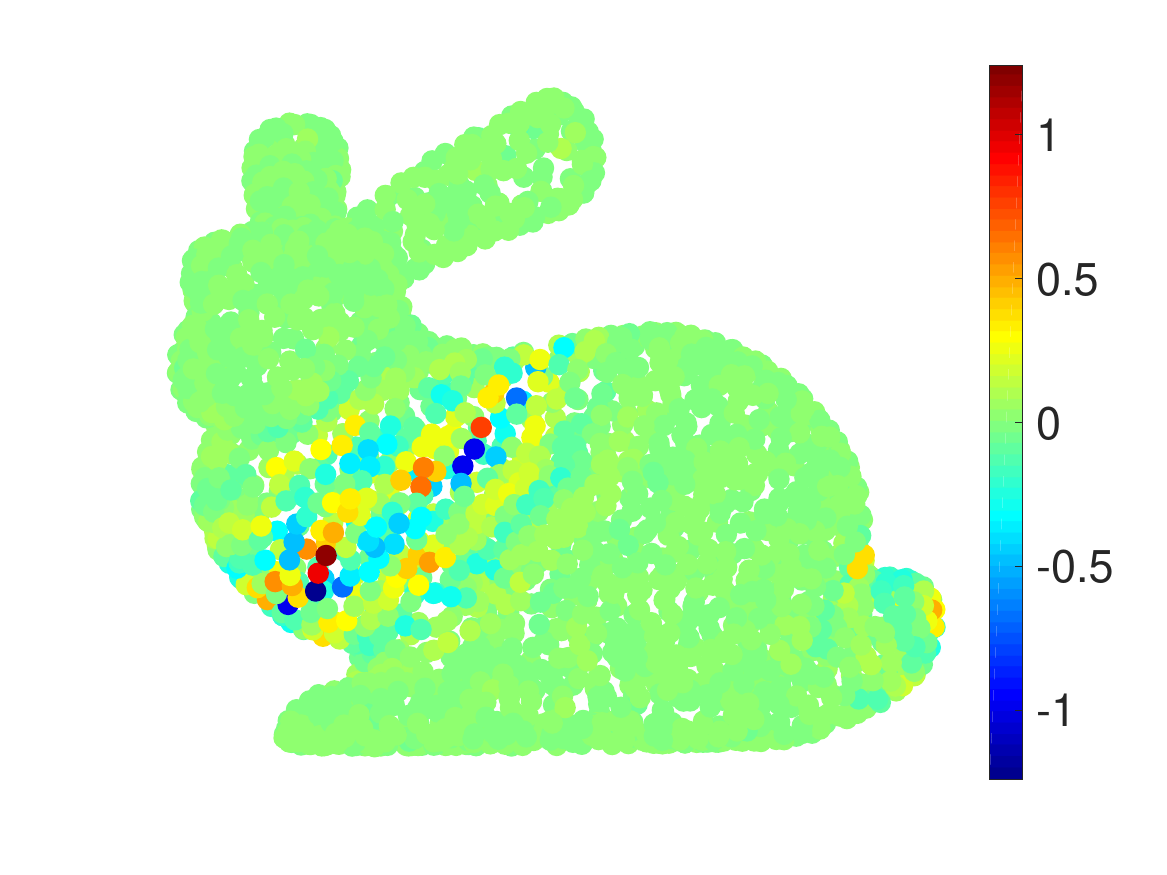}}
\end{minipage} \\
\begin{minipage}[m]{0.16\linewidth}
\centerline{\small{Filtered}}
\centerline{\small{Signal}}
\centerline{\small{(Spectral)}}
\end{minipage}
\begin{minipage}[m]{0.4\linewidth}
\centerline{\includegraphics[width=.85\linewidth]{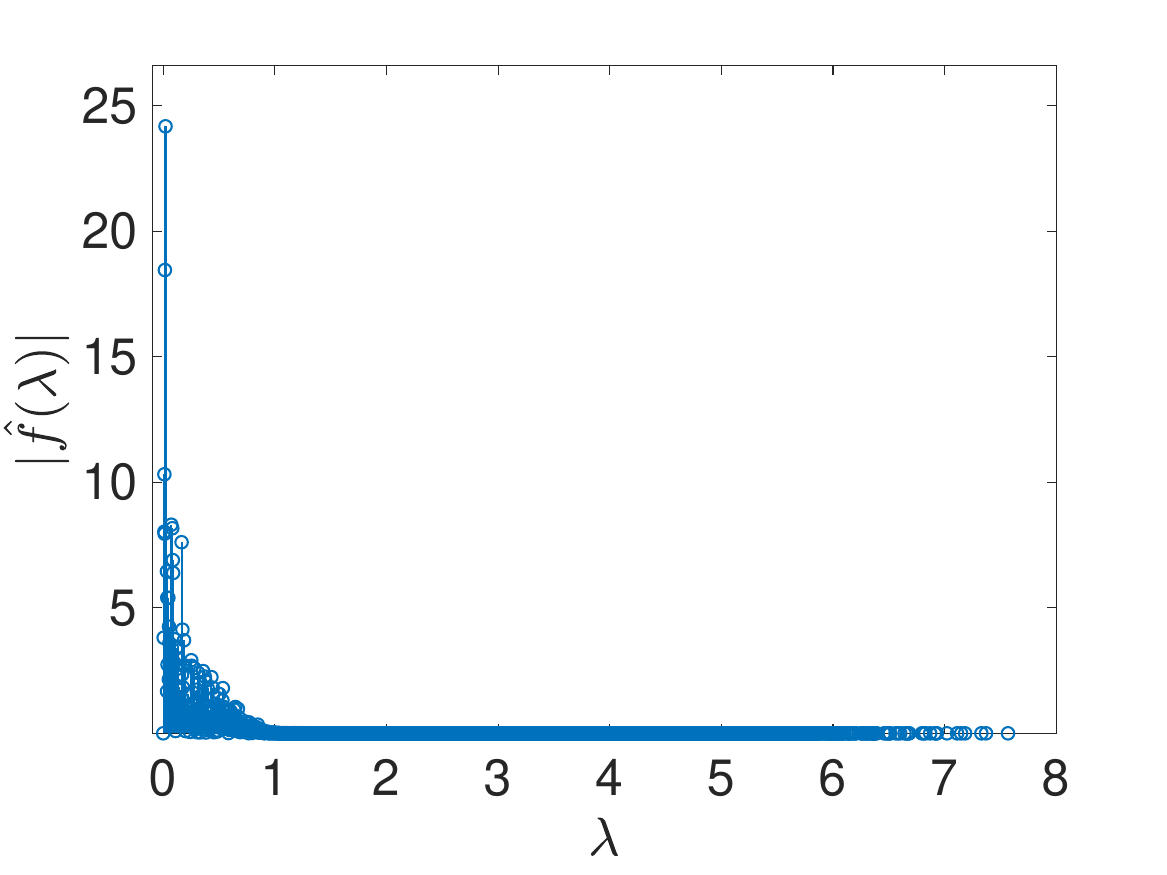}}
\end{minipage}
\begin{minipage}[m]{0.4\linewidth}
\centerline{\includegraphics[width=.85\linewidth]{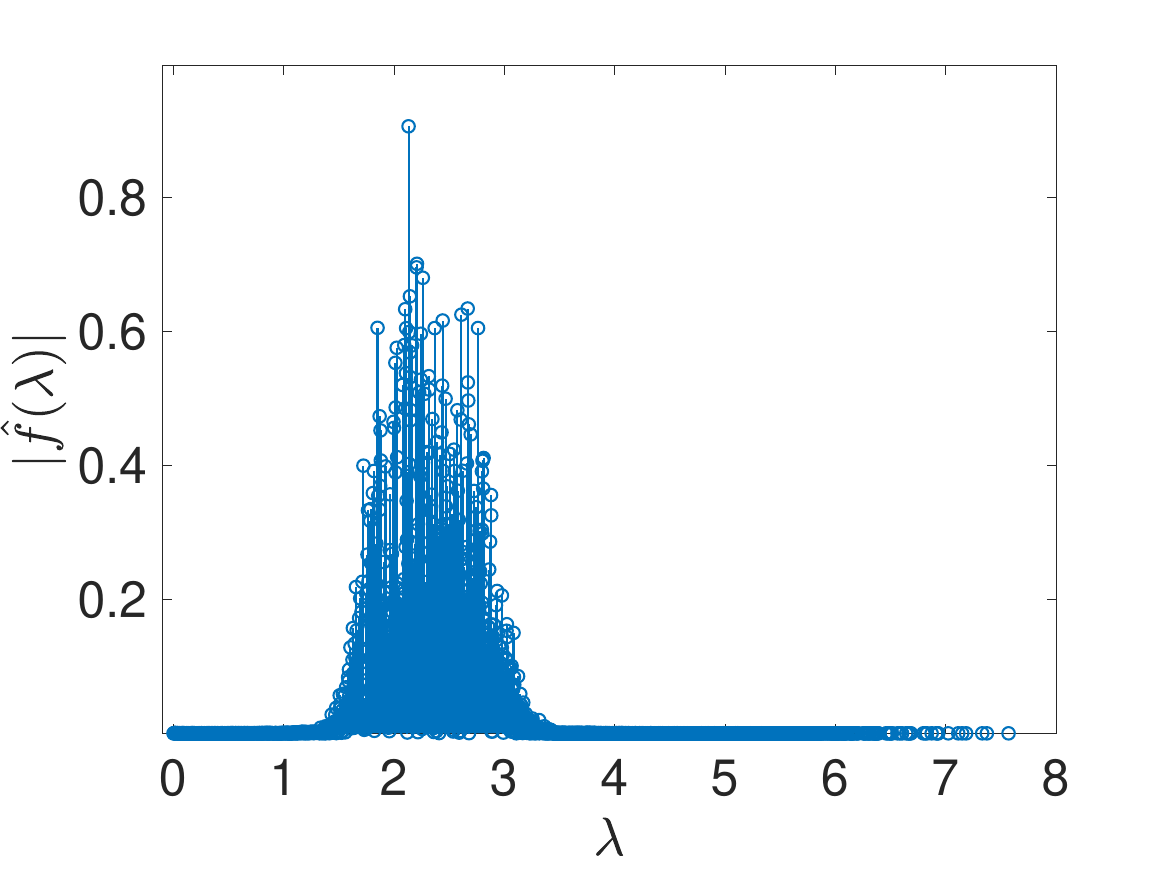}}
\end{minipage} \\ 
\begin{minipage}[m]{0.16\linewidth}
\centerline{\small{Sampling}}
\centerline{\small{Weights}}
\end{minipage}
\begin{minipage}[m]{0.4\linewidth}
\centerline{\includegraphics[width=.85\linewidth]{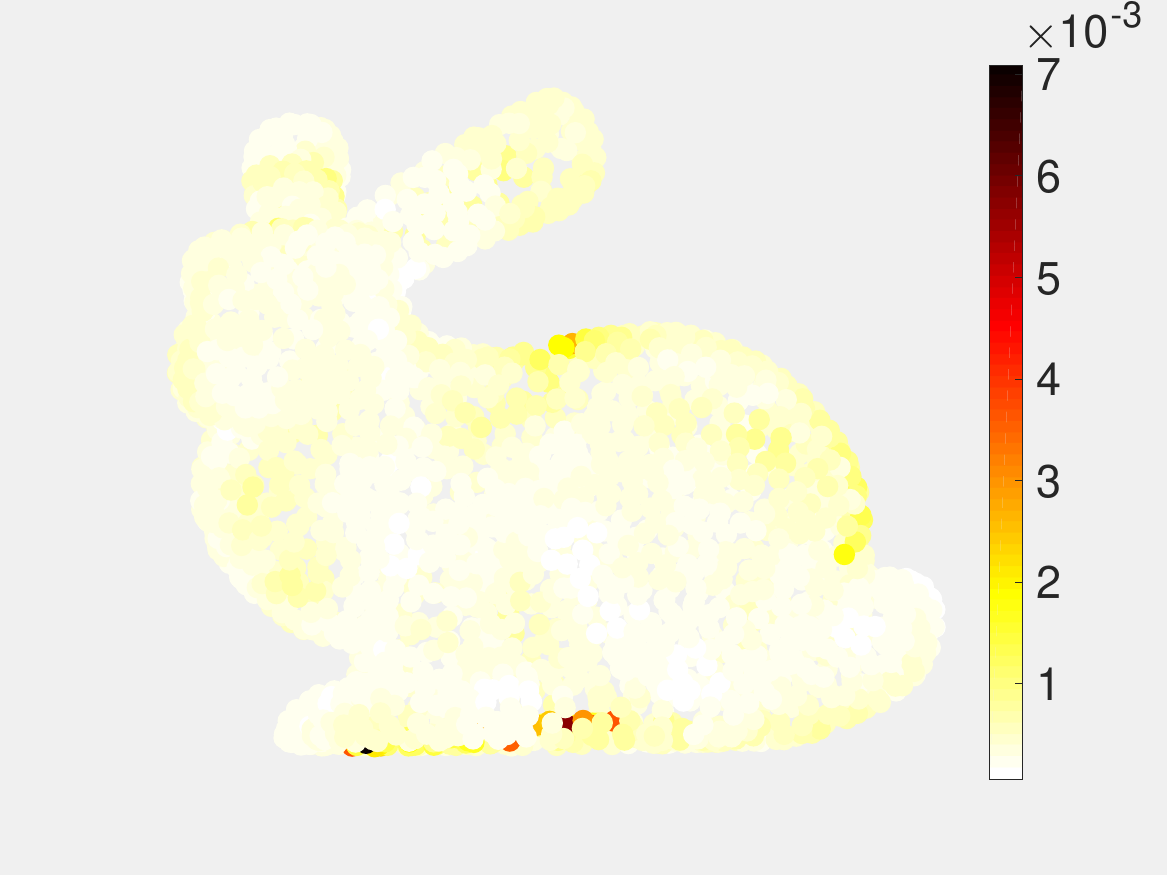}}
\end{minipage}
\begin{minipage}[m]{0.4\linewidth}
\centerline{\includegraphics[width=.85\linewidth]{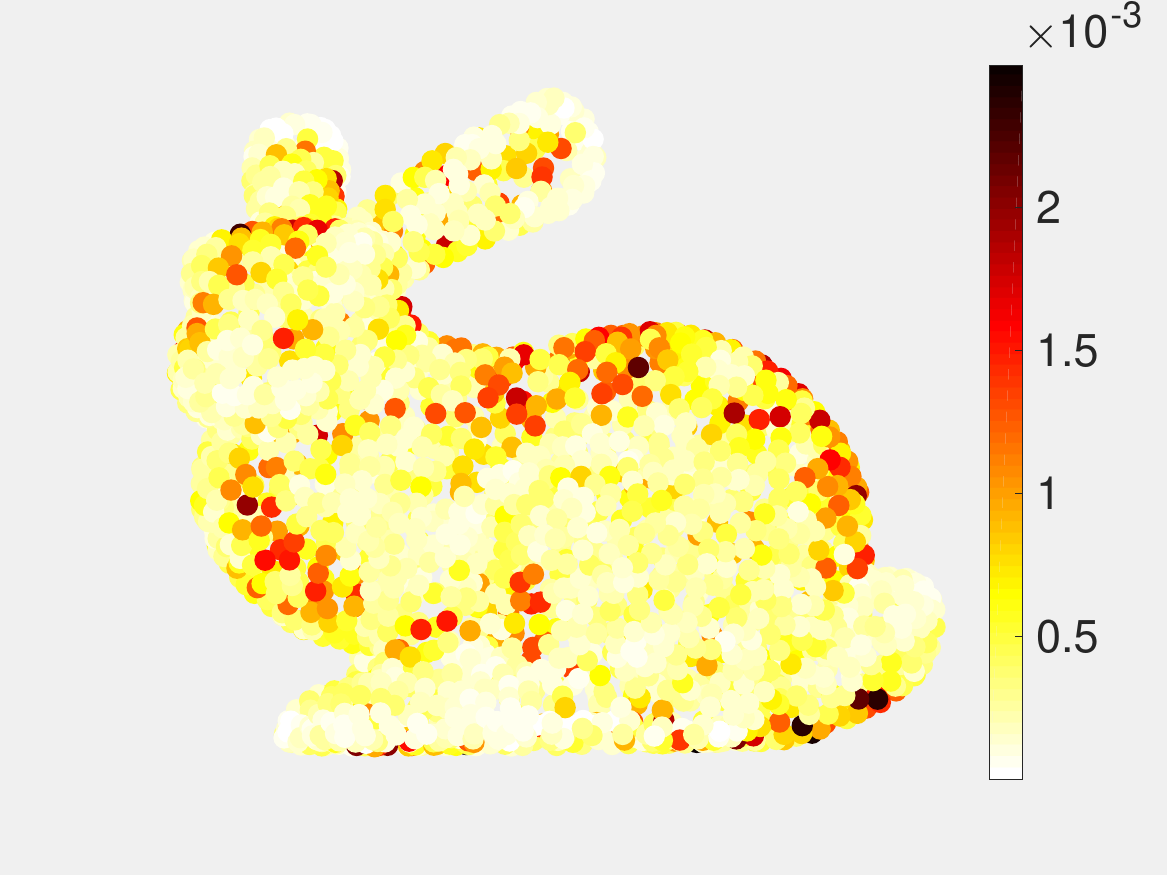}}
\end{minipage} \\
\begin{minipage}[m]{0.16\linewidth}
\centerline{\small{Sampling}}
\centerline{\small{Weights}}
\centerline{\small{(Adapted)}}
\end{minipage}
\begin{minipage}[m]{0.4\linewidth}
\centerline{\includegraphics[width=.85\linewidth]{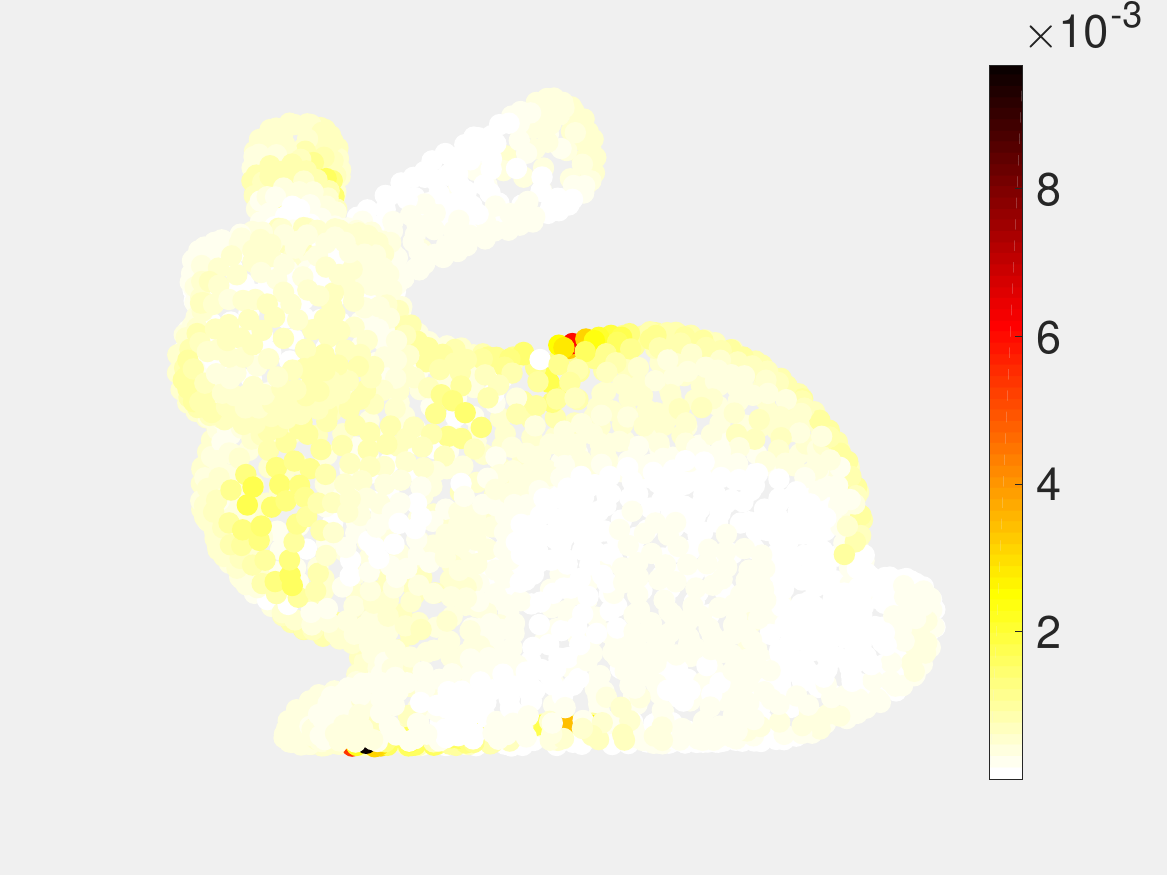}}
\end{minipage}
\begin{minipage}[m]{0.4\linewidth}
\centerline{\includegraphics[width=.85\linewidth]{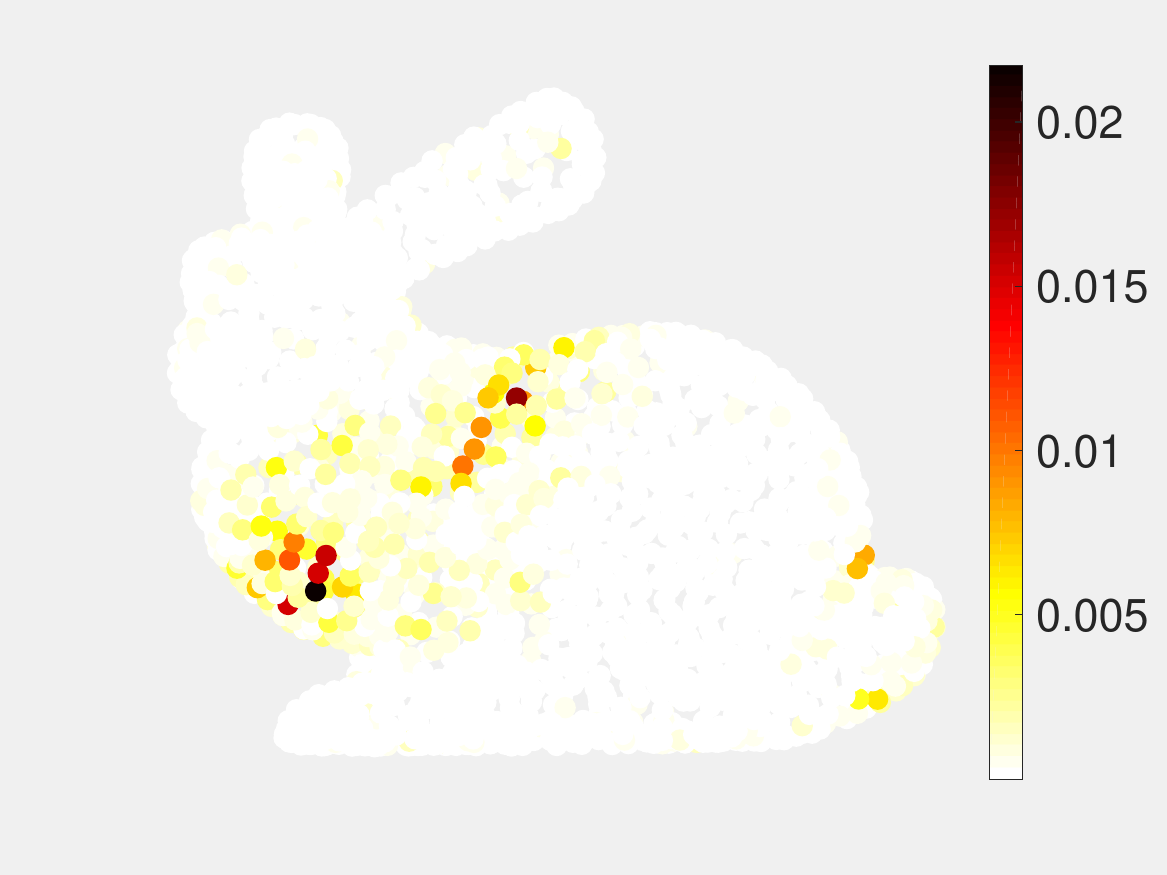}}
\end{minipage} \\
\begin{minipage}[m]{0.16\linewidth}
\centerline{\small{Single}}
\centerline{\small{Sampling Set}}
\centerline{\small{Realization}}
\end{minipage}
\begin{minipage}[m]{0.4\linewidth}
\centerline{\includegraphics[width=.85\linewidth]{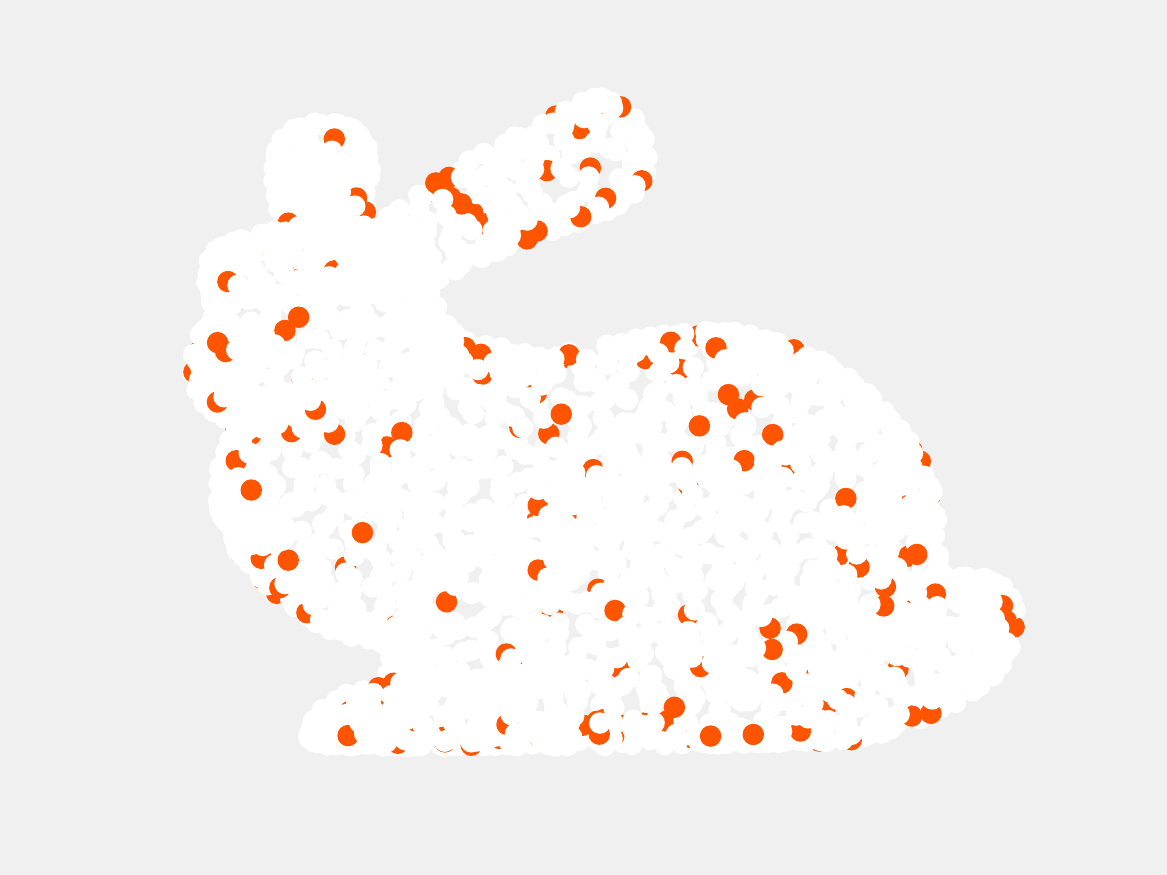}}
\end{minipage}
\begin{minipage}[m]{0.4\linewidth}
\centerline{\includegraphics[width=.85\linewidth]{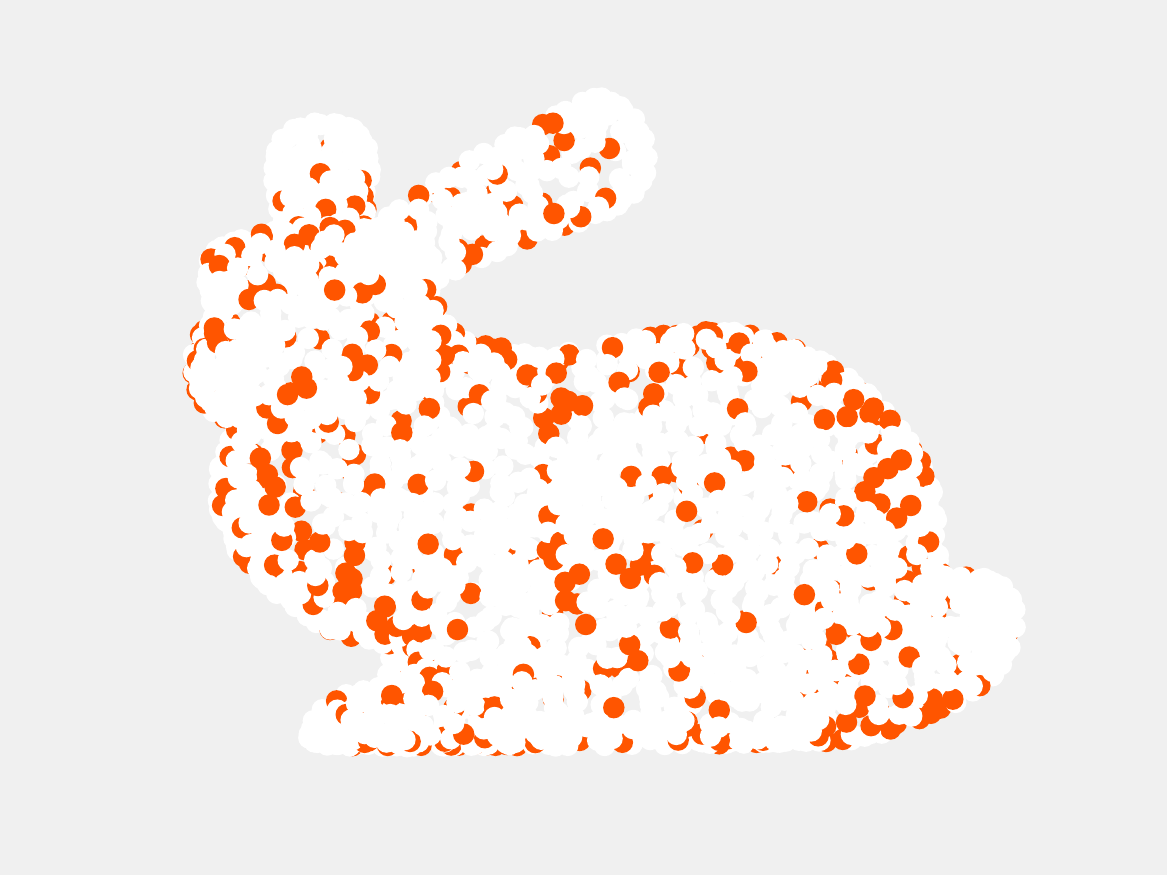}}
\end{minipage} \\
\begin{minipage}[m]{0.16\linewidth}
\centerline{\small{Single}}
\centerline{\small{Sampling Set}}
\centerline{\small{Realization}}
\centerline{\small{(Adapted}}
\centerline{\small{Weights)}}
\end{minipage}
\begin{minipage}[m]{0.4\linewidth}
\centerline{\includegraphics[width=.85\linewidth]{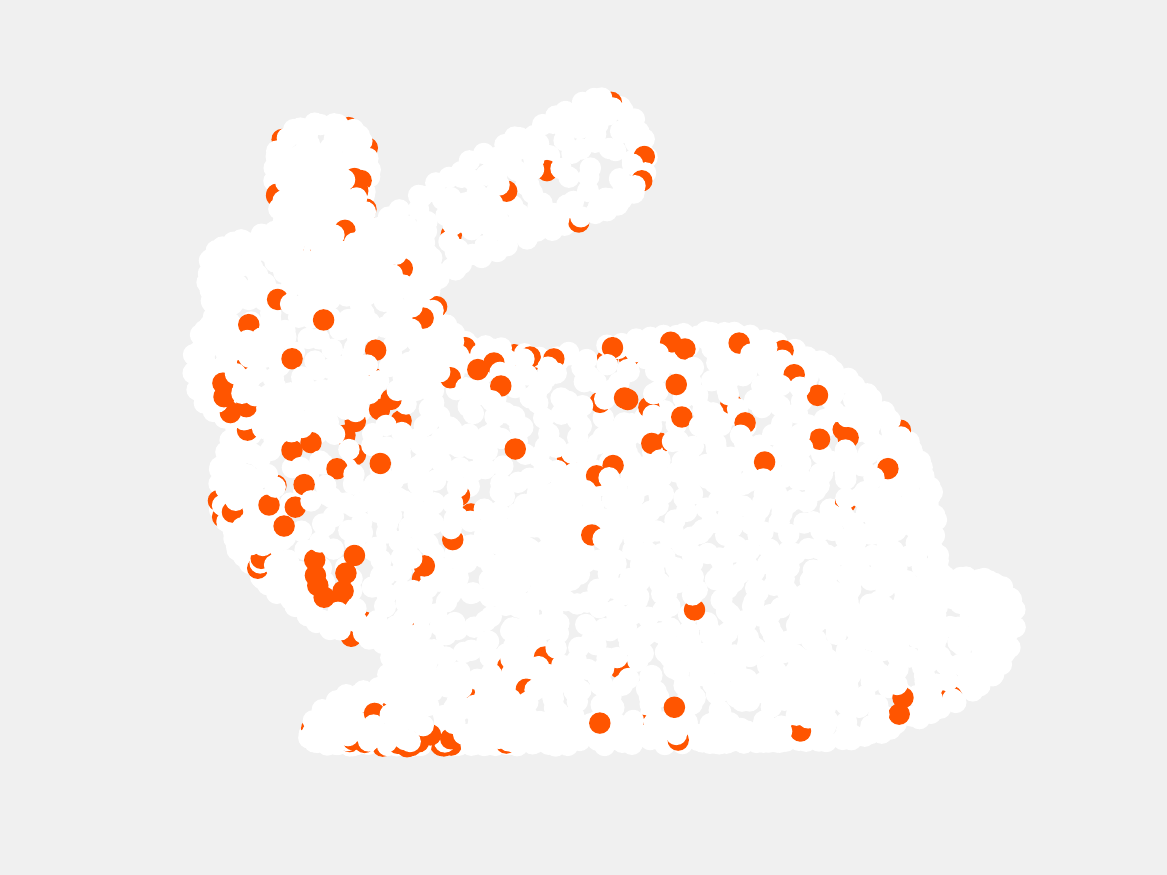}}
\end{minipage}
\begin{minipage}[m]{0.4\linewidth}
\centerline{\includegraphics[width=.85\linewidth]{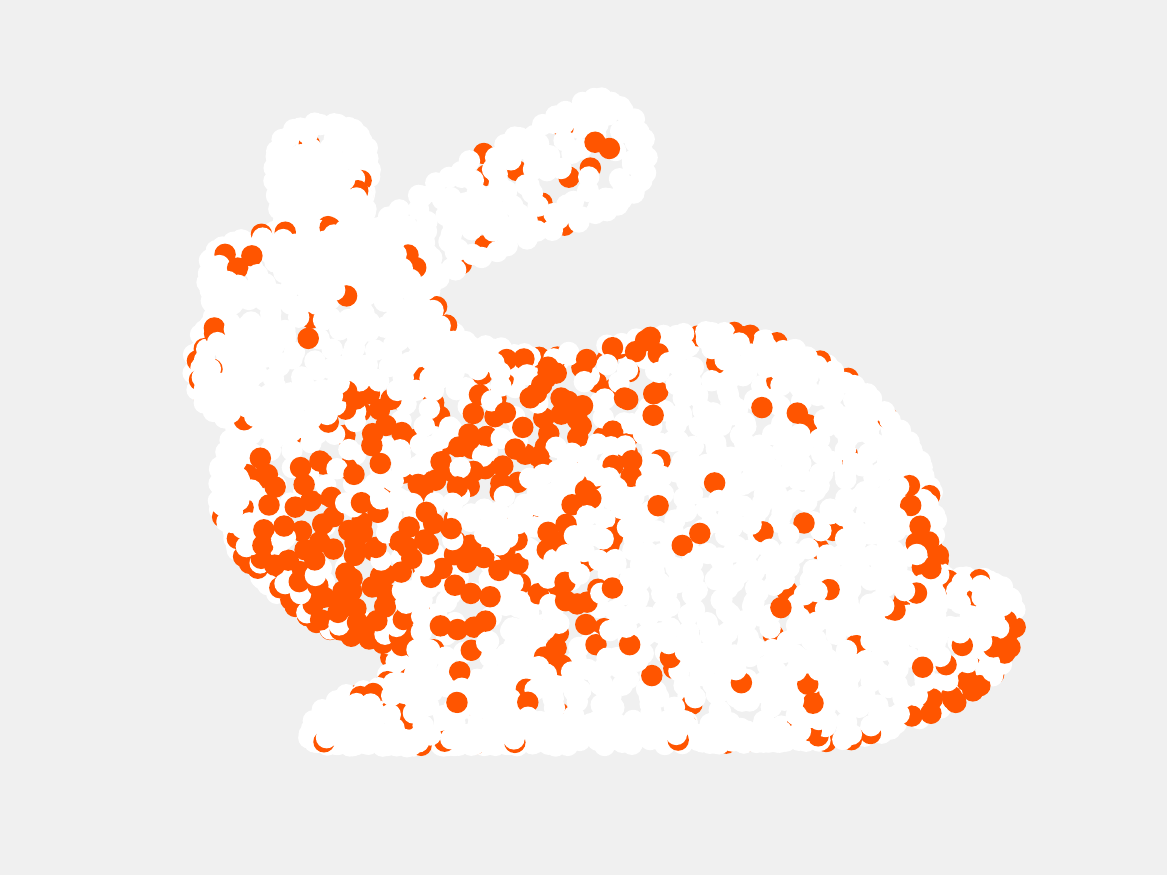}}
\end{minipage} \\
\begin{minipage}[m]{0.16\linewidth}
\centerline{\small{Average}}
\centerline{\small{Error}}
\end{minipage}
\begin{minipage}[m]{0.4\linewidth}
\centerline{\includegraphics[width=.85\linewidth]{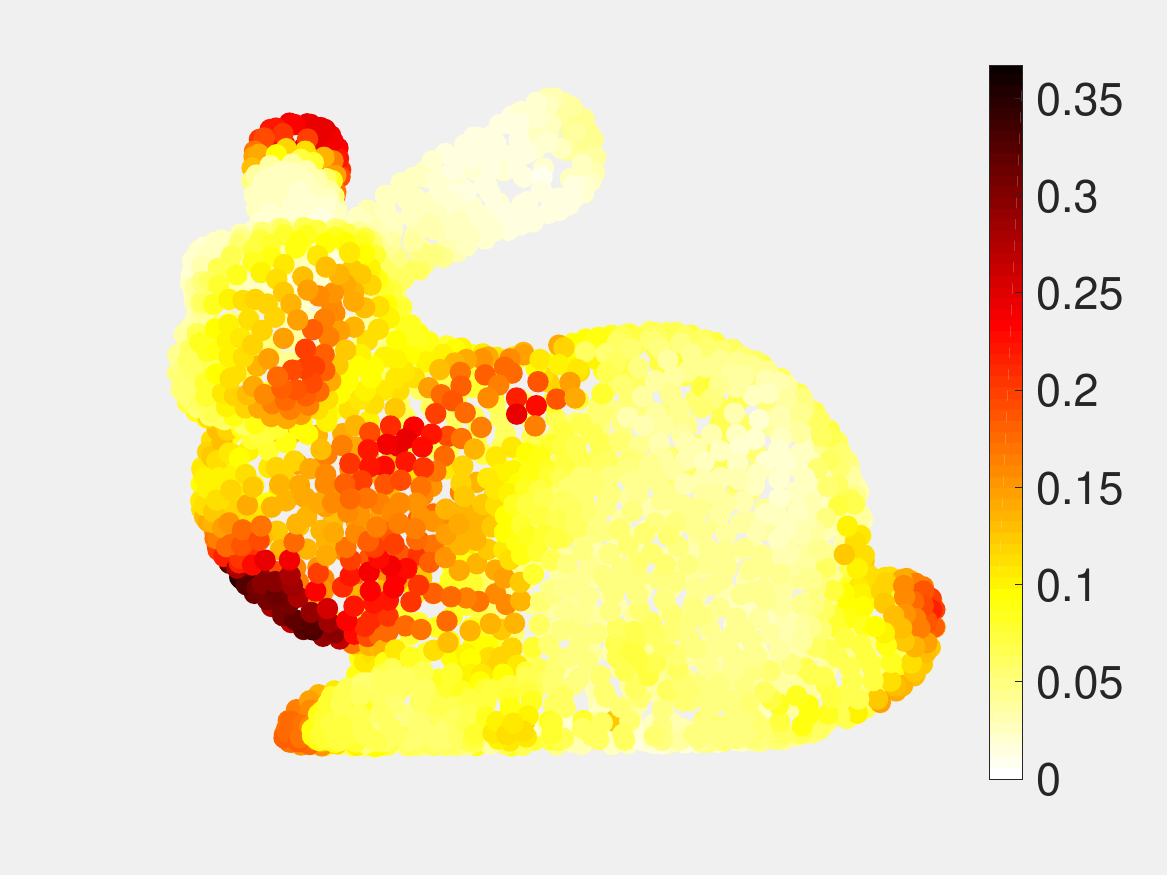}}
\end{minipage}
\begin{minipage}[m]{0.4\linewidth}
\centerline{\includegraphics[width=.85\linewidth]{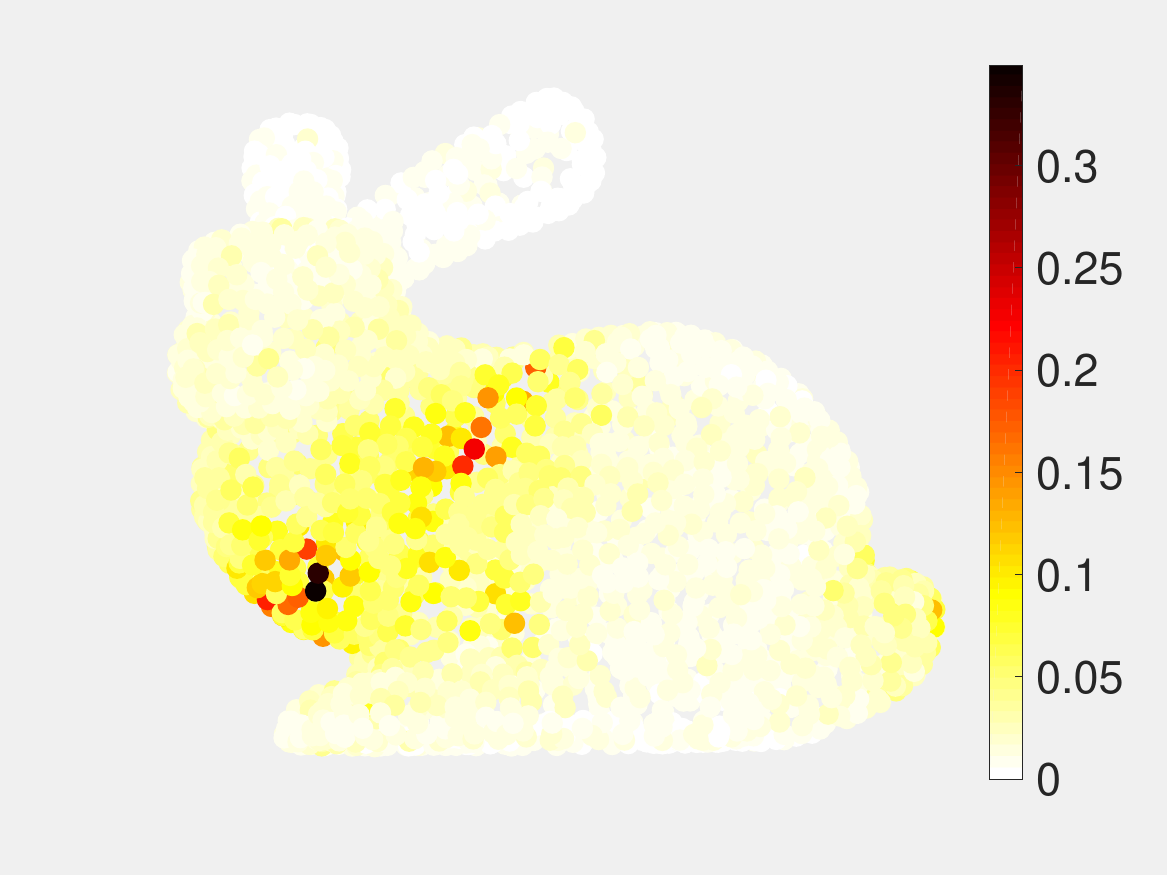}}
\end{minipage} \\
\begin{minipage}[m]{0.16\linewidth}
\centerline{\small{Average}}
\centerline{\small{Error}}
\centerline{\small{(Adapted}}
\centerline{\small{Weights)}}
\end{minipage}
\begin{minipage}[m]{0.4\linewidth}
\centerline{\includegraphics[width=.85\linewidth]{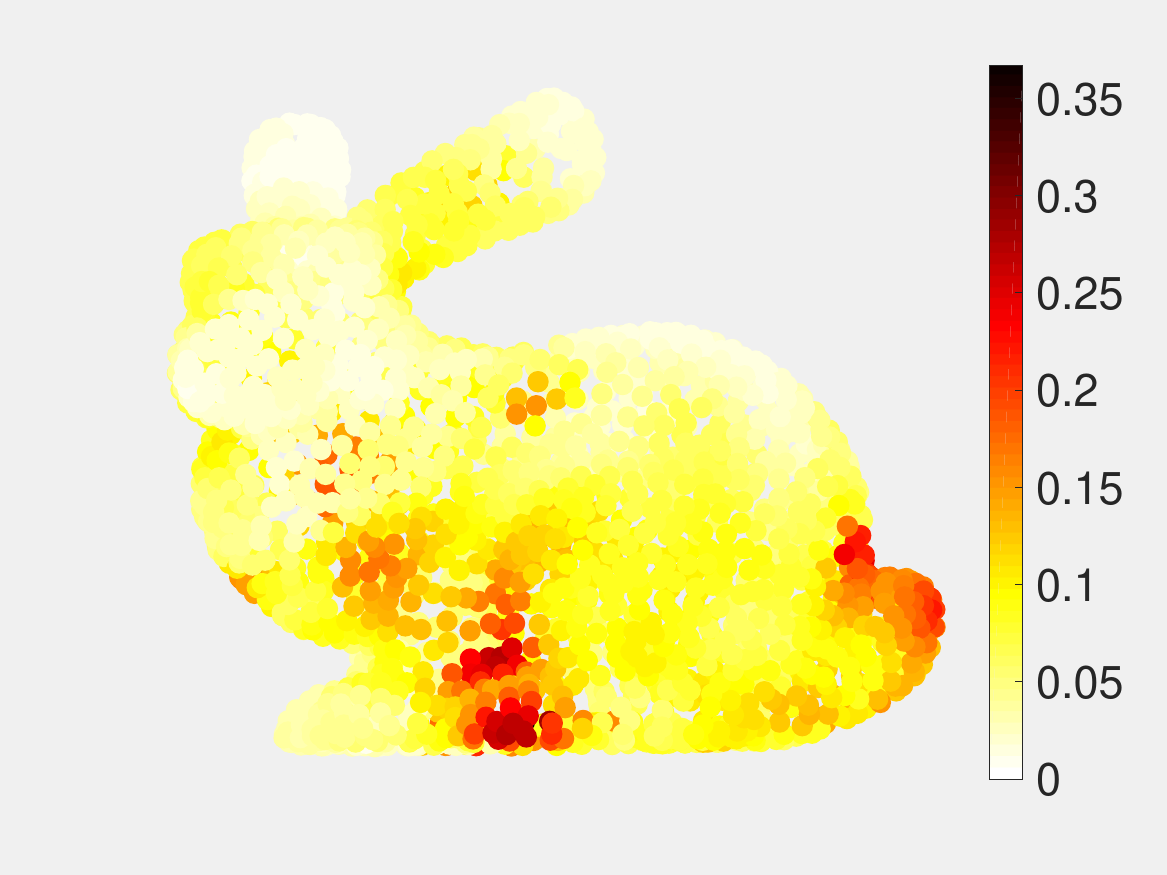}}
\end{minipage}
\begin{minipage}[m]{0.4\linewidth}
\centerline{\includegraphics[width=.85\linewidth]{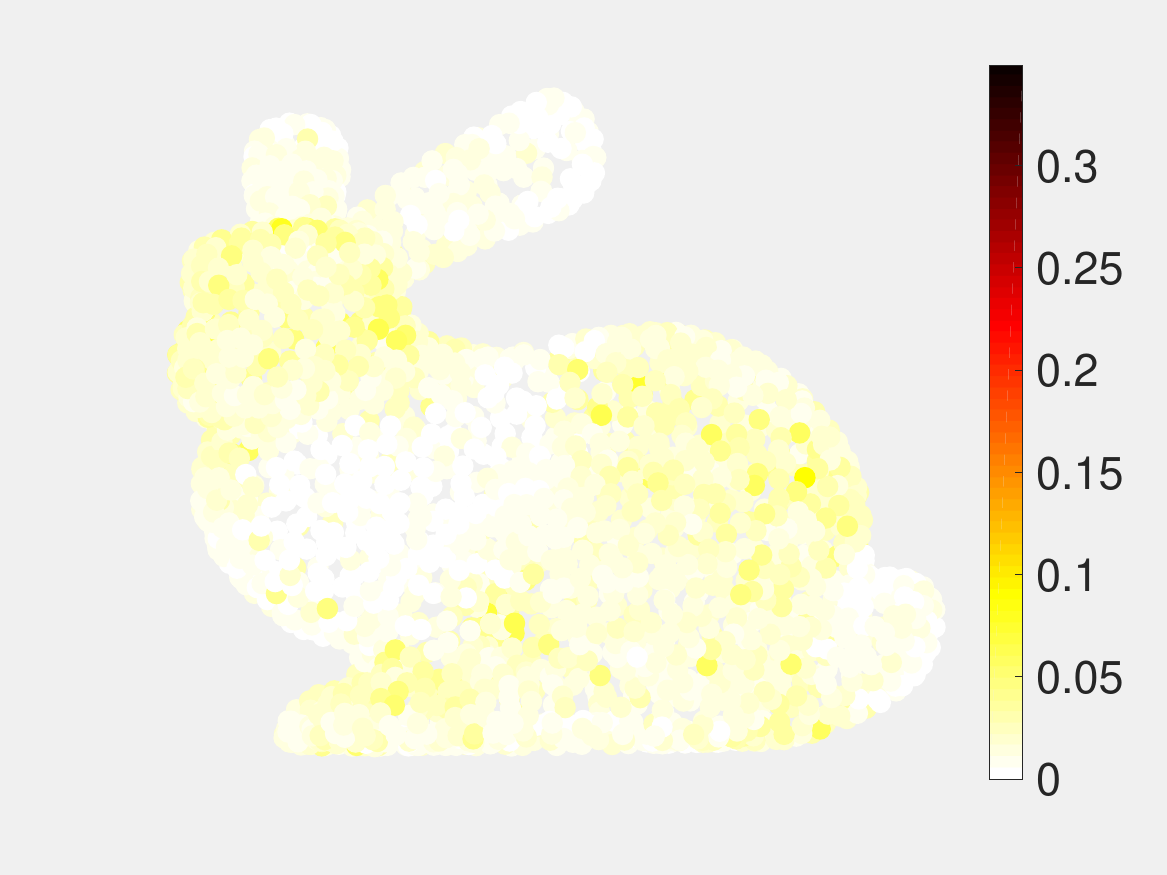}}
\end{minipage} \\
\begin{minipage}[m]{0.16\linewidth}
\centerline{\small{NMSE vs.}}
\centerline{\small{\# Samples}}
\end{minipage} 
\begin{minipage}[m]{0.4\linewidth}
\centerline{\includegraphics[width=.85\linewidth]{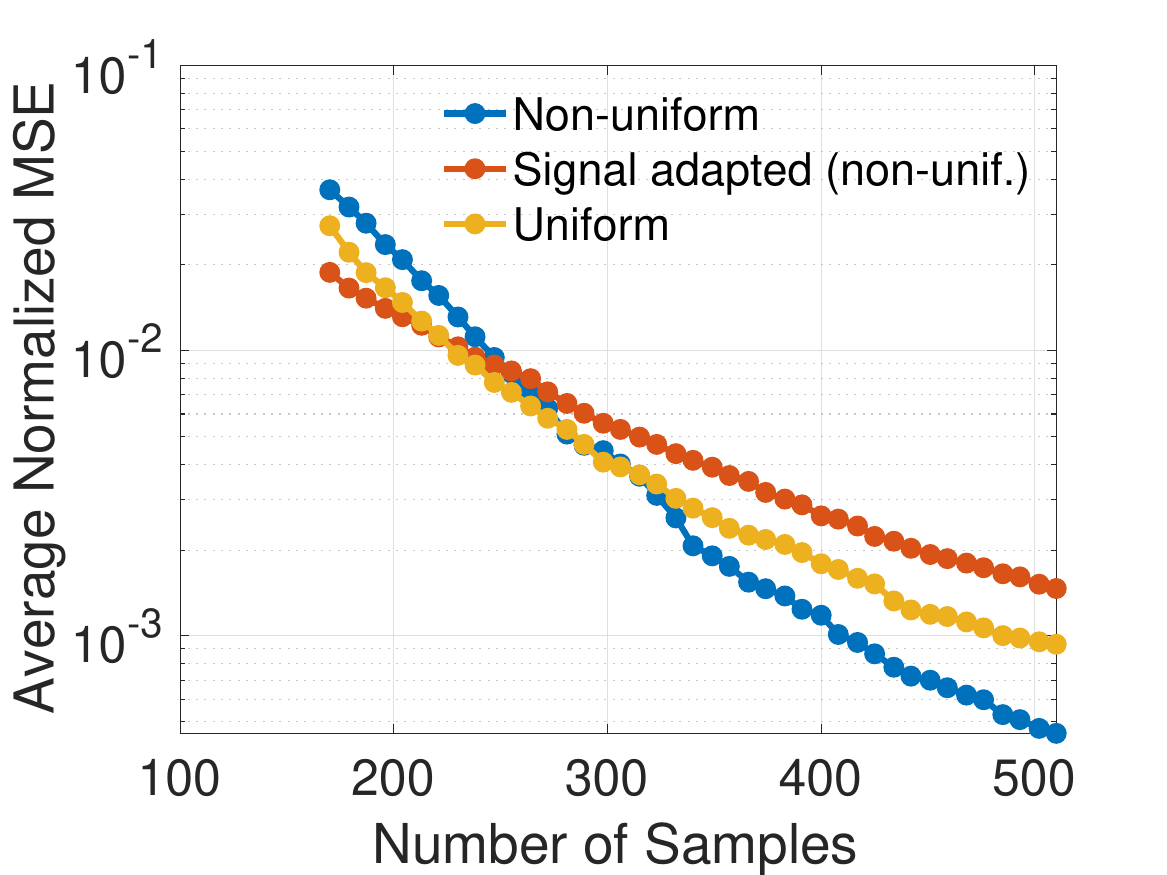}}
\end{minipage}
\begin{minipage}[m]{0.4\linewidth}
\centerline{\includegraphics[width=.85\linewidth]{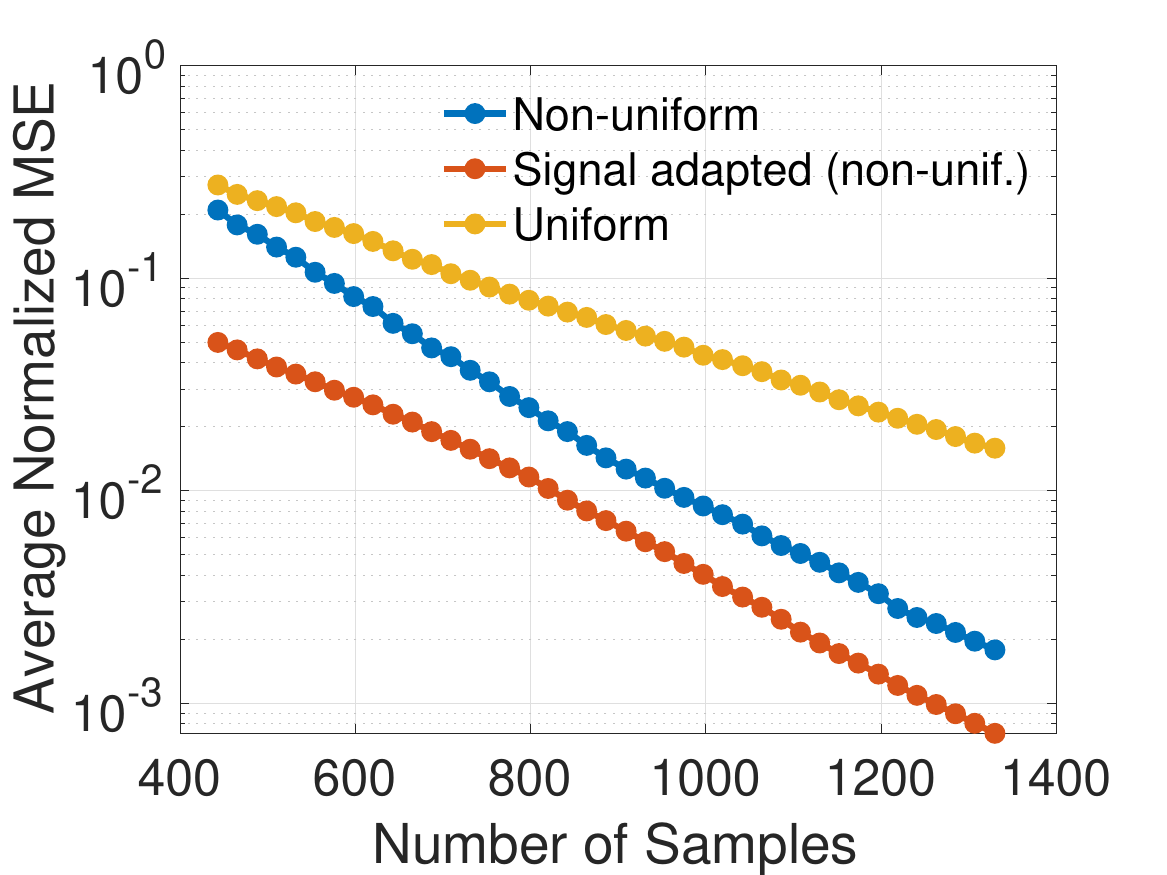}}
\end{minipage}
\caption{Tradeoff between reconstruction error and the number of samples, with and without adapting the sampling weights to the signal.}\label{Fig:samp_tradeoff}
\end{figure}

\subsubsection{Sampling distributions}

In the third through eighth rows of Fig. \ref{Fig:samp_tradeoff}, we examine the difference between using a non-uniform sampling distribution that is only adapted to the graph 
and a non-uniform sampling distribution that is adapted to both the graph and the signal, as discussed in Section \ref{Se:signal_adapted}. Because the energy of the lowpass signal is fairly evenly distributed across the bunny, the two sampling distributions are not that different for the first band, with the largest differences in the lower, rear region of the bunny. The energy of the bandpass signal is heavily concentrated around the midsection and tail of the bunny, the locations of the discontinuities in the original signal. The signal-adapted sampling distribution therefore places heavier weights in those areas. The seventh and eighth rows in Fig. \ref{Fig:samp_tradeoff} show the absolute values of the reconstruction errors, averaged over 50 trials of the random sampling, when the number of random samples is equal to the estimated number of eigenvalues in the specified band (170 in the lowpass case and 443 in the bandpass case). The benefit of the additional samples near the midsection for the bandpass channel is evident, as the average error is lower in this area. 

Both 
non-uniform sampling distributions consistently outperform uniform sampling in our 
experiments. The graph Laplacian eigenvectors associated with lower eigenvalues tend to be less localized, resulting in non-uniform sampling weights that are closer to uniform weights. Subsequently, there is less benefit from performing the non-uniform sampling on the first band (c.f., bottom row of Fig. \ref{Fig:samp_tradeoff}), consistent with the prior literature on sampling and interpolation of graph signals, such as \cite{chen2016signal}. The benefit of non-uniform sampling is greater for bands that include more eigenvectors whose energy is concentrated in certain regions of the graph. 

\subsubsection{Number of samples vs. reconstruction error}  Note that the polynomial approximated filters have wider supports as compared to the ideal filters. By performing critical sampling based on the estimated supports of the ideal filters, we may not have enough samples to reconstruct signals from the (wider) filtered subspace. To get a better reconstruction, we could include more samples for each band. 
In the 
bottom row of Fig. \ref{Fig:samp_tradeoff}, we explore the tradeoff between the number of samples and the reconstruction error, increasing the number of samples for each band 
to three times the estimated number of eigenvalues.

\subsubsection{Polynomial approximation order}
To examine the effect of the polynomial order $K$, we plot in Fig. \ref{Fig:poly_tradeoff} the reconstruction NMSEs for two channels of the fast $M$-CSFB transform applied to the piecewise smooth bunny signal with the parameters of Scenario B, averaged over 50 trials each for uniform sampling, non-uniform sampling, 
and non-uniform sampling adapted to the signal as well as the graph. One key takeaway is that the polynomial degree  plays a more important role when the number of bands is larger, as each filter is narrower, and thus more difficult to approximate by lower order polynomials.

\begin{figure}[t] 
\begin{minipage}[m]{0.16\linewidth}
~
\end{minipage} 
\begin{minipage}[m]{0.4\linewidth}
\centerline{\small{Lowpass}}
\end{minipage}
\begin{minipage}[m]{0.4\linewidth}
\centerline{\small{Bandpass}}
\end{minipage}  \\
\begin{minipage}[m]{0.16\linewidth}
\centerline{\small{$M=4$}}
\end{minipage} 
\begin{minipage}[m]{0.4\linewidth}
\centerline{\includegraphics[width=1\linewidth]{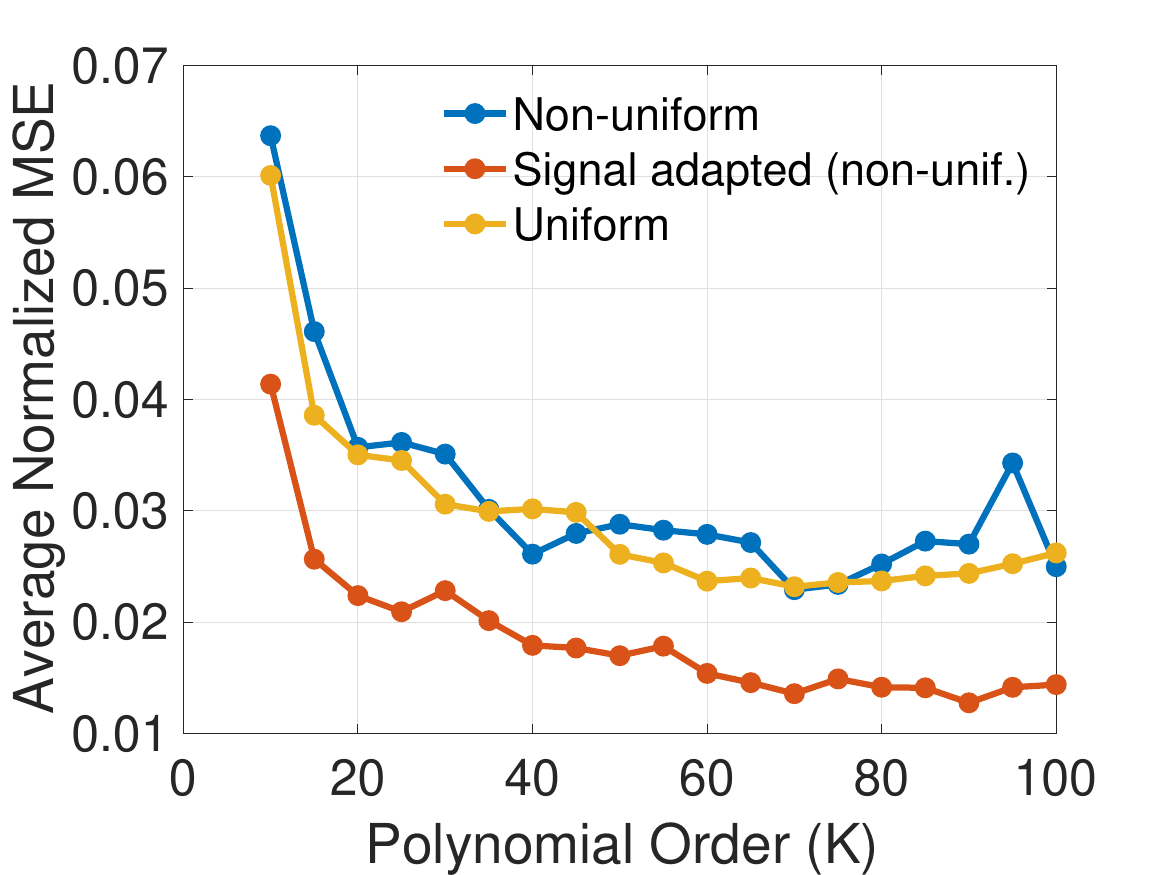}}
\end{minipage}
\begin{minipage}[m]{0.4\linewidth}
\centerline{\includegraphics[width=1\linewidth]{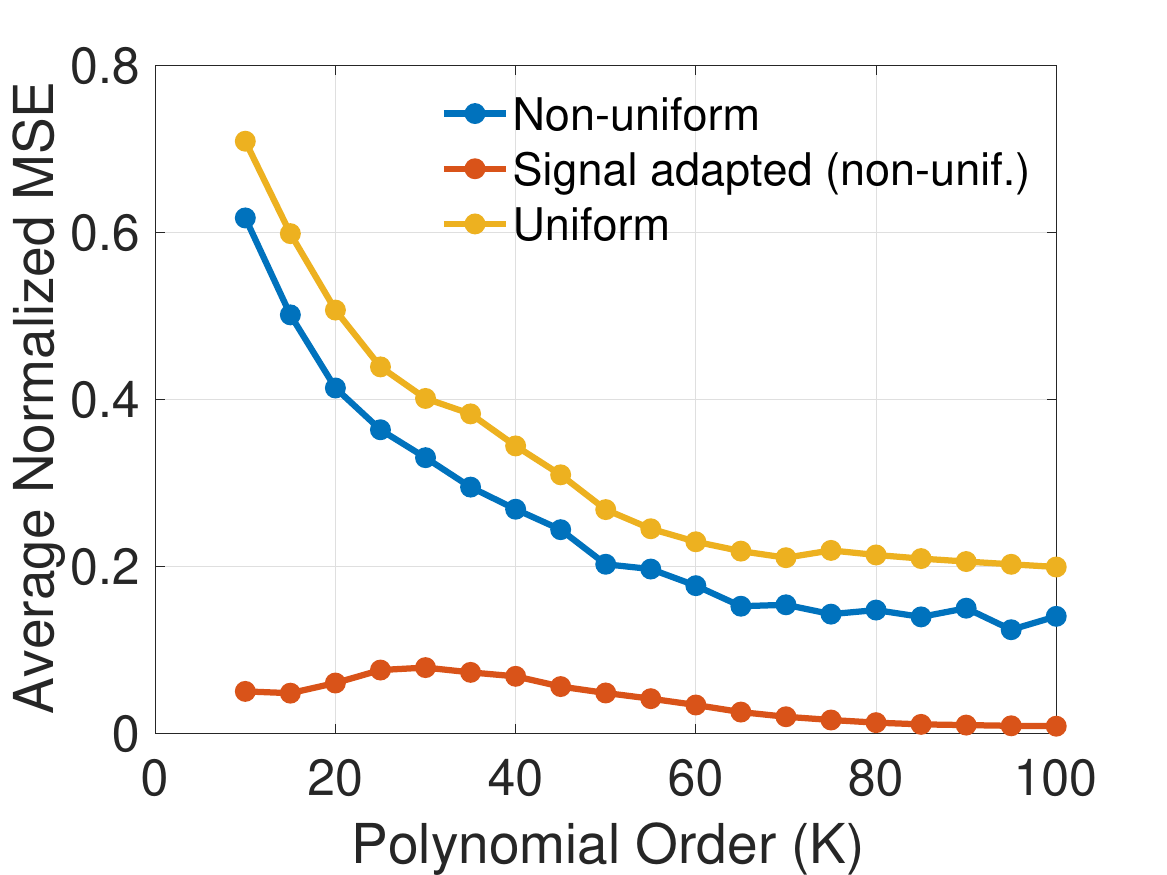}}
\end{minipage} \\
\begin{minipage}[m]{0.16\linewidth}
\centerline{\small{$M=10$}}
\end{minipage} 
\begin{minipage}[m]{0.4\linewidth}
\centerline{\includegraphics[width=1\linewidth]{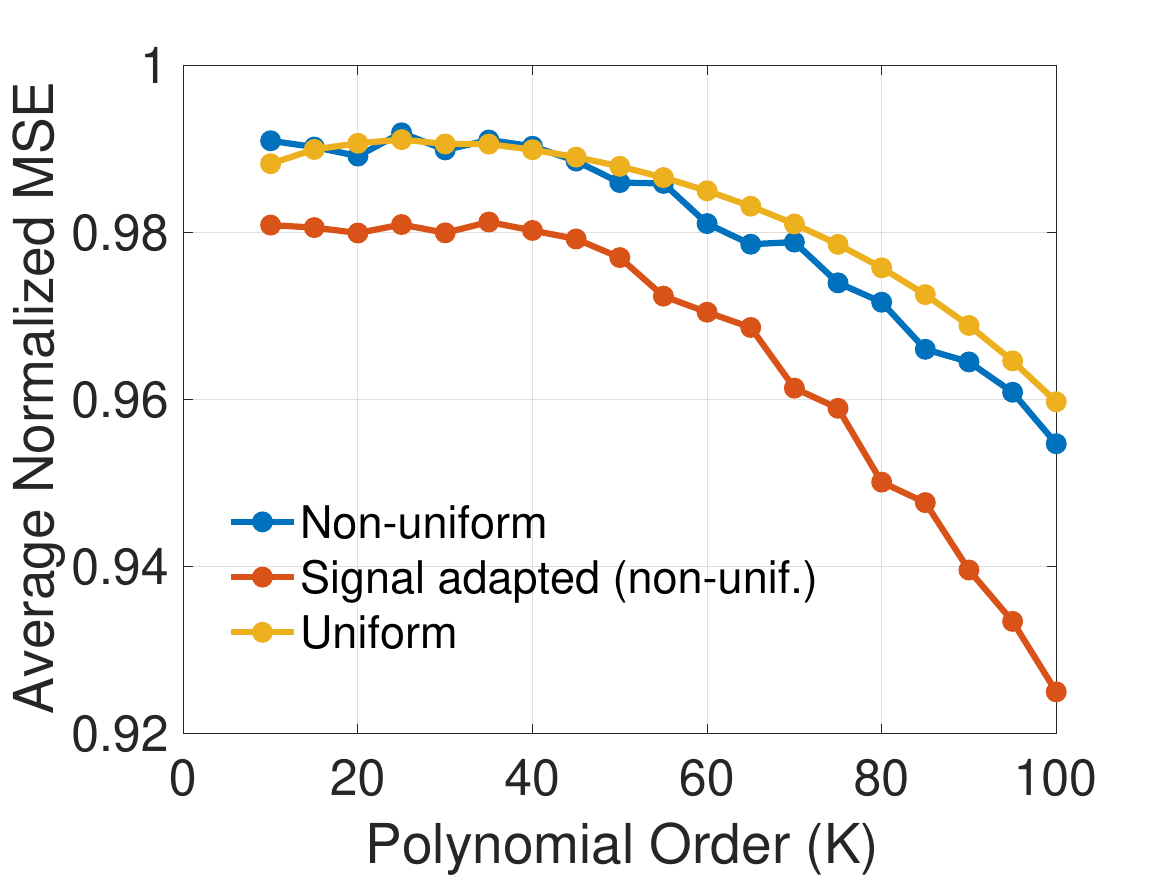}}
\end{minipage}
\begin{minipage}[m]{0.4\linewidth}
\centerline{\includegraphics[width=1\linewidth]{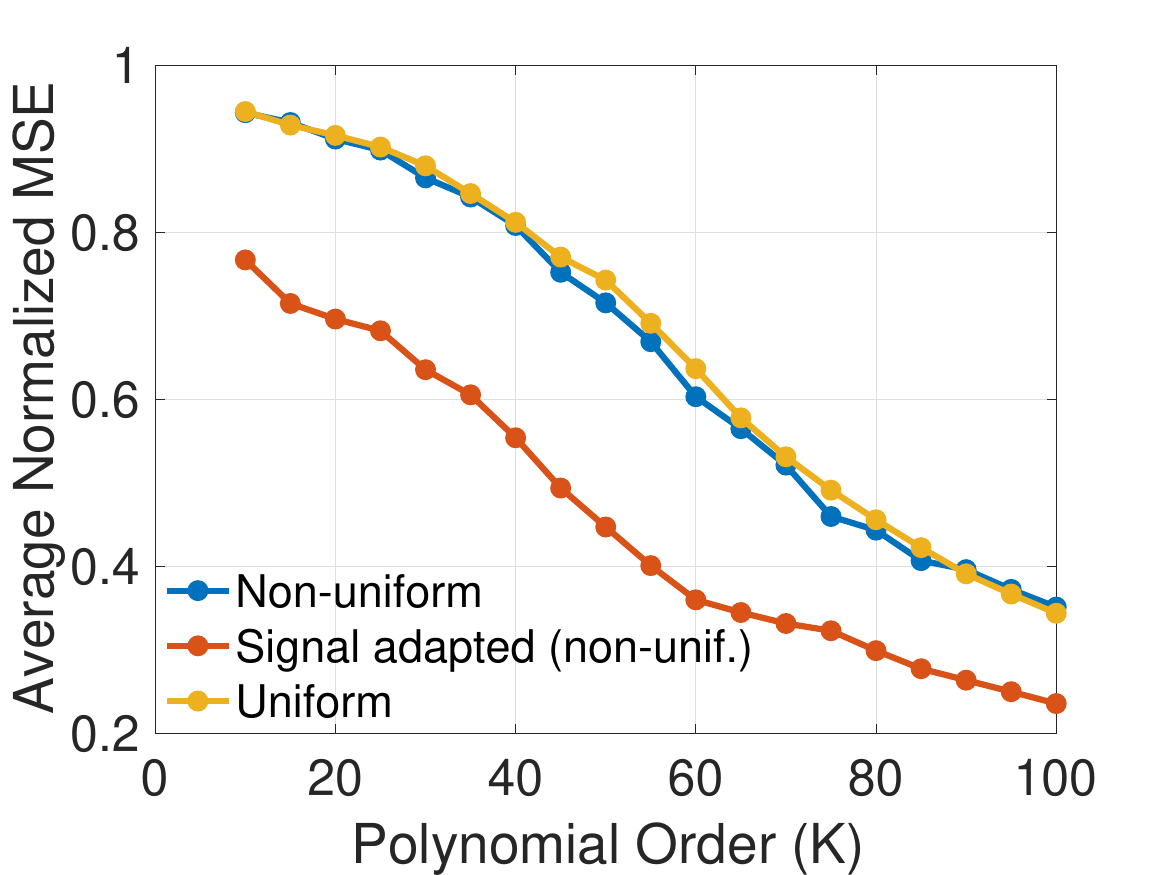}}
\end{minipage} 
\caption{The role of the polynomial order $K$ in the average reconstruction error for two different channels of the fast $M$-CSFB transform.}\label{Fig:poly_tradeoff}
\vspace{-.3cm}
\end{figure}

\subsubsection{Allocation of samples}
In addition to adapting the non-uniform sampling distributions, the signal-adapted fast $M$-CSFB transform 
redistributes the $N$ samples between the channels, allocating more samples to bands where more of the signal's energy resides. Again for a 4-band $M$-CSFB transform of the piecewise smooth bunny graph signal, Table \ref{Ta:sample_distribution} breaks down the improvement in NMSE (averaged over 50 trials). 
 We see that both adaptations reduce the reconstruction error. 
Adapting the sampling distributions and allocation of samples 
to the signal also reduces the NMSE in each of the 
five examples 
 and two scenarios shown in Table \ref{Ta:comp_times}.

\begin{table}[tbh]
{\footnotesize
\tabcolsep=0.11cm
\begin{center}
\begin{tabular}{l|C{1.5cm}C{1.5cm}C{1.5cm}|}
\cline{2-4}
 & \multicolumn{1}{ c| }{} & \multicolumn{1}{ c| }{} & \multicolumn{1}{ c| }{Sampling}  \\ 
  & \multicolumn{1}{ c| }{No} & \multicolumn{1}{ c| }{Sampling} & \multicolumn{1}{ c| }{Distributions \&}  \\ 
    & \multicolumn{1}{ c| }{Signal} & \multicolumn{1}{ c| }{Distributions} & \multicolumn{1}{ c| }{Allocations}  \\ 
        & \multicolumn{1}{ c| }{Adaptation} & \multicolumn{1}{ c| }{Adapted} & \multicolumn{1}{ c| }{Adapted}  \\ 
\cline{1-4}
\multicolumn{1}{|L{2.2cm}| }{Scenario A: faster}& 0.0399 & 0.0218 & 0.0106   \\
\cline{1-4}
\multicolumn{1}{|L{2.2cm}|}{Scenario B: more accurate} & 0.0318& 0.0144& 0.0052  \\
\cline{1-4} 
\end{tabular}
\end{center}
}
\caption{Average normalized mean square reconstruction error for bunny signal with 4-band fast $M$-CSFB transform variants}
\label{Ta:sample_distribution}
\end{table}

\begin{figure}[b]
\begin{minipage}[m]{0.48\linewidth}
\centerline{~~\includegraphics[width=.8\linewidth]{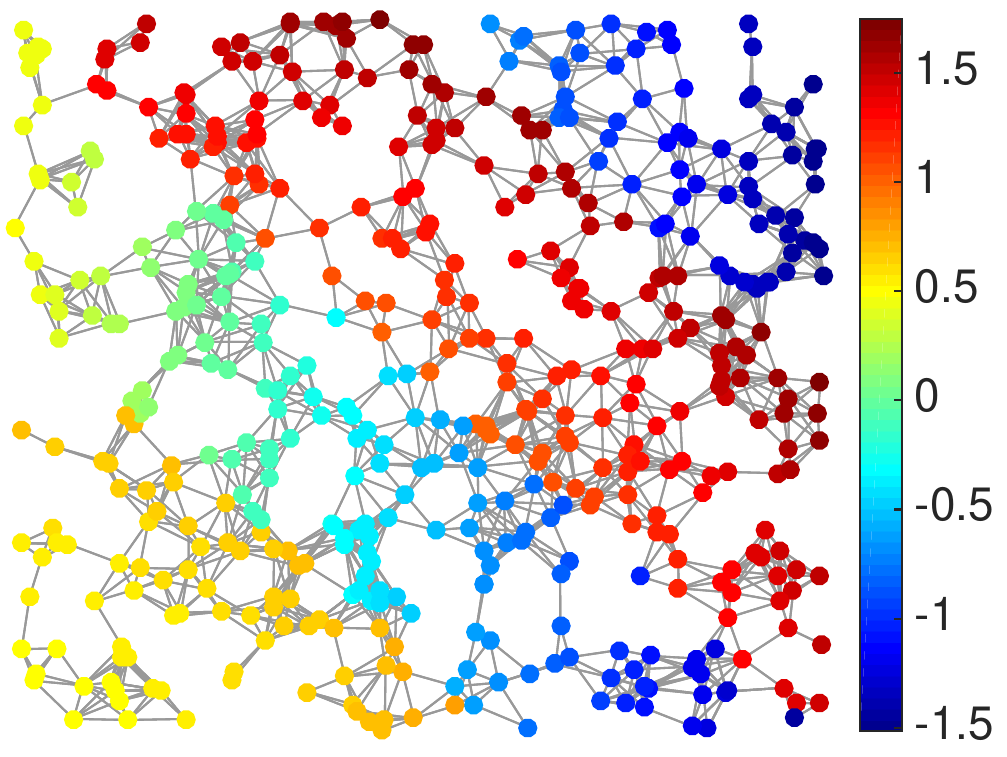}}
\vspace{.05in}
\centerline{~~\small{(a)}}
\end{minipage}
\begin{minipage}[m]{0.48\linewidth}
\vspace{.02in}
\centerline{\includegraphics[width=.8\linewidth]{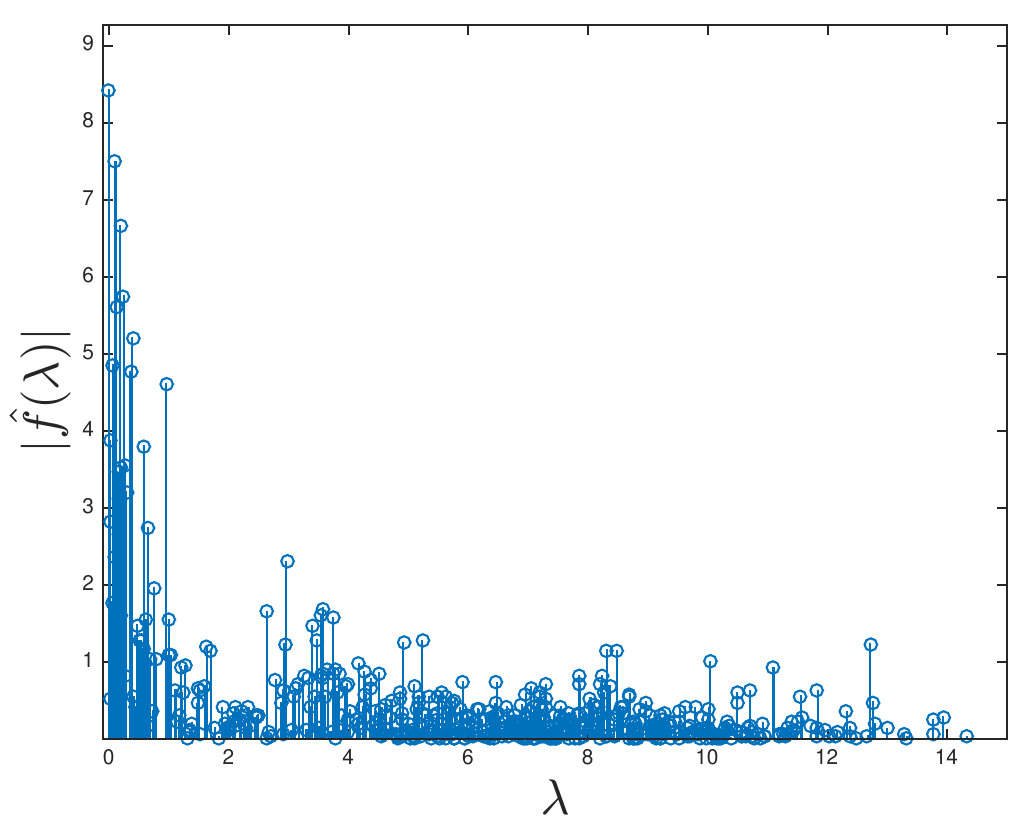}~}
\centerline{~~\small{(b)}}
\end{minipage} \\
\vspace{.07in}
\begin{minipage}[m]{0.48\linewidth}
\centerline{\includegraphics[width=.86\linewidth]{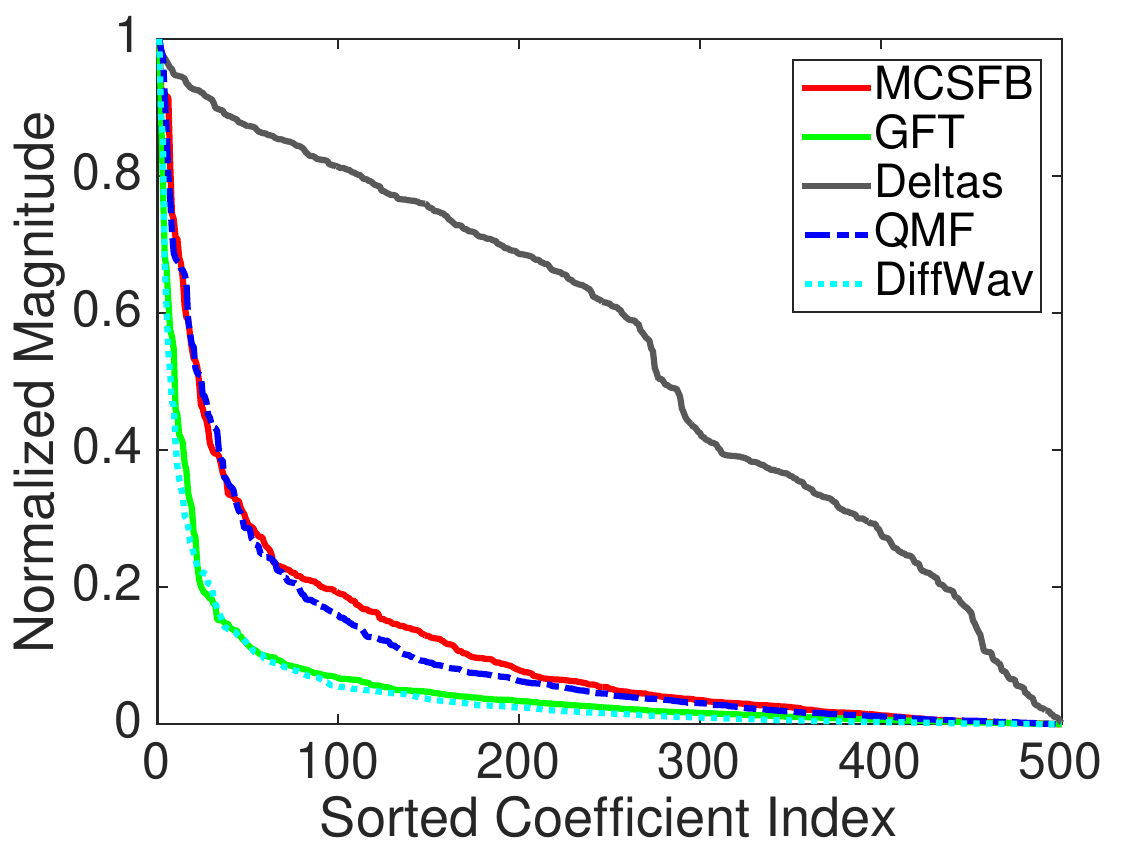}}
\centerline{\small{(c)}}
\end{minipage}
\begin{minipage}[m]{0.48\linewidth}
\centerline{\includegraphics[width=.95\linewidth]{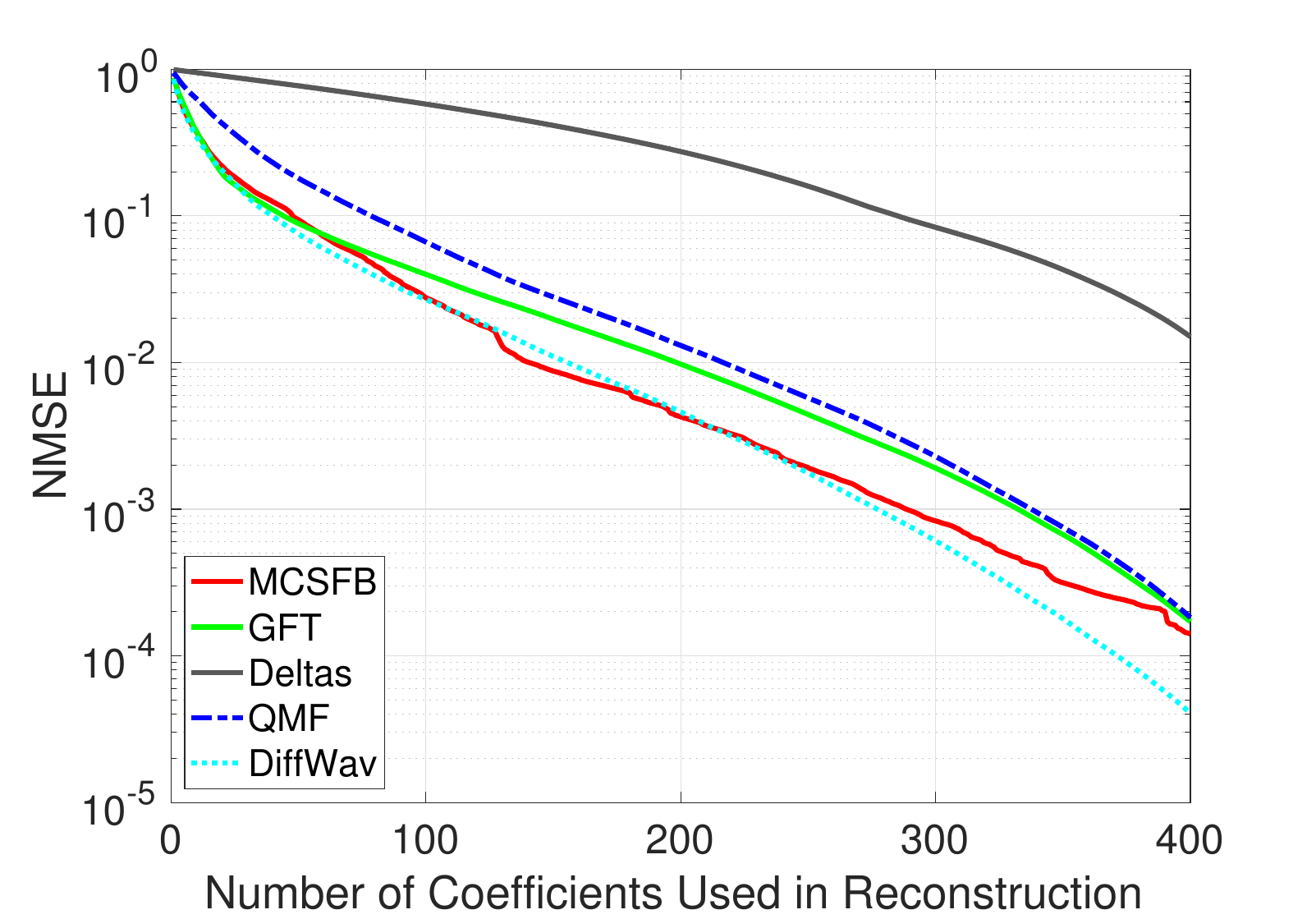}}
\centerline{\small{(d)}}
\end{minipage}
\caption{Compression example. (a)-(b) Piecewise-smooth signal from \cite[Fig. 11]{shuman_TSP_multiscale} in the vertex and graph spectral domains. (c) The normalized sorted magnitudes of the transform coefficients for the proposed $M$-CSFB (exact version), 
the graph Fourier transform, the basis of Kronecker deltas, the quadrature mirror filterbank \cite{narang2012perfect}, and the diffusion wavelet transform \cite{coifman2006diffusion}. (d) The reconstruction errors as a function of the sparsity threshold $T$ in \eqref{Eq:sparse_coding}.} \label{Fig:comp}
\vspace{-.6cm}
\end{figure}

\begin{figure}[tbh] 
\begin{minipage}[m]{0.15\linewidth}
\centerline{\footnotesize{Original}}
\centerline{\footnotesize{Signal}}
\end{minipage}
\begin{minipage}[m]{0.4\linewidth}
\centerline{\includegraphics[width=1\linewidth]{fig_temp_signal}}
\end{minipage}
\begin{minipage}[m]{0.3\linewidth}
\centerline{\footnotesize{~}}
\end{minipage} 
\begin{minipage}[m]{0.08\linewidth}
\centerline{\footnotesize{~}}
\end{minipage}
\\
\vspace{.05in}

\begin{minipage}[m]{0.15\linewidth}
~
\end{minipage}
\begin{minipage}[m]{0.35\linewidth}
\centerline{\footnotesize{Reconstruction~~~}}
\end{minipage}
\begin{minipage}[m]{0.35\linewidth}
\centerline{\footnotesize{Reconstruction Error~~}}
\end{minipage} 
\begin{minipage}[m]{0.08\linewidth}
\centerline{\footnotesize{~~NMSE}}
\end{minipage}
\\
\vspace{-.025in}

\begin{minipage}[m]{0.15\linewidth}
\centerline{\footnotesize{No}}
\centerline{\footnotesize{Compression}}
\end{minipage}
\begin{minipage}[m]{0.35\linewidth}
\centerline{\includegraphics[width=1\linewidth]{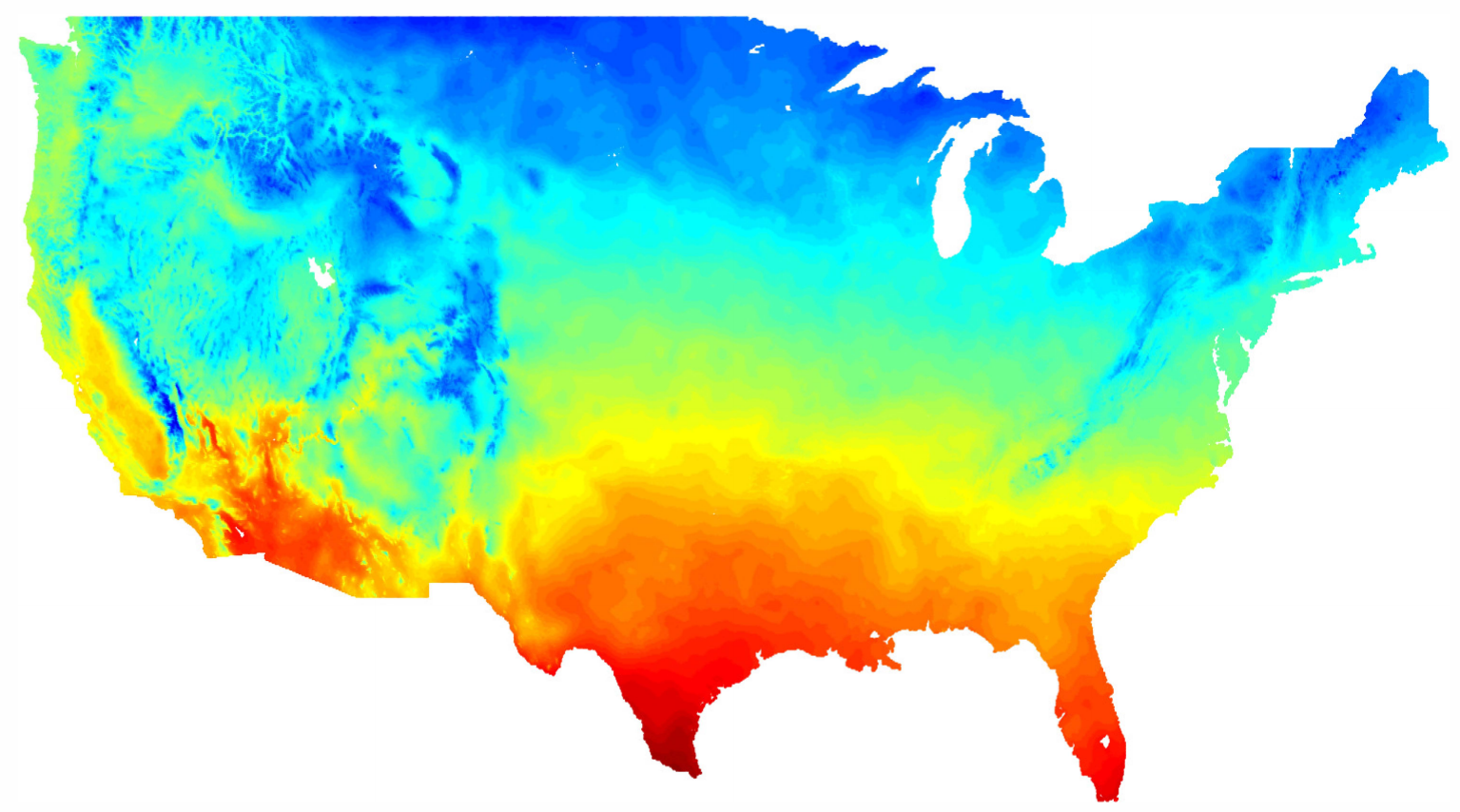}}
\end{minipage}
\begin{minipage}[m]{0.35\linewidth}
\centerline{\includegraphics[width=1\linewidth]{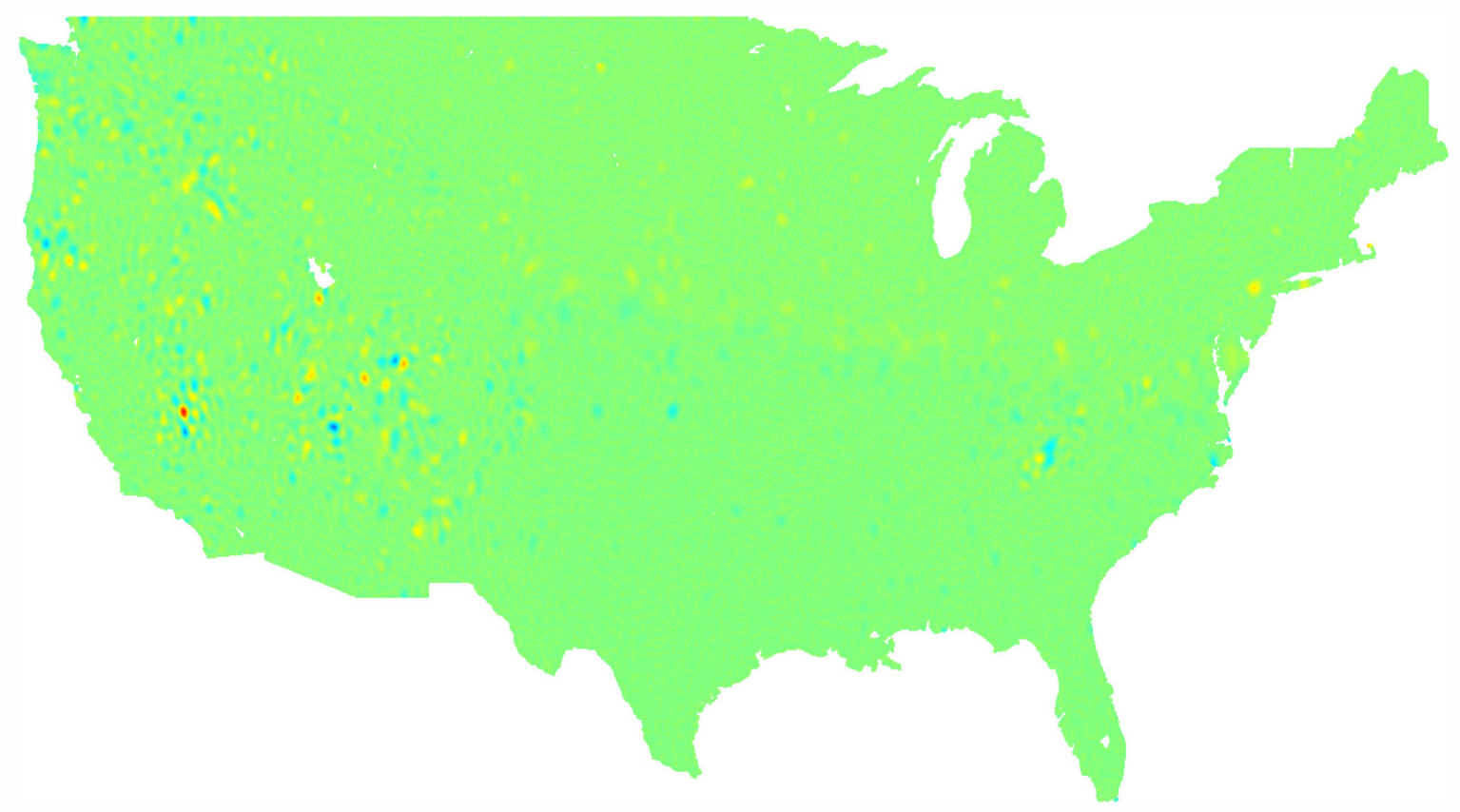}}
\end{minipage} 
\begin{minipage}[m]{0.08\linewidth}
\centerline{\footnotesize{~6.62e-4}}
\end{minipage}
\\
\begin{minipage}[m]{0.15\linewidth}
\centerline{\footnotesize{Compression}}
\centerline{\footnotesize{Ratio}}
\centerline{\footnotesize{1.25:1}}
\end{minipage}
\begin{minipage}[m]{0.35\linewidth}
\centerline{\includegraphics[width=1\linewidth]{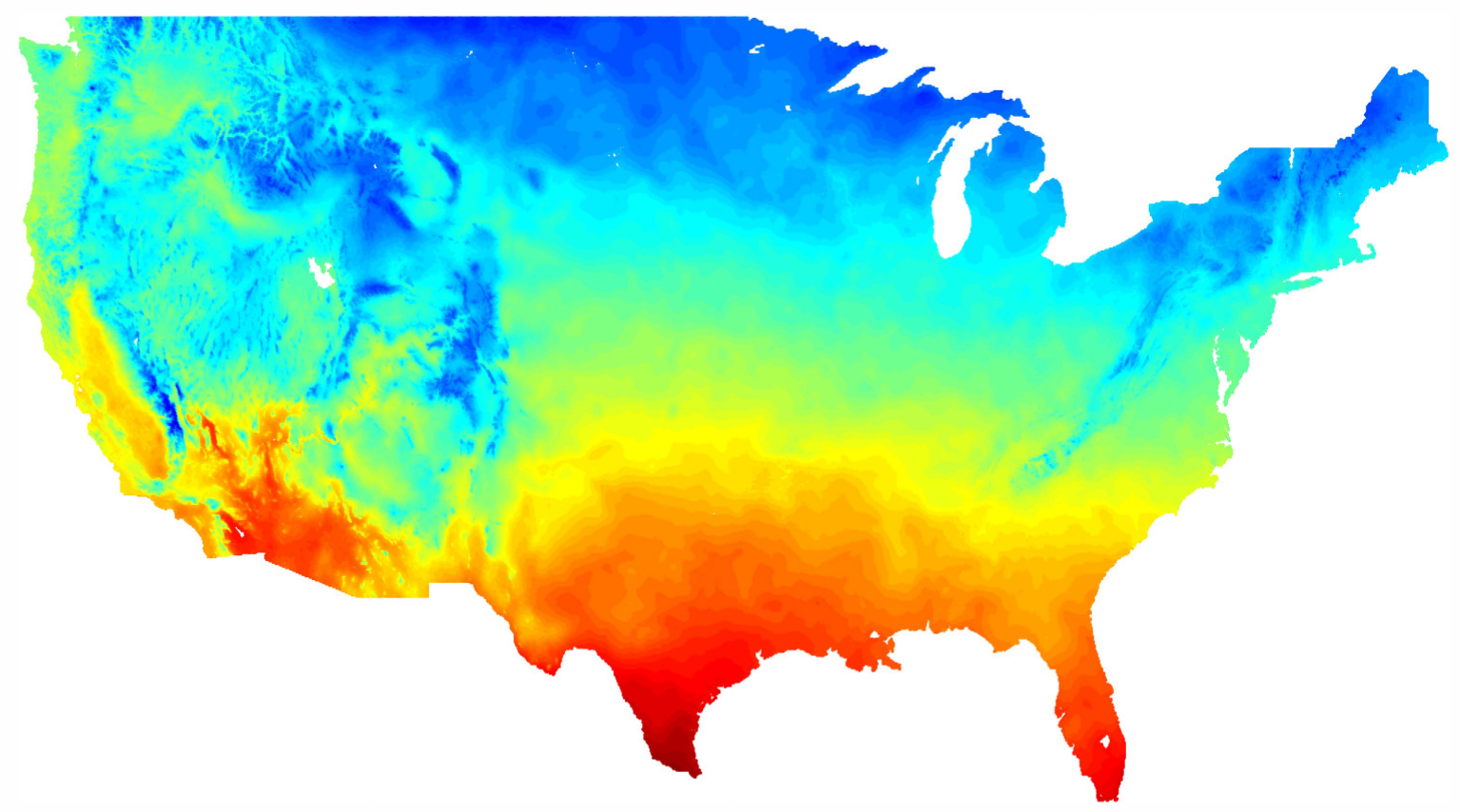}}
\end{minipage}
\begin{minipage}[m]{0.35\linewidth}
\centerline{\includegraphics[width=1\linewidth]{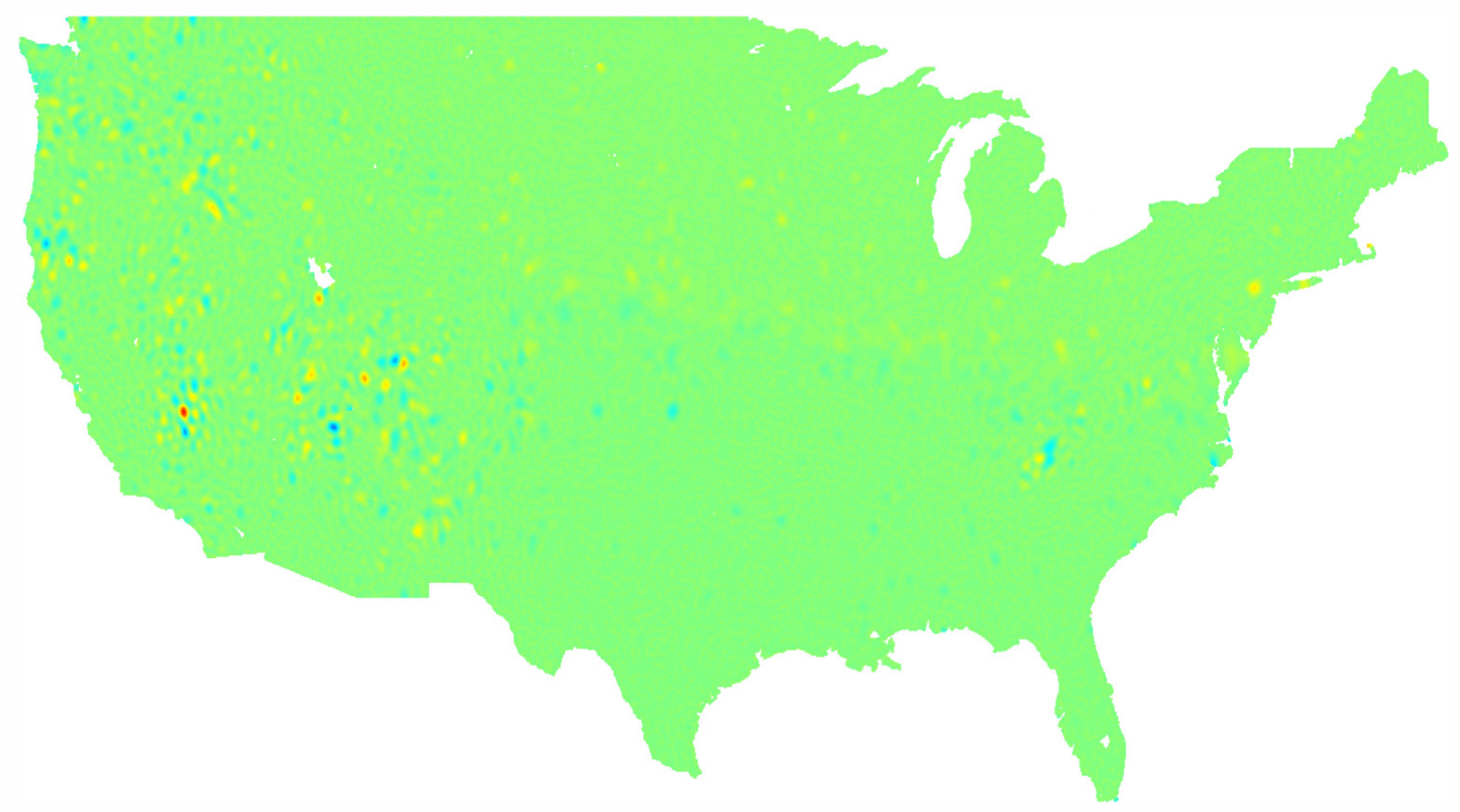}}
\end{minipage} 
\begin{minipage}[m]{0.08\linewidth}
\centerline{\footnotesize{~6.65e-4}}
\end{minipage}
\\
\begin{minipage}[m]{0.15\linewidth}
\centerline{\footnotesize{Compression}}
\centerline{\footnotesize{Ratio}}
\centerline{\footnotesize{2:1}}
\end{minipage}
\begin{minipage}[m]{0.35\linewidth}
\centerline{\includegraphics[width=1\linewidth]{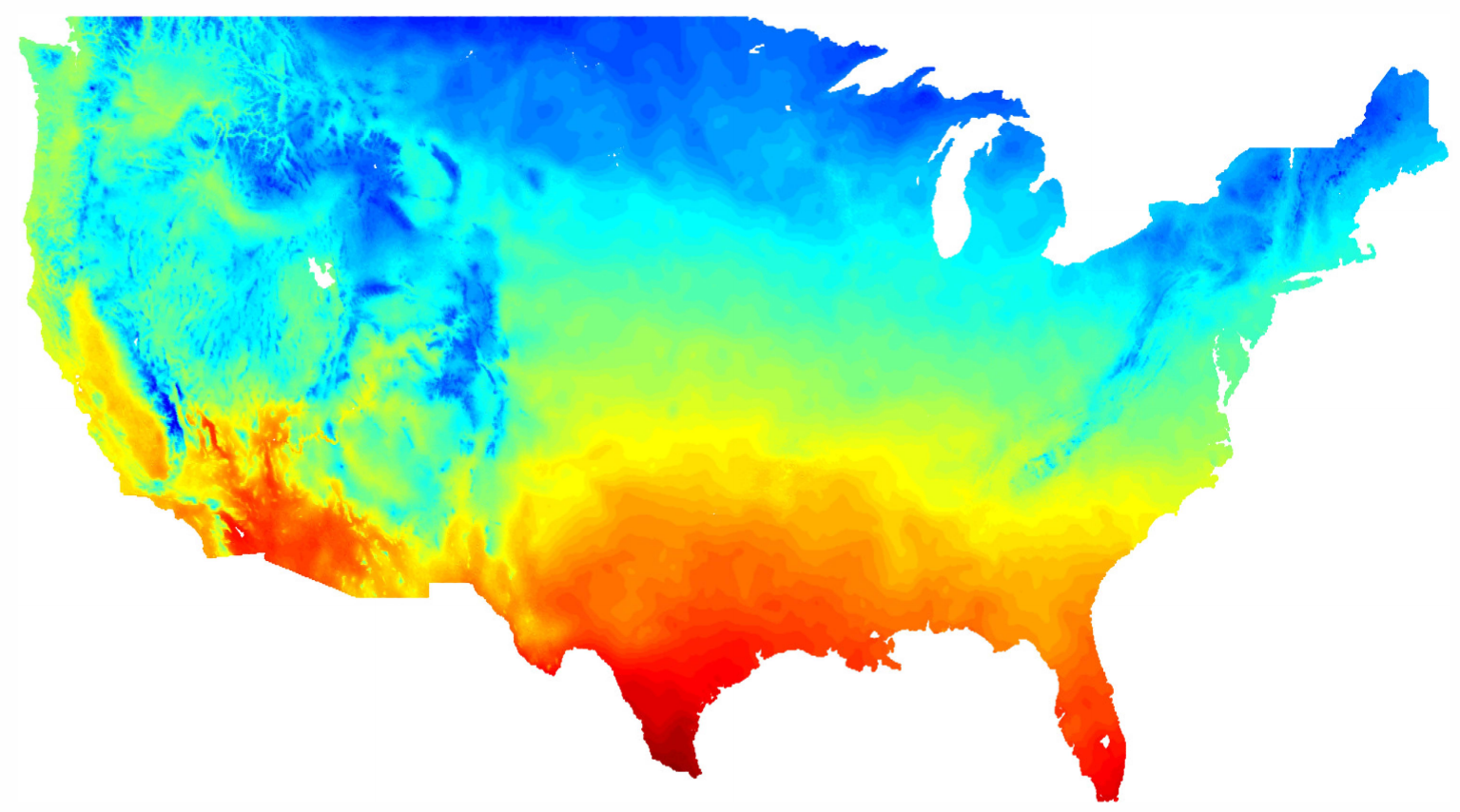}}
\end{minipage}
\begin{minipage}[m]{0.35\linewidth}
\centerline{\includegraphics[width=1\linewidth]{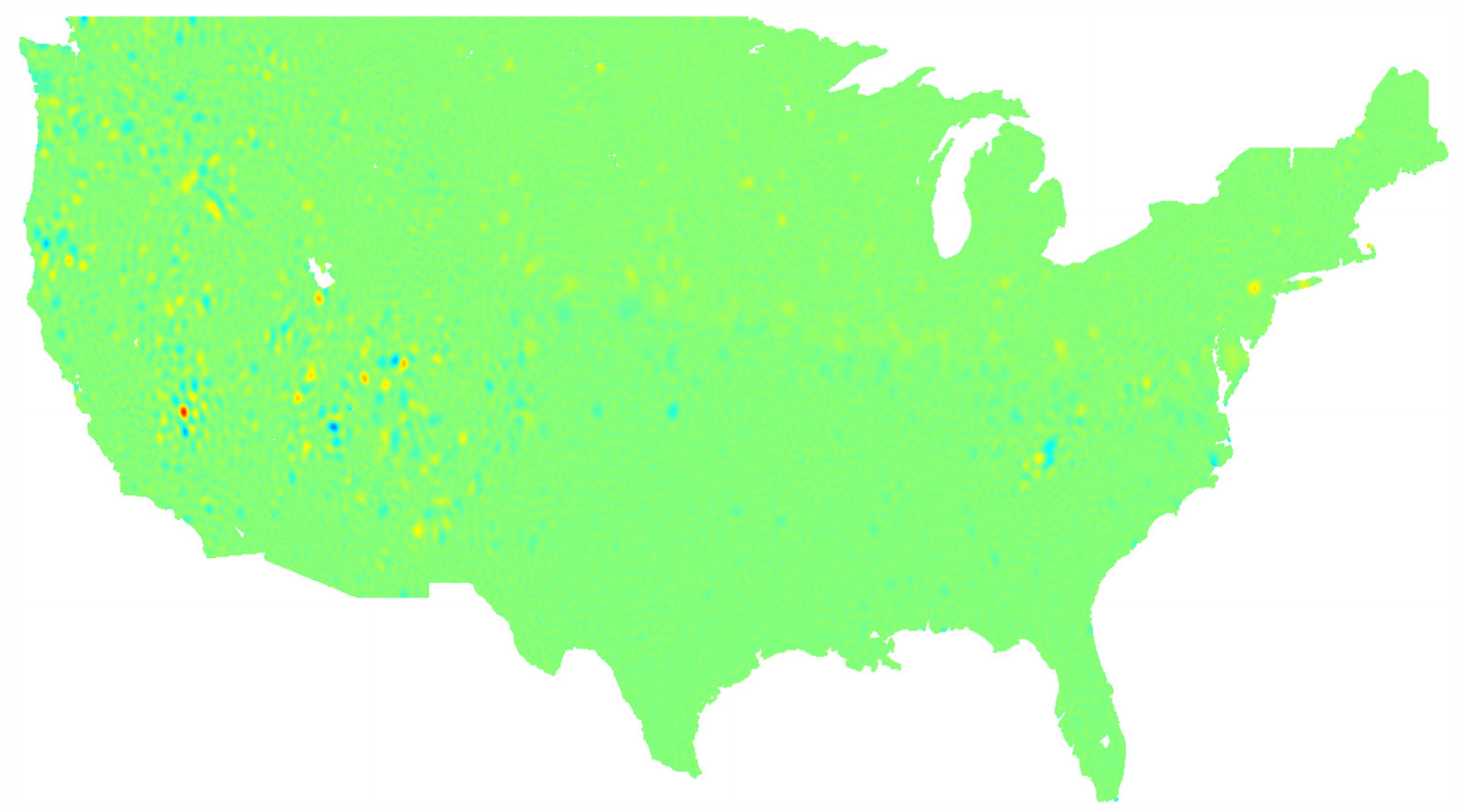}}
\end{minipage} 
\begin{minipage}[m]{0.08\linewidth}
\centerline{\footnotesize{~7.41e-4}}
\end{minipage}
\\
\begin{minipage}[m]{0.15\linewidth}
\centerline{\footnotesize{Compression}}
\centerline{\footnotesize{Ratio}}
\centerline{\footnotesize{5:1}}
\end{minipage}
\begin{minipage}[m]{0.35\linewidth}
\centerline{\includegraphics[width=1\linewidth]{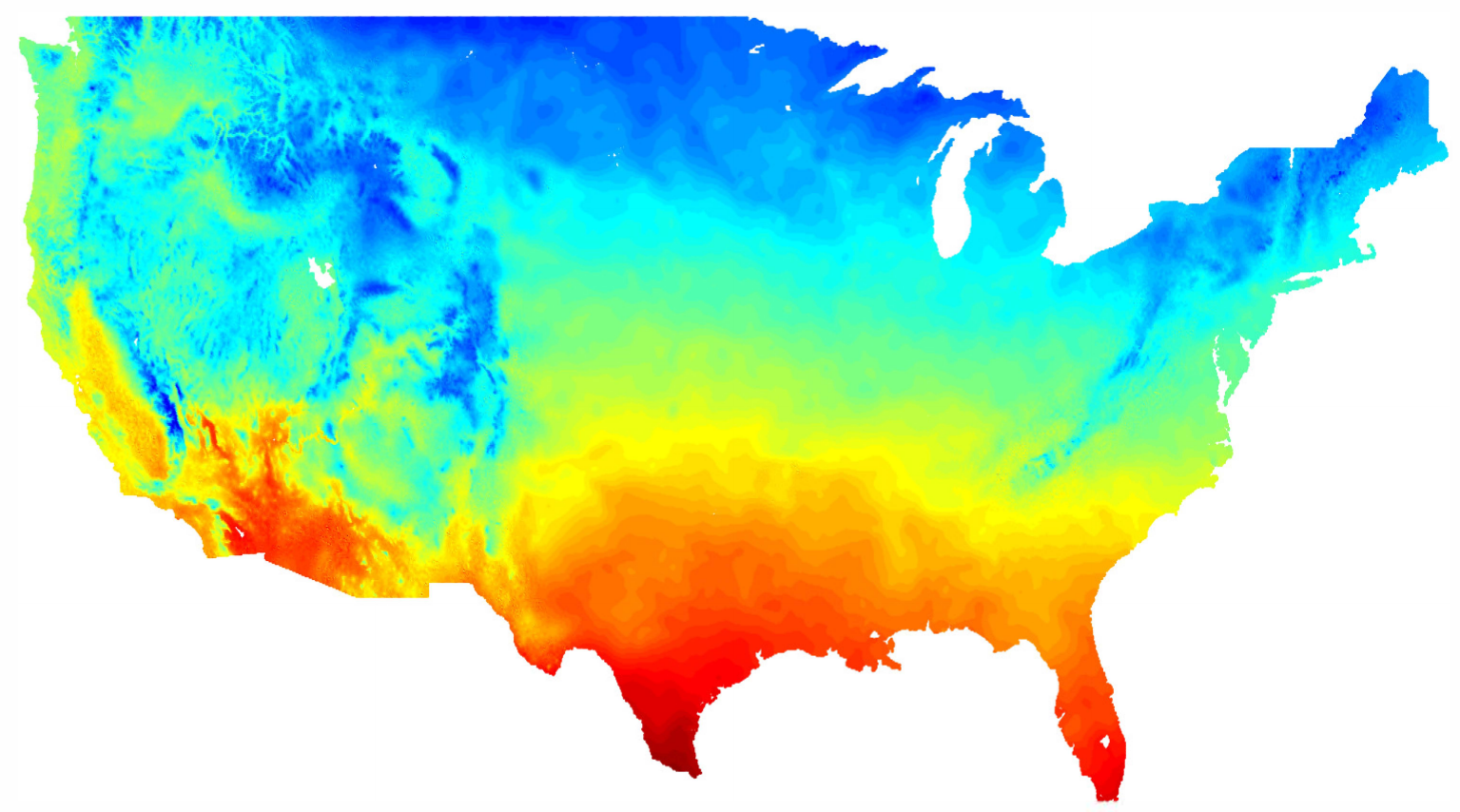}}
\end{minipage}
\begin{minipage}[m]{0.35\linewidth}
\centerline{\includegraphics[width=1\linewidth]{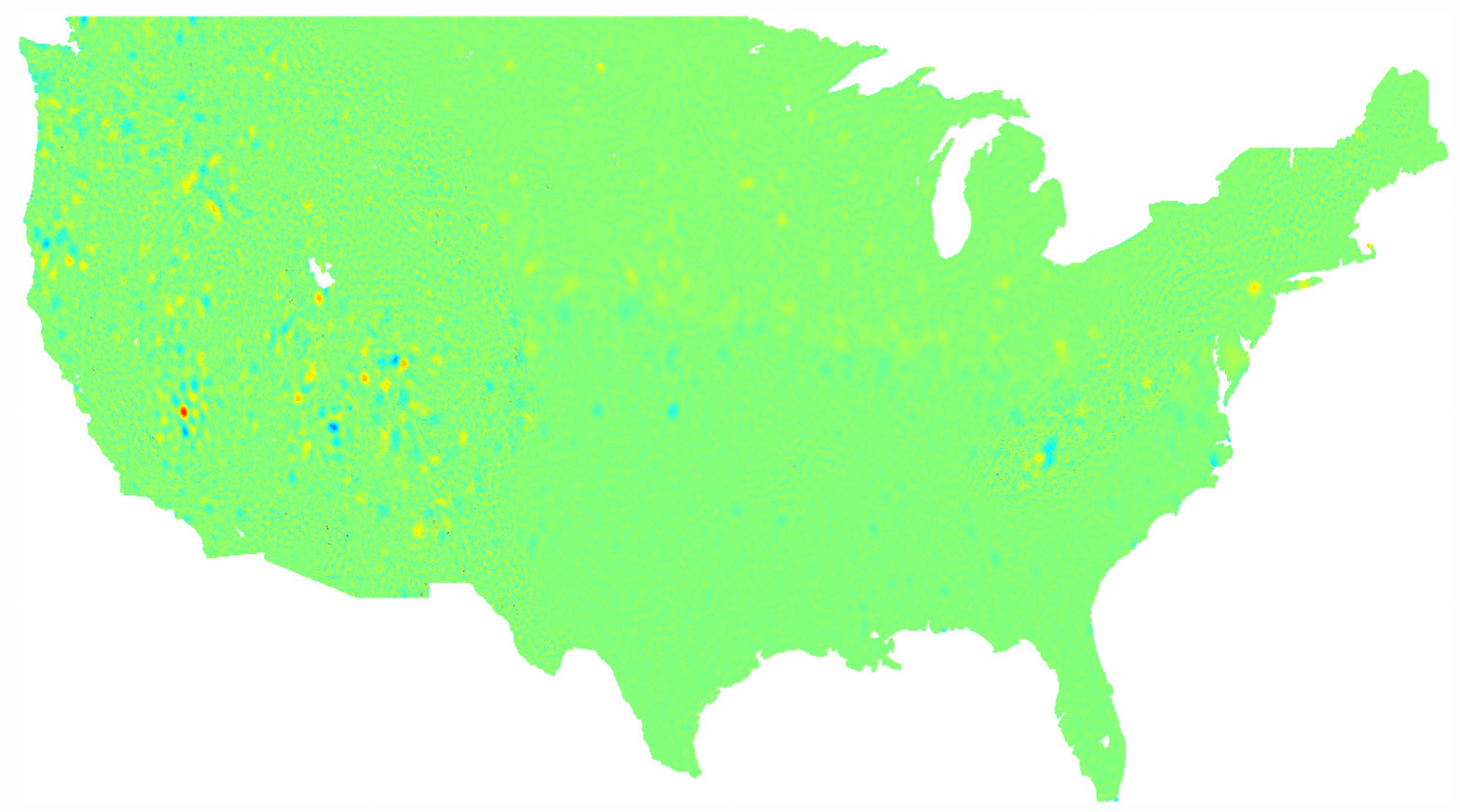}}
\end{minipage} 
\begin{minipage}[m]{0.08\linewidth}
\centerline{\footnotesize{14.78e-4}}
\end{minipage}
\\
\begin{minipage}[m]{0.15\linewidth}
\centerline{\footnotesize{Compression}}
\centerline{\footnotesize{Ratio}}
\centerline{\footnotesize{10:1}}
\end{minipage}
\begin{minipage}[m]{0.35\linewidth}
\centerline{\includegraphics[width=1\linewidth]{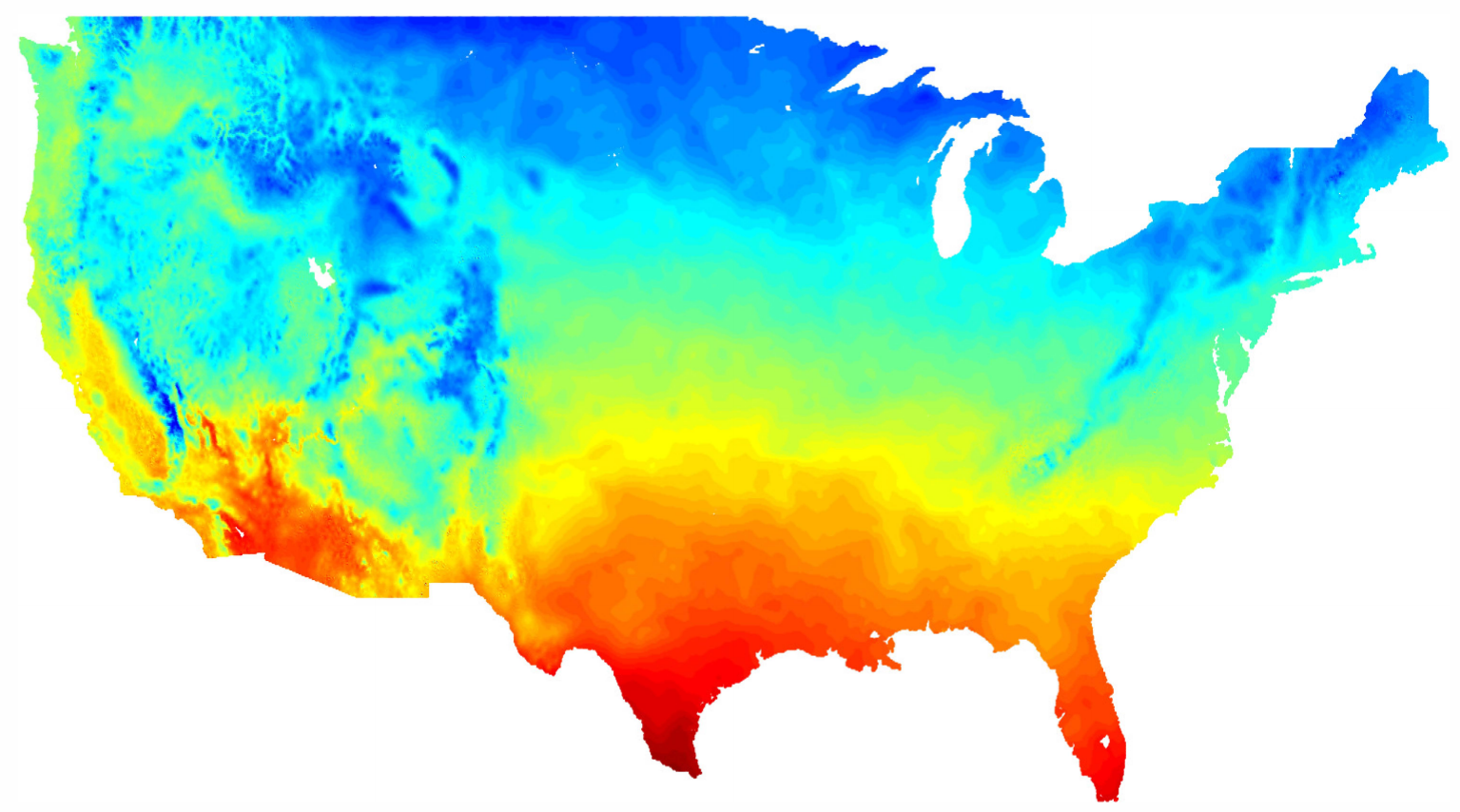}}
\end{minipage}
\begin{minipage}[m]{0.35\linewidth}
\centerline{\includegraphics[width=1\linewidth]{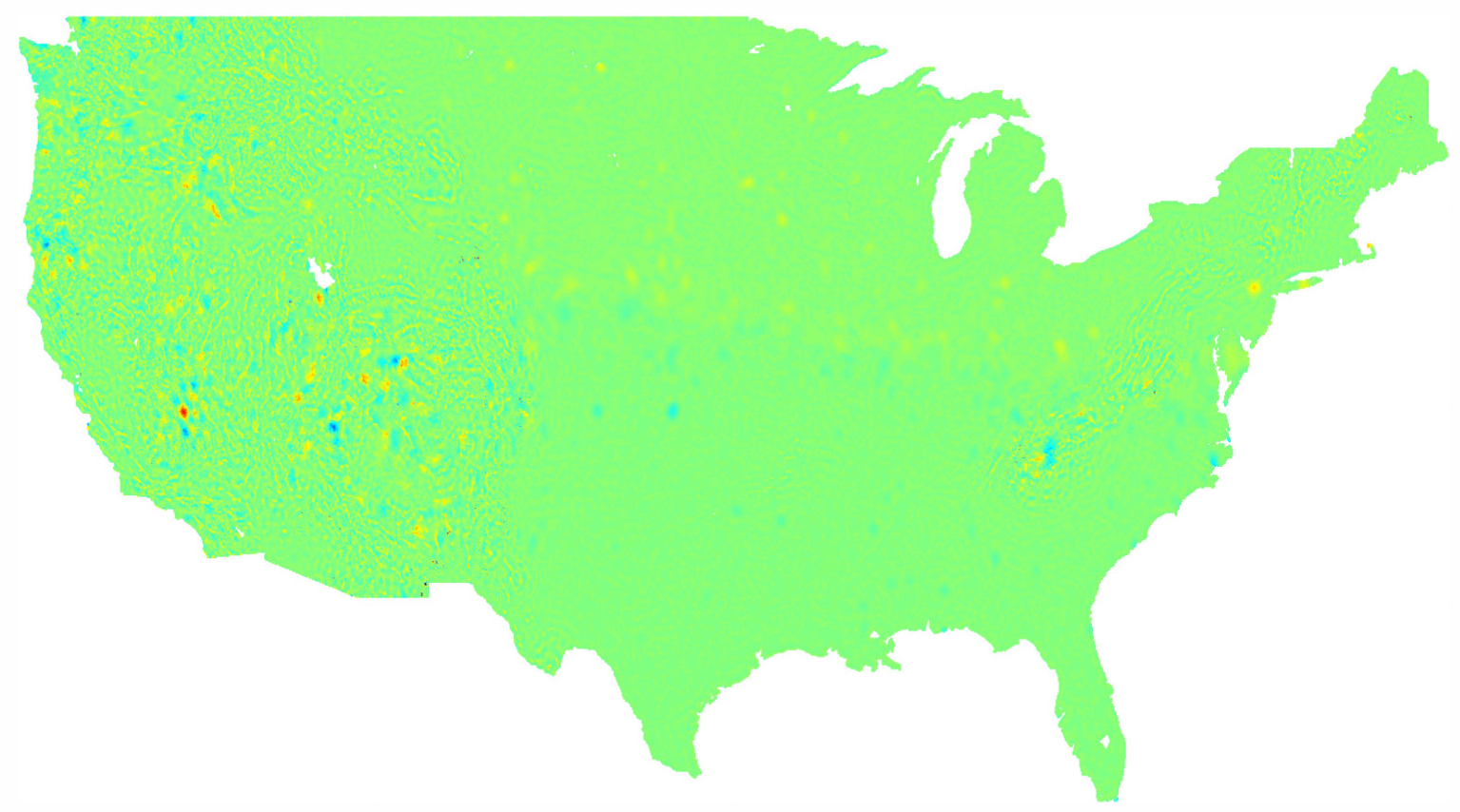}}
\end{minipage} 
\begin{minipage}[m]{0.08\linewidth}
\centerline{\footnotesize{21.69e-4}}
\end{minipage}
\\
\vspace{-.025in}

\begin{minipage}[m]{0.15\linewidth}
\centerline{\footnotesize{~}}
\end{minipage}
\begin{minipage}[m]{0.35\linewidth}
\centerline{\includegraphics[width=1\linewidth]{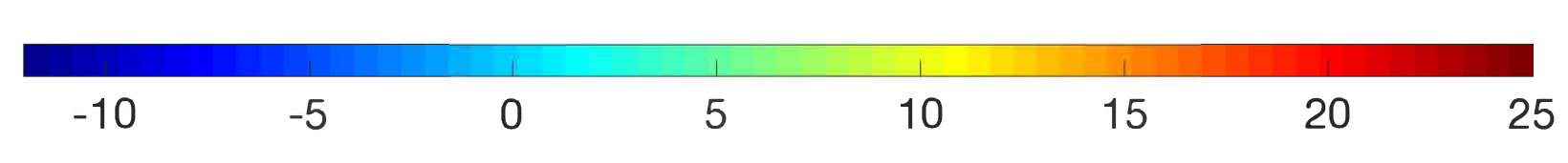}}
\end{minipage}
\begin{minipage}[m]{0.35\linewidth}
\centerline{\includegraphics[width=1\linewidth]{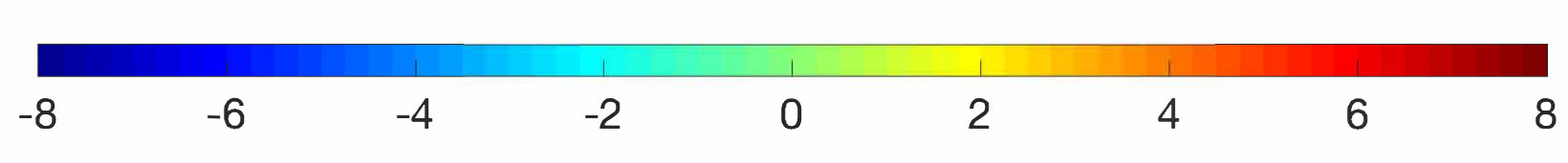}}
\end{minipage} 
\begin{minipage}[m]{0.08\linewidth}
\centerline{\footnotesize{~}}
\end{minipage}
\caption{Compression of the average temperature signal from Section \ref{Se:temp} with a signal-adapted fast $M$-CSFB transform with $M=5$ bands. From top to bottom, we keep 100\%, 80\%, 50\%, 20\%, and 10\% of the analysis coefficients, setting the rest to 0 before synthesis. In all cases, we keep all 28,022 scaling coefficients (the first band, about 6\% of the overall coefficients), and then use the remaining budget to store the wavelet coefficients with the largest magnitudes.}\label{Fig:compression_temp}
\vspace{-.3cm}
\end{figure}

\subsection{Compression Examples} \label{Se:compression}
Next, we compress 
a piecewise-smooth graph signal ${\bf f}$ via the sparse coding optimization
\begin{align}\label{Eq:sparse_coding}
\argmin_{\bf x}   || {\bf f}-\boldsymbol{\Phi}{\bf x} ||_2^2 \hbox{~~~subject to } ||{\bf x}||_0 \leq T,
\end{align}
where $T$ is a predefined sparsity level. After 
normalizing the atoms of various critically-sampled dictionaries, we use the greedy orthogonal matching pursuit (OMP) algorithm \cite{tropp2004greed,elad_book} to approximately solve \eqref{Eq:sparse_coding}. 
  We show the normalized mean square reconstruction errors (NMSE) $\frac{\left|\left|{\bf f}_{{rec}}-{\bf f}\right|\right|_2^2}{||{\bf f}||_2^2}$ in Fig. \ref{Fig:comp}(d). Note that for a fair comparison, we use the exact computation of ${\bf U}$ in the design of all four dictionaries that utilize it.
 For the $M$-CSFB, 
 Fig. \ref{Fig:part_examples} shows 
 the partition into uniqueness sets, and  
 Fig. \ref{Fig:fb} shows the filter bank.

In Fig. \ref{Fig:compression_temp}, we compress the average temperature signal from Section \ref{Se:temp}, which has 469,404 values. We use a signal-adapted fast $M$-CSFB transform with $M=5$ bands and the same parameter settings as Scenario B above. Fig. \ref{Fig:compression_temp} captures the tradeoff between the normalized mean square reconstruction error and the compression ratio. Even using only 10\% of the coefficients, it is difficult to visually identify errors between the reconstruction and the original signal. 

\subsection{Fast Approximate Graph Fourier Transform} \label{Se:fgft}

If the graph Laplacian eigenvalues are distinct, using the exact M-CSFB transform of Section \ref{Se:fb_design} with $M=N$ and the filter endpoints $\tau_0=0$, $\tau_M = \lambda_{\max}+1$, and $\tau_m=\frac{\lambda_{m-1}+\lambda_m}{2}$ for $m=1,2,\ldots,N-1$ yields exactly the graph Fourier transform. The proposed fast $M$-CSFB transform can therefore be used as a fast approximate graph Fourier transform, with a coarser resolution in the spectral domain.  Namely, we use the approximation
\begin{eqnarray} \label{Eq:fgft}
\hat{f}_{approx}(\lambda)={\frac{1}{\sqrt{\eta_{m_{\lambda}}}}} ||\tilde{h}_{m_{\lambda}}(\L){\bf f} ||_2 ,
\end{eqnarray}
where $m_{\lambda}$ is the index of the band containing $\lambda$ and $\eta_{m_{\lambda}}$ is the approximate number of eigenvalues contained in the band containing $\lambda$. This approximation is motivated by the fact that $||{h}_{m}(\L){\bf f} ||_2^2=\sum_{\{\lambda: h_{m}(\lambda)=1\}} |\hat{f}(\lambda)|^2$, by Parseval's equality.

\begin{figure}[t]
\begin{minipage}[m]{0.48\linewidth}
\centerline{\includegraphics[width=.9\linewidth]{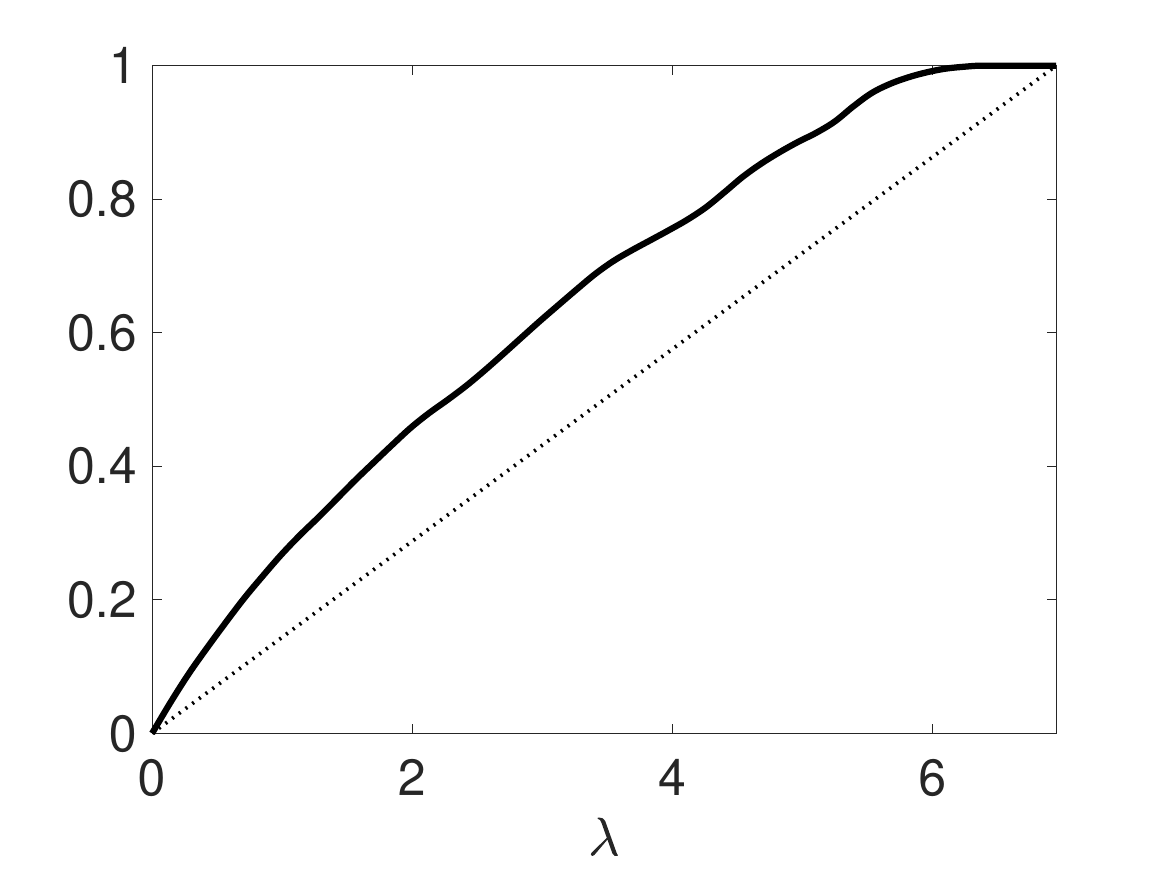}}
\centerline{~\small{(a)}}
\end{minipage}
\begin{minipage}[m]{0.48\linewidth}
\centerline{\includegraphics[width=.9\linewidth]{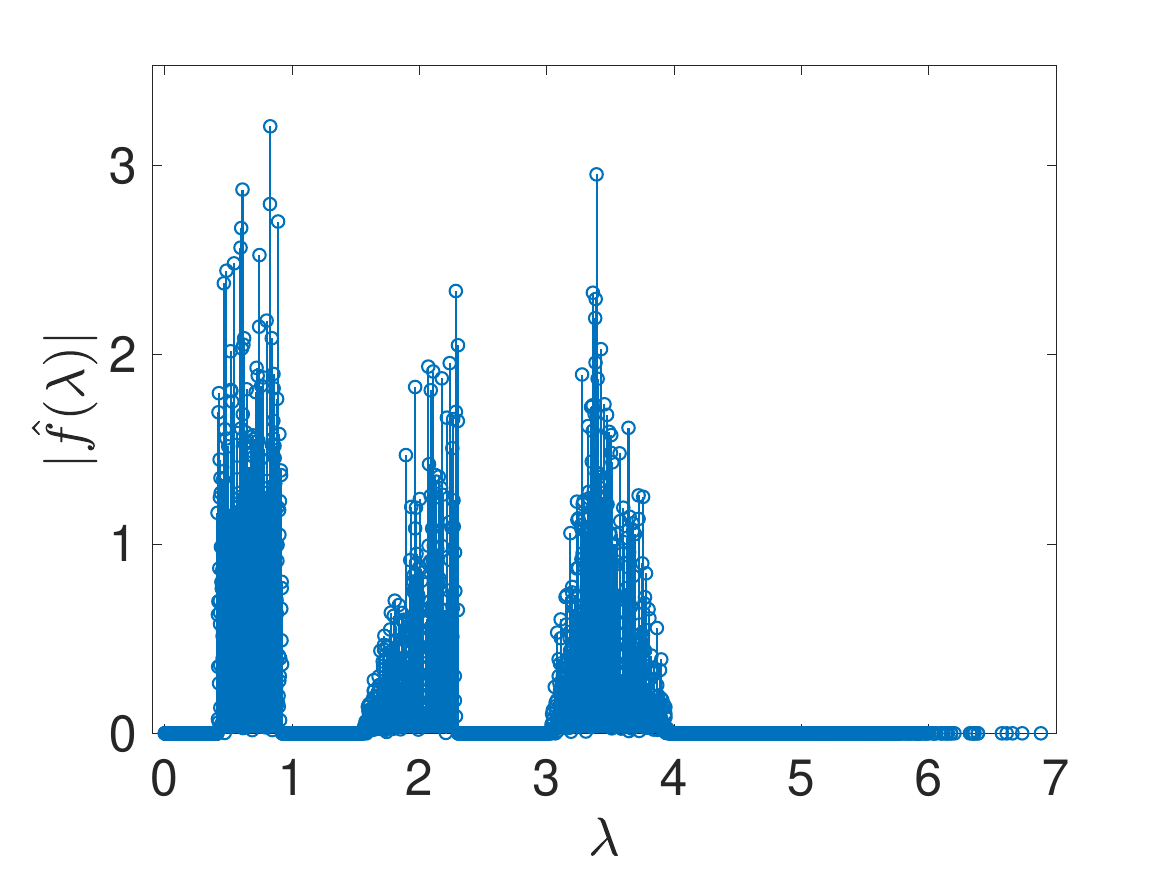}}
\centerline{~\small{(b)}}
\end{minipage} 
\caption{(a) Estimate of the cumulative spectral distribution of the Minnesota road network shown in Fig. \ref{Fig:part_examples}. (b) Synthetic signal in the graph spectral domain of the Minnesota graph, generated with the method from \cite{le2018approximate}.}\label{Fig:fgft1}
\vspace{-.1cm}
\end{figure}

\begin{figure*}[tb]
\begin{minipage}[m]{0.24\linewidth}
\centerline{\small{Equal Length}}
\centerline{\includegraphics[width=.95\linewidth]{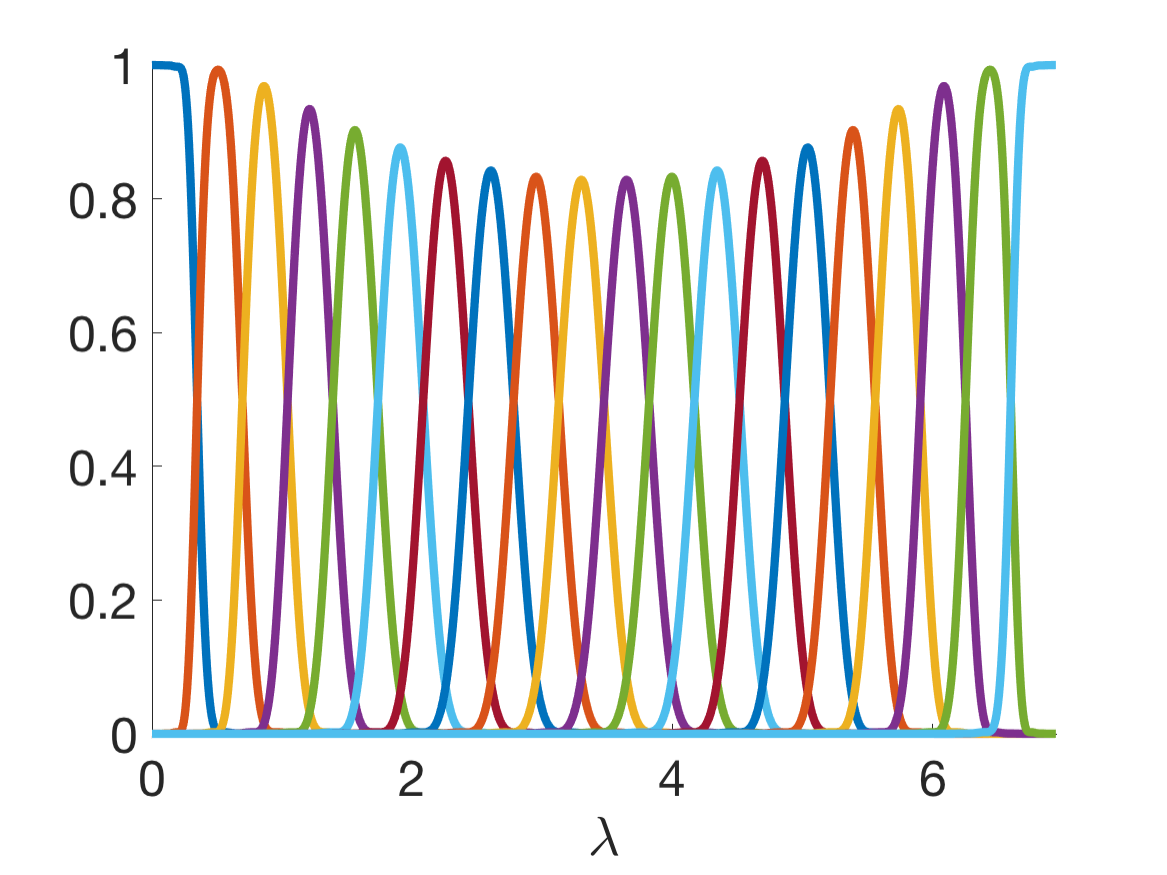}}
\centerline{\includegraphics[width=.95\linewidth]{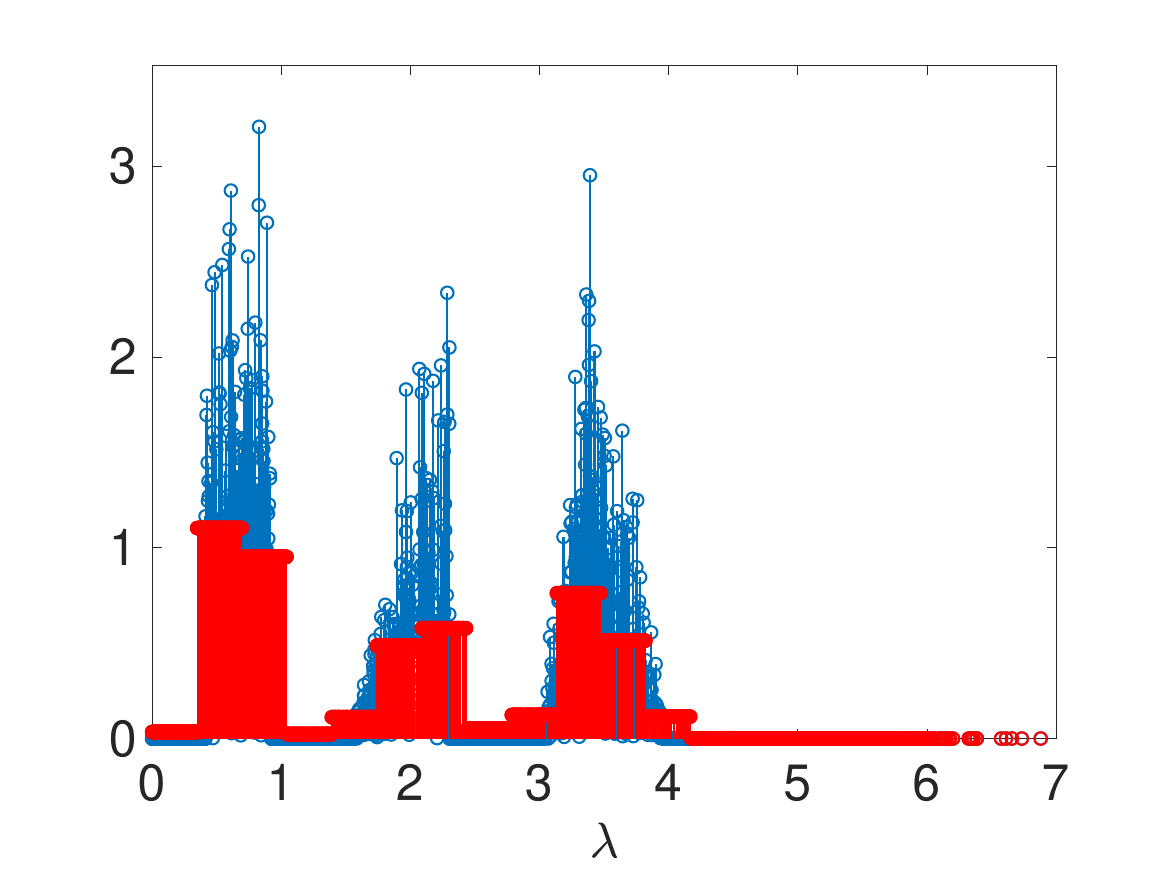}}
\centerline{~~\small{(a)}}
\end{minipage}
\begin{minipage}[m]{0.24\linewidth}
\centerline{\small{Equal Number of Eigenvalues}}
\centerline{\includegraphics[width=.95\linewidth]{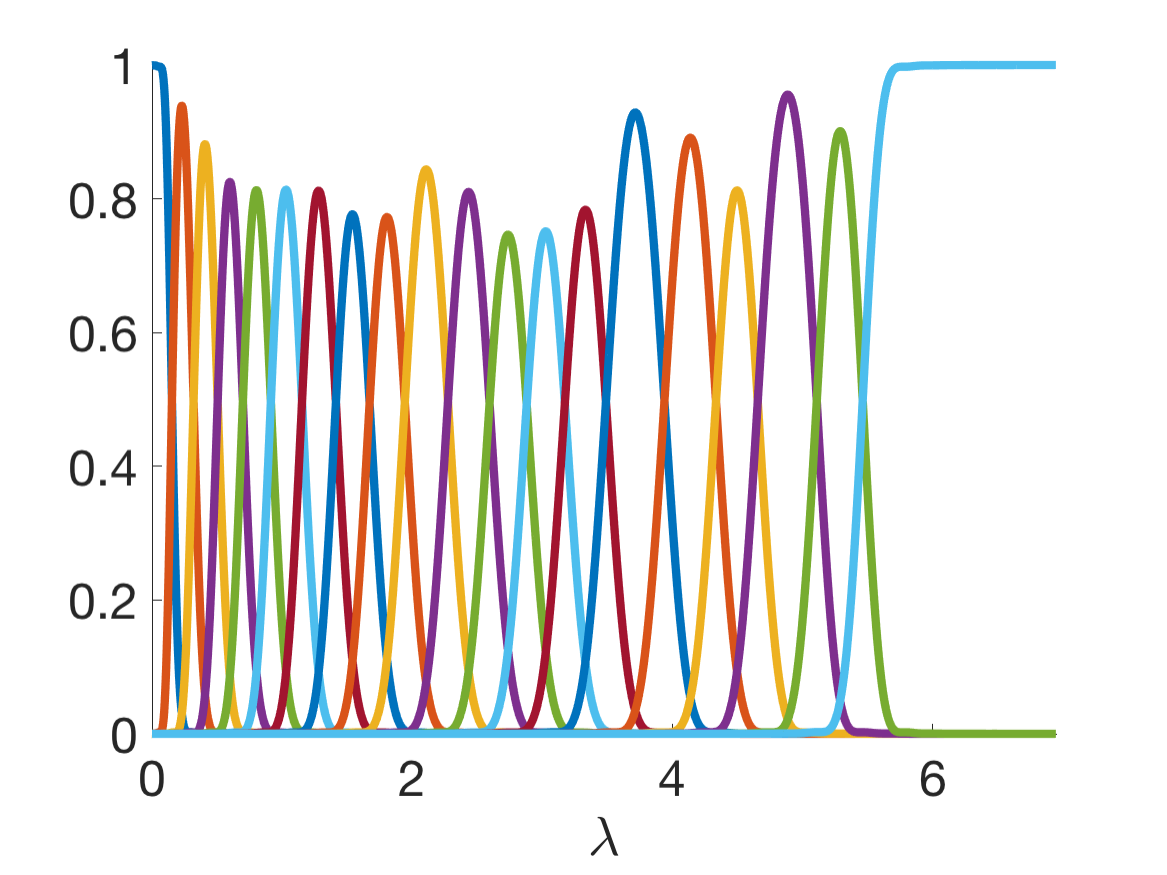}}
\centerline{\includegraphics[width=.95\linewidth]{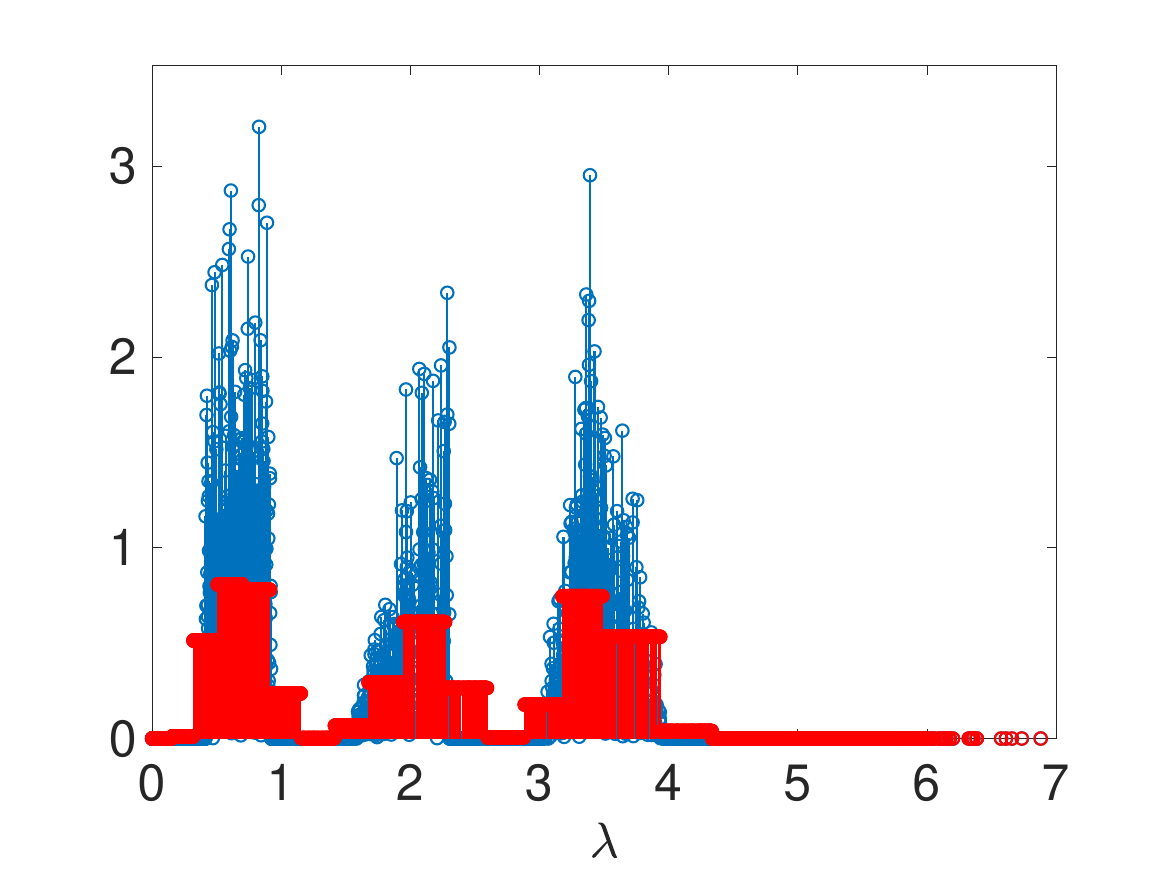}}
\centerline{~~\small{(b)}}
\end{minipage} 
\begin{minipage}[m]{0.24\linewidth}
\centerline{\small{Shifted ($K=80$)}}
\centerline{\includegraphics[width=.95\linewidth]{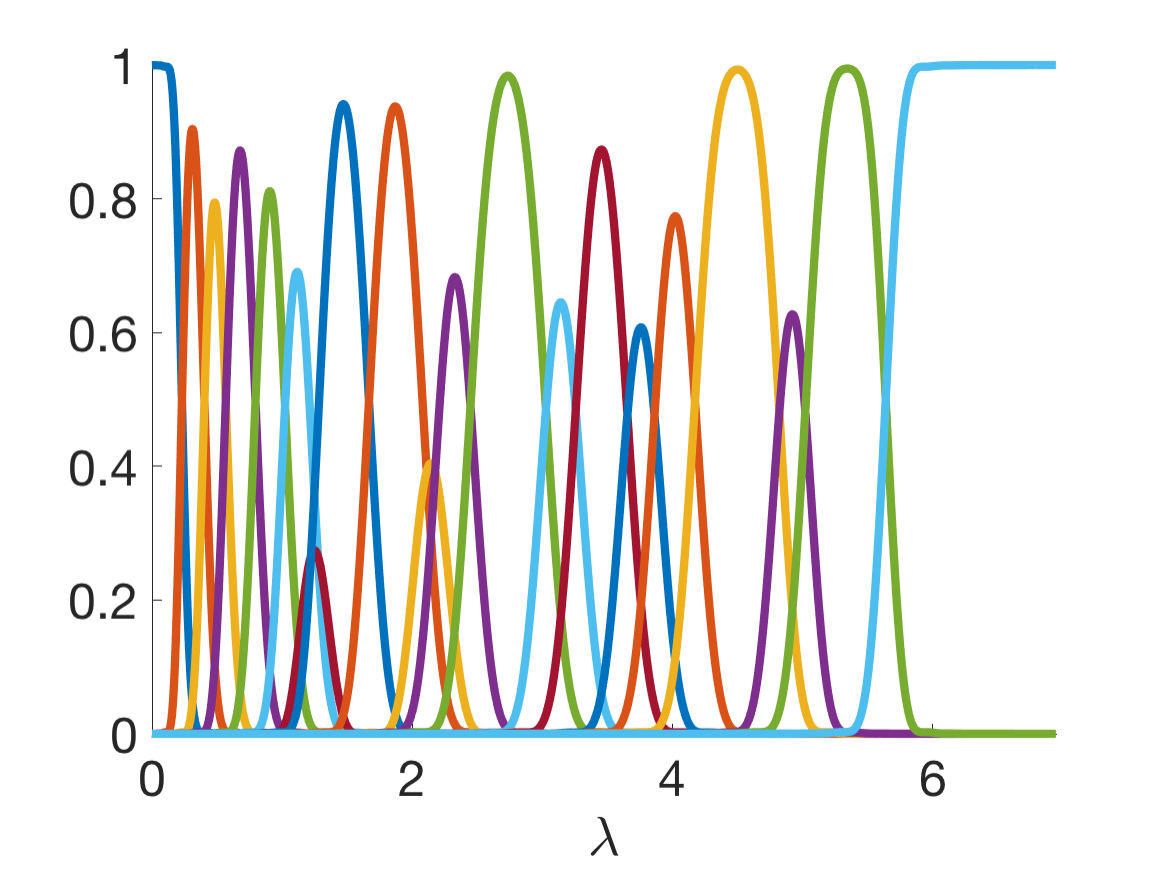}}
\centerline{\includegraphics[width=.95\linewidth]{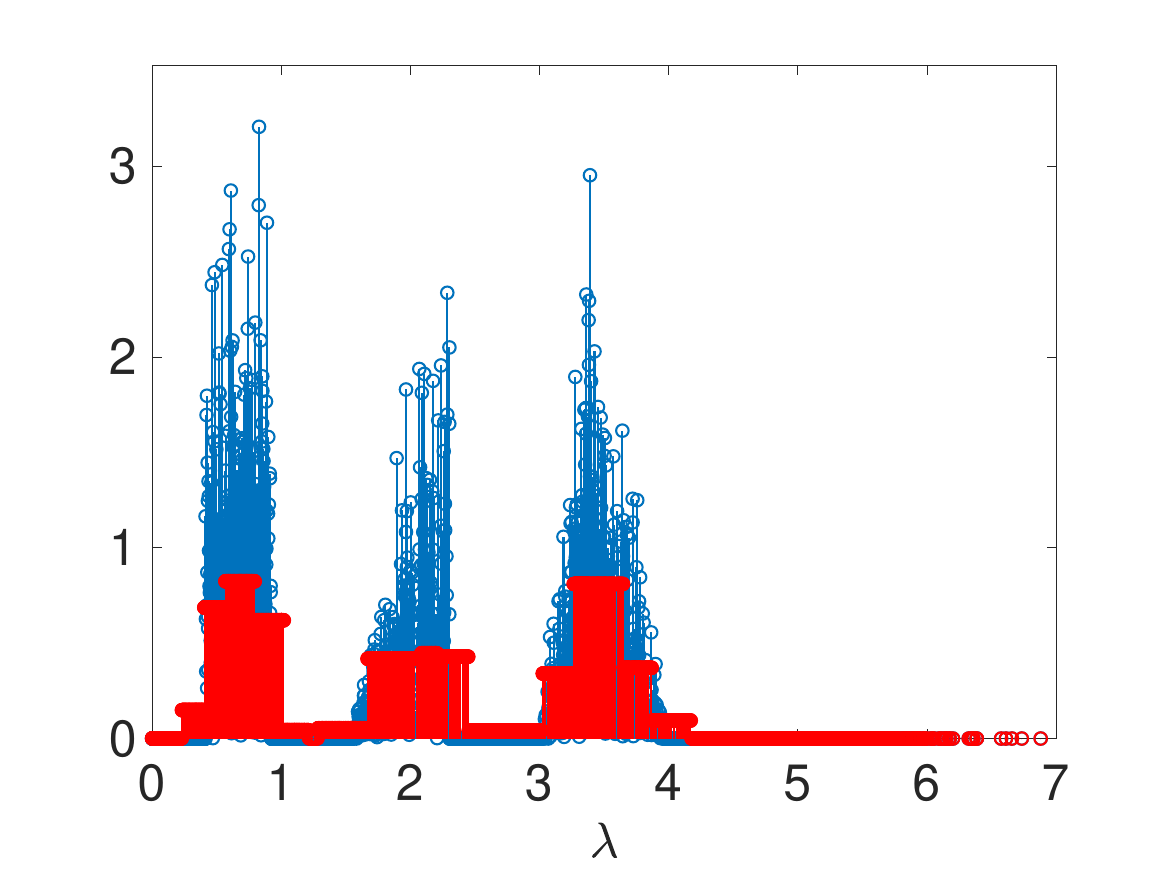}}
\centerline{~~\small{(c)}}
\end{minipage} 
\begin{minipage}[m]{0.24\linewidth}
\centerline{\small{Shifted ($K=250$)}}
\centerline{\includegraphics[width=.95\linewidth]{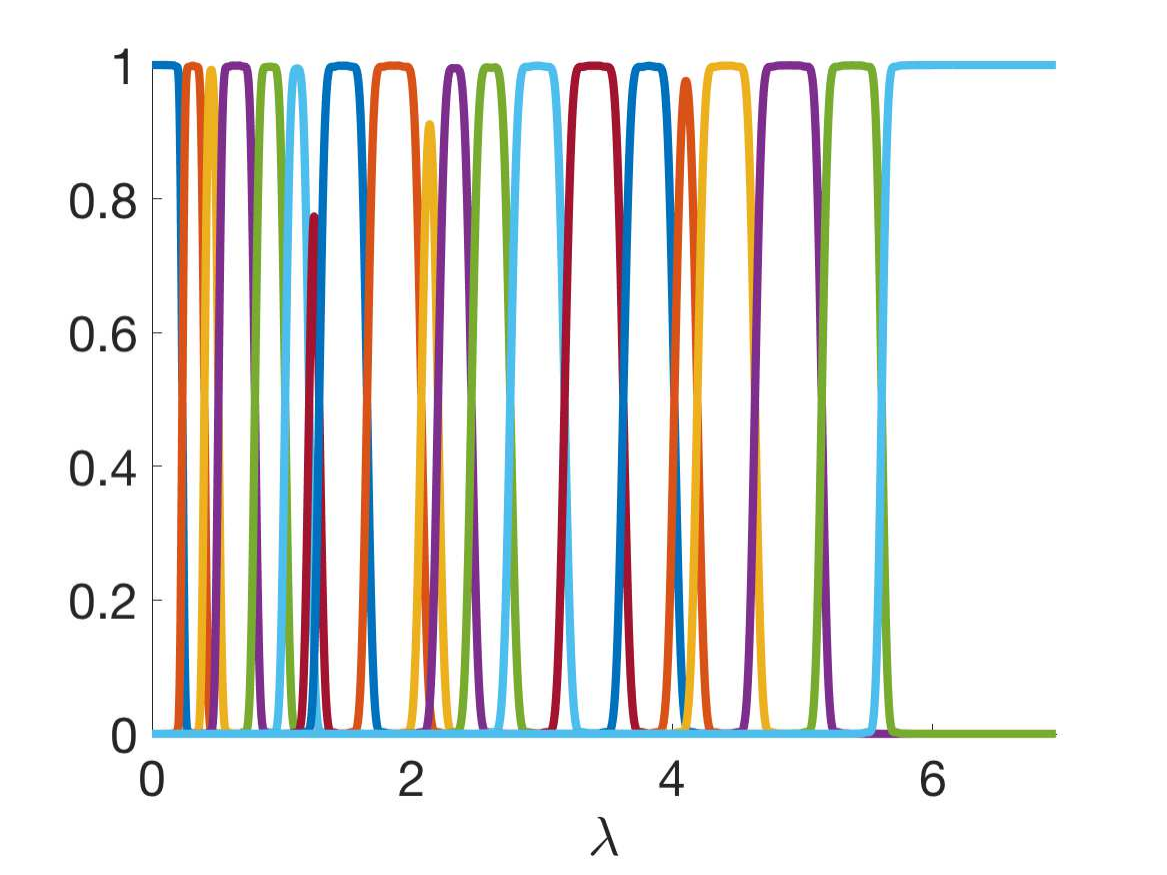}}
\centerline{\includegraphics[width=.95\linewidth]{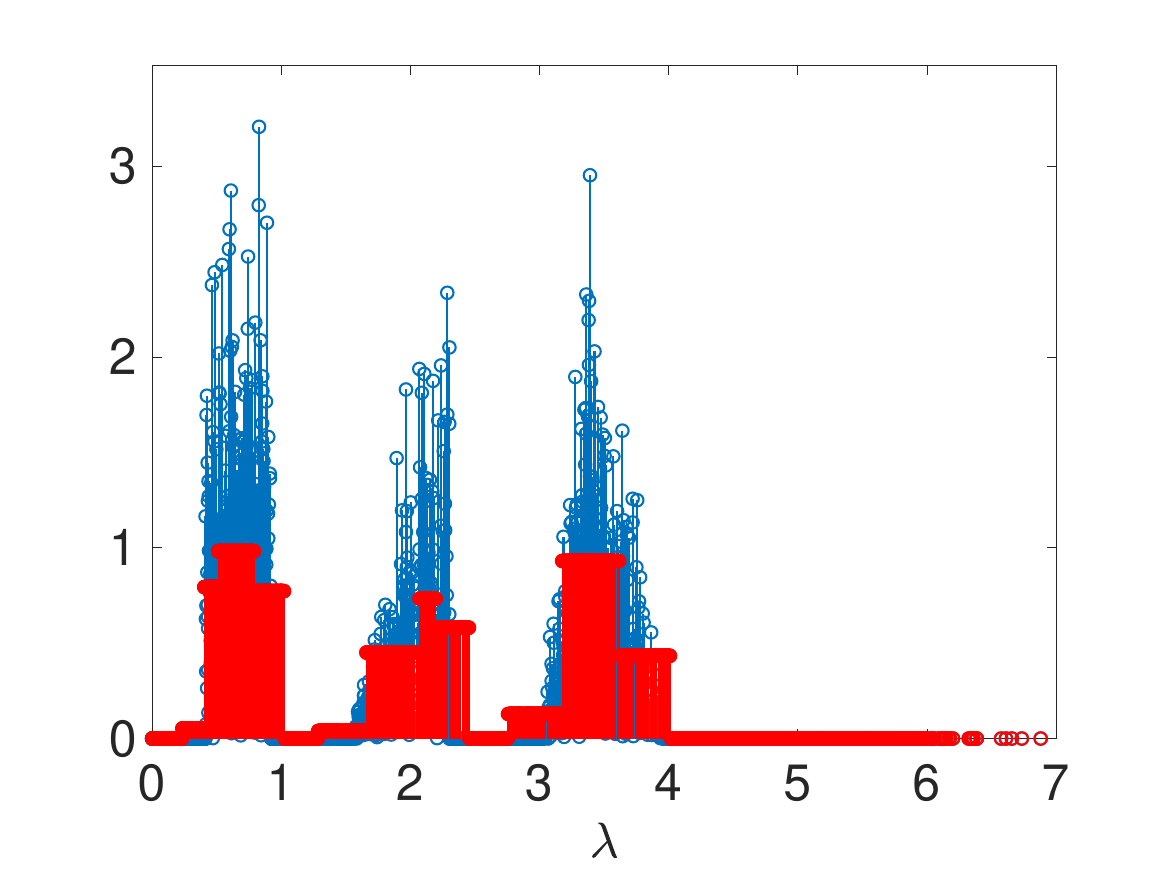}}
\centerline{~~\small{(d)}}
\end{minipage} 
\caption{Fast approximate graph Fourier transform example. The bottom row shows the approximations of the graph Fourier transform of the synthetic signal in Fig. \ref{Fig:fgft1}(b), based on \eqref{Eq:fgft}, using the fast $M$-CSFB transform with four different choices of 20-channel filter banks. The spectrum-adapted filter bank in (b) results in the closest approximation in terms of mean square error, but the filters based on shifted endpoints in (c) and (d) appear to more accurately identify the support of the signal. Ignoring the spectral distribution and taking the filters to have equal length in (a) yields the worst approximation.}\label{Fig:fgft2}
\vspace{-.4cm}
\end{figure*}

In \cite{le2018approximate}\nocite{le2016there}-\cite{le2016flexible}, Le Magoarou, Gribonval, and Tremblay propose approximate fast graph Fourier transform methods that either exactly compute ${\bf U}$ or approximate it by a product of sparse and orthogonal matrices. These methods reduce the complexity of applying the approximate graph Fourier transform from ${\cal O}(N^2)$ to ${\cal O}(N \log N)$; however, they still incur the significant upfront computational cost to compute ${\bf U}$ or approximate it (e.g., on the Minnesota graph, the parallel truncated Jacobi method takes approximately one hour, compared to the five seconds required to perform an exact eigendecomposition). The fast $M$-CSFB method \eqref{Eq:fgft}, on the other hand, provides a coarser approximation to the graph Fourier transform, but scales to much larger graphs.

As a first 
example, we generate a synthetic signal in the graph Fourier domain of the Minnesota road network using the method of \cite[Fig. 1]{le2018approximate}. Fig. \ref{Fig:fgft1}(a) shows the estimated cumulative spectral distribution of the Minnesota graph, and Fig. \ref{Fig:fgft1}(b) shows the synthetic signal in the spectral domain. In Fig. \ref{Fig:fgft2}, we apply the fast $M$-CSFB approximation \eqref{Eq:fgft}, using four different choices of Jackson-Chebyshev polynomial filter banks, each with $M=20$ filters. The first  
chooses the bands to have equal length; the second chooses the bands to have an approximately equal number of eigenvalues; the third shifts the ends of the second slightly according to the procedure outlined in the second for loop of Algorithm \ref{Al:filter_bank}; and the fourth is the same as the third, except 
with a polynomial approximation order of $K=250$, as compared to $K=80$ for the first three filter banks. Quantitatively, the normalized mean square errors between the estimates and the actual $\hat{\bf f}$ are 2.33e-04, 1.64e-04, 1.68e-04, and 1.83e-04, respectively. If we let $M=50$ and $K=250$ for the shifted filters, the 
NMSE drops to 1.59e-04. Qualitatively, the shifted filters seem to 
 better 
 identify  the support of the signal, but perform worse on the magnitudes. 

In identifying the support of the signal in the spectral domain, there is a tradeoff between resolution and computational cost. The resolution of the approximation is controlled by the number of spectral bands $M$. As we add more bands for finer resolution in the spectral domain, however, the filters become narrower (less smooth). We therefore need higher degree polynomials to accurately approximate the filters, which in turn slows down the computations and may result in atoms whose energy is more spread in the vertex domain. We have seen experimentally that choosing $M$ in the 20-50 range and 
$K$ to be  
4-5 times $M$ results in reasonable approximations. 

In Fig. \ref{Fig:fgft3}, we apply the same approximate graph Fourier transform \eqref{Eq:fgft} to the average temperature signal from Fig. \ref{Fig:temperature}(e), with $M=20$, $K=80$ and the shifted filters. 
Because we cannot compute the exact eigenvalues in this case, we show the approximation as a continuous function of $\lambda$ on the interval $[0,\lambda_{\max}]$. A coarse approximation like this can confirm that the signal's energy is concentrated on the low end of the spectrum. We are not aware of any other methods to approximate the graph Fourier transform of a signal on a graph of this size.

\begin{figure}[tbh]
\begin{minipage}[m]{0.48\linewidth}
\centerline{\includegraphics[width=1\linewidth]{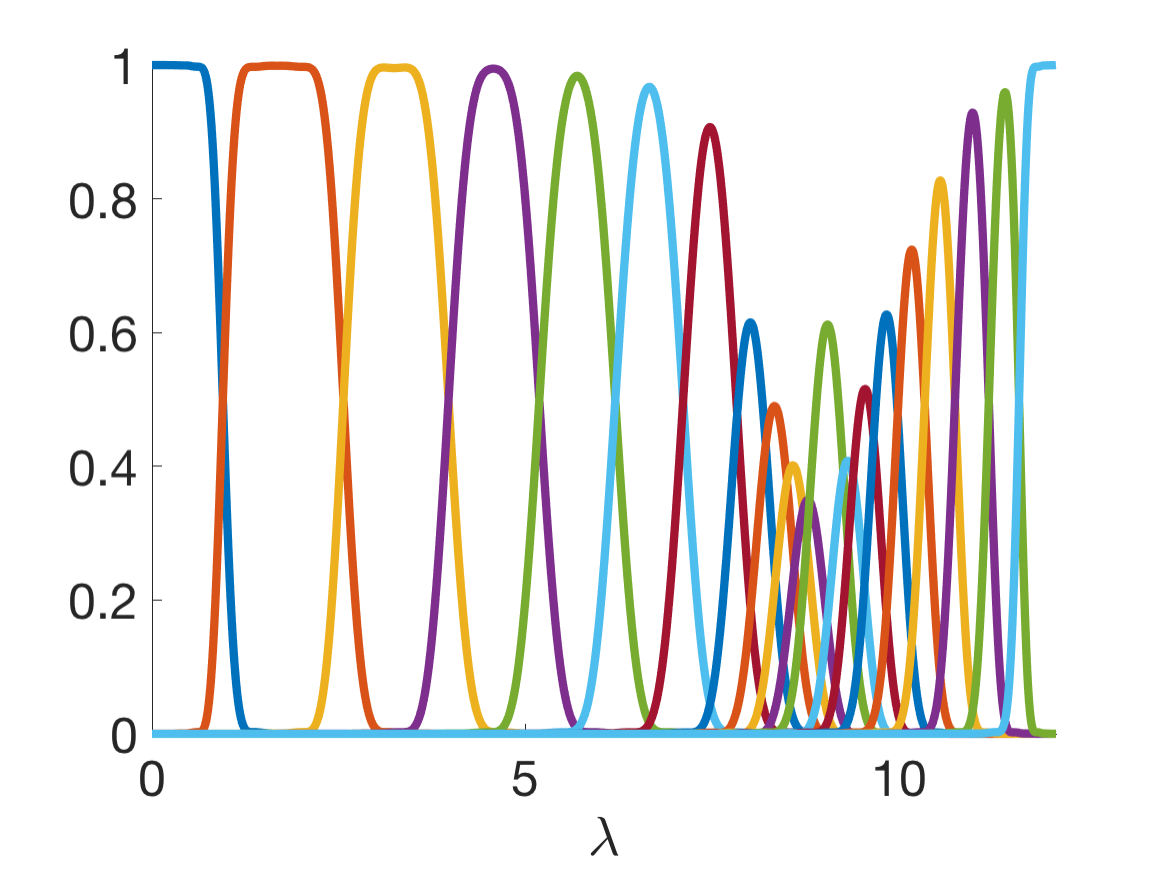}}
\centerline{~\small{(a)}}
\end{minipage}
\begin{minipage}[m]{0.48\linewidth}
\centerline{\includegraphics[width=1\linewidth]{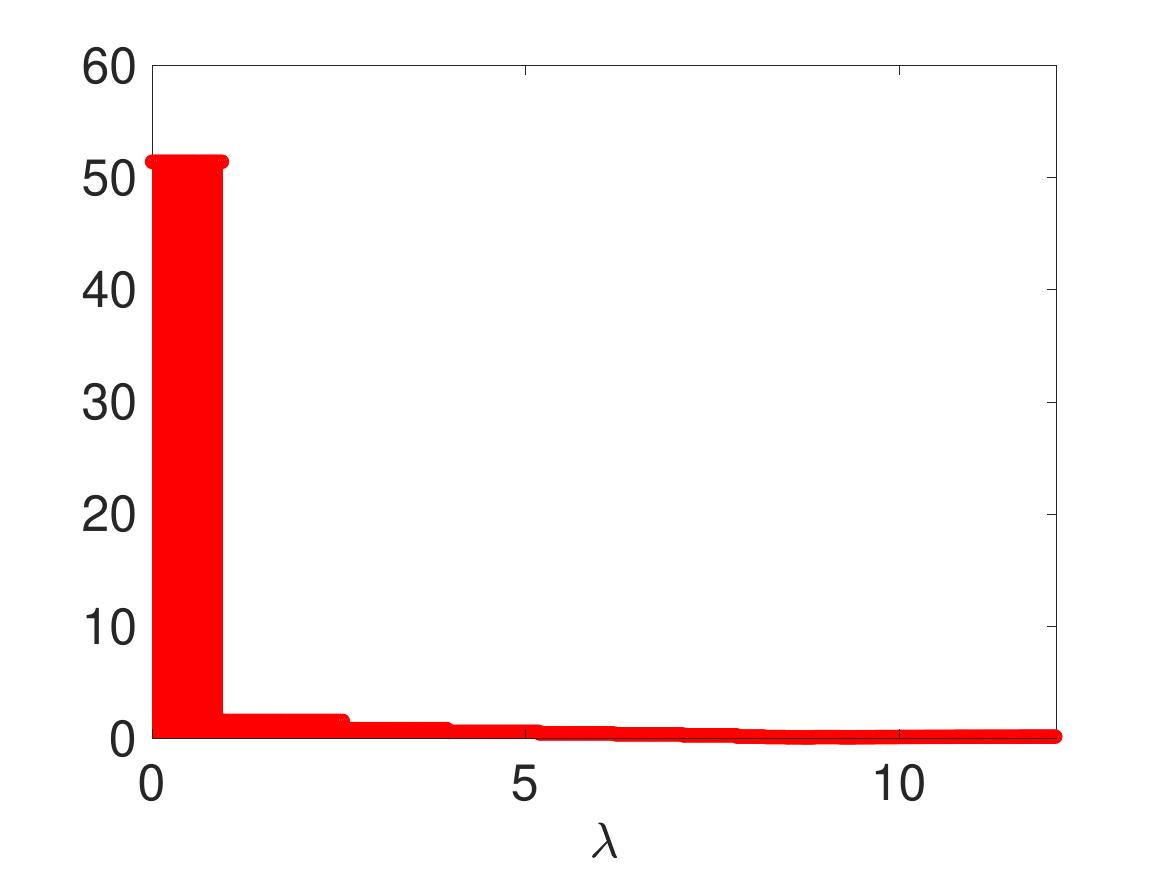}}
\centerline{~~\small{(b)}}
\end{minipage} 
\caption{Approximate graph Fourier transform of the average temperature signal from Fig. \ref{Fig:temperature}(e). (a) The 20 spectrum-adapted filters used in the fast $M$-CSFB. (b) The approximate graph Fourier transform confirms this is a smooth signal with its energy concentrated on the low end of the spectrum.}\label{Fig:fgft3}
\vspace{-.5cm}
\end{figure}

\section{Conclusion and Extensions}
\label{Sec:ongoing}
We have proposed the first critically sampled transforms for graph signals that scale to graphs with hundreds of thousands to millions of vertices. The fast $M$-CSFB transform approximately projects a graph signal onto different bands of the graph Laplacian spectrum. To improve computational efficiency, we leverage the computation of $\{\bar{T}_k(\L){\bf X}\}$ in multiple ways: to estimate the spectral density for the design of the filter bank, to estimate the number of samples for each band, and to estimate the non-uniform sampling distributions. 

The key idea behind the filter bank design is to choose the end points of each band to be in less dense regions of the spectrum so that the resulting filters are more amenable to polynomial approximation. Adapting the non-uniform sampling distribution and allocation of the samples across the bands to the specific signal being analyzed improves the accuracy of the synthesis process without adding to the computational complexity of the setup and analysis steps. 

On one hand, the proposed transform can be seen as a fast approximation of the graph Fourier transform with a coarser resolution in the spectral domain, as discussed in Section \ref{Se:fgft}. On the other hand, the atoms of the proposed transform can also 
be viewed as a subset of the atoms of a spectral graph wavelet transform  \cite{hammond2011wavelets}, albeit with a different set of filters. Both transforms yield atoms of the form $\tilde{h}_m(\L){\boldsymbol \delta}_i$, but the spectral graph wavelet transform includes every vertex $i$ as a center vertex for every scale $m$. 

As with the classical wavelet construction, it is possible to iterate the filter bank on the output from the lowpass channel.
This could be beneficial, for example, in the case that we want to visualize the graph signal at different resolutions on  a sequence of coarser and coarser graphs. An interesting question for future work is how iterating the filter bank with fewer channels at each step compares to a single filter bank with more channels 
supported on a smaller spectral intervals. 

Another complementary direction for future investigation is the specific form of the filters. Within the same
construction, we could use types of filters other than the Jackson-Chebyshev filters (e.g., \cite{tay2015design,liu2018filter}), or adapt the filters to the energy distribution of the signal or an ensemble of signals \cite{behjat2016signal}.

\section{Appendix}
{
\begin{proof}[Proof of Proposition \ref{Le:highpass_uniqueness}]
We assume without loss of generality that ${\mathcal T}=\{0,1,2,\ldots,k-1\}$. 
Suppose first that the set $\mathcal{S}$ is a uniqueness set for $\mbox{col}({\mathbf{U}}_{\mathcal T})$, but $\mathcal{S}^c$ is not a uniqueness set for $\mbox{col}({\mathbf{U}}_{{\mathcal T}^c})$. Then by Lemma \ref{Le:eq_uniq}, the matrix $$\mathbf{A}=
 \left[ \begin{array}{cccccccc}
{\bf u}_{0} & {\bf u}_{1} & \cdots & {\bf u}_{k-1} & {\boldsymbol \delta}_{{\mathcal{S}}^c_1} & {\boldsymbol \delta}_{{\mathcal{S}}^c_2} \cdots & {\boldsymbol \delta}_{{\mathcal{S}}^c_{N-k}} \end{array} \right]$$
has full rank, and the matrix $$\mathbf{B}=
 \left[ \begin{array}{cccccccc}
{\bf u}_{k} & {\bf u}_{k+1} & \cdots & {\bf u}_{N-1} & {\boldsymbol \delta}_{\mathcal{S}_{1}} & {\boldsymbol \delta}_{\mathcal{S}_{2}} \cdots & {\boldsymbol \delta}_{\mathcal{S}_k} \end{array} \right]$$
is singular, implying
\begin{align}\label{Eq:span}
\mbox{span}({\bf u}_{k}, {\bf u}_{k+1},\ldots, {\bf u}_{N-1}, {\boldsymbol \delta}_{\mathcal{S}_{1}}, {\boldsymbol \delta}_{\mathcal{S}_{2}}, \ldots, {\boldsymbol \delta}_{\mathcal{S}_k}) \neq \mathbb{R}^N.
\end{align}
Since 
$\mbox{dim}(\mbox{span}({\bf u}_{k}, {\bf u}_{k+1},\ldots, {\bf u}_{N-1}))=N-k$ and $\mbox{dim}(\mbox{span}({\boldsymbol \delta}_{S_{1}}, {\boldsymbol \delta}_{S_{2}}, \ldots, {\boldsymbol \delta}_{\mathcal{S}_k}))=k,$ equation 
\eqref{Eq:span} implies that
there must exist a vector ${\bf x} \neq {\bf 0}$ such that 
${\bf x} \in  \mbox{span}({\bf u}_{k}, {\bf u}_{k+1},\ldots, {\bf u}_{N-1})$ and  
$ {\bf x} \in \mbox{span}({\boldsymbol \delta}_{\mathcal{S}_{1}}, {\boldsymbol \delta}_{\mathcal{S}_{2}}, \ldots, {\boldsymbol \delta}_{\mathcal{S}_{k}})$. 
Yet, ${\bf x} \in \mbox{col}({\mathbf{U}}_{{\mathcal T}^c})$ implies ${\bf x}$ is orthogonal to  ${\bf u}_0, {\bf u}_1, \ldots, {\bf u}_{k-1}$, and, similarly, 
${\bf x} \in \mbox{span}({\boldsymbol \delta}_{\mathcal{S}_{1}}, {\boldsymbol \delta}_{\mathcal{S}_{2}}, \ldots, {\boldsymbol \delta}_{\mathcal{S}_{k}})$ implies ${\bf x}$ is orthogonal to ${\boldsymbol \delta}_{\mathcal{S}^c_{1}}, {\boldsymbol \delta}_{\mathcal{S}^c_{2}}, \ldots, {\boldsymbol \delta}_{\mathcal{S}^c_{N-k}}.$
In matrix notation, we have $\mathbf{A}^{\top}{\bf x}={\bf 0}$, so $\mathbf{A}^{\top}$ has a non-trivial null space, and thus the square matrix $\mathbf{A}$ is not full rank and $\mathcal{S}$ is not a uniqueness set for $\mbox{col}({\mathbf{U}}_{\mathcal T})$, a contradiction. We conclude that if $\mathbf{A}$ is full rank, then $\mathbf{B}$ must be full rank and $\mathcal{S}^c$ is a uniqueness set for $\mbox{col}({\mathbf{U}}_{{\mathcal T}^c})$, completing the proof of sufficiency. Necessity follows from the same argument, with the roles of $\mathbf{A}$ and $\mathbf{B}$ interchanged.
\end{proof}

\vspace{-.2in}

\balance
\bibliographystyle{IEEEtran}
{\bibliography{mcsfb_refs}}

\end{document}